\DeclareMathOperator\sgn{sgn}
\DeclareMathOperator\diag{diag}
\DeclareMathOperator\arcoth{arcoth}
\DeclareMathOperator\trace{tr}
\DeclareMathOperator\Cov{Cov}
\newtheorem{theorem}{Theorem}[section]
\newtheorem{prop}[theorem]{Proposition}
\newtheorem{corollary}[theorem]{Corollary}
\newtheorem{cor}[theorem]{Corollary}
\newtheorem{lem}[theorem]{Lemma}
\newtheorem{remark}[theorem]{Remark}
\newtheorem{assumption}[theorem]{Assumption}
\newtheorem{nota}[theorem]{Notation}
\newtheorem*{theorem*}{Theorem}
\newtheorem*{proposition*}{Proposition}
\newtheorem*{lemma*}{Lemma}
\newtheorem*{corollary*}{Corollary}
\newtheorem*{remark*}{Remark}
\newtheorem*{observation*}{Observation}
\newtheorem*{example*}{Example}
\newtheorem*{assumption*}{Assumption}
\theoremstyle{definition}
\newtheorem{definition}[theorem]{Definition}
\newtheorem*{definition*}{Definition}
\title{\begin{Huge}Portfolio liquidation in dark pools in continuous time\end{Huge}\footnote{We wish to thank Ulrich Horst, Werner Kratz and Nicholas Westray for useful discussions and comments. We are also grateful to seminar participants at the University of Bonn and Humboldt University Berlin.
This research was supported by Deutsche Bank through the Quantitative Products Laboratory.}}
\author{Peter Kratz\footnote{Humboldt Universit\"at zu Berlin, Unter den Linden 6,
10099 Berlin, Germany. Email: kratz@mathematik.hu-berlin.de}, 
Torsten Sch\"oneborn\footnote{Deutsche Bank AG, London, United Kingdom. Email: schoeneborn@math.tu-berlin.de}}
\date{July 02, 2012
}
\begin{document}

\maketitle

\begin{abstract}
\noindent
We consider an illiquid financial market where a risk averse investor has to liquidate a portfolio within a finite time horizon $[0,T]$ and can trade continuously at a traditional exchange (the ``primary venue'') and in a dark pool. At the primary venue, trading yields a linear price impact. In the dark pool, no price impact costs arise but order execution is uncertain, modeled by a multi-dimensional Poisson process. We characterize the costs of trading by a linear-quadratic functional which incorporates both the price impact costs of trading at the primary exchange and the market risk of the position.
The liquidation constraint implies a singularity of the value function of the resulting minimization problem at the terminal time $T$. Via the HJB equation and a quadratic ansatz, we obtain a candidate for the value function which is the limit of a sequence of solutions of initial value problems for a matrix differential equation. We show that this limit exists by using an appropriate matrix inequality and a comparison result for Riccati matrix equations. Additionally, we obtain upper and lower bounds of the solutions of the initial value problems, which allow us to prove a verification theorem. If a single asset position is to be liquidated, the investor slowly trades out of her position at the primary venue, with the remainder being placed in the dark pool at any point in time. For multi-asset liquidations this is generally not the case; it can, e.g., be optimal to oversize orders in the dark pool in order to turn a poorly balanced portfolio into a portfolio bearing less risk.
\end{abstract}

\section{Introduction}

In the last years, equity trading has been transformed by the advent of so called dark pools. These alternative trading venues differ significantly from classical exchanges and have gained a considerable market share, especially in the US. Dark pools vary in a number of properties such as crossing procedure, ownership and accessibility (see~\cite{Mittal2008} and \cite{Degryse2009a} for further 
details and a typology of dark pools). However, they generally share the following 
two stylized facts. First, the liquidity available in dark pools 
is not quoted, hence making trade execution uncertain 
and unpredictable. Second, dark pools do not determine prices. Instead, 
they monitor the prices determined by the classical 
exchanges and settle trades in the dark pool only if possible at these prices. 
Thus, trades in the dark pool have no or less price impact.\footnote{
For empirical evidence of lower 
transaction costs or price impact of dark pools compared to classical exchanges see, e.g., 
\cite{Conrad2003} and~\cite{Fong2004}.}

In this paper, we are concerned with the solution of a stochastic optimal control problem
in continuous time arising in the context of optimal portfolio liquidation if an investor
has access both to a classical exchange (also called ``primary venue''
or ``primary exchange'') and to a dark pool.\footnote{
The overall liquidity traded in dark pools in the US is strongly fragmented 
among approximately 40 different venues, see e.g.,~\cite{Carrie2008}. Therefore, 
liquidity aggregation 
is a major issue. \cite{Ganchev2009} and \cite{Laruelle2009} establish learning algorithms 
to achieve optimal order split \emph{between} dark pools. Instead of analyzing the simultaneous 
use of several dark pools, 
we consider such an ``aggregated'' dark pool in our model which we call ``the dark pool''.}
We study a continuous-time model for 
optimal liquidation of a portfolio consisting of $n$ assets within a finite time horizon $[0,T]$
reflecting the trade-off between execution uncertainty of dark pool orders and price impact costs of trading at the primary venue. It complements the model by~\cite{Kratz2010}, where (in particular) multi asset liquidation in dark pools is studied in discrete time; for single asset liquidation, the two models are connected via a convergence result. The mathematical analysis of the continuous-time model is substantially different from the discrete-time model. In discrete time, the optimal liquidation strategy and its costs are given by a backward recursion obtained by standard dynamic programming; this recursion is rather unhandy and it is difficult to deduce properties of the optimal strategy from it. In continuous time, the liquidation constraint imposes a singular boundary condition for the value function and therefore the problem cannot be solved by standard methods of stochastic control. Instead, these methods had to be modified and extended. The optimal strategy and the value function are characterized via the ``principal solution'' of a matrix differential equation. We believe that the analysis of the differential equation and the resulting verification theorem for the optimization problem is mathematically interesting in its own right. Additionally, the differential equation is easier to analyze than the (discrete-time) backward recursion and we can deduce additional properties of the value function and the optimal strategy from it; in particular, we are able to show that for a two asset portfolio, the optimal strategy is monotone in the correlation of the assets. To our best knowledge, the mathematical framework established here and in~\cite{Kratz2010} is the first within which optimal trade execution is analyzed for a multi asset portfolio using a classical exchange and a dark pool simultaneously.

Our model for trading and price formation at the classical exchange is a linear price impact model. Trade execution can be enforced by selling aggressively, which however results in quadratic execution costs due to a stronger price impact. We model order execution in the dark pool by an $n$-dimensional Poisson process. Orders submitted to the dark pool are executed at the jump times of the respective components of the process. The split of orders between dark pool and exchange is thus driven by the trade-off between execution uncertainty and price impact costs. 

The investor aims to maximize her expected proceeds minus a quadratic risk term. For appropriate assumptions on the price dynamics, this yields a linear-quadratic cost functional. However, the liquidation constraint implies a singularity of the value function at the terminal time $T$, which renders the solution of the optimization problem difficult.
We approximate the constraint by a sequence of modified unconstrained optimization problems with increasing finite end-costs. Via a quadratic ansatz, the corresponding Hamilton-Jacobi-Bellman (HJB) equation suggests that the value functions of the modified optimization problems are quadratic forms for matrix-valued functions which are the solutions of initial value problems for a specific matrix differential equation. Explicit solutions for these initial value problems are not known. We establish a matrix inequality which allows us to apply a known comparison result for Riccati matrix differential
equations in order to obtain closed form upper and lower bounds for the solutions
of the initial value problems. The bounds are important for several reasons.
Firstly, they enable us to prove a verification theorem for the modified unconstrained optimization problems
with finite end-costs.
Secondly, they imply the existence of the limit of the solutions of these optimization problems
and thus yield a well-defined candidate for the value function of the original optimization problem with
liquidation constraint.
Thirdly, the limit of the bounds transfers to bounds for this candidate value function and we deduce
a verification theorem for the solution of the optimal liquidation problem.

Hence, the value function of the problem is a quadratic form for a matrix-valued function which is the ``principal solution'' of the matrix differential equation. The optimal strategy is linear in the portfolio position; it is determined by the value function and can be computed easily from a numerical solution of the ``principal solution'' of the matrix differential equation. This makes the model tractable for practicable applications and allows us to investigate the properties of the solution. When a position in a single asset is to be liquidated, the current asset position is at all times being offered in the dark pool, while it is liquidated in parallel at the primary exchange. The opportunity to trade in the dark pool leads to a slower liquidation at the primary exchange compared to a market without a dark pool. Traders hence need to fundamentally adjust their trade execution algorithms when a dark pool is introduced. It is not sufficient to use an algorithm that was optimal for execution at the exchange and to add a component that also places trades in the dark pool; instead, trading at the exchange needs to be adjusted during the entire trade execution time interval. 
While in a single asset setting the entire asset position is placed in the dark pool, this is not true if a multi asset portfolio is to be liquidated. If, e.g., the portfolio is balanced and thus only exposed to little market risk, then a complete liquidation of the position in one of the assets is unfavorable and thus only a fraction of the entire portfolio should be placed in the dark pool. This highlights again that overly simple adjustments to existing trade execution algorithms are exposed to potential pitfalls. For dark pools, the reluctance of traders to place balanced portfolios in a dark pool is an incentive to offer balanced executions in order to attract more liquidity. 

Our paper is connected  to several lines of research. First, it builds on research on optimal trade execution strategies for a single trader in models with exogenously given liquidity effects. Several such models have been proposed for classical trading venues, e.g., 
\cite{Bertsimas1998}, \cite{Almgren2001}, \cite{Almgren2003}, \cite{Obizhaeva2006},  \cite{Schied2008},
\cite{Alfonsi2007} and~\cite{Schied2010}. We follow~\cite{Almgren2001} and assume a linear temporary price impact model for the primary venue. This choice yields a tractable model which nevertheless captures price impact effects. Linear price impact models have become the basis of several theoretical studies, e.g.,  \cite{Almgren2007}, \cite{Carlin2007}, \cite{Schoneborn2007} and~\cite{Rogers2010}. Furthermore, they demonstrated reasonable properties in real world applications and serve as the basis of many optimal execution algorithms run by practitioners (see e.g., \cite{Kissell2003}; \cite{Schack2004}; \cite{Abramowitz2006}; \cite{Leinweber2007}). 
The models above are limited to trading on one venue; they are concerned with the trade-off of execution costs versus market risk. By using a dark pool, the investor additionally faces order execution risk in our model. The trade-off of execution costs, market risk and execution risk is also apparent in the models of~\cite{Bayraktar2012} and~\cite{Gueant2012} who consider optimal liquidation with limit orders.\footnote{In our model all orders in the primary venue are market orders whose execution is guaranteed; hence, the execution risk only applies to the dark pool orders.} In these models, the investor choses a limit price for her orders; the ``probability'' of execution is then dependent on this price and the left-over position can be liquidated at fixed costs at the exchange. In contrast to this, the dark pool orders in our model do not involve a limit price; they are executed at the price of the primary venue. 
The continuous-time model studied in our paper complements the discrete-time analysis of~\cite{Kratz2010} for the specific case of linear price impact and no adverse selection. 
\cite{Kloeck2011} also work in a market model with exogenously specified liquidity characteristics for both a public exchange as well as a dark pool. Their focus is on the circumstances that can lead to price manipulation, while we focus on the quantitative and qualitative features of optimal trade execution strategies in markets without price manipulation. Therefore they investigate a more general class of price impact relationships, while our framework incorporates several aspects of market and investor behavior that shape optimal trade execution, such as risk aversion, multiple correlated assets and a dynamic usage of the dark pool.

A second line of research connected to our paper focuses on the underlying mechanisms for illiquidity effects. Early examples include \cite{Kyle1985}, \cite{Glosten1985} and \cite{Grossman1988}. In these models, price impact arises endogenously through the interplay of market participants. More recently, such models have been proposed to analyze the competition between classical trading venues and dark pools. \cite{Hendershott2000} analyze the interaction of dealer markets and a crossing network\footnote{Crossing networks are specific types of dark pools that offer periodic matching of buy and sell orders.} in a static one period framework where each investor buys or sells a \emph{single} share. Their findings include that trading in a crossing network is cheaper than in dealer markets since the trader saves half the spread, but trade execution is uncertain in the crossing network. In a similar setting, \cite{Donges2006} focus on game-theoretic refinements in order to remove the multiplicity of equilibria in~\cite{Hendershott2000}. \cite{Degryse2009} introduce a dynamic multi-period framework and compare the effect of different levels of transparency of the dark pool. In order to focus on the optimal order execution of an individual trader, we exogenously specify the liquidity properties of the market. Hence, all the models cited above, can shed more light on how the liquidity properties of the dark pool come about than our model can. On the other hand, by defining the model parameters exogenously, we can allow the investor to split her orders over time as well as between the two trading venues. While executing her order over time, she can dynamically react to the existence or absence of liquidity in the dark pool and adjust her trading strategy accordingly. Furthermore, we can take a multi asset perspective and investigate how the composition of basket portfolios influences optimal trade execution strategies. 

Methods of stochastic control is a third line of research that this paper is connected to. A standard reference for stochastic control with jumps is the book by~\cite{OksendalSulem}. The liquidation constraint yields a singularity of the value function at the terminal time
$T$, and thus, the resulting stochastic control
problems require non-standard considerations. For single asset liquidation and finite end-costs (hence no singularity
of the value function at time $T$), the solution of the linear-quadratic control problem is well-known,
see, e.g., \cite{Oksendal2008}. Multi-dimensional linear-quadratic control with jumps is treated in the book by~\cite{Hanson2007}. However, our setting (even without the singularity of the value function) is not covered therein.  
The difficulty in our setting stems from the \emph{combination} of the singularity of the value function
and the fact that we consider multi-dimensional portfolios; thus the solution of the
optimization problem involves the detailed
analysis of a specific non-Riccati-type matrix differential equation, for which
we establish existence results and upper and lower bounds of the solution by means of
a novel matrix inequality.
We are aware of only few other texts dealing with singular boundary constraints in liquidation problems. \cite{Schied2010} study optimal liquidation without dark pools
for CARA investors. In this case the optimal control problem does not include jumps. They carry out a verification argument for a candidate value function given in closed form. 
\cite{Naujokat2010} and \cite{hoschler2010} treat similar control problems with jumps. The focus of both texts is on trading with limit orders rather than with dark pools; they only treat single asset trading and obtain the single asset case of this paper as special cases of their respective settings. \cite{Kratz2012} generalizes the single asset case of this paper by allowing for adverse selection; this renders the liquidation problem non-linear quadratic; he shows that in this case the value function is a ``quasi-polynomial'' of degree two instead of being quadratic; the solution for the multi-dimensional case it not known yet.

The remainder of this paper is structured as follows. We specify the model setup and the optimization problem in Section~\ref{ChaptPortfolioSecModel} and obtain a candidate for the value function of the unconstrained optimization problem with finite end-costs via an initial value problem for a matrix differential equation. In Section~\ref{SecMain}, we state the main theoretical results of the article: the solution of the initial value problem (Section~\ref{SubSecMainIVP}), the solution of the unconstrained optimization problem (Section~\ref{SubSecMainOPTl}) and finally the solution of the constrained optimization problem (Section~\ref{SubSecMainOPT}). We discuss the properties of the value function of the optimal strategy extensively for a single asset position and a portfolio consisting of two assets in Section~\ref{SecPortfolioProp}. The main results of Section~\ref{SecMain} are proven in Section~\ref{SecProofs} and the results of Section~\ref{SecPortfolioProp} are proven in Section~\ref{SecAppendixProp}. 

\section{Model description} \label{ChaptPortfolioSecModel}

For a fixed time interval $[0,T]$, we consider the stochastic basis $(\Omega, \mathcal{F}, \mathbb{P}, \mathbb{F}=(\mathcal{F}_t)_{t \in[0,T]})$\footnote{The filtration is generated by the involved random processes and is specified at the end of Section~\ref{SubSecModelDarkPool}}. We investigate a market model where a risk averse trader with a personal risk aversion parameter $\alpha \geq 0$ has to liquidate a portfolio $x \in \mathds{R}^n$ of $n$ assets within a finite trading horizon $[0,T]$. The investor has the possibility to trade simultaneously at a traditional exchange and in a dark pool, facing the trade-off of paying market impact costs in the traditional exchange against uncertain execution in the dark pool. 

We specify the price dynamics at the primary exchange in Section~\ref{SubSecModelPrimaryEx} and the dynamics of order execution in the dark pool in Section~\ref{SubSecModelDarkPool}. In Section~\ref{SubSecModelStrategies} we define admissibility of trading strategies. In Section~\ref{SubSecModelCosts}, we specify the trading objective and show that the resulting optimization problem is equivalent to a linear-quadratic optimization problem. Heuristic arguments suggest that the value function of the optimization problem is singular at terminal time $T$ because of the liquidation constraint. Hence, we introduce a modified optimization problem where we drop the liquidation constraint and approximate it by finite end-costs for a portfolio not liquidated by time $T$ as an intermediate step. In Section~\ref{SubSecModelHJB} we derive a candidate for the value function of the modified optimization problem via a quadratic ansatz and the corresponding HJB Equation.

\subsection{The primary exchange} \label{SubSecModelPrimaryEx}

In absence of transactions of the investor, the fundamental asset price at the primary exchange is given by an $n$-dimensional stochastic process $\tilde{P}$.

\begin{assumption} \label{AssPrimary}
We assume that $\tilde{P}$ satisfies the following properties.
\begin{enumerate}
\item[(i)]
$\tilde{P}$ is a square-integrable c\`{a}dl\`{a}g martingale.
\item[(ii)]
The 
covariance matrix of $\tilde{P}$ is constant in time, i.e., for all $t \in[0,T]$, $i,j=1,\dots,n$,
\[
\Cov\big( \tilde{P}_i(t), \tilde{P}_j(t)\big) =t \Sigma_{i,j}
\]
and  $\Sigma=(\Sigma_{i,j})_{i,j=1,\dots,n} \in \mathds{R}^{n\times n}$.
\end{enumerate}
\end{assumption}

As the time-horizon for portfolio liquidation is usually short, i.e., several hours or a few days, the martingale
property in Assumption 2.1 (i) does not constitute a major divergence from reality.\footnote{We do not model the period in which the investor has held the assets before time zero. For this period, the martingale assumption is naturally a worse approximation of reality.} Once the trader becomes active on the primary exchange, she influences the market price $P$. We assume that the trader at the primary exchange can only execute trades continuously, i.e.~that her trading activity on the primary exchange is absolutely continuous and can hence be described by her trading intensity $\xi(t)$ with $t \in [0,T)$.
This trading in the traditional exchange generates price impact, which we assume to be temporary and linear in the trading rate $\xi(t)$. Given a strategy $(\xi(t))_{t\in [0,T]}$, the transaction price at time $t \in [0,T]$ is given by 
\[
P(t) = \tilde{P}(t) - \Lambda \xi(t), 
\]
where $\Lambda \in \mathds{R}^{n \times n}$ is a positive definite matrix constant in time. Execution of the trades at the primary exchange is \emph{certain}; we hence consider only market orders and no limit orders.

By assuming linear price impact for the primary venue, we follow~\cite{Almgren2001}. 
This choice yields a tractable model which nevertheless captures price impact effects. Linear price impact models have become the basis of several theoretical studies, e.g.,  \cite{Almgren2007}, \cite{Carlin2007}, \cite{Schoneborn2007} and~\cite{Rogers2010}. Furthermore, they demonstrated reasonable properties in real world applications and serve as the basis of many optimal execution algorithms run by practitioners (see e.g., \cite{Kissell2003}; \cite{Schack2004}; \cite{Abramowitz2006}; \cite{Leinweber2007}). In contrast to~\cite{Almgren2001}, the price impact in our model is purely temporary; \cite{Kloeck2011} analyze the influence of \emph{permanent} price impact on the existence of market manipulation strategies. Such an analysis is not in the scope of our article; hence we allow only for temporary price impact.\footnote{As the price is not influenced permanently by the investor, the term \emph{price impact} might be misleading. Alternatively, we could rename the quadratic costs caused by trading in the primary venue as the \emph{execution costs} of the investor; indeed, the costs of trading can be interpreted to include such different effects as price impact and transaction costs. However, the term \emph{execution costs} does not capture the possible impact of trading in one asset on the price of another asset which we include by allowing the off-diagonal elements of $\Lambda$ to be non-zero (cf.~the discussion of cross price impact in Section~\ref{SubSecPortfolioLambda}).}

\subsection{Order transaction in the dark pool} \label{SubSecModelDarkPool}

In addition to the primary exchange, the trader can also use a dark pool.
Dark pools often have rather complex order allocation mechanisms; most of them use some sort of a pro-rata or time-priority rule for matching orders from opposite sides of the market. Here, we consider a dark pool with a time-priority matching rule: the investor's order $\eta$ enters a queue and is matched with liquidity from the opposite side of the market (if there is any) once it has reached the front of the queue.

We allow for continuous updating of the orders $\eta(t)$ in the dark pool at any time $t \in [0,T]$. Orders for the $i^{\text{th}}$ asset in the dark pool are executed fully at the jump times of the $i^{\text{th}}$ component of an $n$-dimensional Poisson process
\[
\pi=(\pi_1,\dots,\pi_n) \text{ with intensities  } 
\theta_1,\dots,\theta_n \geq 0, \text{ respectively.}
\] 
Else, the orders are not executed at all. This mechanism implies two main  simplifications of reality which allow a thorough mathematical analysis of the model. First, we exclude partial execution; the probability of execution does not depend on the size of the order. Second, we assume independence of the increments of the dark pool liquidity. On the other hand, the resulting model captures the stylized facts outlined above and in the introduction; we believe that it constitutes a sufficiently well approximation of reality for our purposes.

\begin{assumption} \label{AssDP}
We assume that $\pi$ satisfies the following conditions.
\begin{enumerate}
\item[(i)]
$\pi_1,\dots,\pi_n$ are independent.
\item[(ii)]
$\pi$ and $\tilde{P}$ are independent.
\end{enumerate}
\end{assumption}

By Assumption~\ref{AssDP}~(ii) we exclude correlations between dark pool liquidity and the fundamental asset price in the market. In particular, this rules out adverse selection\footnote{Here, adverse selection refers to the phenomenon that liquidity seeking traders find that their trades in the
dark pool are usually executed just before a favorable price move, i.e., exactly when they \emph{do not} want them
to be executed since they miss out on the price improvement.}.
For single asset liquidation, adverse selection was incorporated by~\cite{Kratz2012}. This renders the liquidation problem non-linear quadratic; he shows that in this case the value function is a ``quasi-polynomial'' of degree two instead of being quadratic. A solution for the multi-dimensional case it not known yet.

While the dark pool has no impact on prices at the primary venue, it is less clear to which extent the price
impact of the primary venue $\Lambda \xi(t)$ is reflected in the trade price of the dark pool. If for example the price impact is realized predominantly in the form of a widening spread, then the impact on dark pools that monitor the
mid quote can be much smaller than $\Lambda \xi(t)$. We will make the simplifying assumption that trades in the dark pool are not influenced by the price impact at all, i.e., that they are executed at the fundamental
price $\tilde{P}$.
If alternatively the transaction price in the dark pool is the price $P$ at the primary exchange including the trader's price impact, market manipulation strategies can become profitable unless the parameters are chosen with great care, as has been shown by~\cite{Kratz2010} for the discrete-time case. For a detailed discussion see also~\cite{Kloeck2011} who analyze the circumstances that can lead to price manipulation in dark pools.

We are now ready to specify the filtration $(\mathcal{F}_t)_t$ as the completion of $\big(\sigma\big( \tilde{P}(s), \pi(r) | 0 \leq s \leq t, 0 \leq r < t \big)\big)_t.$

\subsection{Admissible trading strategies} \label{SubSecModelStrategies}

Let $t\in[0,T)$ be a given point in time and $x \in \mathds{R}^n$ be the portfolio position of the trader at time $t$.
The trader has the possibility to trade asset $k$ in the traditional exchange with trading intensity $\xi_k(s)$ 
at time $s \in [t,T)$ and to place orders $\eta_k(s)$ in the dark pool at time $s$. 

We call a $2 n$~-~dimensional stochastic process 
\[
(u(s))_{s \in [t,T)}=(\xi(s),\eta(s))_{s \in [t,T)}
\]
a \textit{trading strategy} if $\xi$ is progressively measurable and $\eta$ is predictable. 
Given a trading strategy $u$, the portfolio position at time $s \in [t,T)$ is given by the following 
controlled stochastic differential equation:
\begin{equation} \label{EqCSDE}
\begin{split}
dX^u(s)&=-\xi(s) ds - \eta(s) d\pi(s) \\
X^u(t)&=x
\end{split}
\end{equation}
such that the left hand side in~\eqref{EqCSDE}
is well-defined.

For technical reasons we require all trading strategies to fulfill
the following conditions.

\begin{definition}  \label{DefAdmStr}
Let $t \in [0,T)$ and $x \in \mathds{R}^n$ be fixed. Let
$
u=(u(s))_{s \in [t,T)}=((\xi(s),\eta(s)))_{s \in [0,T)}
$
be a trading strategy, i.e.,
$\xi$ is progressively measurable and $\eta$ is predictable. \\
(a) We call $u$
an \emph{admissible trading strategy }if it fulfills the following conditions.
\begin{enumerate}
\item[(i)]
The Stochastic Differential Equation~\eqref{EqCSDE} possesses a unique solution on $[t,T)$.
\item[(ii)]
\begin{equation*}
\mathbb{E}  \Big[\int_t^T\ \|\xi(s)\|_2^4 ds \Big] < \infty, \quad
\mathbb{E} \Big[\int_t^T\ \|\eta(s)\|_2^8 ds \Big]< \infty. 
\end{equation*}
\item[(iii)]
If $\theta_i=0$, then $\eta_i(s)=0$ for all $s\in [t,T)$.
\end{enumerate}
We denote the set of admissible trading strategies by $\tilde{\mathbb{A}}(t)$. \\

\noindent (b) We call $u\in \tilde{\mathbb{A}}(t)$ an
\emph{admissible liquidation strategy} or just \emph{liquidation strategy} if additionally
\begin{enumerate}
\item[(iv)]
$\lim_{s\rightarrow T-} X^u(s)=  $ a.s.
\end{enumerate}
and denote the 
set of admissible liquidation strategies by $\mathbb{A}(t,x)$.
\end{definition}

Let us shortly comment on Definition~\ref{DefAdmStr}.
Condition~(ii) is required for the moment bounds in Lemma~\ref{LemXuintegrierbar} which is in turn needed for the verification later. Condition~(iii) is needed in order to ensure uniqueness of optimal trading strategies: if $\theta_i=0$, no additional gain can be achieved by non-zero dark pool orders. If the portfolio is liquidated at constant speed at the primary exchange only, i.e., 
$\xi(s)=\frac{x}{T-t}$, $\eta(s)=0$ for $s \in [t,T)$,
Definition~\ref{DefAdmStr} (in particular the liquidation constraint~(iv)) is satisfied and hence $\mathbb{A}(t,x)\not= \emptyset$.

\begin{remark} \label{RemarkInterJump}
We expect that the stochastic control problems we solve are such that the optimal control is of 
\emph{Markovian form} (see, e.g., the book by \cite{Oksendal2007}, Theorem 11.2.3):
\[
u(s)=(\xi(s),\eta(s))=(\xi(s,X(s)),\eta(s,X(s-))
\]
for deterministic functions $\xi,\eta: [t,T) \times \mathds{R}^n \rightarrow \mathds{R}^n$.
The deterministic initial value problem
\begin{equation}
 \label{EqPortfolioInitialSimple}
X' =-\xi(\cdot,X), \quad
X(t)=x 
\end{equation}
possesses a unique solution on $[t,T)$ if 
$\|\xi(s,y)\| \leq f(s) \|y\| + g(s)$ on $[t,T) \times \mathds{R}^n$
for $f,g \in C([t,T))$ 
and $\xi(s,\cdot)$ is locally Lipschitz (e.g., $C^1$).\footnote{This follows, e.g., by Peano's existence theorem and Gronwall's inequality.} Let $\xi:[t,T) \times \mathds{R}^n \rightarrow \mathds{R}^n$ fulfill these conditions and let $\eta:[t,T) \times \mathds{R}^n \rightarrow \mathds{R}^n$. We can pathwise construct the
solution of the Stochastic Differential Equation~\eqref{EqCSDE} inductively by interlacing the jumps (see, 
e.g.,~\cite{Applebaum2004}, Example 1.3.13): as the $n$ Poisson processes
are independent, they
jump at distinct times almost surely. 
Let $(\tau_i)_{i\geq 1}$ be the jump times of $\pi$ such that
$t=:\tau_0 < \tau_1 < \dots$ almost surely,
and let $X$ be the solution of the Initial Value Problem~\eqref{EqPortfolioInitialSimple}
on $[\tau_i, \tau_{i+1} \wedge T)$ with initial value $x=X(\tau_i)$
for $i \in \mathds{N}$ such that $\tau_i \leq T$. For $\tau_{i+1} \leq T$ 
and $\Delta \pi_k(\tau_{i+1})>0$, we set
\[
X(\tau_{i+1}):= X(\tau_{i+1}-) - \eta_k (\tau_{i+1},X(\tau_{i+1}-)) e_k,
\]
where $e_k$ is the $k^\text{th}$ unit vector.
\end{remark}

\subsection{Cost functional} \label{SubSecModelCosts}

The proceeds of selling the portfolio $x \in \mathds{R}^n$ during $[t,T]$ according to the strategy
$(u(s))_s=(\xi(s),\eta(s))_s \in \mathbb{A}(t,x)$ are given by
\[
\phi(t,x,u) 
	 := \int_t^T \xi(s)^\top(\tilde{P}(s) - \Lambda \xi(s))ds + \int_t^T \eta(s)^\top \tilde{P}(s)  d\pi(s).
\]
The first term in the above equation represents the proceeds of selling at the primary exchange at a price of $P(t) = \tilde{P}(s) - \Lambda \xi(s)$, while the second term accounts for the proceeds of selling in the dark pool at the unaffected price $\tilde{P}(s)$. Applying integration by parts and using $X^u(s) = x - \int_t^s \xi(r)dr -\int_t^s \eta(r)d\pi(r)$, Assumption~\ref{AssPrimary}~(i), the fact that $\tilde{P}$ and $\pi$ are independent (Assumption~\ref{AssDP}~(ii)) and the liquidation constraint (Definition~\ref{DefAdmStr}~(iv)), we obtain 
\begin{align} \notag
\phi(t,x,u) &= -\int_t^T \xi(s)^\top \Lambda \xi(s)ds + x^\top \tilde{P}(t) + \int_t^T X^u(s-) d\tilde{P}(s).
\end{align}
This yields (cf.~Assumption~\ref{AssPrimary})
\[
\mathds{E} \big[\phi(t,x,u)\big] = x^\top \tilde{P}(t) -\mathds{E} \Big[
\int_t^T \xi(s)^\top\Lambda \xi(s)ds\Big].
\]
Instead of maximizing expected proceeds, we can thus equivalently minimize expected price impact costs. We assume that the trader is not only interested in expected liquidation proceeds, but in addition also wants to minimize risk during liquidation. We incorporate both aspects in the following cost functional:\footnote{Both mean and variance of execution costs are often used as measures of execution quality. The cost functional $J$ is inspired by such mean variance measures. An alternative approach is the maximization of expected utility: $\tilde{J}(t,x,u) = E[U(\phi(t,x,u))]$ for some utility function $U$. In this alternative set-up analytical solutions are unfortunately not directly available through the methods presented in this paper and are hence left for future research.}
\begin{equation} \notag
J(t,x,u):=x^\top \tilde{P}(t) - \mathds{E} \big[\phi(t,x,u)\big] + \mathds{E}\Big[ \alpha\int_t^T X^u(s)^\top \Sigma X^u(s)ds \Big] =  \mathbb{E} \Big[ \int_t^T f(\xi(s),X^u(s)) ds \Big],
\end{equation}
where
$f:\mathds{R}^{n } \times \mathds{R}^{n} \rightarrow \mathds{R}$
is given by
$
f(\xi ,x) :=  \xi^\top\Lambda \xi  + \alpha x^\top\Sigma x
$.
The first two terms in the cost functional capture the expected liquidation shortfall, while the last term is an additive penalty function $\alpha\int_t^T X^u(s)^\top \Sigma X^u(s)ds$ which reflects the market risk of the portfolio (recall that $\alpha\geq 0$ is the risk aversion parameter of the investor); it penalizes slow liquidation and poorly balanced portfolios. It does not incorporate execution risk; hence the investor is only risk averse with respect to market risk but not with respect to execution risk. In discrete time, \cite{Kratz2010} argue that for realistic parameters, market risk outweighs execution risk;\footnote{The illustration and argument provided in \cite{Kratz2010} applies for the continuous-time case in the same way as it applies to the discrete-time setting in which it is presented.} we hence expect that the inclusion of risk aversion with respect to execution risk would not change the optimal strategy in essence. For deterministic liquidation strategies without dark pools, the risk term reflects the variance of the liquidation costs (see~\cite{Almgren2001}). In this case, minimizing a mean-variance functional of the liquidation costs over all deterministic strategies is equivalent to maximizing the expected utility of the proceeds of an investor with CARA preferences over all strategies (see~\cite{Schied2010}). 

We assume that the trader aims to minimize the cost functional 
and hence considers the following optimization problem:\footnote{An alternative interesting set-up is to consider trade execution under minimum proceeds constraints. Such problems have recently been addressed using theory about stochastic target problems (see, e.g., \cite{Bouchard2009} and~\cite{Bouchard2012}).}
\begin{equation} \label{EqValueFct}
v(t,x):=\inf\limits_{u \in \mathbb{A}(t,x)}  J(t,x,u). \tag{OPT}
\end{equation}
Note that the optimization problem is well-defined and the value function satisfies
$v(t,x) < \infty$ for $t<T$
(consider, e.g., constant liquidation exclusively in the primary exchange).
Because of the liquidation constraint (cf.~Definition~\ref{DefAdmStr}~(iv)), we expect the value function to fulfill
\begin{equation} \notag
\lim\limits_{s \rightarrow T-} v(s,x) =
\begin{cases}
0 &\text{if } x= 0 \\
\infty &\text{else}, 
\end{cases}
\end{equation} 
i.e., $v$ has a singularity at the terminal time $T$. Because of this singularity,
non-standard considerations are necessary for solving the Optimization 
Problem~\eqref{EqValueFct} via a verification argument using
the HJB equation.

As an intermediate step, we hence weaken the liquidation constraint by
allowing for all strategies $u \in \tilde{\mathbb{A}}(t)$
and by penalizing non-liquidation by finite end-costs. More precisely,
for $l >0$ and $u=(\xi,\eta) \in \tilde{\mathbb{A}}(t)$, we define the following cost functional 
\begin{equation} \notag
\tilde{J}(l,t,x,u):=\mathbb{E} \Big[ \int_t^T f(\xi(s),X^u(s)) ds + l \cdot X^u(T)^\top I X^u(T) \Big].
\end{equation} 
The resulting optimization problem is
\begin{equation} \label{EqValueFctl}
\tilde{v}(l,t,x):=\inf\limits_{u \in \tilde{\mathbb{A}}(t)}  \tilde{J}(l,t,x,u). \tag{$\widetilde{\text{OPT}}$}
\end{equation}
The Optimization Problem~\eqref{EqValueFctl} mainly serves as an approximation of the optimization Problem~\eqref{EqValueFct}. However, it is also of interest itself: the penalization term $l \cdot X^u(T)^\top I X^u(T)$ can be considered as the liquidation cost of the left-over position. Note that the identity matrix in the term can be replaced by any positive definite matrix reflecting this interpretation (e.g., $\Lambda$) without changing any of the proofs significantly.

In the following, we solve the unconstrained Optimization Problem~\eqref{EqValueFctl} first (Section~\ref{SubSecMainOPTl}).
Then, we show that the solution of the Optimization Problem~\eqref{EqValueFctl} converges to
the solution of the original constrained Optimization Problem~\eqref{EqValueFct} as $l \rightarrow \infty$
(Section~\ref{SubSecMainOPT}). 

\subsection{Hamilton-Jacobi-Bellman equation} \label{SubSecModelHJB}

In this section we derive a candidate for the value function of the
Optimization Problem~\eqref{EqValueFctl}. Heuristic considerations suggest 
that it should satisfy the following HJB equation
(see, e.g., \cite{OksendalSulem}):
\begin{equation} \tag{$\widetilde{\text{HJB}}$} \label{EqHJB}
\begin{split}
\frac{\partial w}{\partial t} (t,x)& = \!\! \sup\limits_{u=(\xi,\eta) \in \mathds{R}^n\times\mathds{R}^n}
	\! \Big[ \sum\limits_{i=1}^n \theta_i \big( w(t,x)-w(t, x - \eta^\top e_i) \big) 
	\!+ \!\nabla_x w(t,x) \xi \! - \! f(\xi ,x)\Big] \\
w(T,x)&=lx^\top  x.	
\end{split}
\end{equation}
The linear-quadratic form of the cost functional suggests that the value function is quadratic. Assuming that the above guesses are correct, the following proposition provides candidates both for the value function and for the optimal strategy.

\begin{prop} \label{PropSolHJB}
Let $l>0$ and assume that the initial value problem for a matrix differential equation
\begin{equation*} 
C'=C^\top\Lambda^{-1}C+C^\top\tilde{C}C-\alpha \Sigma, \quad
C(T)=l I,
\end{equation*}
where
$
\tilde{C}(l,t):=\diag\Big(\tfrac{\theta_i}{c_{i,i}(l,t)}\Big)
$
possesses a positive definite solution $C(l,t)=(c_{i,j}(l,t))_{i,j=1,\dots,n}$ on $[0,T]$. Then
\begin{equation} \notag
w(l,t,x):= x^\top C(l,t) x
\end{equation}
satisfies the HJB Equation~\eqref{EqHJB}
with maximizer $u^*=(\xi^*,\eta^*)$ for
\[
\xi^*:=\xi^*(l,t,x):=\Lambda^{-1}C(l,t)x, \quad
\eta^*:=\eta^*(l,t,x):=\bar{C}(l,t)C(l,t)x, 
\]
where
$
\bar{C}(l,t):=\diag\Big(\tfrac{1}{c_{i,i}(l,t)}\Big)
$.

If there exist $i_1,\dots, i_k \in \{1,\dots, n\}$ 
such that $\theta_{i_j}=0$ ($j=1,\dots, k$), then $\eta_{i_j}$ can be chosen arbitrarily. Up to arbitrary choices of $\eta_{i_j}$, the maximizer is unique.
\end{prop}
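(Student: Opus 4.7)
The plan is to substitute the quadratic ansatz $w(l,t,x) = x^\top C(l,t) x$ directly into the HJB Equation~\eqref{EqHJB}, evaluate the inner supremum explicitly, and then match the result with the prescribed matrix ODE. Positive definiteness of $C(l,t)$ is exactly what is needed in the computation: it supplies symmetry (so that $\nabla_x w(t,x) = 2 x^\top C(l,t)$) and it guarantees $c_{i,i}(l,t) > 0$ for every $i$, so that the diagonal matrices $\tilde C(l,t)$ and $\bar C(l,t)$ are well defined and the relevant quadratics in $\eta_i$ are strictly concave.

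First I would expand the jump term. Using the symmetry of $C$, a direct computation gives
\[
w(t,x) - w\bigl(t, x - \eta_i e_i\bigr) = 2 \eta_i \bigl(C(l,t) x\bigr)_i - \eta_i^2 \, c_{i,i}(l,t).
\]
The supremand in~\eqref{EqHJB} then decouples completely: the $\xi$-dependence appears only through the strictly concave quadratic $2 x^\top C \xi - \xi^\top \Lambda \xi$ (Hessian $-2\Lambda$, negative definite), while for each $i$ the variable $\eta_i$ enters only through $\theta_i\bigl(2\eta_i (Cx)_i - \eta_i^2 c_{i,i}\bigr)$. First-order conditions therefore yield the unconstrained optima: $\xi^* = \Lambda^{-1} C x$ with value $x^\top C \Lambda^{-1} C x$; and, when $\theta_i > 0$, $\eta_i^* = (Cx)_i / c_{i,i}$ with value $\theta_i (Cx)_i^2 / c_{i,i}$. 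Summing the latter over $i$ produces $x^\top C \tilde C C x$. When $\theta_i = 0$ the $i$-th summand is identically zero, so $\eta_i$ is arbitrary.

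Collecting terms, the right-hand side of~\eqref{EqHJB} equals
\[
x^\top \bigl[\, C \Lambda^{-1} C + C\, \tilde C\, C - \alpha \Sigma\, \bigr] x,
\]
while the left-hand side is $\partial_t w = x^\top C'(l,t) x$. Since $C = C^\top$, the required equality for all $x \in \mathds{R}^n$ is precisely the matrix ODE stated in the proposition, and the terminal condition $w(T,x) = l x^\top x$ translates to $C(T) = lI$, which is given. Uniqueness of $(\xi^*,\eta^*)$ up to the components $\eta_i$ with $\theta_i = 0$ follows from the strict concavity of the relevant quadratics noted above.

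I do not anticipate a genuine obstacle. The only points that require attention are (i) recording that positive definiteness of $C$ is \emph{simultaneously} what makes $\tilde C, \bar C$ well defined and what ensures that the $\eta_i$-suprema are attained and unique, and (ii) the bookkeeping around the degenerate indices $\theta_i = 0$, for which the corresponding summand is identically zero and hence $\eta_i$ drops out of the problem.
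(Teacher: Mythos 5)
Your proposal is correct and follows exactly the paper's approach: the paper's own proof is the one-line remark that the assertion follows by plugging the quadratic ansatz into \eqref{EqHJB} and maximizing by standard calculus, which is precisely the computation you carry out (your expansion of the jump term, the decoupled strictly concave quadratics in $\xi$ and each $\eta_i$, and the resulting matching with the matrix ODE are all right). No gaps.
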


\begin{proof}
The assertion follows directly from plugging the quadratic ansatz $w(l,t,x)= x^\top C(l,t) x$ into the HJB Equation~\eqref{EqHJB}; the resulting function can be maximized by standard calculus.
\end{proof}

\section{Main results} \label{SecMain}

Proposition~\ref{PropSolHJB} suggests that the solution of the Optimization Problem~\eqref{EqValueFctl} solves the initial value problem for the matrix differential equation
\begin{equation} \label{EqODE1}
\begin{split}
C'&=C^\top\Lambda^{-1}C+C^\top\tilde{C}C-\alpha \Sigma \\
C(T)&=l I,
\end{split}
\end{equation}
where 
\begin{equation} \label{EqTildeC}
\tilde{C}:=\diag\Big(\frac{\theta_i}{c_{i,i}}\Big).
\end{equation}
In the remainder of the section, we state the main results of the article. In Section~\ref{SubSecMainIVP}, we show that~\eqref{EqODE1} admits a unique solution $C$ on $[0,T]$ and establish appropriate upper and lower bounds for $C$. Subsequently, we deduce the solution of the Optimization Problem~\eqref{EqValueFctl} in Section~\ref{SubSecMainOPTl} and as a limit of this (as $l \rightarrow \infty$) the solution of the Optimization Problem~\eqref{EqValueFct} in Section~\ref{SubSecMainOPT}. Proofs of these results are presented in Section~\ref{SecProofs}.

Before we proceed, we introduce the following notations.
\begin{nota} \label{NotaMatrix}
\begin{enumerate}
\item[(i)]
For symmetric matrices $A$ and $B$ we say 
$A > B$ ($A\geq B$) if $A-B$ is positive (nonnegative) definite. 
\item[(ii)]
We denote the smallest and the largest eigenvalues of a real-symmetric matrix $A$ by $a_{\min}$ and $a_{\max}$, respectively. 
\item[(iii)]
We define the following nonnegative definite matrix: 
$D:=\sqrt{\Lambda^{-1}} \Sigma \sqrt{\Lambda^{-1}}$.
\end{enumerate}
\end{nota}

\subsection{Solution of the Initial Value Problem~\eqref{EqODE1}} \label{SubSecMainIVP}

It is not immediately clear that the Initial Value Problem~\eqref{EqODE1} possesses a 
positive definite solution on the whole interval $[0,T]$ for $n\geq 2$. For $n=1$, it reduces to
$
C'=\tfrac{C^2}{\Lambda} + \theta_1 C - \alpha \Sigma$, $C(T)= l$.
This is an initial value problem for a scalar Riccati differential equation
with constant coefficients, whose unique solution 
is explicitly known and exists on the whole interval $[0,T]$ (cf.~Section~\ref{SecContPropOne}). For $n\geq 2$, the following theorem establishes the existence and uniqueness of the solution of~\eqref{EqODE1}.

\begin{theorem} \label{TheoremDiffBounds}
Let $\theta_i \geq 0$ for $i=1,\dots ,n$, $\theta=\sum\limits_{i=1}^n \theta_i$ and $l>l_0$,
where
\begin{equation} \label{lNull}
l_0:=\max\Big\{\lambda_{\max} \Big( \sqrt{\tfrac{\theta^2}{4}+\alpha d_{\min}} - \tfrac{\theta}{2} \Big),
	\lambda_{\min} \big(\sqrt{\alpha d_{\max}}\big) \Big\}
\end{equation}
(cf.~Notation~\ref{NotaMatrix}). 
Then the Initial Value Problem~\eqref{EqODE1} possesses a unique solution $C(l,\cdot)$ on $(-\infty,T]$. The solution is symmetric for all 
$t \in (-\infty,T]$ and
\begin{equation*} 
0<P(l,t) \leq \sqrt{\Lambda^{-1}}C(l,t)\sqrt{\Lambda^{-1}} \leq Q(l,t),
\end{equation*}
where $P$ and $Q$ are the solutions of the initial value problems
\begin{equation} \label{IVPPQ}
P'=P^2+ \theta P-\alpha d_{\min}I, \quad
P(T)=\frac{l}{\lambda_{\max}}I  \quad \text{ and } \quad
Q'=Q^2 - \alpha d_{\max} I, \quad  
Q(T)=\frac{l}{\lambda_{\min}}I,
\end{equation}
respectively.
\end{theorem}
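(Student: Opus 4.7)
The plan has four stages: (i) local existence and symmetry of $C$ near $t=T$; (ii) a novel matrix inequality $0\le C\tilde CC\le\theta C$ controlling the non-Riccati term; (iii) a change of variables and a Riccati comparison argument producing the bounds $P\le S\le Q$, where $S:=\sqrt{\Lambda^{-1}}C\sqrt{\Lambda^{-1}}$; and (iv) using these bounds to preclude blow-up and extend $C$ backward to $(-\infty,T]$.

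For stage~(i), the right-hand side of~\eqref{EqODE1} is $C^\infty$ on the open set $\{C\in\mathds{R}^{n\times n}:c_{ii}>0\ \forall i\}$, which contains $C(T)=lI$, so Picard-Lindel\"of yields a unique local solution on a maximal interval $(t_*,T]$. Each summand on the right of~\eqref{EqODE1} is invariant under $C\mapsto C^\top$ (the term $C^\top\tilde CC$ because $\tilde C$ is diagonal in the $c_{ii}$), so $C^\top$ solves the same IVP and uniqueness forces $C=C^\top$. For stage~(ii), the upper bound $C\tilde CC\le\theta C$ is equivalent to $C^{1/2}\tilde CC^{1/2}\le\theta I$; writing $v_i:=C^{1/2}e_i$ and noting $\|v_i\|^2=c_{ii}$,
\[
C^{1/2}\tilde CC^{1/2}=\sum_{i=1}^n\frac{\theta_i}{\|v_i\|^2}\,v_iv_i^\top
\]
is positive semi-definite with trace $\sum_i\theta_i=\theta$, and hence has all eigenvalues in $[0,\theta]$.

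For stage~(iii), equation~\eqref{EqODE1} in the $S$ coordinates reads $S'=S^2+\sqrt{\Lambda^{-1}}(C\tilde CC)\sqrt{\Lambda^{-1}}-\alpha D$, and combined with the matrix inequality together with $d_{\min}I\le D\le d_{\max}I$ yields the two-sided differential inequality
\[
S^2-\alpha d_{\max}I\ \le\ S'\ \le\ S^2+\theta S-\alpha d_{\min}I,
\]
together with $(l/\lambda_{\max})I\le S(T)=l\Lambda^{-1}\le(l/\lambda_{\min})I$. Because $P(T),Q(T)$ are scalar multiples of $I$ and the IVPs~\eqref{IVPPQ} preserve this structure, one has $P=p(t)I$ and $Q=q(t)I$ for scalar Riccati solutions. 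The hypothesis $l>l_0$ is precisely that $p(T)=l/\lambda_{\max}$ and $q(T)=l/\lambda_{\min}$ lie strictly above the larger positive equilibria $\sqrt{\theta^2/4+\alpha d_{\min}}-\theta/2$ and $\sqrt{\alpha d_{\max}}$, so phase-line analysis gives strictly positive, bounded scalar solutions $p,q$ on $(-\infty,T]$, decreasing monotonically to these equilibria as $t\to-\infty$. Since $P,Q$ are scalar multiples of $I$ (and hence commute with $S$, so that $S^2-P^2$ and $Q^2-S^2$ factor cleanly), the standard comparison theorem for matrix Riccati differential (in)equalities applies to the two-sided inequality above and delivers $P(l,t)\le S(l,t)\le Q(l,t)$ on $(t_*,T]$.

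For stage~(iv), the bound $S\ge p(t)I$ gives $C\ge p(t)\Lambda$, whence $c_{ii}(l,t)\ge p(t)\lambda_{\min}>0$ uniformly on compact subintervals of $(-\infty,T]$, while $S\le q(t)I$ keeps $C$ bounded by $q(t)\Lambda$. Thus $C$ remains in the smoothness domain of the right-hand side with no blow-up, and the standard continuation criterion forces $t_*=-\infty$; uniqueness on $(-\infty,T]$ is immediate from Picard-Lindel\"of. I expect the main obstacle to be stage~(ii): the diagonal nonlinearity $\tilde C=\diag(\theta_i/c_{ii})$ prevents a direct appeal to existing matrix Riccati theory, and the trace-based inequality above is precisely the device that re-embeds the problem into a Riccati framework by bounding the non-Riccati term linearly in $C$. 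Once this is in hand, the scalar-$I$ structure of $P,Q$ makes the comparison step routine rather than delicate.
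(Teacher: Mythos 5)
Your overall architecture --- local existence and symmetry, a matrix inequality $0\le C\tilde CC\le\theta C$ to sandwich the non-Riccati term, a Riccati comparison yielding $P\le\sqrt{\Lambda^{-1}}C\sqrt{\Lambda^{-1}}\le Q$, and continuation to $(-\infty,T]$ via these bounds --- is exactly the paper's route. Where you genuinely diverge is in the proof of the key inequality. The paper proves $C\le\theta\,\tilde C^{-1}$ (Proposition~\ref{PropMatrixIneq}) by induction on the dimension via a Schur-complement computation, and then inverts to obtain $C\tilde CC\le\theta C$ (Corollary~\ref{CorMatIneq}). Your argument --- that $C^{1/2}\tilde CC^{1/2}=\sum_i(\theta_i/\|v_i\|_2^2)\,v_iv_i^\top$ with $v_i=C^{1/2}e_i$ is positive semidefinite with trace $\sum_i\theta_i=\theta$, hence has all eigenvalues in $[0,\theta]$ --- is correct (the conjugation by $C^{1/2}$ is an honest equivalence for $C>0$) and proves the same inequality $\tilde C\le\theta C^{-1}$ more directly and transparently than the induction. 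That is a genuine improvement in the proof of the lemma, and the rest of your stage~(iii) (the transformed equation for $S$, the role of $l>l_0$ as placing the terminal values above the larger equilibria of the scalar Riccati flows) checks out.

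There is, however, one logical gap to close. The inequality $C\tilde CC\le\theta C$ is only available where $C$ is positive (semi)definite, and positive definiteness of $C$ is precisely what you later deduce from the lower bound $S\ge P>0$; your stage~(i) only keeps the solution in the open set $\{c_{ii}>0\}$, which does not give $C\ge0$. As written, stages (ii)--(iv) are circular. The paper closes this circle with a first-exit-time argument: supposing $C(l,\cdot)$ fails to be positive definite somewhere on the existence interval, let $\tau$ be the supremum of such times; by continuity $C>0$ on $(\tau,T]$, so the matrix inequality and Theorem~\ref{TheRiccIneq} apply there and give $C\ge\Lambda P>0$ on $(\tau,T]$, whence $C(\tau)>0$ by continuity --- a contradiction. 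You need this (or an equivalent open--closed argument) before your two-sided differential inequality is legitimate on the whole maximal interval. A smaller bookkeeping point: the comparison result actually available (Theorem~\ref{TheRiccIneq}) compares two genuine Riccati equations with ordered coefficient matrices, so to apply it you must absorb the signed remainder $C\tilde CC-\theta C\le0$ (respectively $C\tilde CC\ge0$) into the zeroth-order term and check the sign conventions, which after the substitution $\bar C=-C$ is what the paper does explicitly. These are repairs of rigor, not of the idea.
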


\begin{remark}
\begin{enumerate}
\item[(i)]
The solutions of the initial value problems for Riccati matrix differential equations in~\eqref{IVPPQ} exist on the whole interval $[0,T]$ and can be computed in closed form (cf.~Equations~\eqref{PQlt} - \eqref{pqlttheta0}). 
For technical reasons, we prefer to establish bounds for $\sqrt{\Lambda^{-1}} C \sqrt{\Lambda^{-1}}$ instead of bounds for $C$. 
$P$ and $Q$ are constructed in terms of multiples of the identity matrix and hence commute with all matrices. Therefore, they transfer to bounds of $C$ directly by multiplying them with $\Lambda$. 
\item[(ii)]
The bounds of $C$ are an essential component for the proof of Theorem~\ref{TheoremDiffBounds}. Additionally, they are required for all key steps of the solution of the Optimization Problem~\eqref{EqValueFctl} (Proposition~\ref{PropAdml} and Theorem~\ref{TheoremOptStrl}) and of the solution of the Optimization Problem~\eqref{EqValueFct} (Theorem~\ref{TheoremLimitValueFunc}, Theorem~\ref{TheoremAdm} and Theorem~\ref{TheoremOptLiq}).
\end{enumerate}
\end{remark}

\subsection{Solution of the Optimization Problem~\eqref{EqValueFctl}} \label{SubSecMainOPTl}

Combining Proposition~\ref{PropSolHJB} and Theorem~\ref{TheoremDiffBounds}, we obtain well-defined candidates both for the value function ($x^\top C(l,t) x$) and for the optimal strategy $u^*=(\xi^*,\eta^*)$ of the Optimization Problem~\eqref{EqValueFctl}. The latter is given by
\begin{align}
\xi^*(l):=\xi^*(l,t,x)&:=\Lambda^{-1}C(l,t)x,  \label{OptStrlxi}\\
\eta^*(l):=\eta^*(l,t,x)&:=\tilde{I} \bar{C}(l,t)C(l,t)x, \label{OptStrleta}
\end{align}
where
$\tilde{I}=(e_{i,j})_{i,j=1,\dots,n}$ is the diagonal matrix with
\begin{equation*}
e_{i,i}=
\begin{cases}
1 &\text{ if } \theta_i>0 \\
0 &\text{ else}
\end{cases}
\qquad \text{and} \qquad
\bar{C}(l,t):=\diag\Big(\frac{1}{c_{i,i}(l,t)}\Big).
\end{equation*}
The following Proposition confirms that $u^*$ is admissible. 

\begin{prop} \label{PropAdml}
Let $l>l_0$ for $l_0$ as in Equation~\eqref{lNull} and $(t,x) \in [0,T) \times \mathds{R}^n$. Then $u^*(l)=(\xi^*(l),\eta^*(l)) \in \tilde{\mathbb{A}}(t)$, where $\xi^*(l)$ and $\eta^*(l)$ are as in Equations~\eqref{OptStrlxi} and \eqref{OptStrleta}, respectively.
\end{prop}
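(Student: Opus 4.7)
The plan is to verify the three admissibility conditions of Definition~\ref{DefAdmStr} (a) in turn for the proposed Markovian controls $\xi^*(l,t,x)=\Lambda^{-1}C(l,t)x$ and $\eta^*(l,t,x)=\tilde I\bar C(l,t)C(l,t)x$. The key structural fact I will exploit throughout is that these controls are linear in the state, with continuous time-dependent coefficients whose sup-norms on $[t,T]$ are finite. Continuity of $C(l,\cdot)$ on $(-\infty,T]$ is given by Theorem~\ref{TheoremDiffBounds}, and continuity of $\bar C(l,\cdot)$ follows because the lower bound $\sqrt{\Lambda^{-1}}C(l,\cdot)\sqrt{\Lambda^{-1}}\ge P(l,\cdot)>0$ from Theorem~\ref{TheoremDiffBounds} forces every diagonal entry $c_{i,i}(l,\cdot)$ to stay bounded away from zero on the compact interval $[t,T]$. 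Condition~(iii) is immediate from the very definition of $\eta^*$, since the factor $\tilde I$ zeroes out the components corresponding to $\theta_i=0$.

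For condition~(i), I would invoke the interlacing construction described in Remark~\ref{RemarkInterJump}. Between two consecutive jump times of $\pi$ the process $X$ satisfies the deterministic linear ODE $X'=-\Lambda^{-1}C(l,\cdot)X$, whose right-hand side is $C^1$ in $x$ and continuous (hence bounded) in $t$ on $[t,T]$; Peano's theorem together with Gronwall's inequality delivers a unique solution with at most exponential growth. At each jump time $\tau_{j+1}\le T$ of some $\pi_{k}$, the state is updated to $X(\tau_{j+1}-)-\eta^*_k(l,\tau_{j+1},X(\tau_{j+1}-))e_k$, which is a well-defined deterministic function of $X(\tau_{j+1}-)$. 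Inductively gluing these pieces yields a pathwise unique solution of the Stochastic Differential Equation~\eqref{EqCSDE} on $[t,T)$.

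The nontrivial part is condition~(ii), the $L^4$- and $L^8$-integrability of $\xi^*$ and $\eta^*$. By the linearity of the controls in $x$, it suffices to establish $\mathbb{E}[\sup_{s\in[t,T]}\|X^{u^*}(s)\|^{p}]<\infty$ for $p=8$. Let $K_1:=\sup_{s\in[t,T]}\|\Lambda^{-1}C(l,s)\|$ and $K_2:=\sup_{s\in[t,T]}\|\tilde I\bar C(l,s)C(l,s)\|$, both finite by the considerations above. Applying Itô's formula for jump processes to $\|X^{u^*}(s)\|^{p}$ gives
\begin{equation*}
d\|X^{u^*}(s)\|^{p}=-p\,\|X^{u^*}(s)\|^{p-2}X^{u^*}(s)^{\top}\xi^{*}(s)\,ds
+\sum_{i=1}^{n}\bigl(\|X^{u^*}(s-)-\eta^*_i(s)e_i\|^{p}-\|X^{u^*}(s-)\|^{p}\bigr)d\pi_i(s).
\end{equation*}
Taking expectations, using the elementary estimate $\|X-\eta^*_i e_i\|\le(1+K_2)\|X\|$ for the jump part and $|\,p\,\|X\|^{p-2}X^{\top}\xi^*|\le pK_1\|X\|^{p}$ for the drift part, I obtain an inequality of the form $\mathbb{E}[\|X^{u^*}(s)\|^{p}]\le\|x\|^{p}+\kappa\int_{t}^{s}\mathbb{E}[\|X^{u^*}(r)\|^{p}]\,dr$ for a constant $\kappa$ depending on $K_1$, $K_2$, $p$ and the intensities $\theta_i$. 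Gronwall's lemma then bounds $\mathbb{E}[\|X^{u^*}(s)\|^{p}]$ uniformly on $[t,T]$, and the moment conditions in~(ii) follow at once from $\|\xi^*(s)\|\le K_1\|X^{u^*}(s)\|$ and $\|\eta^*(s)\|\le K_2\|X^{u^*}(s-)\|$.

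The main obstacle I anticipate is purely bookkeeping, namely verifying that the constant $\kappa$ above is actually finite; this is where the \emph{lower} bound $P(l,t)>0$ from Theorem~\ref{TheoremDiffBounds} is indispensable, because without it $\bar C(l,t)$ could blow up as $t\uparrow T$ and $\eta^*$ would no longer have a bounded coefficient. Once the two-sided eigenvalue control furnished by Theorem~\ref{TheoremDiffBounds} is in place, the rest of the proof is routine SDE-with-jumps estimation.
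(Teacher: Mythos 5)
Your proposal is correct in substance, but for the key step (condition~(ii)) it takes a genuinely different route from the paper. The paper first proves a \emph{pathwise, deterministic} bound $\|X^*(l,s)\|_2\le K$ a.s.\ (Lemma~\ref{LemXl*}), with $K$ independent of $l$, by using $X^\top C(l,s)X$ as a Lyapunov function: between jumps the quadratic terms $C\Lambda^{-1}C$ cancel in the derivative and the remainder is controlled by the matrix inequality $C\tilde CC\le\theta C$ (Corollary~\ref{CorMatIneq}), giving growth at most $e^{\theta(s-t)}$ by Gronwall; at jump times the optimal dark pool order is by construction the \emph{minimizer} of $\eta\mapsto(X-\eta e_j)^\top C(X-\eta e_j)$, so jumps can only decrease the Lyapunov function. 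The lower bound $P(l,\cdot)>0$ then converts this into a bound on $\|X^*(l,s)\|_2$, and the moment conditions follow trivially because the integrands are a.s.\ bounded. You instead work with $\|X\|^p$ directly; since jumps can \emph{increase} $\|X\|$ by a factor up to $1+K_2$, you only obtain a moment bound $\mathbb{E}[\sup_s\|X^{u^*}(s)\|^p]<\infty$ via It\^o plus Gronwall in expectation. That is weaker but entirely sufficient for Proposition~\ref{PropAdml}; the paper's sharper $l$-uniform a.s.\ bound is what gets reused later (in Proposition~\ref{PropLiquidation} and Theorem~\ref{TheoremOptLiq}), so your argument would not substitute for Lemma~\ref{LemXl*} there. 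Two small points to tighten: (a) passing from the jump integral against $d\pi_i$ to its compensator $\theta_i\,ds$ and discarding the martingale part requires a localization argument (stop at $\sigma_m=\inf\{s:\|X(s)\|\ge m\}$, apply Gronwall to the stopped process, let $m\to\infty$ by Fatou), since Gronwall also needs a priori finiteness of $\mathbb{E}[\|X(r)\|^p]$; (b) your identification of the lower bound $P(l,\cdot)>0$ as the reason $\bar C(l,\cdot)$ stays bounded is exactly right and matches the paper's use of $p(l_0,\cdot)$ attaining a positive minimum on $[t,T]$.
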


Finally, we obtain the solution of the Optimization Problem~\eqref{EqValueFctl}.

\begin{theorem} \label{TheoremOptStrl}
Let $l \geq l_0$ for $l_0$ as in Equation~\eqref{lNull} and let $C(l,t)$ be the unique 
solution of the Initial Value Problem~\eqref{EqODE1}.
Then the value function of the Optimization Problem~\eqref{EqValueFctl} is given by
\begin{equation} \notag
\tilde{v}(l,t,x)=x^\top C(l,t) x
\end{equation}
and the $\mathbb{P}\otimes \lambda$ - almost surely unique optimal strategy is given by 
$u^*(l)$ as in Equations~\eqref{OptStrlxi} and~\eqref{OptStrleta}.
\end{theorem}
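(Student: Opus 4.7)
The plan is a classical verification argument anchored by Proposition~\ref{PropSolHJB}, Theorem~\ref{TheoremDiffBounds} and Proposition~\ref{PropAdml}. Proposition~\ref{PropSolHJB} identifies $w(l,t,x) := x^\top C(l,t) x$ as a classical solution of the HJB Equation~\eqref{EqHJB} whenever $C(l,\cdot)$ is a positive-definite solution of~\eqref{EqODE1} on $[0,T]$, and Theorem~\ref{TheoremDiffBounds} provides exactly that (together with two-sided bounds on $C$). The candidate feedback law $u^*(l)=(\xi^*(l),\eta^*(l))$ is the pointwise maximizer in the HJB Equation, and Proposition~\ref{PropAdml} certifies that $u^*(l)\in\tilde{\mathbb{A}}(t)$. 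It therefore remains to show that (a) $w(l,t,x)\leq \tilde{J}(l,t,x,u)$ for every $u\in\tilde{\mathbb{A}}(t)$, (b) equality is attained at $u^*(l)$, and (c) $u^*(l)$ is the $\mathbb{P}\otimes\lambda$-a.s.~unique minimizer.

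For (a) and (b) I would apply It\^o's formula for semimartingales with jumps to $s\mapsto w(l,s,X^u(s))$ on $[t,T]$. Writing $d\pi_i(s)=dM_i(s)+\theta_i\,ds$ with $M_i$ the compensated Poisson martingale, and using that $C$ is $C^1$ in $s$, this yields
\begin{equation*}
w(l,T,X^u(T)) - w(l,t,x) = \int_t^T \mathcal{L}^{u(s)} w(l,s,X^u(s))\,ds + \mathcal{M}(T),
\end{equation*}
where $\mathcal{M}$ is a local martingale and
\begin{equation*}
\mathcal{L}^{u} w(l,s,x) := \tfrac{\partial w}{\partial s}(l,s,x) - \nabla_x w(l,s,x)\,\xi + \sum_{i=1}^n \theta_i\bigl[w(l,s,x-\eta^\top e_i)-w(l,s,x)\bigr].
\end{equation*}
The HJB Equation~\eqref{EqHJB} rewritten in this form is exactly $\mathcal{L}^u w \geq -f(\xi,x)$ for every admissible control value, with equality when $u=u^*(l,s,x)$, by construction of the maximizer in Proposition~\ref{PropSolHJB}. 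Taking expectations of the It\^o expansion (conditional on the local-martingale step, treated below) and using the terminal condition $w(l,T,x)=l\,x^\top x$ gives
\begin{equation*}
w(l,t,x) \leq \mathbb{E}\Bigl[\int_t^T f(\xi(s),X^u(s))\,ds + l\,X^u(T)^\top X^u(T)\Bigr] = \tilde{J}(l,t,x,u),
\end{equation*}
with equality along $u=u^*(l)$.

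The hard part is justifying that $\mathbb{E}[\mathcal{M}(T)]=0$; this is where the moment assumptions in Definition~\ref{DefAdmStr}~(ii) and the uniform bounds on $C$ from Theorem~\ref{TheoremDiffBounds} become essential. The continuous part of $\mathcal{M}$ comes from $\nabla_x w \cdot d\tilde{P}$, whose integrand is linear in $X^u$ with a coefficient $2C(l,s)X^u(s)$ uniformly bounded in $s$; square-integrability follows from the moment bound on $X^u$ (Lemma~\ref{LemXuintegrierbar}) together with Assumption~\ref{AssPrimary}. The jump part involves $\int_t^T \bigl[w(l,s,X^u(s-)-\eta_i(s)e_i)-w(l,s,X^u(s-))\bigr]dM_i(s)$, whose integrand is a quadratic form in $(X^u,\eta)$; square-integrability in that component is precisely what the fourth moment of $\xi$ and the eighth moment of $\eta$ (via Lemma~\ref{LemXuintegrierbar}) are designed to deliver once the coefficients are bounded uniformly in $s$ by $\|C(l,\cdot)\|_\infty$. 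A standard stopping-time localization through $\tau_N:=\inf\{s\geq t:\|X^u(s)\|\geq N\}\wedge T$ together with dominated convergence then turns the local martingale property into a true zero expectation.

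For (c), uniqueness follows directly from the strict concavity of the Hamiltonian in $(\xi,\eta)$ whenever the diagonal entries $c_{i,i}(l,s)$ are strictly positive (guaranteed by Theorem~\ref{TheoremDiffBounds}): any strategy $u\in\tilde{\mathbb{A}}(t)$ that attains equality in the HJB inequality $\mathbb{P}\otimes\lambda$-a.e.~must coincide with the unique maximizer $u^*(l,s,X^u(s))$ in those components where $\theta_i>0$, while condition~(iii) of Definition~\ref{DefAdmStr} forces $\eta_i\equiv 0$ in the remaining components, matching the definition of $\tilde{I}$ in~\eqref{OptStrleta}. Combining (a)--(c) yields $\tilde{v}(l,t,x)=w(l,t,x)=x^\top C(l,t)x$ and $\mathbb{P}\otimes\lambda$-a.s.~uniqueness of $u^*(l)$.
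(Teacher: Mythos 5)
Your verification argument is essentially the paper's own proof: It\^o's formula applied to $w(l,s,X^u(s))$, the pointwise HJB inequality from Proposition~\ref{PropSolHJB} with equality exactly at the maximizer, the $L^2$ bound on the jump integrand via Lemma~\ref{LemXuintegrierbar}, Definition~\ref{DefAdmStr}~(ii) and the uniform bounds on $C(l,\cdot)$ from Theorem~\ref{TheoremDiffBounds} to turn the compensated-Poisson integrals into true martingales, and uniqueness from strict concavity of the Hamiltonian together with condition~(iii) of Definition~\ref{DefAdmStr}. One small correction: there is no continuous martingale part $\nabla_x w\cdot d\tilde{P}$ to control, since the state dynamics~\eqref{EqCSDE} do not involve $\tilde{P}$ at all (the price process was already eliminated when reducing to the cost functional $\tilde{J}$), so the only martingale terms in the It\^o expansion are the compensated Poisson integrals, which you handle correctly.
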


\subsection{Solution of the Optimization Problem~\eqref{EqValueFct}} \label{SubSecMainOPT}

Intuitively, infinite end-costs should force the controlled process $X^u(s)$ to approach zero as $s \rightarrow T-$; furthermore the solution of the Optimization Problem~\eqref{EqValueFct} should be the limit of the solution of
the Optimization Problem~\eqref{EqValueFctl} as $l \rightarrow \infty$. The following theorem confirms (in particular) that this limit is well-defined. 

\begin{theorem} \label{TheoremLimitValueFunc}
Let $t \in [0,T)$.
\begin{enumerate}
\item[(i)]
The element-wise limit of the value function matrix
\begin{equation} \notag
C(t):=\lim\limits_{l\rightarrow \infty }C(l,t)
\end{equation}
exists on $[0,T)$, and $C(l,\cdot)$ converges compactly to $C$ on $[0,T)$. Furthermore,
$\lim_{l\rightarrow \infty }c_{\min}(l,T)=\infty.$
\item[(ii)]
$C$ solves the matrix differential equation
\begin{equation} \label{EqDiffEqLimit}
C'=C^\top\Lambda^{-1}C+C^\top\tilde{C}C-\alpha \Sigma 
\end{equation}
on $[0,T)$ with boundary condition $\lim_{s \rightarrow T-} c_{\min}(s) =\infty$. Moreover, the following inequalities hold.
\begin{equation} \label{EqBoundsLimitC}
0<P(t) \leq \sqrt{\Lambda^{-1}} C(t) \sqrt{\Lambda^{-1}} \leq Q(t),
\end{equation}
where $P(t):=\lim_{l\rightarrow \infty} P(l,t)$ and $Q(t):=\lim_{l\rightarrow \infty} Q(l,t)$.
\end{enumerate}
\end{theorem}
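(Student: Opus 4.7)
The plan is to combine (a) monotonicity of $C(l,t)$ in $l$, inherited from the variational characterization of Theorem~\ref{TheoremOptStrl}, with (b) the uniform two-sided bounds of Theorem~\ref{TheoremDiffBounds}, in order to extract a pointwise limit, upgrade it to compact convergence via equicontinuity, and then pass to the limit in the integral form of the matrix ODE.

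For monotonicity in $l$, Theorem~\ref{TheoremOptStrl} identifies $x^\top C(l,t) x$ with $\tilde v(l,t,x) = \inf_{u \in \tilde{\mathbb{A}}(t)} \tilde J(l,t,x,u)$. Since the end-penalty $l\, X^u(T)^\top X^u(T)$ is non-decreasing in $l$ for every fixed admissible $u$, so is the infimum $\tilde v(l,t,x)$; polarization together with symmetry of $C(l,t)$ (Theorem~\ref{TheoremDiffBounds}) then yields $C(l_1,t) \le C(l_2,t)$ in the positive semidefinite order whenever $l_0 < l_1 \le l_2$. The bounding matrices $P(l,t)$ and $Q(l,t)$ of Theorem~\ref{TheoremDiffBounds} are scalar multiples of the identity solving explicit scalar Riccati equations; standard Riccati theory (or a direct separation-of-variables computation) shows that each converges monotonically as $l\to\infty$ to a principal solution which is finite on $[0,T)$ and diverges as $t\to T-$.

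Combined with the $l$-monotonicity above, this yields the pointwise (hence entry-wise) limit $C(t) := \lim_{l\to\infty} C(l,t)$ on $[0,T)$, and the inequalities~\eqref{EqBoundsLimitC} follow by passing to the limit in those of Theorem~\ref{TheoremDiffBounds} (positivity $P(t) > 0$ is inherited from $P(l,t) > 0$ via $l$-monotonicity). To upgrade pointwise to compact convergence, fix $[a,t_0] \subset [0,T)$. The upper bound gives a uniform (in $l$) bound $\|C(l,t)\| \le M$ on $[a,t_0]$, while the lower bound combined with $l$- and $t$-monotonicity of the principal solution $p$ gives $c_{i,i}(l,t) \ge \lambda_{\min}\, p(l_1,a) =: \varepsilon > 0$ uniformly in $l \ge l_1$ and $t\in[a,t_0]$, for any fixed $l_1 > l_0$. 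Hence $\tilde C(l,t)$ is uniformly bounded, the right-hand side of~\eqref{EqODE1} is uniformly bounded on $[a,t_0]$, and $\{C(l,\cdot)\}$ is equicontinuous. Arzel\`a--Ascoli combined with pointwise convergence then yields uniform convergence of $C(l,\cdot)$ to $C$ on $[a,t_0]$.

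Passing to the limit in the integral identity $C(l,t) = C(l,t_0) - \int_t^{t_0} F(C(l,s))\,ds$, with $F(C) := C^\top\Lambda^{-1}C + C^\top\tilde C(C)C - \alpha\Sigma$, is then immediate: $F$ is uniformly continuous on the compact set $\{C : \|C\| \le M,\ c_{i,i} \ge \varepsilon\}$, in which all $C(l,s)$ for $s \in [a,t_0]$ lie, so uniform convergence of the integrand allows exchange of limit and integral. Differentiating yields~\eqref{EqDiffEqLimit} on $[a,t_0]$, and arbitrariness of $[a,t_0] \subset [0,T)$ gives the ODE on $[0,T)$. The boundary $c_{\min}(l,T) = l \to \infty$ is immediate from $C(l,T) = lI$, while the singular boundary $\lim_{s\to T-} c_{\min}(s) = \infty$ follows from $c_{\min}(s) \ge \lambda_{\min}\, p(s)$, where $p$ is the principal solution of the scalar Riccati $p' = p^2 + \theta p - \alpha d_{\min}$ appearing in the lower bound, which diverges as $s\to T-$. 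The main technical obstacle is precisely the uniform-in-$l$ positive lower bound on the diagonal entries of $C(l,t)$ on compact sub-intervals of $[0,T)$: without it the non-Riccati term $C^\top\tilde C(C)C$ is not controllable in the limit, and this bound is supplied by the lower bound of Theorem~\ref{TheoremDiffBounds}.
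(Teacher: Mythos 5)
Your proof is correct and follows essentially the same route as the paper: monotonicity of $C(l,t)$ in $l$ via the variational characterization from Theorem~\ref{TheoremOptStrl}, the two-sided bounds of Theorem~\ref{TheoremDiffBounds} and their explicit limits, a pointwise limit upgraded to compact convergence, and passage to the limit in the integral form of~\eqref{EqODE1}, with the singular boundary behaviour read off from $p$. The only divergence is that you obtain compact convergence via the uniform-in-$l$ bounds on $C(l,\cdot)$ and $\tilde C(l,\cdot)$, equicontinuity and Arzel\`a--Ascoli, whereas the paper invokes Dini's theorem; your variant is if anything slightly more self-contained, since Dini presupposes continuity of the limit, which your argument delivers as a byproduct.
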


\begin{remark} \label{RemarkPrincipalSolution}
For Riccati matrix differential equations, there exists a unique solution $F$ with
$\lim_{s \rightarrow T-} f_{\min}(s) = \infty.$
This solution is called the \textnormal{principal solution} (see, e.g., \cite{Coppel1971}). In this
spirit, $C$ is the principal solution of the Matrix Differential Equation~\eqref{EqDiffEqLimit}. Note
however that it is not entirely clear that $C$ is the \emph{only} solution of~\eqref{EqDiffEqLimit}
satisfying
$\lim_{s \rightarrow T-} c_{\min}(s) = \infty$
since~\eqref{EqDiffEqLimit} is \emph{not} a Riccati matrix differential equation.
\end{remark}

By Theorem~\ref{TheoremLimitValueFunc}, we also obtain the existence of the limits of 
the optimal strategy:
\begin{equation}
\xi^*:=\xi^*(t,x):=\lim\limits_{l \rightarrow \infty} \xi^*(l,t,x) = \Lambda^{-1} C(t) x, \quad
\eta^*:=\eta^*(t,x):=\lim\limits_{l \rightarrow \infty} \eta^*(l,t,x) = \tilde{I} \bar{C}(t)C(t)x. \label{Eqxieta*}
\end{equation}
It turns out that $u^*:=(\xi^*,\eta^*)$ is an admissible liquidation strategy, in particular that it satisfies the liquidation constraint 
\begin{equation} \notag
\lim\limits_{s \rightarrow T-} X^*(s) =0 \quad \text{for} \quad  X^*(s):=X^{u^*}(s).
\end{equation}

\begin{theorem} \label{TheoremAdm}
Let $t\in [0,T)$, $x \in \mathds{R}^n$ and $u^*=(\xi^*,\eta^*)$ for 
$\xi^*$ and $\eta^*$ as in~\eqref{Eqxieta*}. Then $u^*\in \mathbb{A}(t,x)$.
\end{theorem}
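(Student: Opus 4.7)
The plan is to verify all four conditions of Definition~\ref{DefAdmStr} for the candidate $u^*=(\xi^*,\eta^*)$. Condition (iii) is built into the projection $\tilde I$ appearing in the definition of $\eta^*$. For condition (i), since $\Lambda^{-1}C(s)$ and $\tilde I\bar C(s)C(s)$ are continuous in $s$ on $[0,T)$ by Theorem~\ref{TheoremLimitValueFunc} and linear in $X$, the interlacing construction of Remark~\ref{RemarkInterJump} produces a pathwise unique solution $X^*$ of \eqref{EqCSDE} on $[t,T)$.

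The substantive step is an It\^o--Dynkin expansion of the Lyapunov functional
\[
V(s):=X^*(s)^\top C(s)X^*(s).
\]
Differentiating and using that $C$ satisfies \eqref{EqDiffEqLimit}, the continuous-part drift of $V$ equals $-X^{*\top}C\Lambda^{-1}CX^*+X^{*\top}C\tilde CCX^*-\alpha X^{*\top}\Sigma X^*$; on the other hand, a jump of $\pi_i$ produces $\Delta V=-(CX^*)_i^2/c_{i,i}$, whose $\mathbb F$-compensator equals $-\sum_i\theta_i(CX^*)_i^2/c_{i,i}=-X^{*\top}C\tilde CCX^*$. The two $\tilde C$-terms cancel exactly and one obtains
\[
V(s)=V(t)-\int_t^s\bigl[X^{*\top}(r)C(r)\Lambda^{-1}C(r)X^*(r)+\alpha X^{*\top}(r)\Sigma X^*(r)\bigr]dr+M(s),
\]
where $M$ is a purely discontinuous local martingale. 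Thus $V$ is a non-negative supermartingale with $\mathbb E[V(s)]\le V(t)=x^\top C(t)x$ and $\mathbb E\bigl[\int_t^T\xi^{*\top}(r)\Lambda\xi^*(r)\,dr\bigr]\le V(t)$.

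Condition (iv) then follows quickly: by supermartingale convergence, $V(s)\to V_\infty$ almost surely with $V_\infty$ finite. The lower bound from Theorem~\ref{TheoremLimitValueFunc} reads $\sqrt{\Lambda^{-1}}C(s)\sqrt{\Lambda^{-1}}\ge P(s)=p(s)I$ (observe that $P$ remains a scalar multiple of the identity along its Riccati flow), which yields $C(s)\ge p(s)\lambda_{\min}(\Lambda)I$. Since $p(s)\to\infty$ as $s\to T-$, the sandwich $\|X^*(s)\|^2\le V(s)/(p(s)\lambda_{\min}(\Lambda))\to 0$ gives the liquidation constraint almost surely.

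For the moment conditions (ii), I would iterate the Dynkin argument on $V^k$ for $k=2,4$ to deduce $\sup_{s\in[t,T)}\mathbb E[V(s)^k]<\infty$; the jump contributions $(CX^*)_i^2/c_{i,i}$ are controlled using the uniform upper bound on $\bar C$, which is valid because $c_{i,i}(s)\ge p(t)\lambda_{\min}(\Lambda)>0$ on $[t,T)$ and $1/c_{i,i}$ extends continuously to $0$ at $T$. Combined with the linear representations $\xi^*=\Lambda^{-1}CX^*$ and $\eta^*=\tilde I\bar CCX^*$, this yields the required fourth-moment bound on $\xi^*$ and eighth-moment bound on $\eta^*$. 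The main obstacle will be verifying that the local martingales appearing in the expansions of $V^k$ are genuine martingales near the singularity at $s=T-$; this is where the combination of \emph{both} the upper and lower bounds of Theorem~\ref{TheoremLimitValueFunc} is essential, since the predictable quadratic variation of the jump integrand must be integrated against the compensator $\theta_i\,dr$ up to $T$ despite the blow-up of $C$.
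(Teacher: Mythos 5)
Your Lyapunov/supermartingale route is genuinely different from the paper's (which works with the approximating processes $X^*(l,\cdot)$, derives pathwise Gronwall bounds between jump times together with the contraction property of the optimal dark-pool jumps, and passes to the limit), and the parts of your argument covering conditions (i), (iii) and (iv) are sound: the cancellation of the $\tilde C$-terms is correct, $V$ is a nonnegative local supermartingale and hence a supermartingale, and combining a.s.\ convergence of $V$ with $C(s)\geq p(s)\lambda_{\min}(\Lambda)I$ and $p(s)\to\infty$ does yield the liquidation constraint.

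The gap is in condition (ii), specifically the fourth-moment bound on $\xi^*$. Uniform bounds $\sup_{s}\mathbb{E}[V(s)^k]<\infty$ are \emph{not} sufficient there. The best pointwise conversion available is
\[
\|\xi^*(s)\|_2^4\;\leq\;\tfrac{\lambda_{\max}^4}{\lambda_{\min}^4}\,q(s)^2\,V(s)^2,
\]
obtained from $\|C(s)X^*(s)\|_2^2\leq\|C(s)\|_{2,2}\,V(s)\leq\lambda_{\max}q(s)V(s)$, and $q(s)^2\sim(T-s)^{-2}$ is not integrable on $[t,T)$. So you need a \emph{rate}, roughly $\mathbb{E}[V(s)^2]=O((T-s)^2)$, not mere boundedness; and your stated control of the jump contributions via the constant lower bound $c_{i,i}(s)\geq p(t)\lambda_{\min}(\Lambda)$ reintroduces the $q(s)$ blow-up rather than removing it. The paper supplies the missing rate pathwise in Lemma~\ref{LemX*neu}: Gronwall with rate $-2p(\cdot)$ between jumps plus the factor $q(\tau_i)/p(\tau_i)$ at each jump gives $\|X^*(s)\|_2^2\leq \mathrm{const}\cdot(T-s)^2\prod_{\tau_i\leq s}q(\tau_i)/p(\tau_i)$, which exactly cancels $q(s)^2$ and reduces the moment condition to $\mathbb{E}\big[\prod(q(\tau_i)/p(\tau_i))^2\big]<\infty$. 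Your approach can be repaired within its own logic, but only by sharpening two estimates you did not state: first, the spectral inequality $X^{*\top}C\Lambda^{-1}CX^*\geq\lambda_{\min}\big(\sqrt{\Lambda^{-1}}C\sqrt{\Lambda^{-1}}\big)V\geq p(s)V$, which upgrades the supermartingale drift to $\mathbb{E}[V(s)^2]'\leq-2p(s)\mathbb{E}[V(s)^2]+(\text{jump term})$ and hence yields decay of order $(T-s)^2$; and second, the bound $|\Delta V|=(CX^*)_i^2/c_{i,i}\leq X^{*\top}C\bar CCX^*\leq nV_-$ from Corollary~\ref{CorMatIneq}, which controls the jump term by $\theta n^2V^2$ without any factor of $q(s)$. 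As written, the proposal does not close condition (ii).
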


We are now ready to present the main result of this article: the solution of the Optimization Problem~\eqref{EqValueFct}. 

\begin{theorem} \label{TheoremOptLiq}
The value function of the Optimization Problem~\eqref{EqValueFct} is given by
\begin{equation*}
v(t,x)=x^\top C(t) x 
\end{equation*}
for all $t \in [0,T)$, $x \in \mathds{R}^n$ and
\begin{equation*}
\lim\limits_{s\rightarrow T-} v(s,x)=
\begin{cases}
0 &\text{if } x=0 \\
\infty &\text{else.}
\end{cases}
\end{equation*}
The $\mathbb{P}\otimes \mathbb{\lambda}$ - almost surely unique optimal strategy is given by 
$u^*=(\xi^*,\eta^*)$ as in~\eqref{Eqxieta*} .
\end{theorem}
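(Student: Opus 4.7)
The plan is to deduce the result by sandwiching: establish $v(t,x)\geq x^\top C(t) x$ through the family of unconstrained problems studied in Section~\ref{SubSecMainOPTl}, and match it from above by evaluating $J(t,x,u^*)$ via It\^o's formula applied to the candidate value function on $[t,T-\varepsilon]$.

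\textbf{Lower bound.} For any $u\in\mathbb{A}(t,x)$, the liquidation constraint in Definition~\ref{DefAdmStr}~(iv) gives $X^u(T)=0$ a.s., so the terminal penalty $l\cdot X^u(T)^\top I X^u(T)$ vanishes and $\tilde{J}(l,t,x,u)=J(t,x,u)$ for every $l>l_0$. By Theorem~\ref{TheoremOptStrl},
\[
J(t,x,u)=\tilde{J}(l,t,x,u)\geq\tilde{v}(l,t,x)=x^\top C(l,t)x.
\]
Passing to the limit $l\to\infty$ and invoking Theorem~\ref{TheoremLimitValueFunc}~(i) yields $J(t,x,u)\geq x^\top C(t)x$, hence $v(t,x)\geq x^\top C(t) x$.

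\textbf{Upper bound and attainment.} By Theorem~\ref{TheoremAdm}, $u^*\in\mathbb{A}(t,x)$, so the lower bound applied to $u^*$ already gives $J(t,x,u^*)\geq x^\top C(t)x$. For the reverse, I apply It\^o's formula to $w(s,y):=y^\top C(s)y$ along $X^*$ on $[t,T-\varepsilon]$, where $C$ is smooth by Theorem~\ref{TheoremLimitValueFunc}~(ii). The continuous drift is $w_t-\nabla_y w\cdot\xi^*$ and the jump contribution of $\pi_i$ is $w(s,X^*(s-)-\eta^*_i e_i)-w(s,X^*(s-))$. Because $C$ is bounded on $[t,T-\varepsilon]$, the moment conditions in Definition~\ref{DefAdmStr}~(ii) make the compensated Poisson integrals true martingales, and taking expectations gives
\[
\mathbb{E}\bigl[X^*(T-\varepsilon)^\top C(T-\varepsilon)X^*(T-\varepsilon)\bigr] - x^\top C(t) x = \mathbb{E}\Bigl[\int_t^{T-\varepsilon}\bigl(w_t-\nabla_y w\cdot\xi^*+\textstyle\sum_i\theta_i[w(\cdot,X^*-\eta^*_i e_i)-w]\bigr)ds\Bigr].
\]
By Proposition~\ref{PropSolHJB}, applied with $C$ in place of $C(l,\cdot)$ (legitimate since $C$ solves the same matrix ODE on $[0,T)$), the integrand on the right equals $-f(\xi^*(s),X^*(s))$ because $u^*$ is the pointwise HJB maximizer. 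Since $w\geq 0$, dropping the nonnegative terminal expectation and letting $\varepsilon\downarrow 0$ with monotone convergence ($f\geq 0$) produces $J(t,x,u^*)\leq x^\top C(t)x$. Combined with the lower bound this gives $v(t,x)=J(t,x,u^*)=x^\top C(t)x$ and the auxiliary identity $\lim_{\varepsilon\downarrow 0}\mathbb{E}[X^*(T-\varepsilon)^\top C(T-\varepsilon)X^*(T-\varepsilon)]=0$.

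\textbf{Boundary behaviour and uniqueness.} For $x=0$ the zero strategy lies in $\mathbb{A}(t,0)$ with zero cost, so $v(s,0)\to 0$. For $x\neq 0$, the estimate $x^\top C(s)x\geq c_{\min}(s)\|x\|^2$ together with the boundary condition $\lim_{s\to T-}c_{\min}(s)=\infty$ from Theorem~\ref{TheoremLimitValueFunc}~(ii) gives $v(s,x)\to\infty$. Uniqueness follows from strict concavity in $\xi$ (because $\Lambda$ is positive definite) and the uniqueness-up-to-$\{i:\theta_i=0\}$ of the HJB maximizer in Proposition~\ref{PropSolHJB}: if $u'\in\mathbb{A}(t,x)$ is optimal, the same It\^o computation applied to $u'$ yields
\[
0 = J(t,x,u')-x^\top C(t)x = \mathbb{E}\Bigl[\int_t^T\bigl(f(\xi',X^{u'})+w_t-\nabla_y w\cdot\xi'+\textstyle\sum_i\theta_i[w(\cdot,X^{u'}-\eta'_i e_i)-w]\bigr)ds\Bigr],
\]
where the integrand is nonnegative by the HJB inequality and vanishes only at the unique maximizer. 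Hence $u'=u^*$ on $\{i:\theta_i>0\}$ in $\mathbb{P}\otimes\lambda$-a.e.\ sense, while components with $\theta_i=0$ must vanish by Definition~\ref{DefAdmStr}~(iii).

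\textbf{Main obstacle.} The subtle step is the It\^o argument on $[t,T-\varepsilon]$: one must verify that the stochastic integrals against the compensated Poisson process are genuine (not merely local) martingales, and one must justify the limit $\varepsilon\downarrow 0$. This rests on combining the closed-form upper bound for $C(s)$ inherited from $Q(s)$ in Theorem~\ref{TheoremDiffBounds} with the fourth- and eighth-moment conditions on $\xi$ and $\eta$ in Definition~\ref{DefAdmStr}~(ii). A further delicacy is that the limit identification $J(t,x,u^*)=x^\top C(t)x$ is obtained \emph{only} by combining the two bounds, rather than by directly showing $\mathbb{E}[w(s,X^*(s))]\to 0$ in isolation, since $w$ blows up as $s\to T$ while $X^*\to 0$, and controlling the product requires exactly the equality furnished by the lower bound.
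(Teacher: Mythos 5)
Your lower bound is exactly the paper's first step, and your attainment argument, while valid, takes a genuinely different route: the paper never applies It\^o's formula with the singular matrix $C(\cdot)$; instead it bounds the realized cost $K(l,\cdot)$ of $u^*(l)$ uniformly in $l$ by its value on the positive-probability event $\{\pi(T)=\pi(t)\}$ of no dark-pool execution, interchanges limit and expectation by dominated convergence, and applies Fatou together with the a.s.\ convergence results of Proposition~\ref{PropLiquidation} and Corollary~\ref{Corxil*limit} to get $\lim_l\tilde v(l,t,x)\geq J(t,x,u^*)\geq v(t,x)$. Your alternative --- It\^o on $[t,T-\varepsilon]$ with the limit matrix $C$ (which is legitimate since $C$ solves the same ODE and is positive definite on $[0,T)$, so the HJB identity of Proposition~\ref{PropSolHJB} holds pointwise), dropping the nonnegative terminal term and using monotone convergence of $\int f$ --- is arguably more direct, and the martingale property on $[t,T-\varepsilon]$ is indeed secured by the boundedness of $C$ away from $T$ together with the moment bounds, exactly as in the proof of Theorem~\ref{TheoremOptStrl}.

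The uniqueness argument, however, has a genuine gap. You write the It\^o/HJB identity for an arbitrary optimal $u'\in\mathbb{A}(t,x)$ over the \emph{full} interval $[t,T]$ with no terminal term. What the truncated computation actually gives is
\begin{equation*}
\mathbb{E}\Bigl[\int_t^{T-\varepsilon} f(\xi',X^{u'})\,ds\Bigr]
= x^\top C(t)x - \mathbb{E}\bigl[w(T-\varepsilon,X^{u'}(T-\varepsilon))\bigr]
+ \mathbb{E}\Bigl[\int_t^{T-\varepsilon} g\,ds\Bigr],
\end{equation*}
where $g\geq 0$ is the HJB slack. Optimality of $u'$ only yields $\lim_{\varepsilon\downarrow 0}\bigl(\mathbb{E}[\int_t^{T-\varepsilon}g\,ds]-\mathbb{E}[w(T-\varepsilon,X^{u'}(T-\varepsilon))]\bigr)=0$; both quantities could converge to the same strictly positive limit, so you cannot conclude $g=0$ a.e.\ without first showing $\mathbb{E}[w(T-\varepsilon,X^{u'}(T-\varepsilon))]\to 0$ for the \emph{unknown} competitor $u'$. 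That is precisely the blow-up-times-vanishing-position product you yourself flag as uncontrollable in isolation even for $u^*$, and no bound analogous to Lemma~\ref{LemX*neu} is available for a general admissible liquidation strategy. The paper sidesteps this entirely: given two optimal liquidation strategies $u,\tilde u$, it forms the convex combination $\bar u$, notes that $X^{\bar u}=\mu X^u+(1-\mu)X^{\tilde u}$ is again an admissible liquidation strategy, and uses strict convexity of $f$ in $\xi$ together with the observation (forced by the common terminal constraint $X^u(T)=X^{\tilde u}(T)=0$) that $u\neq\tilde u$ on a set of positive $\mathbb{P}\otimes\lambda$-measure implies $\xi\neq\tilde\xi$ on such a set. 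You should replace your uniqueness step by this convexity argument or supply the missing terminal control.
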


\section{Properties of the value function and the optimal strategy} \label{SecPortfolioProp}

The characterization of the solution of the Optimization Problem~\eqref{EqValueFct} enables us to analyze the properties of the optimal strategy and the value function in detail.\footnote{We limit the analysis to the Optimization Problem~\eqref{EqValueFct}. Most of the properties transfer directly to the Optimization Problem~\eqref{EqValueFctl} with the same or similar proofs.} For single asset liquidation ($n=1$) the Initial Value Problem~\eqref{EqODE1} can be solved in closed form as the differential equation is a scalar Riccati equation with constant coefficients. This allows us to prove monotonicity properties of the value function and the optimal strategy in Section~\ref{SecContPropOne}. In Section~\ref{SubSecContMuliAsset} we discuss the multi asset case by analyzing a portfolio of two assets. Although a closed form solution of the Initial Value Problem~\eqref{EqODE1} is not known in general for $n\geq 2$, it is possible to derive analytical results about the dependence of the value function and the optimal strategy on the model parameters, in particular on the correlation of the assets. We illustrate that overly simple adjustments of existing trading algorithms for optimal liquidation without dark pools can have undesirable properties.
The proofs of the results of this section are presented in Section~\ref{SecAppendixProp}.

\subsection{Single asset liquidation} \label{SecContPropOne}

We let $n=1$ and set $\theta=\theta_1$.
The solution of the Initial Value Problem~\eqref{EqODE1} is given by
\begin{equation} \notag
C(l,t)= \frac{\Lambda \tilde{\theta}}{2} \coth \Big(\frac{\tilde{\theta}}{2} (T-t) +\kappa(l) \Big) 
-\frac{ \Lambda \theta}{2},
\end{equation}
where 
\[
\kappa(l) := \arcoth\Big( \frac{ \frac{2 l}{\Lambda} + \theta}{\tilde{\theta}}\Big), \quad
\tilde{\theta} :=\sqrt{\theta^2+\tfrac{4 \alpha \Sigma }{\Lambda}}
\]
for $\theta >0$ or $\alpha \Sigma >0$ and
$
C(l,t) = \frac{\Lambda}{T-t+\frac{\Lambda}{l}}
$
for $\theta=\alpha \Sigma=0$.
In order to highlight the dependence of the value function on the parameters $\theta$, $\Lambda$ and $\alpha \Sigma$, we define, e.g.,
\begin{equation} 
C(t;\theta):=C(t) = \lim\limits_{l \rightarrow \infty} C(l,t) 
	= \begin{cases}
		\frac{\Lambda\tilde{\theta}}{2} \coth \Big(\frac{\tilde{\theta}}{2} (T-t) \Big) -\frac{\Lambda \theta}{2}
	& \text{if } \theta>0 \text{ or } \alpha \Sigma >0 \\
		\frac{\Lambda}{T-t} & \text{if } \theta = \alpha \Sigma = 0,
		\end{cases} \notag
\end{equation}
in particular
\begin{equation} \notag
C(t;0) =  \begin{cases}
		\sqrt{\alpha \Sigma \Lambda}  \coth \Big(\sqrt{\frac{\alpha \Sigma}{\Lambda}} (T-t) \Big)
			& \text{if }  \alpha \Sigma >0 \\
		\frac{\Lambda}{T-t} & \text{if }  \alpha \Sigma = 0.
		\end{cases}
\end{equation}
We will apply similar notations throughout Section~\ref{SecPortfolioProp} to make the dependence of other model components (such as the optimal strategy) on the respective parameters explicit whenever this clarifies the exposition. For $\theta=0$, we obtain the special case of optimal liquidation without dark pool, see~\cite{Almgren2001} for the discrete-time case and~\cite{Schied2010} for the continuous-time version. 

No transaction costs must be paid in the dark pool; intuitively, the investor should hence try to liquidate as much as possible in the dark pool. Indeed, we have
\[
\eta^*(t,x;\theta) = \bar{C}(t;\theta)C(t;\theta) x =x
\]
for $\theta>0$ by Theorem~\ref{TheoremOptLiq}, i.e., it is optimal to always place the full remaining asset position in the dark pool;\footnote{We want to remark that this property is sensitive to the assumption that $\tilde{P}$ is a martingale (Assumption~\ref{AssPrimary}~(i)). If, e.g., the investor holds a long position in the asset and $\tilde{P}$ has a positive drift, she should be reluctant to sell her entire position too early. Therefore, we expect that a drift changes this property. Similarly, if adverse selection is included into the model, the property $\eta^*(t,x)=x$ does not hold; this was shown by~\cite{Kratz2012}.} note that the execution of the dark pool order immediately stops the trading activity by linearity of the optimal strategy in the position.

In Section~\ref{SubSubSecPropSingleTheta}, we discuss the dependence of the optimal strategy and the value function on $\theta$. Subsequently, we analyze the dependence on the price impact $\Lambda$ and the risk parameter $\alpha \Sigma$ in Section~\ref{SubSubSecPropSingleLambda}.

\subsubsection{Dependence on $\theta$} \label{SubSubSecPropSingleTheta}

We expect it to be optimal to slow down trading in the primary venue initially as the trader hopes to trade cheaper in the dark pool. This intuition is confirmed by Proposition~\ref{PropPropertiesn=1Cont}~(ii) and~(iii) below.
In order to state the property rigorously, we first denote the optimal trading trajectory until execution in the dark pool by $\tilde{X}(\cdot;\theta)$,
i.e., $\tilde{X}$ is
the solution of the linear initial value problem
$X' = - \xi^*(\cdot,X;\theta)$, $X(0)=x$,
where $x$ is the initial asset position at time zero. Then for $t\in [0,T)$,
\begin{equation} \label{EqSingleAssetTrajectory}
\tilde{X}(t;\theta)= x \exp \Big(- \int_0^t \frac{C(s;\theta)}{\Lambda} ds \Big) 
	=\frac{ \sinh \big(\frac{\tilde{\theta}}{2}(T-t)\big) 
	\exp \big(\frac{\theta}{2} t\big) }{ \sinh \big(\frac{\tilde{\theta}}{2}T \big) } x.
\end{equation}
We obtain the following monotonicity properties.
For simplicity of exposition we assume $\alpha \Sigma >0$. Similar results hold for the simpler case $\alpha \Sigma =0$ (cf.~also the right hand graph of Figure~\ref{FigureSingle}). 

\begin{prop} \label{PropPropertiesn=1Cont}
\begin{enumerate}
\item[(i)]
For $t\in [0,T)$, $C(t;\theta)$ is strictly decreasing in $\theta$.
\item[(ii)]
For $x>0$ and $t \in (0,T]$, $\xi^*(t,x;\theta)$
is strictly decreasing in $\theta$.
\item[(iii)]
For $x>0$ and $t \in (0,T)$, $\tilde{X}(t;\theta)$ is strictly increasing in $\theta$.
\item[(iv)]
For $x>0$ and $t \in (0,T)$, the expected asset position if the 
optimal strategy is applied,
$
\mathbb{E} [X^*(t;\theta)],
$
is strictly decreasing in $\theta$.
\item[(v)]
For $\alpha \Sigma>0$, the risk costs
$
\alpha \Sigma \cdot \mathbb{E} \big[\int_0^T X^*(t;\theta)^2 dt \big] 
$
are strictly decreasing in $\theta$.
\item[(vi)]
The impact costs
$
\Lambda \cdot \mathbb{E} \big[\int_0^T \xi^*(t,X^*(t;\theta);\theta)^2 dt \big] 
$
are strictly decreasing in $\theta$.
\end{enumerate}
\end{prop}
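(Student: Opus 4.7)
The plan is to derive all six statements from the closed form of $C(t;\theta)$ together with a single auxiliary \emph{core lemma}: for every fixed $s\in(0,T)$, the function
\[
\phi(s,a):=\frac{\sinh(as)}{\sinh(aT)},\qquad a>0,
\]
is strictly decreasing in $a$. After differentiation, this reduces to the inequality $s\coth(as)<T\coth(aT)$ for $s<T$, which follows because $r\mapsto r\coth(ar)$ has strictly positive derivative $[\sinh(2ar)/2-ar]/\sinh^2(ar)$ on $(0,\infty)$ (using $\sinh(x)>x$ for $x>0$). Since $\tilde\theta=\sqrt{\theta^2+4\alpha\Sigma/\Lambda}$ is strictly increasing in $\theta$ under the assumption $\alpha\Sigma>0$, the lemma yields strict $\theta$-decrease of $\phi(s,\tilde\theta/2)$ for every $s\in(0,T)$.

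For (i), I differentiate $C(t;\theta)=\tfrac{\Lambda\tilde\theta}{2}\coth W-\tfrac{\Lambda\theta}{2}$ with $W:=\tilde\theta(T-t)/2$, using $\partial\tilde\theta/\partial\theta=\theta/\tilde\theta$, to get
\[
\frac{\partial C}{\partial\theta} = \frac{\Lambda}{2}\left[\frac{\theta}{\tilde\theta}\left(\coth W-\frac{W}{\sinh^2 W}\right)-1\right].
\]
This is strictly negative because $\theta/\tilde\theta<1$ (strict since $\alpha\Sigma>0$) and $\coth W-W/\sinh^2 W<1$ (strict for $W>0$, as it rearranges to the convexity inequality $e^{-2W}>1-2W$). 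Part (ii) is immediate from $\xi^*=Cx/\Lambda$ and $x>0$. For (iii), the deterministic trajectory solves $\tilde X'=-(C(\cdot;\theta)/\Lambda)\tilde X$ with $\tilde X(0)=x$, so $\tilde X(t;\theta)=x\exp(-\int_0^t C(s;\theta)/\Lambda\,ds)$, and (i) gives the claim.

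For (iv)--(vi) I use that under the optimal strategy $\eta^*(t,x)=x$, so the first jump of $\pi$ (at time $\tau$) liquidates the entire position; hence $X^*(t)=\tilde X(t)\,\mathbf{1}_{\{\tau>t\}}$. This yields the closed forms
\[
\mathbb{E}[X^*(t;\theta)] = x\,\phi(T-t,\tilde\theta/2)\,e^{-\theta t/2},\qquad \mathbb{E}[X^*(t;\theta)^2] = x^2\,\phi(T-t,\tilde\theta/2)^2.
\]
For (iv), on $t\in(0,T)$ both $\phi(T-t,\tilde\theta/2)$ and $e^{-\theta t/2}$ are positive and strictly $\theta$-decreasing, hence so is their product. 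For (v), the integrand of the risk costs $\alpha\Sigma x^2\,\phi(T-t,\tilde\theta/2)^2$ is positive and strictly $\theta$-decreasing on $(0,T)$, so the integral is strictly $\theta$-decreasing. For (vi), $\xi^*(t,X^*(t);\theta)=(C(t;\theta)/\Lambda)X^*(t)$ gives
\[
\Lambda\,\mathbb{E}\!\left[\int_0^T\!\xi^*(t,X^*(t);\theta)^2\,dt\right] = \frac{x^2}{\Lambda}\int_0^T C(t;\theta)^2\,\phi(T-t,\tilde\theta/2)^2\,dt,
\]
and for each $t\in(0,T)$ both $C(t;\theta)$ (by (i)) and $\phi(T-t,\tilde\theta/2)$ (by the core lemma) are positive and strictly $\theta$-decreasing, hence so is the integrand and thus the impact costs.

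The main obstacle is the core lemma together with the auxiliary inequality $\coth W-W/\sinh^2 W<1$ used in (i); both are elementary $\sinh$/$\cosh$ manipulations, but care is needed to produce the \emph{strict} versions on the whole interval $(0,T)$ and for all $\theta\geq 0$ in order to conclude strict, and not merely non-strict, monotonicity. Once these inequalities are in hand, the decomposition $X^*=\tilde X\,\mathbf{1}_{\{\tau>t\}}$ together with the closed-form expressions reduces all six parts to the same pattern: a product of positive, strictly $\theta$-decreasing factors, integrated over $t$.
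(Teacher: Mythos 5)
Your proof is correct and follows essentially the same route as the paper: the same closed-form expressions, the same hyperbolic inequalities ($\tfrac{1}{2}(1-e^{-2W})<W$ for part (i), and the monotonicity of $\sinh(as)/\sinh(aT)$ in $a$, which is the paper's auxiliary lemma up to a change of variables, for parts (iv)--(vi)). Your reorganization of (iv)--(vi) as products of positive, strictly $\theta$-decreasing factors replaces the paper's explicit derivative computations but is mathematically equivalent (and also silently corrects a missing square in the paper's displayed formula for $\mathbb{E}[X^*(t;\theta)^2]$).
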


Let us shortly comment on Proposition~\ref{PropPropertiesn=1Cont}. The fact that the overall costs are decreasing in $\theta$ (part~(i)) is quite intuitive and can be deduced directly from the definition of the cost functional $J$. Proposition~\ref{PropPropertiesn=1Cont}~(iv) states that the introduction of a dark pool decreases the expected asset position despite slower initial trading in the primary exchange (parts~(ii) and~(iii)). Parts~(v)~and~(vi) of Proposition~\ref{PropPropertiesn=1Cont} confirm that the introduction of the dark pool decreases both the impact costs component and the risk costs component of the value function.\footnote{It is an interesting effect that this is not necessarily the case in the discrete-time setting of~\cite{Kratz2010} where the risk costs can be increasing for small $\theta$. This effect is lost by letting the length of the trading periods tend to zero (see~\cite{Kratz2011}).}

We illustrate these properties in Figure~\ref{FigureSingle}. In the left picture we consider a risk neutral investor and in the right picture a risk averse investor. The optimal trading trajectories for trading with dark pool are displayed by the thick solid lines. In the displayed scenario the dark pool order is executed at time $\tau$. After execution in the dark pool, the liquidation task is finished. The dotted lines denote the scenario where the dark pool order is not executed during the entire trading horizon and the thin solid lines refer to the expected asset positions. We contrast the optimal strategy with dark pool by the optimal strategy without dark pool (dashed lines). As shown in Proposition~\ref{PropPropertiesn=1Cont}, the dark pool slows down trading in the primary venue initially. Nevertheless, the expected position is smaller than the trading trajectory without dark pool.

\begin{figure}[ht!]
\vspace{4ex}
\centering
\begin{tabular}{lll}
\begin{tabular}{l}\begin{overpic}[height=3.5cm, width=6cm]{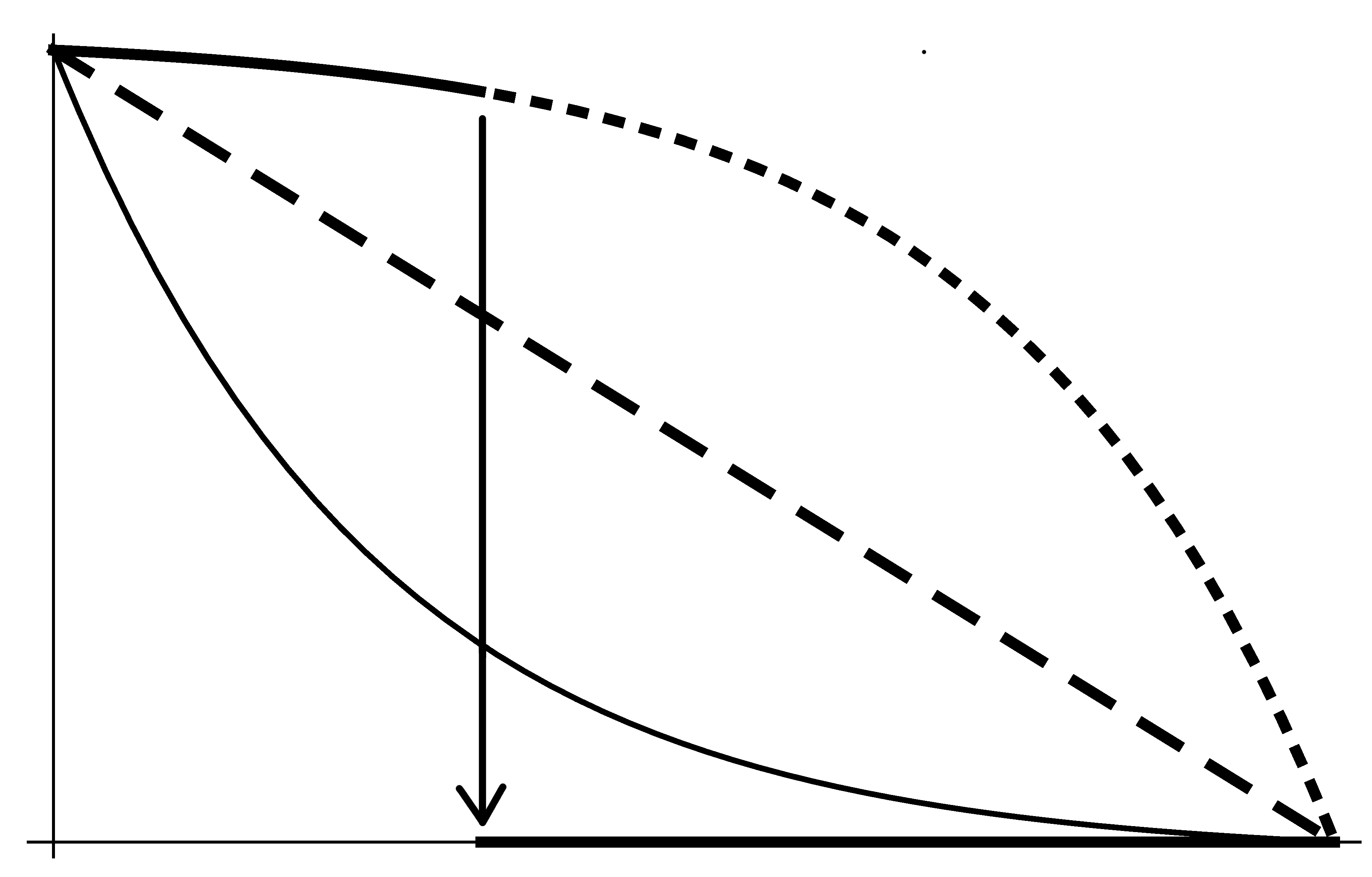}
\put(97,-2.5){\scriptsize{$T$}}
\put(33,-2.5){\scriptsize{$\tau$}}
\put(0,53){\scriptsize{$x$}}
\put(100,2){\scriptsize{Time}}
\put(0,57){\scriptsize{Size of asset position}}
\end{overpic} \end{tabular} 
&\begin{tabular}{l}
\hspace{4ex}
\end{tabular}
&
\begin{tabular}{l}\begin{overpic}[height=3.5cm, width=6cm]{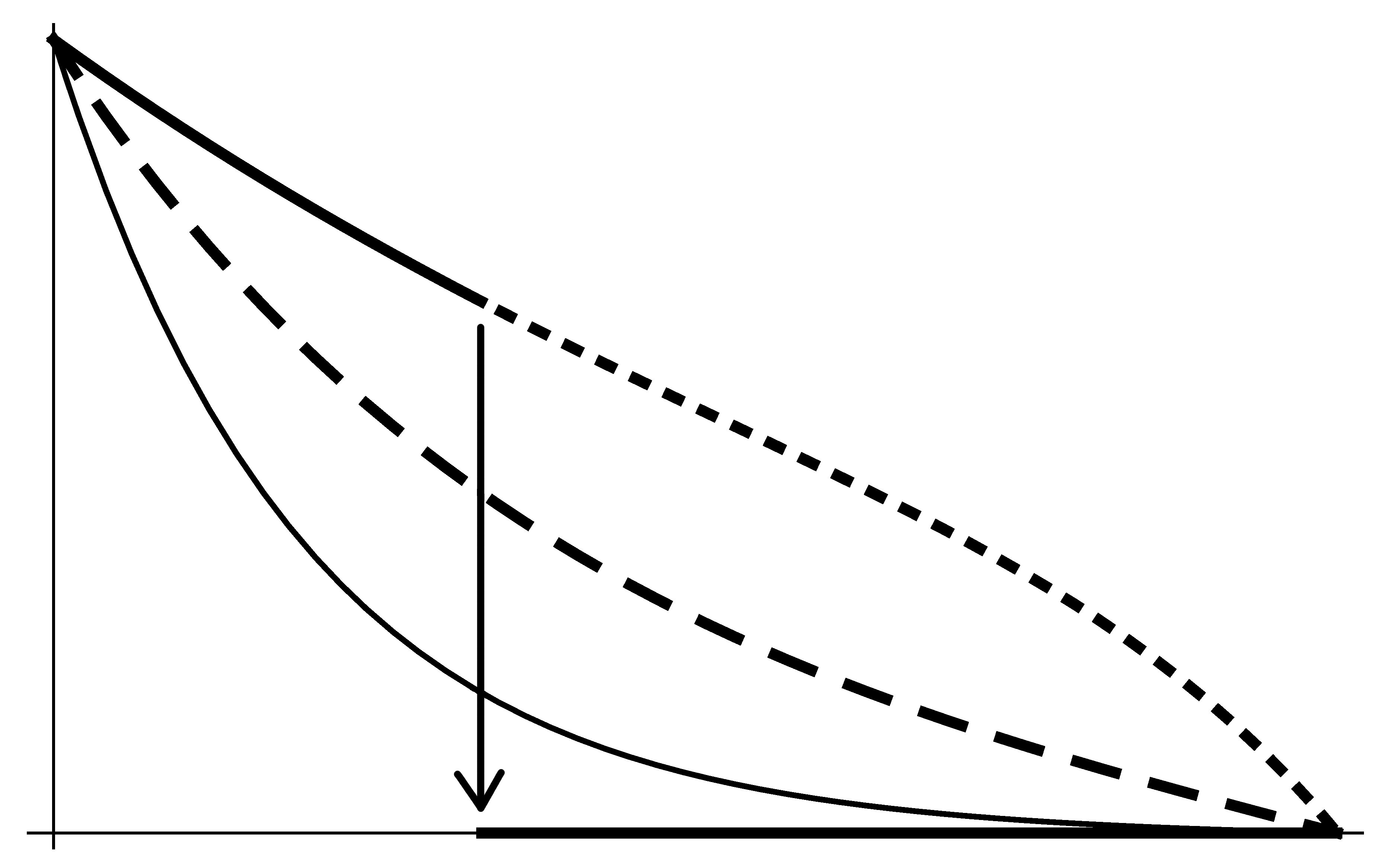}
\put(97,-2.5){\scriptsize{$T$}}
\put(33,-2.5){\scriptsize{$\tau$}}
\put(0,53){\scriptsize{$x$}}
\put(100,2){\scriptsize{Time}}
\put(0,57){\scriptsize{Size of asset position}}
\end{overpic} \end{tabular} 
\end{tabular}
\caption{Optimal trading trajectories for a risk neutral (left picture) respectively a risk averse trader (right picture). The thick solid lines denote the optimal trading trajectory with dark pool in the scenario where the dark pool order is executed at time $\tau$. The dotted lines denote the scenario where the dark pool order is not executed during the entire trading horizon and the thin solid lines refer to the expected asset positions. The dashed lines denote optimal liquidation without dark pools. $x=1$, $T=1$, $\Lambda=1$, $\theta=4$ and for the right picture $\alpha=6$, $\Sigma=1$.}
\label{FigureSingle}
\end{figure}

\subsubsection{Dependence on $\Lambda$ and $\alpha \Sigma$} \label{SubSubSecPropSingleLambda}

It follows directly from the definition of the cost functional $J$ that the value function is strictly increasing both in the impact costs parameter $\Lambda$ and in the risk costs parameter $\alpha \Sigma$. For the optimal strategy, impact costs and risk costs have conflicting influences: while larger impact costs yield a reduction of the trading intensity, larger risk costs yield faster trading (cf.~also the difference of the left and the right picture of Figure~\ref{FigureSingle}). We summarize these findings in the following proposition.

\begin{prop} \label{PropPropertiesn=1ContImpact}
\begin{enumerate}
\item[(i)]
For $t\in [0,T)$, $C(t;\Lambda,\alpha \Sigma)$ is strictly increasing in $\Lambda$ and in $\alpha \Sigma$.
\item[(ii)]
Let $t \in [0,T)$ and $x >0$ be fixed. Then 
$\xi^*(t,x;\Lambda)$ 
is strictly decreasing in $\Lambda$.
Consequently, $\tilde{X}(t;\Lambda)$ is strictly increasing in $\Lambda$ for $t \in (0,T)$.
\item[(iii)]
Let $t \in [0,T)$ and $x >0$ be fixed. Then 
$\xi^*(t,x;\alpha \Sigma)$ 
is strictly increasing in $\alpha \Sigma$.
Consequently, $\tilde{X}(t;\alpha \Sigma)$ is strictly decreasing in $\alpha \Sigma$ for $t \in (0,T)$.
\end{enumerate}
\end{prop}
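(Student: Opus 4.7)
My plan splits the proposition into part (i), which I would handle via a variational comparison on the cost functional, and parts (ii)--(iii), which reduce to calculus on the closed-form expression for $C(t)/\Lambda$ from Section~\ref{SecContPropOne}. Throughout, I would use $v(t,x)=x^{2}C(t)$ and $\xi^{*}(t,x)=C(t)x/\Lambda$ from Theorem~\ref{TheoremOptLiq}.

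For (i): Since $v(t,x)=x^{2}C(t)$, strict monotonicity of $C$ in any parameter is equivalent to that of $v(t,1;\cdot)$. For $\Lambda_{1}<\Lambda_{2}$ (with $\alpha\Sigma$ fixed), every $u\in\mathbb{A}(t,1)$ satisfies
\[
J(t,1,u;\Lambda_{2})-J(t,1,u;\Lambda_{1})=(\Lambda_{2}-\Lambda_{1})\,\mathbb{E}\!\Big[\!\int_{t}^{T}\xi(s)^{2}\,ds\Big].
\]
The key step would be to show that this expectation is strictly positive for every admissible $u$: otherwise $\xi\equiv 0$ $\mathbb{P}\otimes\lambda$-a.e., so $X^{u}$ moves only at jumps of $\pi$, and on the event \{no jump of $\pi$ in $[t,T]$\}, of probability $e^{-\theta(T-t)}>0$, the liquidation constraint in Definition~\ref{DefAdmStr}(iv) fails. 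Evaluating at the optimizer $u^{*}_{2}$ for $\Lambda_{2}$ then gives
\[
v(t,1;\Lambda_{1})\le J(t,1,u^{*}_{2};\Lambda_{1})<J(t,1,u^{*}_{2};\Lambda_{2})=v(t,1;\Lambda_{2}),
\]
so $C$ is strictly increasing in $\Lambda$. The same scheme works for $\alpha\Sigma$, with $(\alpha_{2}\Sigma-\alpha_{1}\Sigma)\,\mathbb{E}[\int_{t}^{T}X^{u}(s)^{2}\,ds]$ replacing the difference; right-continuity of $X^{u}$ combined with $X^{u}(t)=1\neq 0$ forces the expectation to be positive.

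For (ii) and (iii): The closed form in Section~\ref{SecContPropOne} can be written as
\[
\frac{C(t)}{\Lambda}=g\!\Big(\tfrac{\tilde\theta}{2}\Big)-\tfrac{\theta}{2},\quad g(y):=y\coth\!\big(y(T-t)\big),\quad \tilde\theta=\sqrt{\theta^{2}+4\alpha\Sigma/\Lambda},
\]
and a short computation gives
\[
g'(y)=\frac{\sinh\!\big(2y(T-t)\big)-2y(T-t)}{2\sinh^{2}\!\big(y(T-t)\big)}>0\quad(y>0),
\]
using $\sinh(z)>z$ on $(0,\infty)$, so $g$ is strictly increasing. Since $\tilde\theta$ is strictly decreasing in $\Lambda$ (when $\alpha\Sigma>0$) and strictly increasing in $\alpha\Sigma$, composition yields that $C(t)/\Lambda$ is strictly decreasing in $\Lambda$ and strictly increasing in $\alpha\Sigma$; multiplying by $x>0$ gives the assertions for $\xi^{*}(t,x)$. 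The consequences for $\tilde X(t)=x\exp(-\!\int_{0}^{t}C(s)/\Lambda\,ds)$ follow by integrating the pointwise strict inequalities on $(0,t)$ and using the sign reversal from the minus in the exponent.

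The main obstacle is the strict inequality in (i), i.e., ruling out admissible strategies that never trade at the primary exchange. The Poisson no-jump argument settles this using only Assumption~\ref{AssDP}(i) together with Definition~\ref{DefAdmStr}(iv); after that, the calculus underlying (ii)--(iii) is routine.
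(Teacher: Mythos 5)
Your proposal is correct and follows essentially the same route as the paper: parts (ii)--(iii) are the paper's own chain-rule computation on the closed form $C(t)/\Lambda=\tfrac{\tilde\theta}{2}\coth(\tfrac{\tilde\theta}{2}(T-t))-\tfrac{\theta}{2}$ (your $g'$ is exactly the bracketed factor the paper shows is positive), and part (i) is the comparison of cost functionals that the paper dismisses with ``follows directly from $J$,'' which you have merely spelled out, including the no-jump argument needed for strictness.
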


\subsection{Portfolio liquidation} \label{SubSecContMuliAsset}

If a risk averse investor has to liquidate a portfolio of multiple assets ($n\geq2$), 
then correlation between the assets comes into play. Depending on whether the portfolio is well (poorly) balanced, it is intuitively optimal to place orders in the dark pool which are smaller (larger) than the remaining portfolio position. In the first case, the risk costs of the portfolio are small and therefore, the trader will not risk to lose her balanced position by the full execution of her dark pool order for only one of the assets; hence her orders are smaller than the remainder of the position. In the second case, it might be optimal to place orders in the dark pool which are larger than the remainder of the position for risk mitigation reasons. This illustration suggests that overly simple adjustments of existing trading algorithms for optimal liquidation without dark pools can have undesirable results. 

In Section~\ref{SubSecDpendenceRhon=2}, we verify the above intuition by analytical results about the dependence of the value function and the optimal strategy on the correlation of a portfolio of two assets; we then deduce the general structure of the optimal strategy dependent on whether the portfolio is well or poorly balanced. We also discuss the dependence of the value function and the optimal strategy on the price impact parameter and the execution intensities in Section~\ref{SubSecPortfolioLambda} and~\ref{SubSecPortfolioTheta}, respectively.

As a prerequisite, we introduce a characterization of the optimal dark pool order which exploits that the jump times of the Poisson processes are almost surely distinct. Intuitively, an execution of the optimal dark pool order for asset $i$ should bring the position in asset $i$ to its optimal value given unchanged positions in all other assets $j \not=i$. The following Proposition confirms this intuition (see also~\cite{Naujokat2010} for a similar result). 

\begin{prop} \label{PropOptDPOrder}
Let $t\in [0,T]$, $x\in \mathds{R}^n$ be the portfolio position at time $t$ and $i=1,\dots,n$. Then, 
\[
v(t,x-\eta^*_i(t,x) e_i) = \min\limits_{\eta \in \mathds{R}} v(t,x-\eta e_i).
\]
\end{prop}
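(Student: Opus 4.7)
The plan is to minimize a one-dimensional quadratic function directly, exploiting the closed form $v(t,x) = x^\top C(t) x$ supplied by Theorem~\ref{TheoremOptLiq}.

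First I would substitute $x - \eta e_i$ into the quadratic value function, obtaining
\[
v(t, x - \eta e_i) = x^\top C(t) x \; - \; 2\eta \, e_i^\top C(t) x \; + \; \eta^2 c_{i,i}(t),
\]
using that $C(t)$ is symmetric (by Theorem~\ref{TheoremDiffBounds} passing to the limit in $l$). By~\eqref{EqBoundsLimitC} the matrix $C(t)$ is positive definite on $[0,T)$, so in particular $c_{i,i}(t) = e_i^\top C(t) e_i > 0$. The right-hand side is therefore a strictly convex quadratic in the scalar $\eta$, and first-order optimality instantly gives a unique minimizer
\[
\eta^\dagger \; = \; \frac{e_i^\top C(t) x}{c_{i,i}(t)} \; = \; \frac{(C(t)x)_i}{c_{i,i}(t)}.
\]

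Second, I would compare this minimizer to the $i$th component of the proposed optimal dark pool order in~\eqref{Eqxieta*}, namely $\eta^*_i(t,x) = \bigl(\tilde{I} \bar{C}(t) C(t) x\bigr)_i$. When $\theta_i > 0$ the diagonal entry of $\tilde{I}$ at position $i$ equals $1$ and $\bar{C}(t)_{i,i} = 1/c_{i,i}(t)$ by definition, so $\eta^*_i(t,x) = (C(t)x)_i / c_{i,i}(t) = \eta^\dagger$, establishing the claim. For indices $i$ with $\theta_i = 0$ the dark pool order for asset $i$ is never executed almost surely (jump times of $\pi_i$ do not occur), so the replacement $x \mapsto x - \eta e_i$ in the proposition has no probabilistic content and the identity $\eta^*_i = 0$ is a matter of convention (cf.~the last sentence of Proposition~\ref{PropSolHJB}); in this case the equality is vacuous in the sense relevant to the liquidation dynamics.

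There is no real obstacle here: the argument is a short computation built directly on the quadratic closed form of $v$ together with positive-definiteness of $C(t)$ to guarantee strict convexity in $\eta$. The only subtlety worth flagging is that the proposition is really a statement about the post-jump position, so one should explicitly invoke Theorem~\ref{TheoremLimitValueFunc} to justify that $c_{i,i}(t) > 0$ throughout $[0,T)$, and handle the limiting case $t=T$ by continuity using $\lim_{s\to T-} c_{\min}(s) = \infty$ together with the liquidation constraint on $X^*$.
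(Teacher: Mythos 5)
Your proof is correct and follows essentially the same route as the paper, which simply observes that $v(t,x-\eta e_i)=(x-\eta e_i)^\top C(t)(x-\eta e_i)$ is a strictly convex quadratic in $\eta$ minimized at $\eta^*_i(t,x)$. Your additional care with the $\theta_i=0$ convention and the boundary $t=T$ goes slightly beyond what the paper records, but the core argument is identical.
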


\subsubsection{Dependence on correlation} \label{SubSecDpendenceRhon=2}

We will see that dark pool trading is sensitive to the correlation of price increments. In the following, we discuss the case $n=2$.\footnote{If $n>2$, the situation is more complicated: positive correlation is not transitive in general. Hence, we cannot use Definition~\ref{DefWellDiv} below for the characterization of well versus poorly diversified portfolios. If the correlations of the price processes satisfy transitivity, some of the results can be generalized.}  In order to simplify the exposition, we assume that there is no cross asset price impact:\footnote{Cross price impact and correlation can have conflicting influences on the value function and the optimal strategy. We discuss cross asset impact at the end of Section~\ref{SubSecPortfolioLambda}.}
\begin{equation} \label{EqLambdaDiagonal}
\Lambda = 
\begin{pmatrix}
\lambda_1 & 0 \\
0 & \lambda_2
\end{pmatrix},
\quad
\Sigma = 
\begin{pmatrix}
\sigma_1^2 & \rho \sigma_1 \sigma_2 \\
\rho \sigma_1 \sigma_2 & \sigma_2^2
\end{pmatrix}.
\end{equation} 
For the purposes of this section, we assume that the variances $\sigma_1$ and $\sigma_2$ of the two assets as well as the risk aversion parameter $\alpha$ are strictly positive.

If the correlation of the two assets is positive ($\rho>0$), a portfolio consisting of a long position in one asset and a short position in the other asset is more desirable than long positions (or short positions) in both assets; in the former case, a part of the risk of each asset is hedged by the other asset. Conversely, if $\rho <0$, it is more desirable to have long (or short) positions in both assets. 

\begin{definition} \label{DefWellDiv}
A portfolio $x=(x_1,x_2)^\top$ ($x_1,x_2\not=0$) is \emph{well diversified} if either the signs of the positions are equal ($\sgn(x_1) = \sgn(x_2)$) and $\rho<0$ or if the signs of the positions are different and $\rho>0$. Otherwise, the portfolio is \emph{poorly diversified}.
\end{definition}

\begin{prop} \label{PropPropertiesCharactWell}
Let $t\in [0,T)$ and $x_1,x_2\not=0$. Then $v(t,(x_1,x_2)^\top) <  v(t,(x_1,-x_2)^\top)$ if and only if the portfolio $x$ is well diversified; $v(t,(x_1,x_2)^\top) >  v(t,(x_1,-x_2)^\top)$ if and only if the portfolio $x$ is poorly diversified. 
\end{prop}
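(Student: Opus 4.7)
My plan starts from the representation $v(t,x)=x^\top C(t)x$ from Theorem~\ref{TheoremOptLiq}. Since $C(t)$ is symmetric (Theorem~\ref{TheoremDiffBounds} plus the limit in Theorem~\ref{TheoremLimitValueFunc}), writing $c_{12}(t)$ for the off--diagonal entry yields
\[
v(t,(x_1,x_2)^\top)-v(t,(x_1,-x_2)^\top)=4\,c_{12}(t)\,x_1 x_2.
\]
Inspecting Definition~\ref{DefWellDiv}, one sees that $x$ is well diversified iff $\rho\, x_1 x_2<0$ and poorly diversified iff $\rho\, x_1 x_2>0$. Hence the proposition collapses to the single claim $\sgn(c_{12}(t))=\sgn(\rho)$ for every $t\in[0,T)$.

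To prove this, I would first work with the approximating matrix $C(l,t)$ from Theorem~\ref{TheoremDiffBounds} and read off the $(1,2)$--entry of the matrix ODE~\eqref{EqODE1}. Using that $\Lambda=\diag(\lambda_1,\lambda_2)$, that $\tilde C=\diag(\theta_1/c_{11},\theta_2/c_{22})$, and that $C(l,\cdot)$ is symmetric, a short calculation yields the \emph{scalar linear} ODE
\[
c_{12}'(l,t)=A(l,t)\,c_{12}(l,t)-\alpha\rho\sigma_1\sigma_2,\qquad A(l,t):=\tfrac{c_{11}(l,t)}{\lambda_1}+\tfrac{c_{22}(l,t)}{\lambda_2}+\theta_1+\theta_2,
\]
with terminal condition $c_{12}(l,T)=0$ (since $C(l,T)=lI$ is diagonal). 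Variation of constants gives
\[
c_{12}(l,t)=\alpha\rho\sigma_1\sigma_2\int_t^T\exp\!\Big(\!-\!\int_t^s A(l,u)\,du\Big)ds,
\]
and because $A(l,\cdot)>0$ the integral is strictly positive and finite, so $\sgn(c_{12}(l,t))=\sgn(\rho)$ strictly for every $t\in[0,T)$.

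Finally, I pass to the limit $l\to\infty$. Theorem~\ref{TheoremLimitValueFunc} gives compact convergence $C(l,\cdot)\to C(\cdot)$ on $[0,T)$, so in particular $A(l,\cdot)\to A(\cdot):=c_{11}/\lambda_1+c_{22}/\lambda_2+\theta_1+\theta_2$ locally uniformly. Dominated convergence (the exponential factor is bounded by $1$) transfers the closed form to
\[
c_{12}(t)=\alpha\rho\sigma_1\sigma_2\int_t^T\exp\!\Big(\!-\!\int_t^s A(u)\,du\Big)ds,
\]
which is strictly positive for $\rho>0$ and strictly negative for $\rho<0$ on $[0,T)$. Combined with the opening identity, this gives both strict inequalities of the proposition.

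I expect the only delicate point to be the reduction of the $(1,2)$--entry of the quadratic terms $C^\top\Lambda^{-1}C$ and $C^\top\tilde CC$ to the clean linear form above; once that algebraic miracle (a consequence of $\Lambda$ being diagonal and of the specific structure of $\tilde C$) is in place, the rest is a one--dimensional linear ODE argument and a routine passage to the limit.
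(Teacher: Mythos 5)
Your proof is correct, but it takes a genuinely different route from the paper's. The paper argues directly on the cost functional by a coupling/mirroring construction: it takes the optimal strategy $u^*$ for the initial position $(x_1,-x_2)^\top$ and builds an admissible strategy $u$ for $(x_1,x_2)^\top$ with $|X^u_i(s)|=|X^{u^*}_i(s)|$ pathwise (flipping the sign of the trading intensity and adjusting the dark pool orders in the second asset), so that the impact costs coincide while the risk costs are strictly smaller for the well diversified portfolio; the chain $v(t,(x_1,-x_2)^\top)=J(t,(x_1,-x_2)^\top,u^*)>J(t,(x_1,x_2)^\top,u)\geq v(t,(x_1,x_2)^\top)$ then gives the claim. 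You instead reduce the statement to $\sgn(c_{1,2}(t))=\sgn(\rho)$ and prove that via the scalar linear ODE for $c_{1,2}(l,\cdot)$ and variation of constants; your ODE is exactly Equation~\eqref{EqInitialValuec12}, which the paper only introduces later (in the proof of Proposition~\ref{PropMatrixEntries}~(ii)), and the sign statement you derive is precisely the third assertion of Proposition~\ref{PropMatrixEntries}~(i), which the paper proves \emph{from} the present proposition. So your argument is not circular, it yields that assertion for free, and it produces an explicit integral representation of $c_{1,2}$. What the paper's coupling argument buys is reusability: the same mirroring idea carries over with minor changes to Propositions~\ref{PropPropertiesValueWellMonotone}, \ref{PropMatrixEntries}~(ii) and~\ref{PropPropertiesCross}, where no clean closed form is available. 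Two small points on your write-up: the passage $l\to\infty$ needs the (easy) observation that $\int_t^s A(l,u)\,du\to\int_t^s A(u)\,du$ for each fixed $s<T$, which follows from the locally uniform convergence in Theorem~\ref{TheoremLimitValueFunc} and then dominated convergence as you say; and, exactly like the paper's own proof, your argument implicitly excludes $\rho=0$, where Definition~\ref{DefWellDiv} labels the portfolio poorly diversified yet both values coincide.
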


We can further specify the dependence of the value function on the correlation if the portfolio is well diversified.

\begin{prop} \label{PropPropertiesValueWellMonotone}
Let $t\in [0,T)$ and $x$ be well diversified. Then $v(t,x)$ is strictly decreasing in $|\rho|$.
\end{prop}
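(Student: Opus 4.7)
The plan is to combine concavity of $v(t,x)$ in $\rho$ with the strict inequality of Proposition~\ref{PropPropertiesCharactWell} to obtain strict monotonicity in $|\rho|$.

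\textbf{Step 1 (Concavity in $\rho$).} For fixed $(t,x)$ and $u\in\mathbb{A}(t,x)$, the cost functional $J(t,x,u)$ depends on $\rho$ only through the off-diagonal entry $\rho\sigma_1\sigma_2$ of $\Sigma$, so $\rho\mapsto J(t,x,u)$ is affine. Writing $g(\rho):=v(t,x)$ to display the $\rho$-dependence,
\[
g(\rho) \;=\; \inf_{u\in\mathbb{A}(t,x)} J(t,x,u;\rho)
\]
is concave in $\rho$ as the pointwise infimum of a family of affine functions; in particular it is continuous on $(-1,1)$.

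\textbf{Step 2 (Sign-reflection symmetry).} The map $(X_2,\xi_2,\eta_2)\mapsto(-X_2,-\xi_2,-\eta_2)$ is a bijection from $\mathbb{A}(t,(x_1,x_2)^\top)$ onto $\mathbb{A}(t,(x_1,-x_2)^\top)$. Because $\Lambda$ is diagonal, this map preserves the impact term $\xi^\top\Lambda\xi$, the moment conditions, the condition $\theta_i=0\Rightarrow\eta_i=0$, and the liquidation constraint of Definition~\ref{DefAdmStr}, while the risk term transforms via $X^\top\Sigma_\rho X = Y^\top\Sigma_{-\rho} Y$ (with $Y=\diag(1,-1)X$). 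Hence
\[
v_\rho(t,(x_1,-x_2)^\top) \;=\; v_{-\rho}(t,(x_1,x_2)^\top).
\]
Assume without loss of generality $\sgn(x_1)=\sgn(x_2)$ (so $x$ is well-diversified iff $\rho<0$). Combining the reflection identity with Proposition~\ref{PropPropertiesCharactWell} yields
\[
g(\rho) \;=\; v_\rho(t,x) \;<\; v_\rho(t,(x_1,-x_2)^\top) \;=\; g(-\rho) \qquad \text{for every }\rho\in(-1,0).
\]

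\textbf{Step 3 (Concavity plus asymmetry gives strict monotonicity).} Fix $-1<\rho_1<\rho_2\le 0$. Since $\rho_2\le 0<-\rho_1$, the point $\rho_2$ lies strictly between $\rho_1$ and $-\rho_1$, and concavity of $g$ on $[\rho_1,-\rho_1]$ gives, with $\lambda=\tfrac{-\rho_1-\rho_2}{-2\rho_1}\in(0,1)$,
\[
g(\rho_2) \;\ge\; \lambda\, g(\rho_1) + (1-\lambda)\, g(-\rho_1) \;=\; g(\rho_1) + (1-\lambda)\bigl(g(-\rho_1)-g(\rho_1)\bigr) \;>\; g(\rho_1),
\]
where the strict inequality uses $g(-\rho_1)>g(\rho_1)$ from Step 2. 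Thus $g$ is strictly increasing on $[-1,0]$, which is exactly the claim that $v(t,x)$ is strictly decreasing in $|\rho|$. The case $\sgn(x_1)\ne\sgn(x_2)$ (well-diversified iff $\rho>0$) is strictly analogous and yields $g$ strictly decreasing on $[0,1]$.

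I expect the main obstacle to be the careful bookkeeping of Step 2: one must verify that the sign-flip on the second coordinate of the control truly preserves all of the admissibility conditions of Definition~\ref{DefAdmStr}, in particular the liquidation constraint and the moment conditions. Once this symmetry is in place, the final monotonicity statement reduces to the one-line chord estimate of Step 3.
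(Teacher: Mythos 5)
Your proof is correct, but it takes a genuinely different route from the paper's. The paper argues directly: it takes the optimal strategy $\tilde u$ for $\tilde\rho$ and builds from it a competitor $u$ for $\rho<\tilde\rho<0$ by pathwise sign-adjustments so that $|X^u_i|=|X^{\tilde u}_i|$ with $X^u_i\ge 0$; the impact costs coincide and the risk costs are strictly smaller, giving $v(t,x;\rho)\le J(t,x,u;\rho)<J(t,x,\tilde u;\tilde\rho)=v(t,x;\tilde\rho)$. You instead observe that $\rho\mapsto J(t,x,u;\rho)$ is affine for each fixed $u$ (the correlation enters only through the cross term $2\rho\sigma_1\sigma_2X_1X_2$, which is integrable by Lemma~\ref{LemXuintegrierbar}), so $g(\rho)=v(t,x;\rho)$ is concave; you then combine the reflection identity $v_\rho(t,(x_1,-x_2)^\top)=v_{-\rho}(t,(x_1,x_2)^\top)$ (this is exactly Lemma~\ref{LemPropSymm} of the paper) with the strict asymmetry $g(\rho)<g(-\rho)$ from Proposition~\ref{PropPropertiesCharactWell}, and a one-line chord estimate finishes the argument. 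What your route buys: it confines the somewhat delicate pathwise strategy-transplantation to the single, already-established comparison of Proposition~\ref{PropPropertiesCharactWell}, and it yields the stronger structural fact that $v$ is concave in $\rho$ as a by-product. What the paper's route buys: it is self-contained and makes the economic mechanism (same impact costs, strictly lower risk costs for the reflected positive-position strategy) explicit. One point you should make explicit in Step~1: writing $v$ as a pointwise infimum of affine functions over a $\rho$-independent index set requires that the admissible class not vary with $\rho$; since admissibility is defined relative to the filtration generated by $(\tilde P,\pi)$, which in principle depends on $\Sigma$, you should restrict the infimum to $\pi$-adapted (Markovian) strategies, which is harmless because the optimizer of Theorem~\ref{TheoremOptLiq} is of this form. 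The paper's own proofs of Propositions~\ref{PropPropertiesCharactWell} and~\ref{PropPropertiesValueWellMonotone} implicitly rely on the same transfer, so this is a shared, fixable subtlety rather than a gap specific to your argument.
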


The left picture of Figure~\ref{FigTwoMonotonValue} illustrates the dependence of the value function on the correlation $\rho$ for a portfolio that is long in both assets. For~$\rho <0$, this portfolio is well diversified and the value function is increasing in $\rho$ (i.e., decreasing in $|\rho|$) in line with Proposition~\ref{PropPropertiesValueWellMonotone}. For $\rho>0$, the portfolio is poorly diversified. This leads to elevated liquidation costs for small positive~$\rho$. For large positive~$\rho$, the increased risk costs of the current portfolio are outweighed by the (projected) smaller risk costs of a future well diversified portfolio (e.g., after the execution of an order in the dark pool; cf.~also Proposition~\ref{PropPropertiesNeverShort}~(ii) below). The opportunity of risk reduction results in a decrease of the value function for large values of $\rho$ in the displayed example.

\begin{figure}[!ht]
\centering\vspace{4ex}
\begin{tabular}{lllll}
\begin{tabular}{l}\begin{overpic}[height=2.7cm, width=4.5cm]{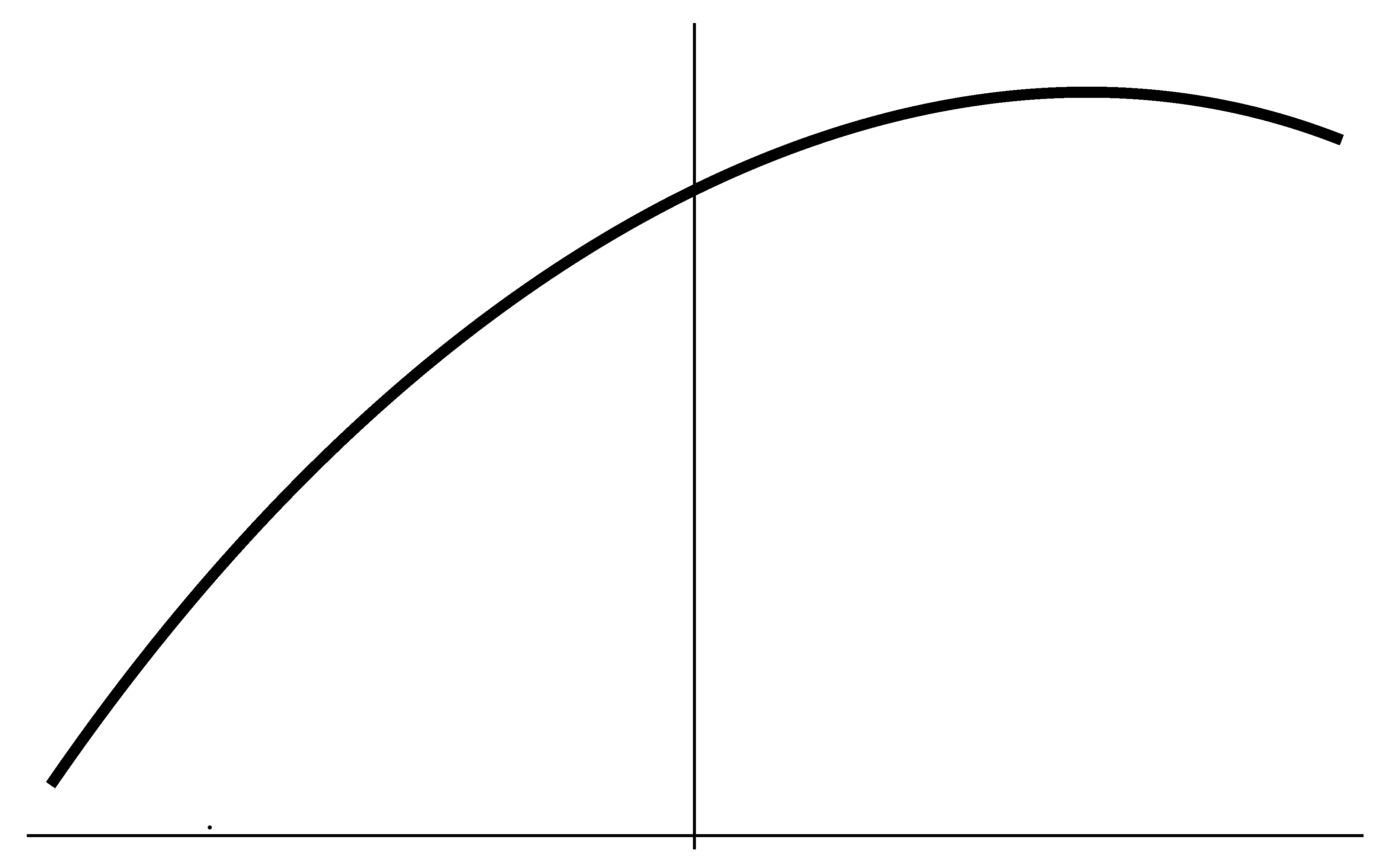}
\put(1,-3){\scriptsize{-1}}
\put(37,53){\scriptsize{$4.32$}}
\put(48.7,55){\line(1,0){2}}
\put(37,47.5){\scriptsize{$4.19$}}
\put(48.7,50){\line(1,0){2}}
\put(37,4.5){\scriptsize{$2.44$}}
\put(48.7,6){\line(1,0){2}}
\put(99,2){\scriptsize{$\rho$}}
\put(45,61){\scriptsize{$v(0,x)$}}
\put(95,-3){\scriptsize{1}}
\end{overpic} 
\end{tabular} 
\begin{tabular}{l}
\vspace{1.5ex}
\end{tabular}
\begin{tabular}{l}\begin{overpic}[height=2.7cm, width=4.5cm]{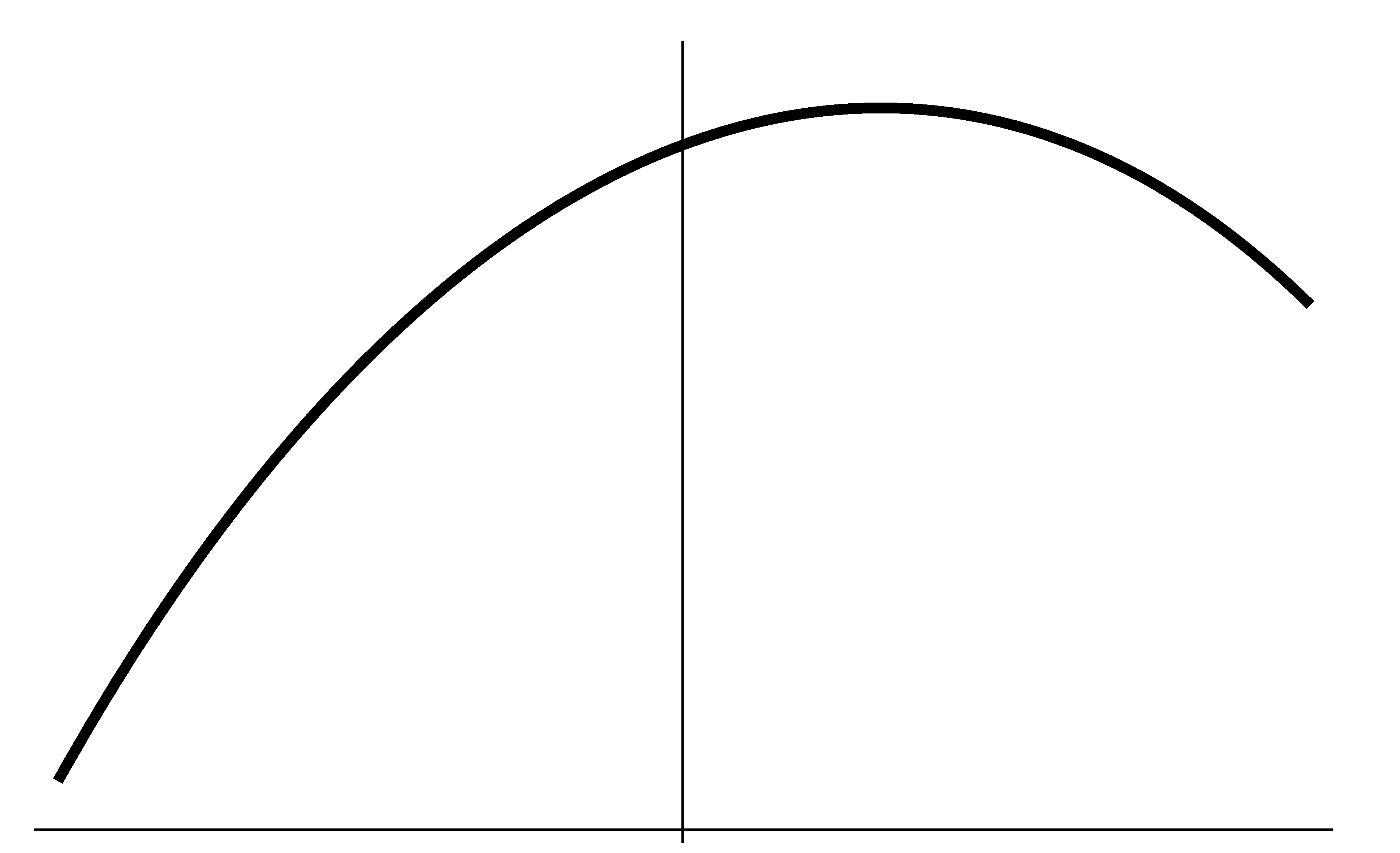}
\put(1,-4){\scriptsize{-1}}
\put(36,52){\scriptsize{$1.20$}}
\put(50,37){\scriptsize{$1.08$}}
\put(36,3){\scriptsize{$0.79$}}
\put(47.7,53.5){\line(1,0){2}}
\put(47.7,38.5){\line(1,0){2}}
\put(47.7,5){\line(1,0){2}}
\put(99,1){\scriptsize{$\rho$}}
\put(45,60){\scriptsize{$\xi_2^*(0,x)$}}
\put(95,-4){\scriptsize{1}}
\end{overpic} 
\end{tabular} 
\begin{tabular}{l}
\vspace{1.5ex}
\end{tabular}
\begin{tabular}{l}\begin{overpic}[height=2.7cm, width=4.5cm]{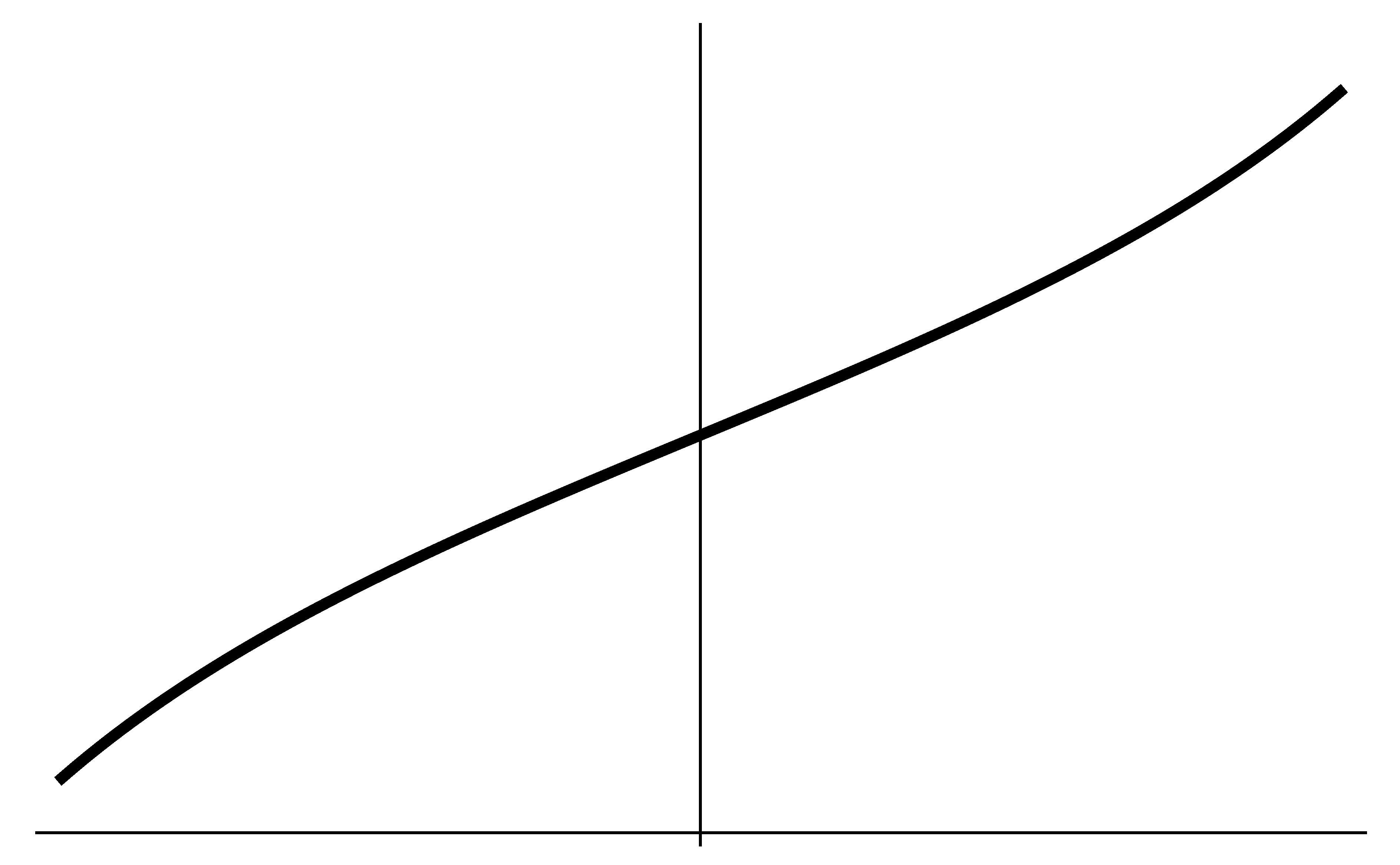}
\put(1,-3){\scriptsize{-1}}
\put(37,52){\scriptsize{$1.16$}}
\put(44,30){\scriptsize{$1$}}
\put(52,4){\scriptsize{$0.84$}}
\put(49,54.5){\line(1,0){2}}
\put(49,5.5){\line(1,0){2}}
\put(99,2){\scriptsize{$\rho$}}
\put(45,61){\scriptsize{$\eta_2^*(0,x)$}}
\put(95,-3){\scriptsize{1}}
\end{overpic} 
\end{tabular} 
\end{tabular}
\vspace{1ex}
\caption{Dependence of the value function (left picture), the optimal trading intensity for the second asset (middle picture) and the optimal dark pool order for the second asset (right picture) on the correlation $\rho$. $x_1=x_2=1$, $T=1$, $\lambda_1=3$, $\lambda_2=0.2$, $\theta_1=0.5$, $\theta_2=5$, $\alpha=4$ and $\sigma_1=\sigma_2=1$.}
\label{FigTwoMonotonValue}
\end{figure}

We have the following symmetry and monotonicity properties for the entries of the value function matrix $C$.

\begin{prop} \label{PropMatrixEntries}
Let $t \in [0,T)$ and denote the entries of the value function matrix by $c_{i,j}(t;\rho)$, $i,j=1,2$.
\begin{enumerate}
\item[(i)]
$c_{1,1}(t;\rho), c_{2,2}(t;\rho)>0$ and $c_{1,1}(t;\rho) =c_{1,1}(t;-\rho)$, $c_{2,2}(t;\rho) =c_{2,2}(t;-\rho)$, $\sgn(c_{1,2}(t;\rho))=\sgn(\rho)$ and $c_{1,2}(t;\rho)=-c_{1,2}(t;-\rho)$.
\item[(ii)]
$c_{1,1}(t;\rho)$ and $c_{2,2}(t;\rho)$ are strictly increasing in $\rho$ on $[-1,0)$ and strictly decreasing in $\rho$ on $(0,1]$. $c_{1,2}(t;\rho)$ is increasing in $\rho$ on $[-1,1]$.
\end{enumerate}
\end{prop}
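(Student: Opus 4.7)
The plan is to expand the $2\times 2$ matrix products in~\eqref{EqODE1} into scalar ODEs for $a:=c_{1,1}$, $b:=c_{2,2}$, $c:=c_{1,2}$, and to exploit a sign-flip symmetry under $\rho\mapsto -\rho$ together with a first-touch ODE comparison.

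\textbf{Part (i).} Positivity of $c_{1,1},c_{2,2}$ is immediate from Theorem~\ref{TheoremLimitValueFunc}: the inequality $\sqrt{\Lambda^{-1}}C\sqrt{\Lambda^{-1}}\ge P>0$ shows $C$ is positive definite, so its diagonal entries are positive. For the symmetry in $\rho$, set $S:=\diag(1,-1)$. Because $\Lambda$ is diagonal, $S\Lambda S=\Lambda$; because $S$ flips only the off-diagonal of $\Sigma(\rho)$, $S\Sigma(\rho)S=\Sigma(-\rho)$; and because $\tilde{C}$ depends only on the diagonal of $C$, which is preserved under $C\mapsto SCS$, one has $\tilde{\hat{C}}=\tilde{C}$ for $\hat{C}:=SCS$. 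A direct verification then shows that $\hat{C}(l,\cdot;\rho)$ solves~\eqref{EqODE1} with parameter $-\rho$ and terminal condition $S(lI)S=lI$, so by uniqueness in Theorem~\ref{TheoremDiffBounds}, $C(l,t;-\rho)=SC(l,t;\rho)S$; the diagonal symmetries and the oddness of $c_{1,2}$ follow and persist in the limit $l\to\infty$. For the sign of $c_{1,2}$, expanding~\eqref{EqODE1} gives for $c$ the \emph{linear} scalar ODE
\begin{equation*}
c' = c\cdot g - \alpha\rho\sigma_1\sigma_2, \qquad g:=\tfrac{a}{\lambda_1}+\tfrac{b}{\lambda_2}+\theta_1+\theta_2,
\end{equation*}
and integration with $c(l,T;\rho)=0$ yields
\begin{equation*}
c(l,t;\rho) = \alpha\rho\sigma_1\sigma_2 \int_t^T \exp\Big(-\int_t^s g(l,r;\rho)\,dr\Big)\,ds.
\end{equation*}
The integral is strictly positive and, by the diagonal symmetry, depends on $\rho$ only through $|\rho|$; hence $\sgn(c_{1,2}(l,t;\rho))=\sgn(\rho)$, and this passes to the limit.

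\textbf{Part (ii).} Fix $0<\rho_1<\rho_2\le 1$ and $l>l_0$, and denote the corresponding solutions by $(a_i,b_i,c_i)$. A short Taylor expansion at $T$ using $c_i'(T)=-\alpha\rho_i\sigma_1\sigma_2$ gives $c_1<c_2$ on $(T-\delta,T)$ for some $\delta>0$; the $c^2$ contributions in the $a$- and $b$-ODEs then propagate this (at order $(T-t)^3$) to $a_1>a_2$, $b_1>b_2$ on the same interval. Suppose, toward contradiction, there exists a first-touch time $t^*\in[0,T)$ at which one strict inequality fails while all three remain strict on $(t^*,T)$. If $c_1(t^*)=c_2(t^*)$, substituting into the $c$-ODE and using $g_1(t^*)\ge g_2(t^*)$ (inherited from $a_1\ge a_2$, $b_1\ge b_2$) together with $\rho_2>\rho_1$ yields $(c_1-c_2)'(t^*)>0$, contradicting $c_1<c_2$ on $(t^*,T)$. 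Otherwise $c_1(t^*)<c_2(t^*)$ strictly and the equality must occur in $a$ or $b$; the $c^2$ terms in the corresponding ODE then force $(a_1-a_2)'(t^*)<0$ (or the analogue for $b$), again contradicting the behavior on $(t^*,T)$. Hence $a_1>a_2$, $b_1>b_2$, $c_1<c_2$ throughout $[0,T)$ for every finite $l>l_0$.

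Passing to the principal solution (Theorem~\ref{TheoremLimitValueFunc}) gives only the weak inequalities, and I rerun the first-touch argument at the limit. The decisive point is that strict $c_1<c_2$ survives the limit: the integral representation above passes to the limit by dominated convergence (the integrand is bounded by $1$ and $a,b,g$ converge compactly on $[0,T)$), so $c_i=\alpha\rho_i\sigma_1\sigma_2 K_i$ with $K_i>0$ and $K_1\le K_2$, giving $\rho_1 K_1\alpha\sigma_1\sigma_2<\rho_2 K_1\alpha\sigma_1\sigma_2\le \rho_2 K_2\alpha\sigma_1\sigma_2$. With this strict $c$ inequality, the equality-in-$a$ (or $b$) contradiction at any interior $t^*$ remains sharp, yielding strict monotonicity of $c_{1,1},c_{2,2}$ on $(0,1]$; the statement on $[-1,0)$ follows from the symmetry in part~(i). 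For $c_{1,2}$, the integral representation writes it as $\alpha\sigma_1\sigma_2\,\rho\,K(t;|\rho|)$ with $K$ positive and, by the monotonicity of $a,b$ just proved, increasing in $|\rho|$; the product is monotone in $\rho$ on $(0,1]$ and, via symmetry, on $[-1,0)$, with matching values at $\rho=0$. The main obstacle is preserving strict inequalities across the limit $l\to\infty$: without the explicit linearity of the $c$-equation and its integral representation, strict ordering in $a$ and $b$ could collapse to equality and the first-touch argument would fail; the linear structure of $c$ in the two-asset case is the structural gift that keeps the argument sharp all the way to the principal solution.
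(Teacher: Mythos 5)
Your proof is correct, but it reaches the conclusion by a genuinely different route: you replace the paper's strategy-based (pathwise) comparisons by arguments carried out entirely at the level of the matrix differential equation~\eqref{EqODE1}. For the symmetries in~(i), the paper works with the cost functional (Lemma~\ref{LemPropSymm}: $J(t,(x_1,x_2)^\top,(u_1,u_2)^\top;\rho)=J(t,(x_1,-x_2)^\top,(u_1,-u_2)^\top;-\rho)$) and reads off the matrix entries by evaluating $v$ at $e_1$, $e_2$ and $(1,\pm1)^\top$; your conjugation of the system by $S=\diag(1,-1)$ combined with uniqueness of the solution yields the same identities, and is valid precisely because $\Lambda$ is diagonal and $\tilde{C}$ depends only on the diagonal of $C$, both of which you check. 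For $\sgn(c_{1,2})$ the paper invokes the well- versus poorly-diversified dichotomy of Proposition~\ref{PropPropertiesCharactWell}, whereas you read it off the integral representation of the linear scalar ODE for $c_{1,2}$ --- the same equation the paper only writes down later as~\eqref{EqInitialValuec12}. For the monotonicity of the diagonal entries in~(ii), the paper constructs, for the smaller $|\rho|$, a modified admissible strategy matching the absolute positions of the optimizer for the larger $|\rho|$ and compares costs pathwise (obtaining strictness from $\xi_2(t,e_1)\neq 0$); you instead run a first-touch comparison on the three coupled scalar ODEs for finite $l$ and then transfer strictness through the limit $l\to\infty$. For the monotonicity of $c_{1,2}$ the two arguments are close cousins: the paper differentiates the linear ODE in $\rho$ and solves for $\partial c_{1,2}/\partial\rho$ by an integrating factor, while you compare two solutions directly through the same integrating factor, thereby avoiding any need to justify differentiability in the parameter. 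What your route buys is self-containedness at the ODE level and no auxiliary strategy constructions; its cost is the delicate bookkeeping required to keep inequalities strict across the limit $l\to\infty$, which you handle correctly by first extracting the strict ordering of $c_{1,2}$ from its integral formula (dominated convergence on $[t,T]$ with the integrand bounded by $1$) and only then feeding it into the derivative-at-a-touch-point argument for $c_{1,1}$ and $c_{2,2}$, where only one-sided derivatives are needed so the argument also covers a touch at $t^*=0$.
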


The risk mitigation opportunity created by a strong correlation of the price increments becomes apparent again in Proposition~\ref{PropMatrixEntries}. Liquidating a single asset position $x=(x_1,0)^\top$ results in the cost $v(t,(x,0)) = c_{1,1}(t;\rho) x_1^2$, which exhibits a strict local maximum at $\rho=0$ and decreases as the correlation between the two assets becomes stronger (irrespective of the sign of the correlation). This implies in particular that it is optimal for the investor to trade in both assets (unless $\rho=0$) even if the current position in one asset is zero.

Proposition~\ref{PropPropertiesCharactWell} suggests that an optimal liquidation strategy never changes the sign of the asset positions of a well diversified portfolio and always seeks to turn a poorly diversified portfolio into a well diversified portfolio. The following proposition confirms this conjecture. 

\begin{prop} \label{PropPropertiesNeverShort}
Let $t \in [0,T)$ and $x\in \mathds{R}^2$ be the portfolio position at time $t$.
\begin{enumerate}
\item[(i)]
If $x$ is well diversified, then $X^*(s)$ is well diversified for all $s \in [t,T)$ with $\sgn(X_i^*(s)) = \sgn(x_i)$ for $i=1,2$. 
\item[(ii)]
If $x$ is poorly diversified, then 
$
\sgn(X^*_i(s-) - \eta_i^*(s,X^*(s-))) \neq \sgn(x_i)
$ for $s \in [t,\tau)$, where 
\[
\tau=\inf\{s\geq t| \sgn(X_i^*(s))\not= \sgn(x_i) \text{ or } X^*_i(s)=0 \text{ for some } i=1,2\} \wedge T >t\quad \text{a.s.}
\]
\end{enumerate}
\end{prop}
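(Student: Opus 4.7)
My plan is to split both parts of the proposition into two independent ingredients: the behaviour at dark pool execution times (where $X^*_i$ jumps by $-\eta^*_i$) and the behaviour on the deterministic inter-jump intervals (where $X^*$ evolves according to the linear ODE $X'(s)=-\Lambda^{-1}C(s)X(s)$).

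\textbf{Jump analysis.} Assuming $\theta_1,\theta_2>0$ (otherwise the order in the respective coordinate is identically zero), I first substitute the Markovian representation $\eta^*(s,x)=\tilde{I}\bar{C}(s)C(s)x$ to obtain the explicit identity
\[
X^*_i(s-)-\eta^*_i(s,X^*(s-)) \;=\; -\frac{c_{i,j}(s)}{c_{i,i}(s)}\,X^*_j(s-), \qquad j\neq i,
\]
which is also what Proposition~\ref{PropOptDPOrder} delivers on minimising the quadratic form $\eta\mapsto v(s,x-\eta e_i)=(x-\eta e_i)^\top C(s)(x-\eta e_i)$. By Proposition~\ref{PropMatrixEntries}(i), $c_{i,i}(s)>0$ and $\sgn(c_{i,j}(s))=\sgn(\rho)$, so the post-jump sign equals $-\sgn(\rho)\sgn(X^*_j(s-))$. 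For $s\in[t,\tau)$ one has $\sgn(X^*_j(s-))=\sgn(x_j)$ by the definition of $\tau$, and a case check over the four sign/correlation configurations then shows that the post-jump sign equals $\sgn(x_i)$ when $x$ is well diversified and $-\sgn(x_i)$ when $x$ is poorly diversified. This immediately yields (ii); the claim $\tau>t$ almost surely follows from right-continuity of $X^*$, the a.s.\ absence of $\pi$-jumps at the deterministic time $t$, and $x_i\neq 0$.

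\textbf{Inter-jump analysis for (i).} For the well-diversified case I must additionally exclude sign changes during the continuous evolution between jumps. Fix an inter-jump interval $[\tau_k,\tau_{k+1})\cap[t,T)$, set $\sigma:=\inf\{s>\tau_k:\sgn(X_i(s))\neq\sgn(X_i(\tau_k))\text{ for some }i=1,2\}$, and assume for contradiction that $\sigma<\tau_{k+1}\wedge T$. The fundamental matrix of the linear ODE is invertible, so $X(\sigma)\neq 0$, and by continuity exactly one coordinate vanishes at $\sigma$; without loss of generality $X_1(\sigma)=0$ while $\sgn(X_2(\sigma))=\sgn(X_2(\tau_k))$. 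Since $\Lambda$ is diagonal, $X_1'(\sigma)=-c_{1,2}(\sigma)X_2(\sigma)/\lambda_1$, and Proposition~\ref{PropMatrixEntries}(i) in either well-diversified sub-case forces $\sgn(X_1'(\sigma))=\sgn(X_1(\tau_k))$. This contradicts the fact that $X_1$ approached zero from the side of $\sgn(X_1(\tau_k))$, which forces $X_1'(\sigma)$ to have the opposite sign. Combining this with the jump analysis above (which preserves the sign pattern in the well-diversified case) and iterating over successive inter-jump intervals proves (i).

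The most delicate step is the inter-jump argument: one must set up the first-exit time $\sigma$ carefully, use invertibility of the fundamental matrix of the linear flow to force any sign change to occur via a single coordinate vanishing while the other retains its sign, and then use the explicit $\sgn(c_{1,2})=\sgn(\rho)$ from Proposition~\ref{PropMatrixEntries}(i) to derive the contradiction (noting that $c_{1,2}\neq 0$ and $X_2(\sigma)\neq 0$ so that $X_1'(\sigma)\neq 0$). The jump part is then a routine calculation once the closed-form expression $-c_{i,j}/c_{i,i}\cdot X^*_j(s-)$ is in hand and the four-way sign/correlation case analysis has been carried out.
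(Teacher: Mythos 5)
Your proof is correct. The jump-time half of your argument is essentially identical to the paper's: both use the closed form $X^*_i(s-)-\eta^*_i(s,X^*(s-))=-\tfrac{c_{i,j}(s)}{c_{i,i}(s)}X^*_j(s-)$ together with $c_{i,i}>0$ and $\sgn(c_{1,2})=\sgn(\rho)$ from Proposition~\ref{PropMatrixEntries}(i), plus a four-way sign/correlation case check; this immediately gives (ii) and the sign preservation at jumps for (i). Where you genuinely diverge is the inter-jump step of (i). The paper compares the linear ODE $X'=-\Lambda^{-1}C(\cdot;\rho)X$ with the decoupled case $\rho=0$: since $c_{1,2}(\cdot;\rho)<c_{1,2}(\cdot;0)=0$ and $0<c_{i,i}(\cdot;\rho)<c_{i,i}(\cdot;0)$, the flow for $\rho$ dominates the flow for $\rho=0$ componentwise on the positive orthant, and positivity for $\rho=0$ is inherited from the single-asset results of Section~\ref{SecContPropOne}. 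You instead run a first-exit-time argument: invertibility of the fundamental matrix rules out $X(\sigma)=0$, so exactly one coordinate vanishes at the first sign change, and at that point $X_1'(\sigma)=-c_{1,2}(\sigma)X_2(\sigma)/\lambda_1$ has the sign of $X_1(\tau_k)$, contradicting the one-sided approach to zero. Your route is more self-contained (it needs neither an ODE comparison principle across parameter values nor the single-asset formulas), at the cost of the slightly more delicate boundary bookkeeping you flag; the paper's route is shorter but leans on the quasi-monotone comparison with $\rho=0$ being taken for granted. Both arguments share the same small gloss at the right endpoint of each inter-jump interval (one needs the left limit $X^*_j(\tau_{k+1}-)$ to retain a strict sign before applying the jump formula), which your $\sigma$-argument handles by extending the ODE solution continuously to the closed interval.
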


By Proposition~\ref{PropPropertiesNeverShort}, the investor trades in both assets during the entire trading horizon $[0,T]$ if the portfolio is well diversified. If the portfolio is poorly diversified the execution of the dark pool order in one of the assets \emph{always} changes the sign of the position. If dark pool orders are never executed, it can be optimal to decrease the risk costs by changing the sign of the position in one of the assets by only trading at the exchange (as it is the case  in the numerical example underlying the left picture of Figure~\ref{WellVsBadlyHedged}); in general, this is not the case.\footnote{Consider, e.g., a portfolio where $x_1=x_2$, $\lambda_1=\lambda_2$, $\sigma_1=\sigma_2$ and $\theta_1=\theta_2$. Then the optimal trading intensities for the two assets must be equal until a dark pool order is executed. In particular, if the orders in the dark pool are never executed, both position must become zero at the same time after which further trading is not optimal.}


In the following, we specify the dependence of the optimal strategy on $\rho$. By Theorem~\ref{TheoremOptLiq}, we have
\begin{align}
\xi^*_1(t,x)& =\frac{1}{\lambda_1} \big(c_{1,1}(t;\rho) x_1 + c_{1,2}(t;\rho) x_2 \big),
&\xi^*_2(t,x)&=\frac{1}{\lambda_2} \big(c_{2,2}(t;\rho) x_2 + c_{1,2}(t;\rho) x_2 \big), 
\label{EqOptXin=2}\\
\eta^*_1(t,x)&= x_1 + \frac{ c_{1,2}(t;\rho)}{ c_{1,1}(t;\rho)} x_2 ,
&\eta^*_2(t,x)&= x_2 + \frac{ c_{1,2}(t;\rho)}{ c_{2,2}(t;\rho)} x_1. \label{EqOptEtan=2}
\end{align}

\begin{prop} \label{PropStratMon}
Let $t \in [0,T)$, $x=(x_1,x_2)^\top$, $x_1,x_2>0$ and $i=1,2$.
\begin{enumerate}
\item[(i)]
$\eta_i(t,x)$ is strictly increasing in $\rho$ for $\rho\in [-1,1]$.
\item[(ii)]
$\xi_i(t,x)$ is  strictly increasing in $\rho$ for $\rho\in [-1,0)$.
\end{enumerate}
Analog results hold for $x_1,x_2<0$ and $\sgn(x_1)\not=\sgn(x_2)$.
\end{prop}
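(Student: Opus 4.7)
The plan is to read off both $\xi^*$ and $\eta^*$ from the explicit representations \eqref{EqOptXin=2}--\eqref{EqOptEtan=2} and to extract the desired monotonicity from the monotonicity and symmetry properties of the entries $c_{i,j}(t;\rho)$ provided by Proposition~\ref{PropMatrixEntries}. Throughout, abbreviate $c_{i,j} := c_{i,j}(t;\rho)$, and assume first that $x_1, x_2 > 0$.

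For part~(ii) on $\rho \in [-1,0)$, Proposition~\ref{PropMatrixEntries}(ii) gives that $c_{1,1}$ and $c_{2,2}$ are strictly increasing in $\rho$ and that $c_{1,2}$ is increasing in $\rho$. Since $\lambda_1, \lambda_2, x_1, x_2 > 0$, each term in the sums
\[
\lambda_1\xi^*_1(t,x) = c_{1,1} x_1 + c_{1,2} x_2,\qquad \lambda_2\xi^*_2(t,x) = c_{2,2} x_2 + c_{1,2} x_1
\]
is non-decreasing in $\rho$, and the term containing $c_{i,i}$ is strictly increasing; this gives strict monotonicity of $\xi^*_i$. Note that the analogous argument \emph{fails} on $(0,1]$, because there $c_{i,i}$ is strictly decreasing while $c_{1,2}$ is increasing, and the two effects compete—this is the reason the proposition restricts $\rho$ to $[-1,0)$ in part~(ii).

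For part~(i), since $x_1, x_2 > 0$, monotonicity of $\eta^*_i$ reduces to strict monotonicity in $\rho$ of the ratio $r_i(\rho) := c_{1,2}/c_{i,i}$. I will treat $i=1$; the case $i=2$ is identical. On $(0,1]$, take $-1 \leq \rho < \rho' \leq 1$ with both in $(0,1]$: $c_{1,1}(\rho) > c_{1,1}(\rho') > 0$ (strictly decreasing) and $0 < c_{1,2}(\rho) \leq c_{1,2}(\rho')$ (increasing, positive by Proposition~\ref{PropMatrixEntries}(i)). Hence
\[
\frac{c_{1,2}(\rho)}{c_{1,1}(\rho)} \;\leq\; \frac{c_{1,2}(\rho')}{c_{1,1}(\rho)} \;<\; \frac{c_{1,2}(\rho')}{c_{1,1}(\rho')},
\]
where the strict inequality uses $c_{1,2}(\rho') > 0$ together with the strict decrease of $c_{1,1}$. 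On $[-1,0)$ use the symmetry from Proposition~\ref{PropMatrixEntries}(i): $c_{1,1}$ is even and $c_{1,2}$ is odd, so $|c_{1,2}|$ is strictly decreasing as $\rho$ increases toward $0$ (because $c_{1,2}$ is strictly increasing on $(0,1]$ by the argument just given together with the strict increase needed on $[-1,0)$—which follows from the fact that if $c_{1,2}$ were constant on a sub-interval of $(0,1]$, the ratio argument above would still yield strict monotonicity of $r_1$, but we do not need this here). Since $c_{1,1} > 0$ is strictly increasing on $[-1,0)$, $|c_{1,2}|/c_{1,1}$ is strictly decreasing and therefore $r_1 = -|c_{1,2}|/c_{1,1}$ is strictly increasing. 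Continuity at $\rho = 0$, combined with $r_1(\rho) < 0 = r_1(0) < r_1(\rho')$ for $\rho < 0 < \rho'$ (using $\sgn(c_{1,2}) = \sgn(\rho)$), then glues the two sides into strict monotonicity on the whole of $[-1,1]$.

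The analog statements for $x_1, x_2 < 0$ follow by replacing $(x_1,x_2)$ with $(-x_1,-x_2)$, which flips $\xi^*$ and $\eta^*$ by a global sign and leaves the monotonicity in $\rho$ intact. For the mixed-sign case $\sgn(x_1) \neq \sgn(x_2)$, apply the same reasoning after noting that negating one coordinate (say $x_2 \mapsto -x_2$) is equivalent to replacing $\rho$ by $-\rho$ in the formulas \eqref{EqOptXin=2}--\eqref{EqOptEtan=2}, by the sign/symmetry properties of $c_{i,j}$ in Proposition~\ref{PropMatrixEntries}(i); this reduces the analysis to the already-treated case with positive coordinates but with the direction of monotonicity reversed, which matches the informal statement of "analog results". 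The main nuisance (rather than obstacle) is the bookkeeping near $\rho = 0$ where $c_{1,2}$ vanishes and strict monotonicity of $r_i$ has to be imported from the strict monotonicity of $c_{i,i}$ rather than of $c_{1,2}$.
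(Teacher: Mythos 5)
Your proof is correct and takes essentially the same route as the paper, whose entire argument is the one-line observation that the claims follow from Equations~\eqref{EqOptXin=2}--\eqref{EqOptEtan=2} together with Proposition~\ref{PropMatrixEntries}; you have simply filled in the elementary monotonicity bookkeeping for the sums $c_{i,i}x_i+c_{1,2}x_j$ and the ratios $c_{1,2}/c_{i,i}$, including the correct gluing across $\rho=0$ via the sign and symmetry properties of $c_{1,2}$. (The parenthetical aside in which you appeal to strict monotonicity of $c_{1,2}$ on $(0,1]$ is garbled, but, as you yourself note, it is not needed: strictness of the ratio's monotonicity comes from the strict monotonicity of $c_{i,i}$ and the non-vanishing of $c_{1,2}$ away from $\rho=0$.)
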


For a well diversified portfolio, the profit from diversification is increasing in $|\rho|$ (cf.~Proposition~\ref{PropPropertiesValueWellMonotone}). Therefore, the investor decreases her trading activity both at the exchange and in the dark pool for larger $\rho$. For a poorly diversified portfolio, this is not necessarily the case (cf.~the left picture of Figure \ref{FigTwoMonotonValue} and the corresponding discussion); in contrast to the optimal dark pool orders, the trader might decrease her trading intensity for large positive $\rho$ in order to save price impact costs while waiting for the execution of an order in the dark pool (which yields a well diversified position by Proposition~\ref{PropPropertiesNeverShort}~(ii) below). We illustrate the dependence of $u^*$ on $\rho$ in the middle and the right picture of Figure~\ref{FigTwoMonotonValue}. In the displayed case, the trading intensity of the second asset is not increasing in $\rho$ if $x$ is poorly diversified (i.e., $\rho>0$). On the other hand, the optimal dark pool order is strictly increasing for $\rho \in [-1,1]$. The symmetry of the graph in the right picture follows directly from the symmetries of $c_{1,2}$ and $c_{2,2}$. If $\rho=0$, the optimal strategies for the two assets are independent and follow from the formulae of Section~\ref{SecContPropOne}; in particular, the optimal dark pool order equals $x_2=1$.

We deduce the general structure of the optimal strategy for initial positions $x_1,x_2>0$ from the above results. If $x$ is poorly diversified ($\rho >0$), we have $\xi_i(t,x), \eta_i(t,x)>0$ by Equations~\eqref{EqOptXin=2} and~\eqref{EqOptEtan=2}; hence it is optimal to decrease the position. This is not necessarily the case for a well diversified portfolio ($\rho <0$); in this case both  $\xi_i(t,x)$ and $\eta_i(t,x)$ can be negative as $c_{1,2}(t;\rho)<0$ (cf.~also the middle picture of Figure~\ref{FigNonMon} below). For the first asset, this is the case if and only if $x_1 < - \tfrac{c_{1,2}(t;\rho)}{c_{1,1}(t;\rho)} x_2$. It can hence be optimal to \emph{increase} the position in order to further reduce the risk costs of the portfolio. Note that as $\sgn(\xi_i(t,x))=\sgn(\eta_i(t,x))$ the direction of trading in the dark pool and at the primary exchange is always the same. Furthermore, it can be optimal to neither trade at the exchange nor in the dark pool (in the first asset) at time $t$ if $x_1 = - \tfrac{c_{1,2}(t;\rho)}{c_{1,1}(t;\rho)} x_2$.

As the execution of an order in the dark pool balances the trade-off of price impact costs against risk costs, it should intuitively never be optimal to increase a positive position or to decrease a negative position once the dark pool order of one of the assets has been executed. The following proposition confirms this conjecture.

\begin{prop} \label{PropWrongDirection}
Let $t\in [0,T)$, $x\in \mathds{R}^2$ be the portfolio position at time $t$ and $\tau_1$ be the first jump time of $\pi$. Then for all $s \geq \tau_1$, $i=1,2$, 
\[
\sgn(\xi_i(s,X^*(s-)))=\sgn(\eta_i(s,X^*(s-)))=\sgn(X_i^*(s-)) \quad \text{or} \quad 
\xi_i(s,X^*(s-))=\eta_i(s,X^*(s-))=0.
\]
\end{prop}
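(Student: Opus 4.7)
The first sign equality $\sgn(\xi_i^*) = \sgn(\eta_i^*)$ is immediate from \eqref{EqOptXin=2}--\eqref{EqOptEtan=2}: since $\Lambda$ is diagonal by \eqref{EqLambdaDiagonal} and $c_{1,2}=c_{2,1}$, one has the identity $\xi_i^*(t,x) = (c_{i,i}(t)/\lambda_i)\,\eta_i^*(t,x)$, and $c_{i,i}(t), \lambda_i > 0$ (the former by positive definiteness of $C(t)$ from Theorem~\ref{TheoremDiffBounds}). Hence the proposition reduces to proving $g_i(s) := X_i^*(s-)\,\eta_i^*(s,X^*(s-)) \geq 0$ for every $s \geq \tau_1$ and $i=1,2$, with equality admissible only when $\eta_i^*=0$.

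My plan is to analyze the state at the first jump $\tau_1$ and then propagate $g_i \geq 0$ forward by induction on the subsequent jumps of $\pi$. If the first jump is in asset $k$ (with $j\neq k$), the construction of $\eta^*$ together with Proposition~\ref{PropOptDPOrder} gives $X_k^*(\tau_1) = -(c_{1,2}(\tau_1)/c_{k,k}(\tau_1))\,X_j^*(\tau_1)$. A direct computation then yields $\eta_k^*(\tau_1, X^*(\tau_1)) = 0$ and $\eta_j^*(\tau_1, X^*(\tau_1)) = X_j^*(\tau_1)\bigl(1 - c_{1,2}(\tau_1)^2/(c_{1,1}(\tau_1) c_{2,2}(\tau_1))\bigr)$, and the parenthesized factor is strictly positive by the positive-definiteness bound $c_{1,2}^2 < c_{1,1}c_{2,2}$; consequently $g_k(\tau_1^+) = 0$ and $g_j(\tau_1^+) \geq 0$. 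Moreover, using $\sgn(c_{1,2}(\tau_1)) = \sgn(\rho)$ from Proposition~\ref{PropMatrixEntries}(i), the post-jump state satisfies $\sgn(X_k^*(\tau_1))\,\sgn(X_j^*(\tau_1)) = -\sgn(\rho)$ and is therefore well diversified in the sense of Definition~\ref{DefWellDiv}. The strong Markov property (the control is Markovian by Remark~\ref{RemarkInterJump}) together with Proposition~\ref{PropPropertiesNeverShort}(i), applied with initial time $\tau_1$ and initial position $X^*(\tau_1)$, then shows that the signs of the components of $X^*$ remain frozen on $[\tau_1,T)$.

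For the induction step I verify stability across jumps and between jumps separately. At any subsequent jump $\tau_m$ in asset $\ell$ (with $\ell'\neq\ell$), the same algebra as at $\tau_1$ resets $g_\ell$ to $0$ and replaces $g_{\ell'}$ by $X_{\ell'}^*(\tau_m)^2\bigl(1-c_{1,2}^2/(c_{1,1}c_{2,2})\bigr) \geq 0$, so nonnegativity is preserved across jumps. Between two consecutive jumps $X^*$ evolves continuously as $X^{*\prime} = -\Lambda^{-1}C X^*$, and from the matrix differential equation \eqref{EqDiffEqLimit} the auxiliary quantity $q:=CX^*$ obeys $q' = (C\tilde{C}C - \alpha\Sigma)\,X^*$; hence $g_i = X_i^*\,q_i/c_{i,i}$ varies smoothly on each inter-jump interval. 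If $g_i$ ever became negative, continuity would force a first time $s^*$ inside such an interval with $g_i(s^*)=0$, $X_i^*(s^*)\neq 0$ (signs of $X^*$ are frozen after $\tau_1$), and therefore $q_i(s^*)=0$ and $X_i^*(s^*)\,q_i'(s^*) \leq 0$.

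The main obstacle is to derive a contradiction at this hypothetical downcrossing by proving $X_i^*(s^*)\,q_i'(s^*) > 0$. Substituting $q_i(s^*)=0$, equivalently $X_i^*(s^*) = -(c_{1,2}/c_{i,i})\,X_j^*(s^*)$, into the ODE for $q$ reduces $X_i^*(s^*)\,q_i'(s^*)$ to a linear combination of $X_j^*(s^*)^2$ and $X_j^*(s^*)\,q_j(s^*) = c_{j,j}\,g_j(s^*)$, with coefficients built from $c_{1,2}$, $c_{i,i}$, $c_{j,j}$, $\theta_j$, $\alpha$ and the entries of $\Sigma$. The positivity of this combination can then be extracted by combining (a) the induction hypothesis $g_j(s^*)\geq 0$, (b) the well-diversified sign pattern of $X^*(s^*)$ established above, (c) the sign identity $\sgn(c_{1,2}(s^*)) = \sgn(\rho)$ from Proposition~\ref{PropMatrixEntries}(i), and (d) the determinant bound $c_{1,2}^2 < c_{1,1}c_{2,2}$. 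This sign bookkeeping is the only genuinely computational step of the proof; once it is carried out, the contradiction at $s^*$ closes the induction, so $g_i \geq 0$ on $[\tau_1,T)$ and the proposition follows.
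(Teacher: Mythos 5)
Your reduction is correct as far as it goes: the identity $\xi_i^*=(c_{i,i}/\lambda_i)\,\eta_i^*$, the post-jump facts $q_\ell(\tau_m)=0$ and $\eta_{\ell'}^*=X_{\ell'}^*\bigl(1-c_{1,2}^2/(c_{1,1}c_{2,2})\bigr)$, the well-diversifiedness of the post-jump state, and the first-crossing set-up with $q:=CX^*$, $q'=(C\tilde{C}C-\alpha\Sigma)X^*$ are all sound. The gap is precisely in the step you defer as ``sign bookkeeping'': the ingredients (a)--(d) you list do not determine the sign of $X_i^*(s^*)\,q_i'(s^*)$. Carrying out the substitution $X_1^*(s^*)=-\tfrac{c_{1,2}}{c_{1,1}}X_2^*(s^*)$ (take $i=1$, $\rho>0$, $X_2^*>0$, so $c_{1,2}>0$ and $X_1^*<0$) gives
\[
q_1'(s^*)=X_2^*(s^*)\Bigl[\theta_2 c_{1,2}\Bigl(1-\tfrac{c_{1,2}^2}{c_{1,1}c_{2,2}}\Bigr)+\alpha\sigma_1\Bigl(\tfrac{\sigma_1 c_{1,2}}{c_{1,1}}-\rho\sigma_2\Bigr)\Bigr],
\]
and the desired contradiction $X_1^*q_1'>0$ requires the bracket to be \emph{negative}. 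But the first summand is strictly positive (for $\theta_2>0$), and the second has indeterminate sign under (a)--(d) alone: neither positive definiteness, nor $\sgn(c_{1,2})=\sgn(\rho)$, nor the frozen sign pattern tells you whether $c_{1,2}/c_{1,1}<\rho\sigma_2/\sigma_1$. Moreover your ingredient (a) is vacuous at the critical point, since there $X_2^*q_2=(c_{2,2}-c_{1,2}^2/c_{1,1})(X_2^*)^2\ge 0$ holds automatically and carries no extra information. So the contradiction does not close as described.

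What is missing is exactly the paper's Lemma~\ref{LemMontonCt}: for $\rho>0$ the diagonal entries $c_{1,1},c_{2,2}$ are increasing and $c_{1,2}$ is decreasing in $t$, so $r:=c_{1,2}/c_{1,1}$ satisfies $r'\le 0$. Writing out $r'$ from the matrix ODE shows that $r'\le0$ is equivalent to
\[
\frac{c_{1,2}}{\lambda_2}\,\frac{c_{1,1}c_{2,2}-c_{1,2}^2}{c_{1,1}}
+\theta_2c_{1,2}\Bigl(1-\tfrac{c_{1,2}^2}{c_{1,1}c_{2,2}}\Bigr)
+\alpha\sigma_1\Bigl(\tfrac{\sigma_1c_{1,2}}{c_{1,1}}-\rho\sigma_2\Bigr)\le 0,
\]
which bounds your bracket by $-\tfrac{c_{1,2}}{\lambda_2}\tfrac{c_{1,1}c_{2,2}-c_{1,2}^2}{c_{1,1}}<0$ and closes the induction. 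This is also the route the paper takes: it notes $\eta_1^*(\tau_1,X^*(\tau_1))=\xi_1^*(\tau_1)=0$ after the jump and then invokes the monotonicity of $c_{1,2}(s)/c_{i,i}(s)$ in $s$ directly to preserve $\sgn(\xi_i^*)=\sgn(X_i^*)$ between jumps, iterating over the (a.s.\ finitely many) jump times. If you add Lemma~\ref{LemMontonCt} (or prove the displayed inequality independently), your first-crossing argument goes through; without it, the proof is incomplete.
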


We close the section by illustrating the structure of the optimal strategy and its dependence on $\rho$ by a numerical example. To this end, we consider two strongly positively correlated assets with $\lambda_1=3$, $\lambda_2=0.2$, i.e., the second asset is more liquid than the first asset. We also model the dark pool in such a way that the execution of orders for the second asset is more probable than for the first asset: 
$\theta_1=0.5$, $\theta_2=3$.\footnote{Our choice of the parameters reflects the intuition that the asset which is more liquid at the exchange (smaller $\lambda$) is also more liquid in the dark pool (larger $\theta$). Theoretical findings of \cite{Ye2011} support this choice. However, we are not aware of any empirical evidence for this; in some cases, the opposite parameter choice can also be plausible.} We consider a poorly diversified portfolio $x=(1,1)^\top$ and a well diversified portfolio $x=(1,-1)^\top$ of the two stocks.

\begin{figure}
\centering
\begin{tabular}{lll}
\begin{tabular}{l}
\begin{overpic}[height=3.5cm, width=6cm]{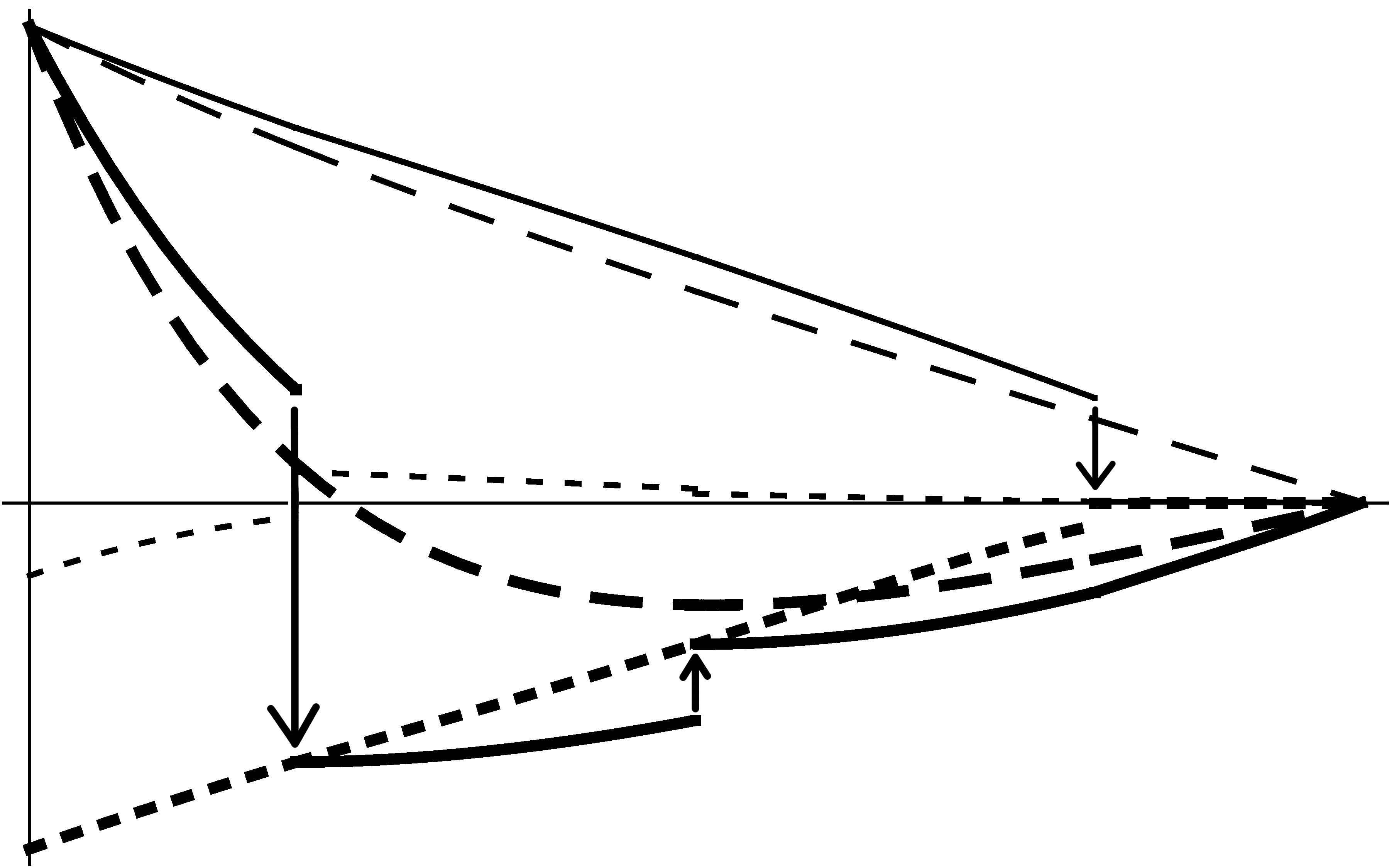}
\put(97,20){\scriptsize{$T$}}
\put(22,21){\scriptsize{$\tau_1$}}
\put(47,22){\scriptsize{$\tau_2$}}
\put(72,27){\scriptsize{$\tau_3$}}
\put(-15,55){\scriptsize{$x_1=x_2$}}
\put(100,24){\scriptsize{Time}}
\put(0,60){\scriptsize{Size of asset position}}
\end{overpic}
\end{tabular} &
\begin{tabular}{l}
\hspace{2ex}
\end{tabular}
&
\begin{tabular}{l}
\begin{overpic}[height=3.5cm, width=6cm]{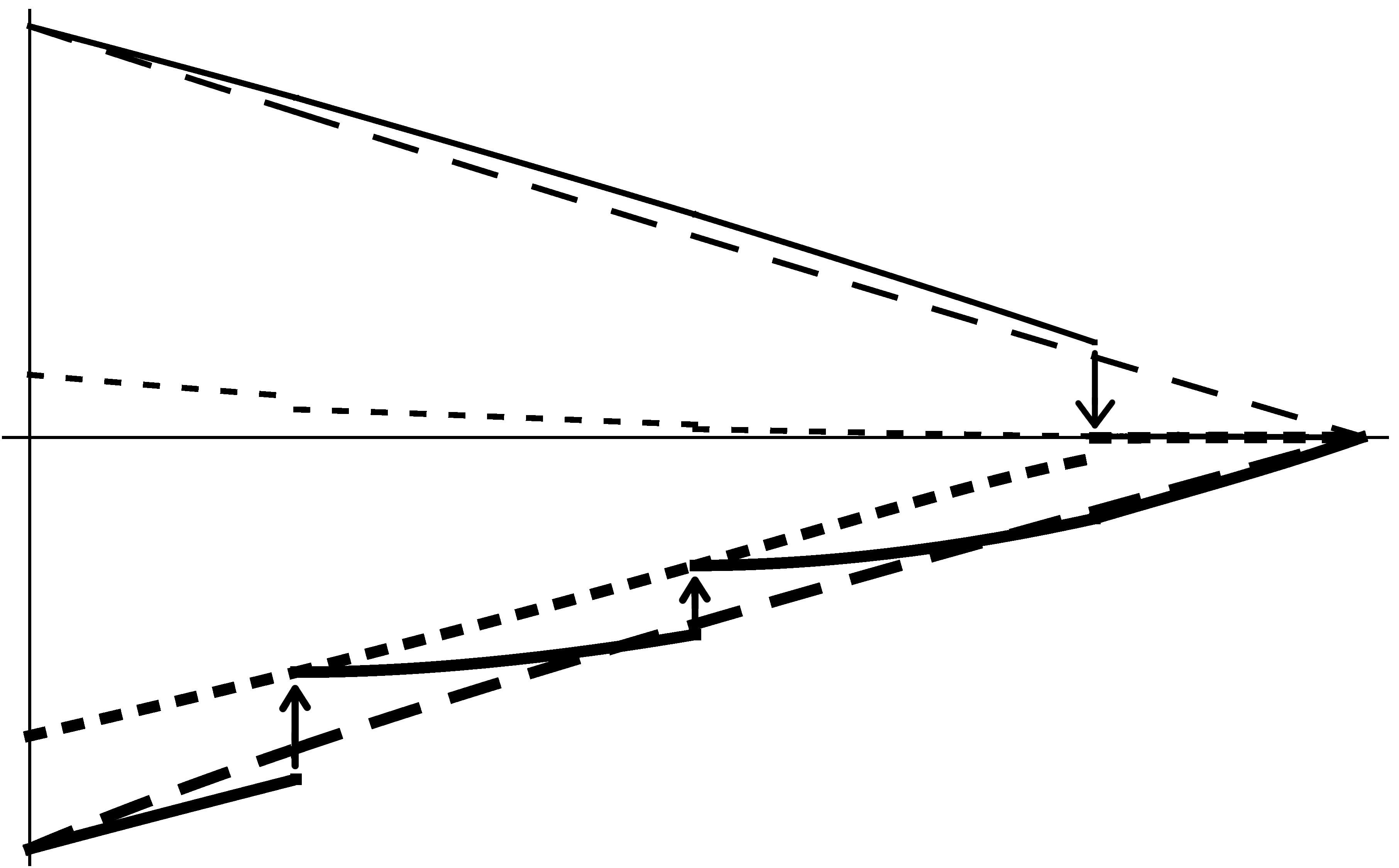}
\put(97,25){\scriptsize{$T$}}
\put(22,27){\scriptsize{$\tau_1$}}
\put(47,27){\scriptsize{$\tau_2$}}
\put(72,32){\scriptsize{$\tau_3$}}
\put(-4,56){\scriptsize{$x_1$}}
\put(-4,1){\scriptsize{$x_2$}}
\put(100,28){\scriptsize{Time}}
\put(0,60){\scriptsize{Size of asset position}}
\end{overpic}
\end{tabular}
\end{tabular}
\caption{Evolution of a portfolio consisting of two highly correlated stocks over time. The left picture illustrates a poorly diversified portfolio, the right picture a well diversified portfolio. In both pictures, thin lines are used for the less liquid first stock and thick lines for the more liquid second stock. Dashed lines correspond to trading without the
dark pool and solid lines correspond to a realization of the liquidation process using the dark pool, where dark pool orders for the second stock are executed at times $\tau_1$ and $\tau_2$ and for the first stock at time $\tau_3$. Dotted lines correspond to the position which the investor aims to reach by her dark pool order for the respective stock. $x_1=1$, $x_2=1$ (left picture), $x_2=-1$ (right picture),  $T=1$, $\theta_1=0.5$, $\theta_2=3$, $\lambda_1=3$, $\lambda_2=0.2$,  $\alpha=4$, $\sigma_1=\sigma_2=1$ and $\rho=0.9$. }
\label{WellVsBadlyHedged}
\end{figure}

Figure~\ref{WellVsBadlyHedged} shows the evolution of the two portfolios if a risk averse investor applies the optimal strategy. The left picture corresponds to the poorly diversified portfolio, the right one to the well diversified portfolio. In both cases, thin lines are used for the first stock and thick lines for the second. Dashed lines correspond to trading without the dark pool and the solid lines correspond to a realization of the liquidation process using the dark pool, where the dark pool orders for the second stock are executed at times $\tau_1$ and $\tau_2$ and for the first stock only at time~$\tau_3$, i.e., dark pool orders for the more liquid stock are executed twice before any execution in the less liquid stock takes place. Dotted lines correspond to the position which the investor aims to reach by her dark pool order for the respective stock (cf.~Proposition~\ref{PropOptDPOrder}).

For the poorly diversified portfolio, the trader tries to improve her risky position by trading out of the second stock (cf.~Proposition~\ref{PropPropertiesNeverShort}~(ii) and the subsequent discussion). For this stock, trading in the primary venue is less expensive and being executed in the dark pool is more probable. 
For the well diversified portfolio, the portfolio position is decreasing almost linearly in time in all cases. 
Additionally, orders in the dark pool are very large for the poorly diversified portfolio and comparatively small for the well diversified portfolio, in line with Proposition~\ref{PropPropertiesNeverShort}. The reason is that dark pool orders are such that either the risk costs are decreased significantly by an execution (in the poorly diversified case) or they are only slightly increased (in the well diversified case).
Note that both for the poorly and for the well diversified portfolio, these effects are stronger for the liquid stock; for the illiquid stock, savings in price impact costs outweigh savings in risk costs. 

\subsubsection{Dependence on price impact} \label{SubSecPortfolioLambda}

In this section, we discuss the dependence of the value function and the optimal strategy on the price impact and on the cross price impact. It follows directly from the definition of the cost functional $J$ that the value function is increasing in the price impact matrix $\Lambda$. On the other hand, the increase of $v$ is in some sense bounded by the increase of $\Lambda$.

\begin{prop} \label{PropMontonValueLam}
Let $t\in [0,T)$ and $x \in \mathds{R}^n$. 
\begin{enumerate}
\item[(i)]
$v(t,x)$ is increasing in $\Lambda$.
\item[(ii)]
Let $\Lambda = \diag(\lambda_j,; j=1,\dots, n)$. For $i=1,\dots,n$, $\frac{v(t,x;\lambda_i)}{\lambda_i}$ is decreasing in $\lambda_i$.
\end{enumerate}
\end{prop}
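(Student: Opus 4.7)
For part~(i), I would use a direct monotonicity-of-the-integrand argument. The set $\mathbb{A}(t,x)$ of admissible liquidation strategies does not depend on $\Lambda$, since admissibility is a pathwise/moment condition on $\xi,\eta$ and on the controlled SDE. Moreover, only the first summand of $f(\xi,x)=\xi^\top\Lambda\xi+\alpha x^\top\Sigma x$ involves $\Lambda$, and if $\Lambda_1\le\Lambda_2$ (in the positive-semidefinite order of Notation~\ref{NotaMatrix}(i)), then $\xi^\top\Lambda_1\xi\le\xi^\top\Lambda_2\xi$ pointwise. Therefore $J(t,x,u;\Lambda_1)\le J(t,x,u;\Lambda_2)$ for every fixed $u\in\mathbb{A}(t,x)$, and taking the infimum over $u$ preserves the inequality, giving $v(t,x;\Lambda_1)\le v(t,x;\Lambda_2)$.

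For part~(ii), I would proceed by comparing the $\lambda_i$--optimal strategy used in the $\lambda_i'$--problem (with $\lambda_i<\lambda_i'$). By Theorem~\ref{TheoremOptLiq} the optimal strategy $u^*=(\xi^*,\eta^*)$ for the $\lambda_i$--problem exists, lies in $\mathbb{A}(t,x)$, and realises $v(t,x;\lambda_i)=J(t,x,u^*;\lambda_i)$. Using $u^*$ as a candidate strategy for the $\lambda_i'$--problem (again admissibility does not depend on $\lambda_i$), and writing $\Lambda'=\Lambda+(\lambda_i'-\lambda_i)e_ie_i^\top$, one obtains
\begin{equation*}
v(t,x;\lambda_i')\;\le\; J(t,x,u^*;\lambda_i')\;=\;v(t,x;\lambda_i)+(\lambda_i'-\lambda_i)\,\mathbb{E}\!\left[\int_t^T \xi_i^*(s)^2\,ds\right].
\end{equation*}
The key observation is that since $\Lambda$ is diagonal with positive diagonal entries and $\alpha X^{*\top}\Sigma X^*\ge 0$, the integrand of $J(\cdot;\lambda_i)$ dominates $\lambda_i\xi_i^{*2}$, so
\begin{equation*}
v(t,x;\lambda_i)=\mathbb{E}\!\left[\int_t^T\!\!\Big(\sum_{j=1}^n\lambda_j\xi_j^*(s)^2+\alpha X^*(s)^\top\Sigma X^*(s)\Big)ds\right]\;\ge\; \lambda_i\,\mathbb{E}\!\left[\int_t^T \xi_i^*(s)^2\,ds\right].
\end{equation*}
Combining the two displays yields $v(t,x;\lambda_i')\le v(t,x;\lambda_i)\bigl(1+(\lambda_i'-\lambda_i)/\lambda_i\bigr)=(\lambda_i'/\lambda_i)\,v(t,x;\lambda_i)$, which rearranges to $v(t,x;\lambda_i')/\lambda_i'\le v(t,x;\lambda_i)/\lambda_i$, the desired monotonicity.

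\textbf{Anticipated obstacle.} There is essentially no analytic difficulty: the diagonal form of $\Lambda$ and the positivity of $\alpha X^{*\top}\Sigma X^*$ are exactly what is needed for the second-display inequality, and the first display only uses that $u^*$ is feasible for the $\lambda_i'$--problem (i.e., admissibility is $\lambda_i$--independent, as is the dynamics of $X^{u^*}$). The one point worth checking carefully is that $\mathbb{E}[\int_t^T\xi_i^*(s)^2 ds]$ is finite so that the decomposition $J(t,x,u^*;\lambda_i')=J(t,x,u^*;\lambda_i)+(\lambda_i'-\lambda_i)\mathbb{E}[\int_t^T\xi_i^{*2}ds]$ is legitimate; this follows from the explicit form $\xi^*(s)=\Lambda^{-1}C(s)X^*(s)$ in~\eqref{Eqxieta*} together with the moment bounds used for admissibility (Definition~\ref{DefAdmStr}(ii)) as established in the verification of $u^*\in\mathbb{A}(t,x)$ (Theorem~\ref{TheoremAdm}).
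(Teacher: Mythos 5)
Your proposal is correct and follows essentially the same route as the paper: part~(i) is the trivial monotonicity of $J(t,x,u;\Lambda)$ in $\Lambda$ for fixed admissible $u$, and part~(ii) uses the $\lambda_i$--optimal strategy as a competitor for the $\tilde\lambda_i$--problem together with the nonnegativity of the remaining cost terms. Your decomposition $J(t,x,u^*;\lambda_i')=v(t,x;\lambda_i)+(\lambda_i'-\lambda_i)\mathbb{E}[\int_t^T\xi_i^{*}(s)^2ds]$ combined with $\lambda_i\mathbb{E}[\int_t^T\xi_i^{*}(s)^2ds]\le v(t,x;\lambda_i)$ is algebraically the same inequality the paper obtains by dividing the cost functional by $\lambda_i$ and enlarging the denominators of the non--$i$ terms.
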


The optimal trading strategy does not need to be monotone in the price impact parameter since risk mitigation and liquidation can be conflicting desires. We illustrate this situation in Figure~\ref{FigNonMon} by considering three different well diversified portfolios. In the first case ($x_1=0.7$, $x_2=-0.7$; left picture) further risk mitigation is not optimal; the trading intensity is decreasing in the price impact. In the second case ($x_1=0.7$, $x_2=-1.1$; middle picture) it is profitable to increase the position for small price impact in order to reach a position with even less risk costs; the optimal trading intensity is increasing and further risk mitigation is only profitable for small enough $\lambda_1$. In the third case ($x_1=0.7$, $x_2=-0.81$; right picture) the optimal intensity is increasing for very small $\lambda_1$ and then decreasing; the conflict apparent in the left and the middle picture destroys the monotonicity in this case.

\begin{figure}[!ht]
\centering\vspace{4ex}
\begin{tabular}{lllll}
\begin{tabular}{l}\begin{overpic}[height=2.7cm, width=4.5cm]{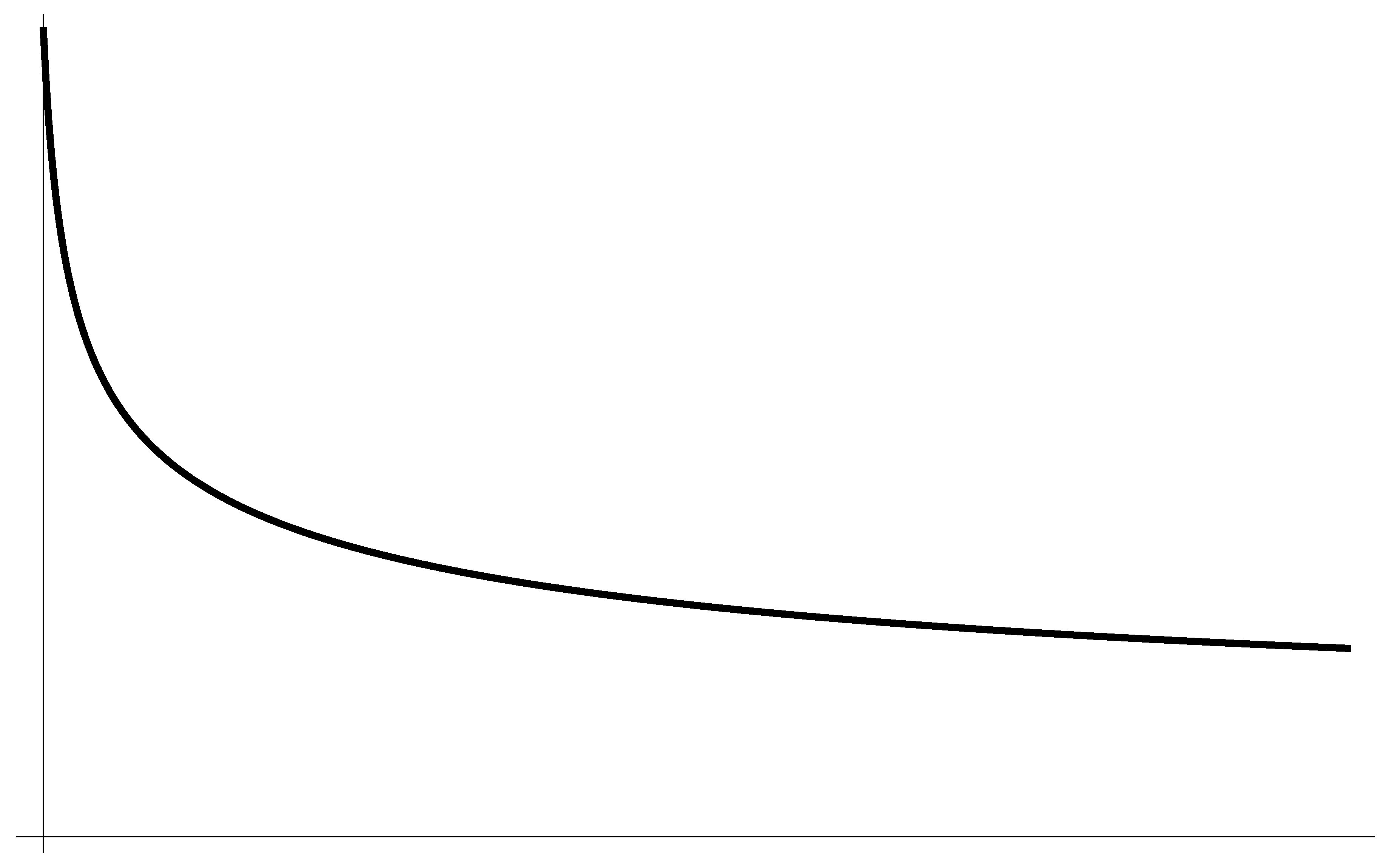}
\put(-10,56){\scriptsize{$2.94$}}
\put(-10,13){\scriptsize{$0.68$}}
\put(4,-4){\scriptsize{$0.005$}}
\put(100,0.5){\scriptsize{$\lambda_1$}}
\put(-2,63){\scriptsize{$\xi^*_1(0,x)$}}
\put(94,-4){\scriptsize{0.5}}
\put(2,58){\line(1,0){2}}
\put(2,15){\line(1,0){2}}
\put(97,0.5){\line(0,1){2}}
\end{overpic} 
\end{tabular} 
\begin{tabular}{l}
\hspace{1.5ex}
\end{tabular}
\begin{tabular}{l}\begin{overpic}[height=2.7cm, width=4.5cm]{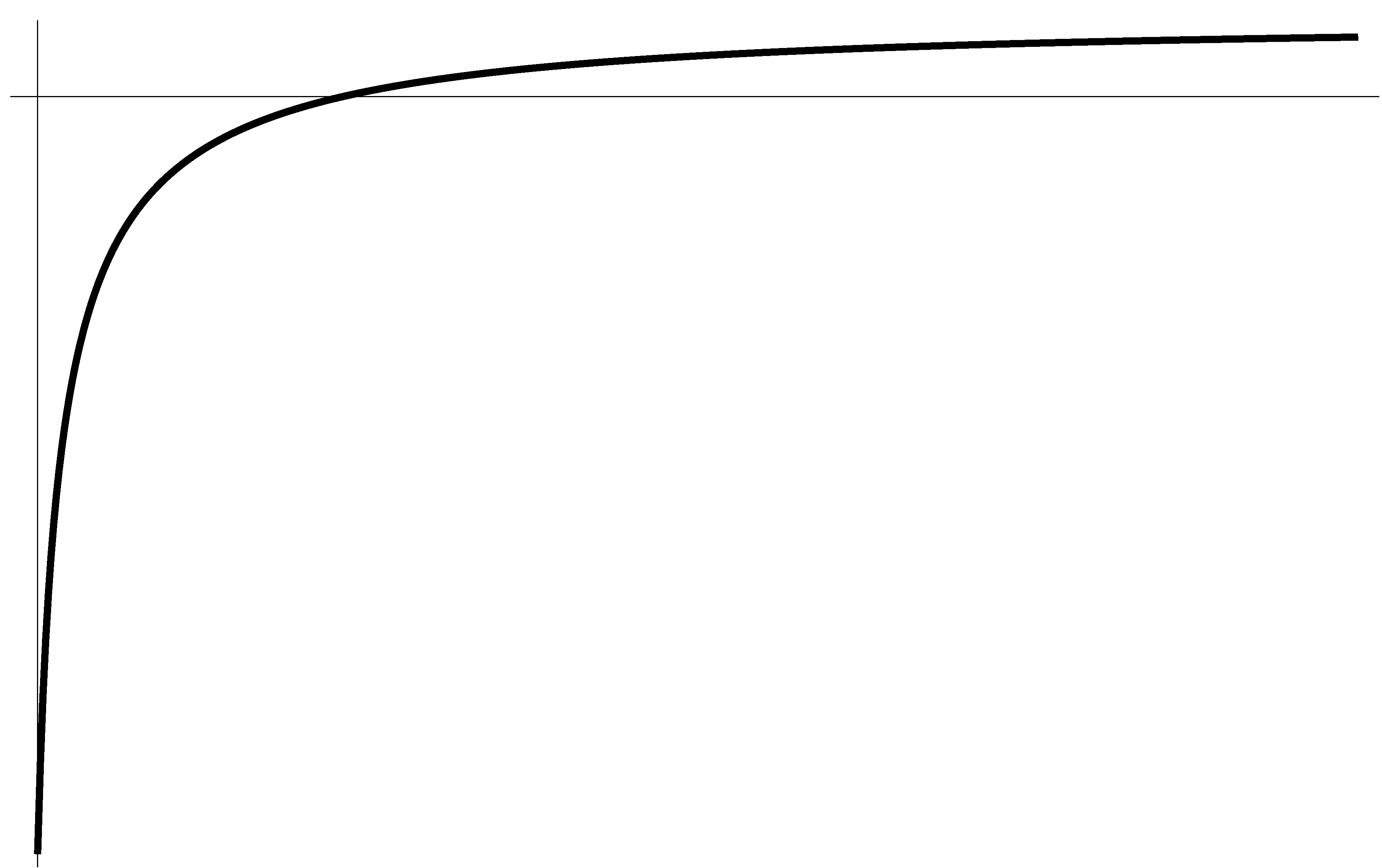}
\put(-10,56){\scriptsize{$0.32$}}
\put(5,0){\scriptsize{$-4.10$}}
\put(8,55){\scriptsize{$0.005$}}
\put(101,53){\scriptsize{$\lambda_1$}}
\put(-2,63){\scriptsize{$\xi^*_1(0,x)$}}
\put(94,47){\scriptsize{0.5}}
\put(1.7,57.6){\line(1,0){2}}
\put(1.7,1){\line(1,0){2}}
\put(98,52.3){\line(0,1){2}}
\end{overpic} 
\end{tabular} 
\begin{tabular}{l}
\hspace{1.5ex}
\end{tabular}
\begin{tabular}{l}\begin{overpic}[height=2.7cm, width=4.5cm]{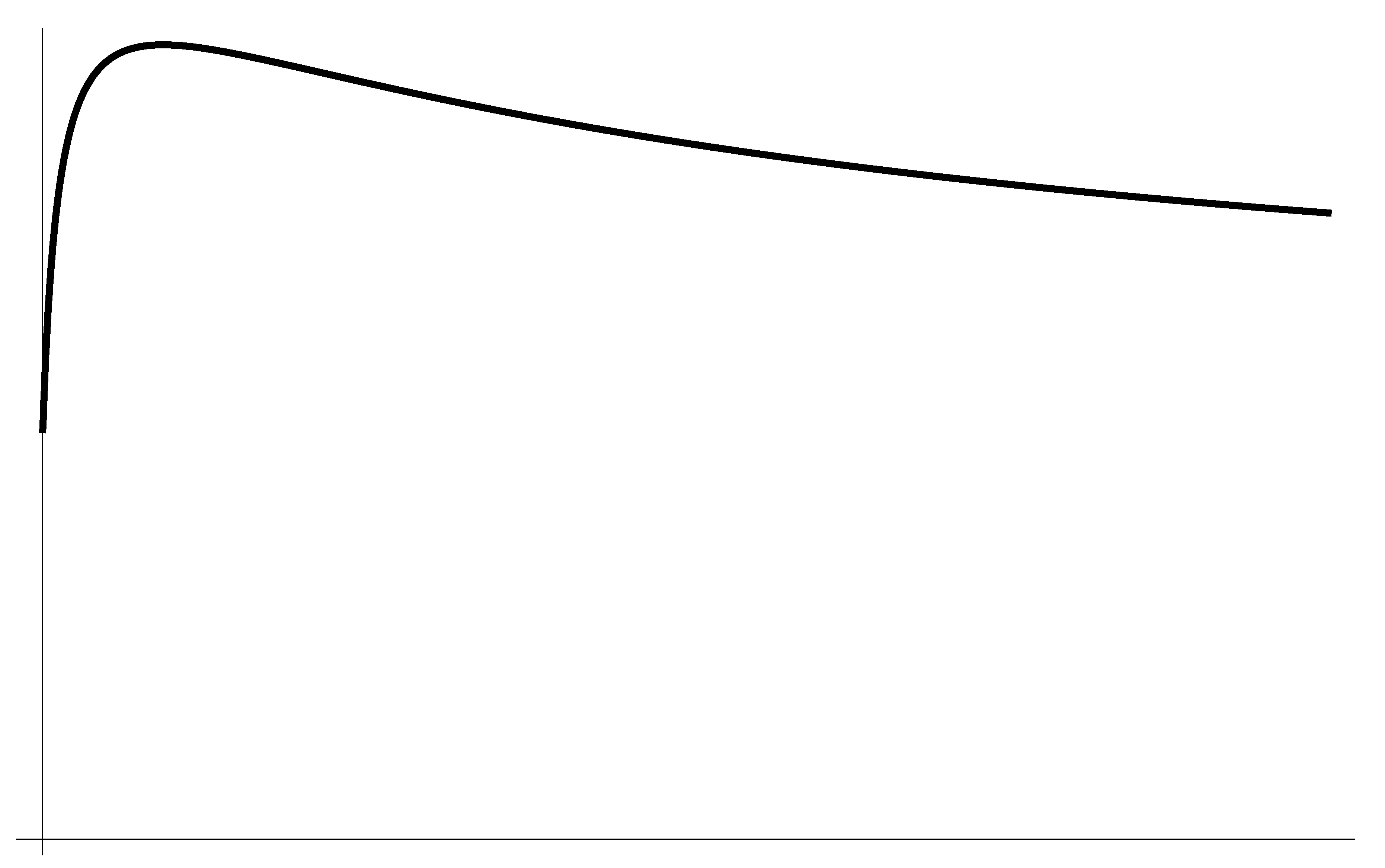}
\put(-10,44){\scriptsize{$0.55$}}
\put(-10,29){\scriptsize{$0.36$}}
\put(4,-4){\scriptsize{$0.005$}}
\put(100,0.5){\scriptsize{$\lambda_1$}}
\put(-2,63){\scriptsize{$\xi^*_1(0,x)$}}
\put(92,-4){\scriptsize{0.5}}
\put(2,46){\line(1,0){2}}
\put(2,30){\line(1,0){2}}
\put(96,0.8){\line(0,1){2}}
\end{overpic} 
\end{tabular} 
\end{tabular}
\vspace{1ex}
\caption{Dependence of the optimal trading intensity for the first asset on $\lambda_1 \in (0.005,0.5)$ for a well diversified portfolio $x$ with $x_1=0.7$, and different positions in the second asset: $x_2=-0.7$ (left picture), $x_2=-1$ (middle picture) and $x_2=-0.81$ (right picture). $T=1$, $\lambda_2=1$, $\theta_1=\theta_2=1$, $\alpha=4$, $\sigma_1=\sigma_2=1$ and $\rho=0.9$.}
\label{FigNonMon}
\end{figure}

We close the section by analyzing the cross asset price impact $\lambda_{1,2}$ in the price impact matrix 
\begin{equation*}
\Lambda = 
\begin{pmatrix}
\lambda_1 & \lambda_{1,2} \\
\lambda_{1,2} & \lambda_2
\end{pmatrix} \quad \text{for} \quad 0 < |\lambda_{1,2}| < \sqrt{\lambda_{1}\lambda_{2}}.
\end{equation*}
We first consider the case $\sgn(\lambda_{1,2})=\sgn(\rho)$.\footnote{This case is more intuitive than the converse: if the assets are positively correlated, buying in the first asset should rather increase than decrease the price of the second asset. \label{FootCross}} We obtain the following analogs of Propositions~\ref{PropPropertiesCharactWell} and~\ref{PropPropertiesNeverShort}~(ii).

\begin{prop} \label{PropPropertiesCross}
Let $t\in [0,T)$, $\sgn(\lambda_{1,2})=\sgn(\rho)$ and $x_1,x_2\not=0$. 
\begin{enumerate}
\item[(i)]
$v(t,(x_1,x_2)^\top) <  v(t,(x_1,-x_2)^\top)$ if and only if the portfolio $x$ is well diversified;\footnote{Similarly as before (cf.~Definition~\ref{DefWellDiv}), we call a portfolio $x=(x_1,x_2)^\top$ ($x_1,x_2\not=0$) \emph{well diversified} if either the signs of the positions are equal ($\sgn(x_1) = \sgn(x_2)$) and $\rho,\lambda_{1,2}<0$ or if the signs of the positions are different and $\rho,\lambda_{1,2}>0$. Otherwise, the portfolio is \emph{poorly diversified}.} \\
$v(t,(x_1,x_2^\top)) >  v(t,(x_1,-x_2)^\top)$ if and only if the portfolio $x$ is poorly diversified.
\item[(ii)]
If $x$ is poorly diversified, then 
$
\sgn(X^*_i(s-) - \eta_i^*(s,X^*(s-))) \neq \sgn(x_i)
$ for $s \in [t,\tau)$, where 
\[
\tau=\inf\{s\geq t| \sgn(X_i^*(s))\not= \sgn(x_i) \text{ or } X^*_i(s)=0 \text{ for some } i=1,2\} \wedge T >t\quad \text{a.s.}
\]
\item[(iii)]
If $x$ is well diversified, then 
$
\sgn(X^*_i(s-) - \eta_i^*(s,X^*(s-))) = \sgn(x_i)
$ for $s \in [t,\tau)$, where $\tau$ is as above.
\end{enumerate}
\end{prop}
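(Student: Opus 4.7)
My plan is to reduce all three claims to a single sign statement, $\sgn(c_{1,2}(t)) = \sgn(\rho)$ for every $t \in [0,T)$, and then to establish that sign by analyzing the off-diagonal entry of the matrix Differential Equation~\eqref{EqODE1} at a candidate zero.

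\textbf{Reduction.} By Theorem~\ref{TheoremOptLiq}, $v(t,x) = x^\top C(t)x$, so expanding gives
\begin{equation*}
v(t,(x_1,x_2)^\top) - v(t,(x_1,-x_2)^\top) = 4\,c_{1,2}(t)\,x_1 x_2.
\end{equation*}
With the extended notion of well/poorly diversified from the footnote, this makes part (i) equivalent to $\sgn(c_{1,2}(t)) = \sgn(\rho)$. For parts (ii) and (iii), the proof of Proposition~\ref{PropOptDPOrder} uses only the positive-definite quadratic form structure of $v$ and so extends verbatim to nonzero $\lambda_{1,2}$; minimizing $(x - \eta e_i)^\top C(s)(x - \eta e_i)$ in $\eta$ yields
\begin{equation*}
X_i^*(s-) - \eta_i^*(s, X^*(s-)) = -\,\frac{c_{1,2}(s)}{c_{i,i}(s)}\,X_{3-i}^*(s-),
\end{equation*}
whose sign is $-\sgn(c_{1,2}(s))\,\sgn(X^*_{3-i}(s-))$. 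Together with the sign-preservation of $X^*$ on $[t,\tau)$ (from the linear ODE $(X^*)' = -\Lambda^{-1} C(s) X^*$ between jumps, exactly as in the proof of Proposition~\ref{PropPropertiesNeverShort}~(i)), a short case analysis on $\sgn(x_i)$ versus $\sgn(x_{3-i})$ converts $\sgn(c_{1,2}) = \sgn(\rho)$ into the sign flip of (ii) for poorly diversified $x$ and the sign preservation of (iii) for well diversified $x$.

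\textbf{Proof of the sign of $c_{1,2}$.} Treat $\rho > 0$, so $\lambda_{1,2} > 0$ by assumption; the case $\rho < 0$ is symmetric. I first show $c_{1,2}(l,t) > 0$ for the finite-$l$ problem and then pass to the limit. Inspect the $(1,2)$-entry of~\eqref{EqODE1} at a hypothetical zero $c_{1,2}(l,t) = 0$: there $C(l,t)$ is diagonal, so $(C^\top \tilde{C} C)_{1,2} = 0$, while
\begin{equation*}
(C^\top \Lambda^{-1} C)_{1,2} = -\,\frac{\lambda_{1,2}\,c_{1,1}(l,t)\,c_{2,2}(l,t)}{\det \Lambda} \quad \text{and} \quad (-\alpha \Sigma)_{1,2} = -\alpha \rho \sigma_1 \sigma_2,
\end{equation*}
both strictly negative under the alignment $\sgn(\lambda_{1,2}) = \sgn(\rho) > 0$ (using $c_{i,i}(l,t) > 0$ and $\det \Lambda > 0$). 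Hence $c_{1,2}'(l,t) < 0$ at every zero. Combined with the terminal condition $c_{1,2}(l,T) = 0$, this forces $c_{1,2}(l,\cdot) > 0$ just below $T$, and a standard barrier argument (at any largest interior zero $t_0 < T$ the strictly negative derivative produces strictly negative values immediately to the right of $t_0$, contradicting positivity near $T$) extends this to all $t \in [0,T)$. Passing $l \to \infty$ via Theorem~\ref{TheoremLimitValueFunc} gives $c_{1,2}(t) \geq 0$; the same barrier argument applied to the limiting Equation~\eqref{EqDiffEqLimit}, together with positive-definiteness of $C(t)$ from the lower bound~\eqref{EqBoundsLimitC}, then rules out interior zeros and yields $c_{1,2}(t) > 0$ strictly on $[0,T)$.

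\textbf{Main obstacle.} The substantive step is the off-diagonal analysis at a candidate zero: it is precisely the alignment $\sgn(\lambda_{1,2}) = \sgn(\rho)$ that makes the cross-impact contribution $(C^\top \Lambda^{-1} C)_{1,2}$ and the noise source $(-\alpha\Sigma)_{1,2}$ push $c_{1,2}$ in the same direction, enabling the barrier argument. Without alignment these two contributions would compete, and no uniform sign for $c_{1,2}$ would follow from the argument above; this is why the alignment appears as a hypothesis. The only auxiliary verification needed is that the formula for $\eta^*$ (Proposition~\ref{PropOptDPOrder}) and the sign-preservation of $X^*$ between jumps (Proposition~\ref{PropPropertiesNeverShort}~(i)) genuinely extend to the cross-impact setting, which follows on inspection of the proofs since they use only positive-definiteness of $C$ and continuity of $X^*$ between jumps.
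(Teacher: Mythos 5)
Your proposal is correct, but it establishes the key fact by a genuinely different route than the paper. The paper proves part (i) first, by a pathwise coupling argument: starting from the optimal strategy $u^*$ for the flipped portfolio $(x_1,-x_2)^\top$, it constructs a mirrored strategy $u$ for $(x_1,x_2)^\top$ with the same absolute trading intensities and suitably rescaled/sign-adjusted dark pool orders, so that the impact costs are no larger while the risk costs are strictly smaller; the sign relation $\sgn(c_{1,2}(t))=\sgn(\rho)$ is then \emph{deduced} from (i) via the quadratic-form identity (as in the proof of Proposition~\ref{PropMatrixEntries}~(i)), and (ii),(iii) follow from the formula for $\eta^*$ exactly as you argue. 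You reverse this logic: you prove $\sgn(c_{1,2}(t))=\sgn(\rho)$ directly from the matrix differential equation by evaluating the $(1,2)$-entry at a hypothetical zero of $c_{1,2}$ --- where $C$ is diagonal, so $(C\tilde{C}C)_{1,2}=0$ and $(C\Lambda^{-1}C)_{1,2}=-\lambda_{1,2}c_{1,1}c_{2,2}/\det\Lambda$ --- running a barrier argument from the terminal condition $c_{1,2}(l,T)=0$, and passing to the limit $l\to\infty$; part (i) then falls out of $v(t,(x_1,x_2)^\top)-v(t,(x_1,-x_2)^\top)=4c_{1,2}(t)x_1x_2$. Both arguments are sound (the needed ingredients --- positive definiteness of $C(l,t)$ and $C(t)$, the validity of the limit equation, and the $\eta^*$ formula with nonzero $\lambda_{1,2}$ --- are all available from Theorems~\ref{TheoremDiffBounds} and~\ref{TheoremLimitValueFunc} and Proposition~\ref{PropSolHJB}). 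Your ODE route is more self-contained and arguably cleaner: it sidesteps the case-by-case construction of the mirrored strategy, which the paper itself concedes ``needs to be defined carefully,'' and it makes explicit where the alignment hypothesis $\sgn(\lambda_{1,2})=\sgn(\rho)$ enters, namely in forcing the two source terms at a zero of $c_{1,2}$ to push in the same direction. What it gives up is the direct economic reading of the paper's coupling argument (equal impact costs, strictly smaller risk costs). Your reduction of (ii) and (iii) to the sign of $c_{1,2}$ via $X_i^*(s-)-\eta_i^*(s,X^*(s-))=-\tfrac{c_{1,2}(s)}{c_{i,i}(s)}X_{3-i}^*(s-)$ is the same as the paper's; note only that on $[t,\tau)$ the sign constancy of $X^*$ is immediate from the definition of $\tau$, so your appeal to the between-jump ODE there is superfluous.
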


For well diversified portfolios, we can recover only a part of Proposition~\ref{PropPropertiesNeverShort}~(i): by (iii), it is not optimal to change the sign of the position by placing oversized orders in the dark pool. However, it can be optimal to turn a well diversified portfolio into a poorly diversified portfolio by trading in the primary exchange. We illustrate this in the left picture of Figure~\ref{FigCross}; in the displayed scenario, the trader holds a positive position in the second asset only. The trading intensity of the first asset is denoted by the solid line. If $\lambda_{1,2}=0$ it is optimal to sell stocks in the first asset which is positively correlated to the second asset (cf.~the results of Section~\ref{SubSecDpendenceRhon=2}, in particular the discussion following Proposition~\ref{PropMatrixEntries}). If $\lambda_{1,2}$ increases, it becomes profitable to buy stocks in the first asset: as the cross price impact is positive, this allows the trader to increase her trading intensity in the second asset (dashed line) without increasing the overall impact costs too much. In parallel, the trader places a sell order for the first asset in the dark pool as she does not want to sell the stocks at the exchange again (which would result in additional price impact costs due to the sign of $\lambda_{1,2}$).

\begin{figure}[!ht]
\centering\vspace{4ex}
\begin{tabular}{lll}
\begin{tabular}{l}\begin{overpic}[height=2.7cm, width=4.5cm]{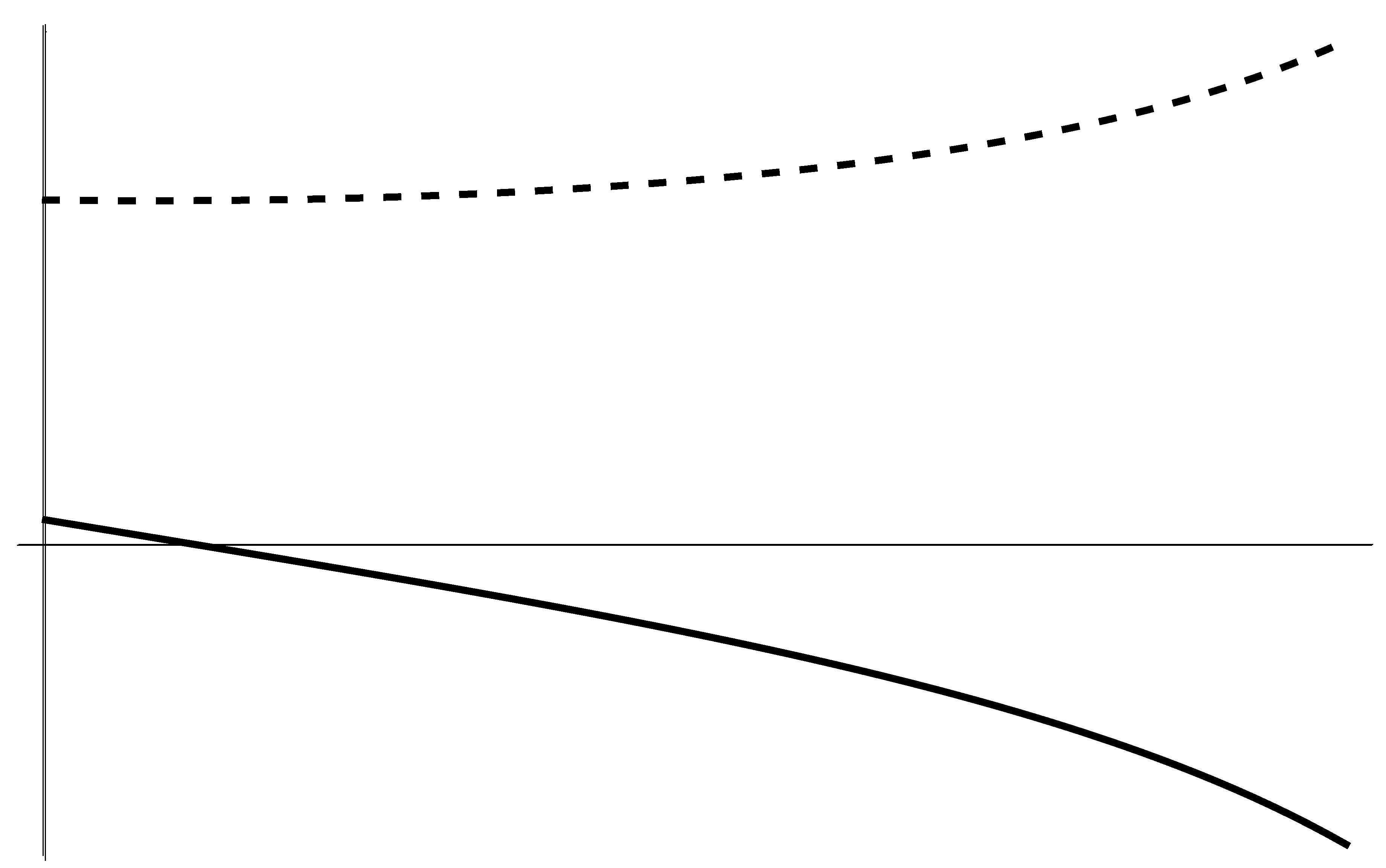}
\put(-9.5,55.5){\scriptsize{$0.59$}}
\put(-9.5,45){\scriptsize{$0.40$}}
\put(5,-0.5){\scriptsize{$-0.35$}}
\put(-9.5,22.5){\scriptsize{$0.03$}}
\put(91.5,16){\scriptsize{$0.7$}}
\put(100,20){\scriptsize{$\lambda_{1,2}$}}
\put(-2,63){\scriptsize{$\xi^*_i(0,x)$}}
\put(5,16){\scriptsize{0}}
\put(2,1){\line(1,0){2}}
\put(2,57){\line(1,0){2}}
\put(2,46.5){\line(1,0){2}}
\put(2,24){\line(1,0){2}}
\put(96,21.3){\line(0,1){2}}
\end{overpic} 
\end{tabular} 
\begin{tabular}{l}
\hspace{1.5ex}
\end{tabular}
\begin{tabular}{l}\begin{overpic}[height=2.7cm, width=4.5cm]{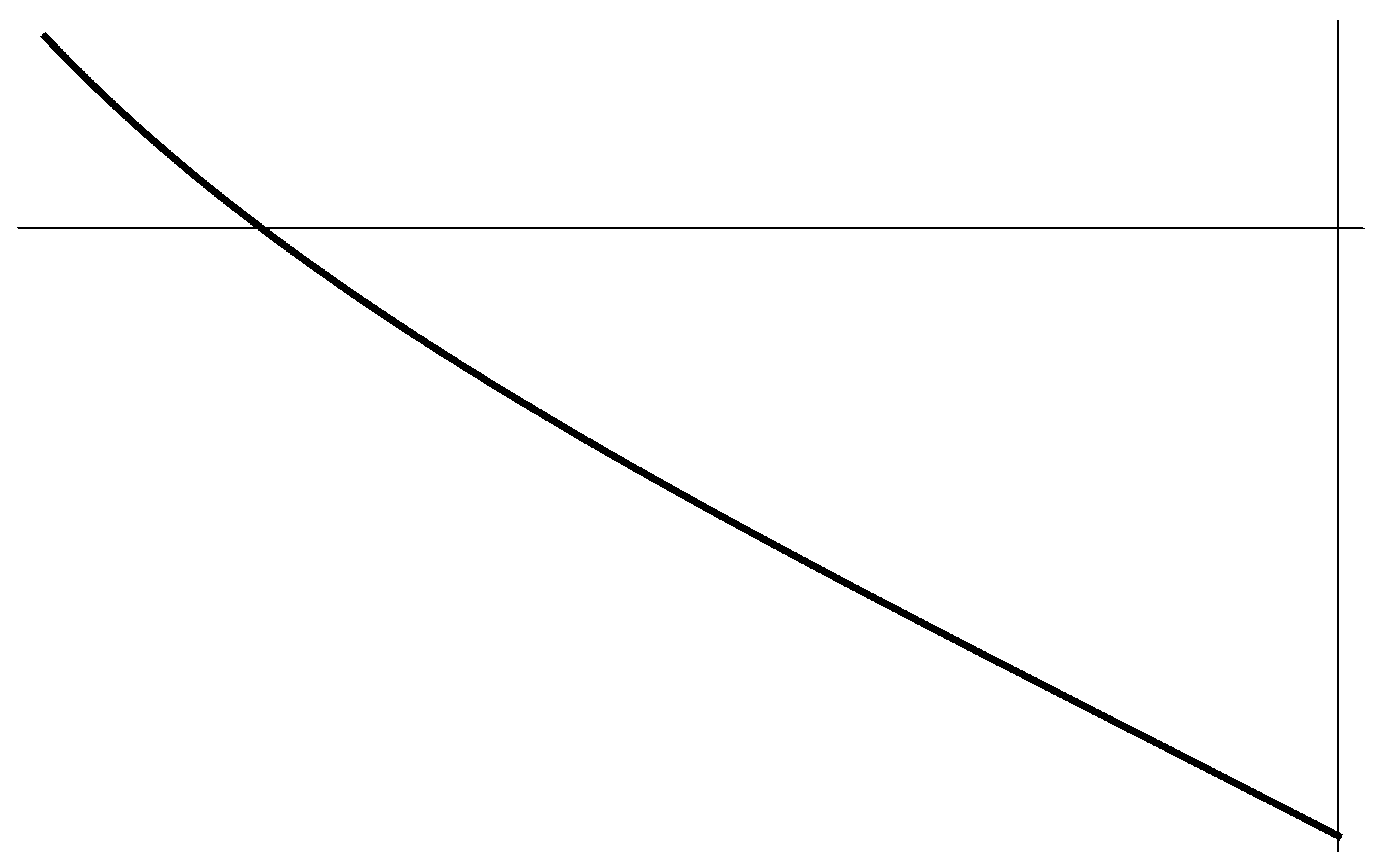}
\put(99.5,56.3){\scriptsize{$1.02$}}
\put(101,43){\scriptsize{$x_1\!\!=\!\!1$}}
\put(99.5,0.3){\scriptsize{$0.93$}}
\put(93,38.5){\scriptsize{$0$}}
\put(-3,38){\scriptsize{$-0.7$}}
\put(-10,43.3){\scriptsize{$\lambda_{1,2}$}}
\put(91,63){\scriptsize{$\eta^*_1(0,x)$}}
\put(96,58){\line(1,0){2}}
\put(96,2){\line(1,0){2}}
\put(3,43.5){\line(0,1){2}}
\end{overpic} 
\end{tabular} 
\begin{tabular}{l}
\hspace{1.5ex}
\end{tabular}
\begin{tabular}{l}\begin{overpic}[height=2.7cm, width=4.5cm]{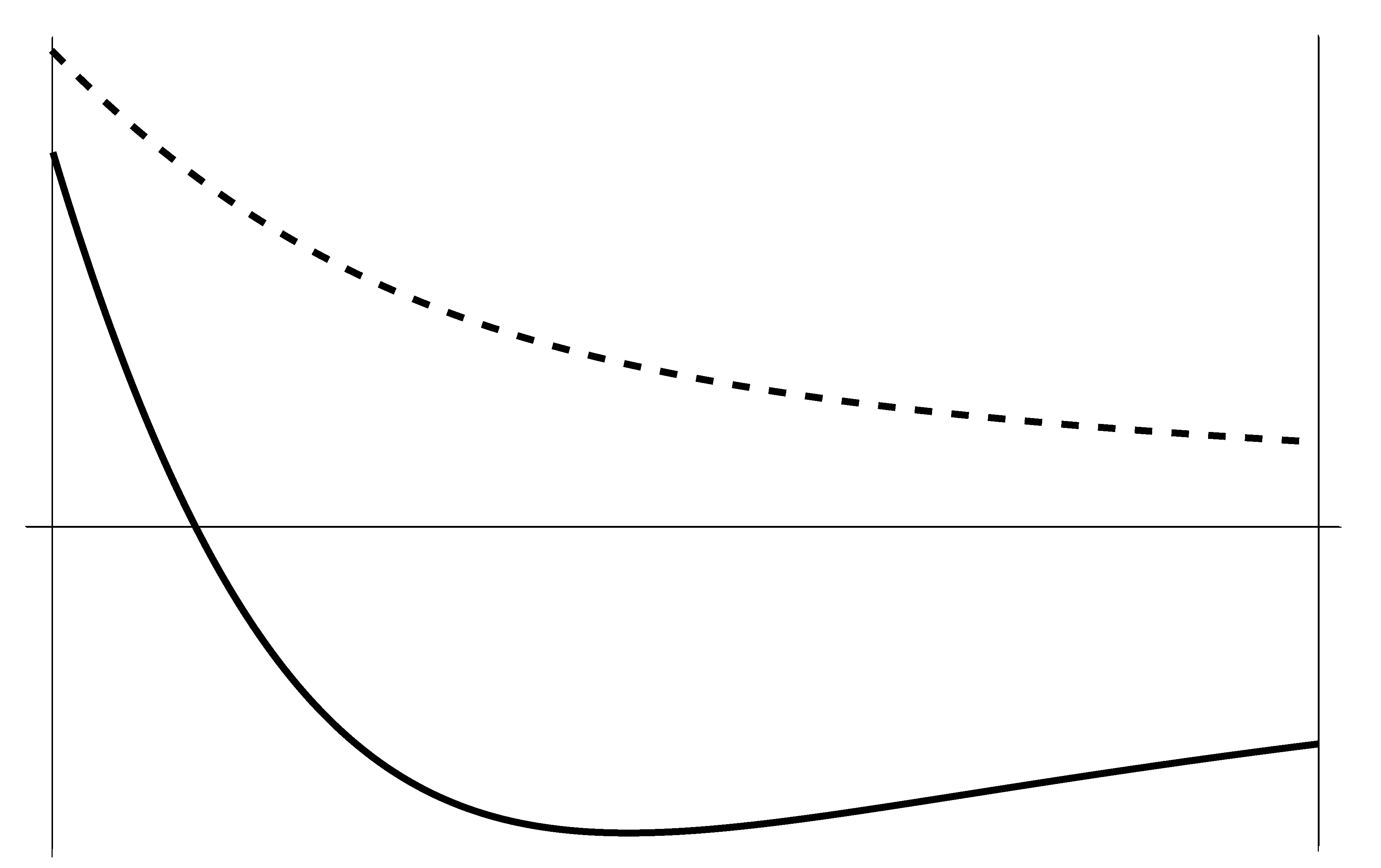}
\put(-10,48){\scriptsize{$0.09$}}
\put(-15,8){\scriptsize{$-0.05$}}
\put(-15,0.5){\scriptsize{$-0.07$}}
\put(5,18.5){\scriptsize{$0$}}
\put(89,18.5){\scriptsize{$10$}}
\put(99,55){\scriptsize{$0.56$}}
\put(99,28.5){\scriptsize{$0.01$}}
\put(100,22.5){\scriptsize{$\theta_1$}}
\put(-2,63){\scriptsize{$\xi^*_1(0,x)$}}
\put(91,63){\scriptsize{$\xi^*_1(0,\tilde{x})$}}
\put(2.5,50){\line(1,0){2}}
\put(2.5,9.5){\line(1,0){2}}
\put(2.5,2){\line(1,0){2}}
\put(95,30){\line(1,0){2}}
\put(95,57){\line(1,0){2}}
\end{overpic} 
\end{tabular} 
\end{tabular}
\vspace{1ex}
\caption{The left picture illustrates the dependence of the optimal trading intensity on $\lambda_{1,2} \in (0,0.7)$ for a portfolio with $x_1=0$, $x_2=1$. The solid line refers to the first asset, the dashed line to the second asset. The middle picture illustrates the dependence of the optimal dark pool order on $\lambda_{1,2} \in (-0.7,0)$ for a portfolio $x_1=1$, $x_2=-1$. in both pictures, $T=1$, $\lambda_1=\lambda_2=\alpha=\sigma_1=\sigma_2=1$, $\theta_1=\theta_2=3$ and $\rho=0.2$; in particular, $\sgn(\rho)=\sgn(\lambda_{1,2})$ for the left picture and $\sgn(\rho)\not=\sgn(\lambda_{1,2})$ for the middle picture. 
The right picture illustrates the dependence of the optimal trading intensity on $\theta_1 \in(0,10)$ for a well diversified portfolio $x$ ($x_1=0.25$, $x_2=-1$, $\rho=0.9$; solid line) and a poorly diversified portfolio $\tilde{x}$ ($\tilde{x}_1=x_1$, $\tilde{x}_2=1$; dashed line). $\lambda_{1,2}=0$; the remaining parameters are as above.}
\label{FigCross}
\end{figure}

The case $\sgn(\lambda_{1,2})\not=\sgn(\rho)$ (cf.~Footnote~\ref{FootCross}) is more complicated. In this case, it can be optimal to change the sign of a well diversified position (in the sense of Definition~\ref{DefWellDiv}) by placing oversized orders in the dark pool; for large cross price impact, the savings in impact costs resulting from the change of the position can outweigh the increase risk costs. We illustrate this by a numerical example in the middle picture of Figure~\ref{FigCross}.

\subsubsection{Dependence on the execution intensities} \label{SubSecPortfolioTheta}

We conclude this section by analyzing  the dependence of the value function and the optimal strategy on dark pool liquidity (more precisely on the intensities of the Poisson process $\pi$). Similarly as in the single asset case (cf.~Proposition~\ref{PropPropertiesn=1Cont}~(i)), the costs are decreasing in the intensities $\theta_i$. 

\begin{prop} \label{PropMontonIntensity}
Let $t\in [0,T)$, $x \in \mathds{R}^n$ and $i=1,\dots,n$. Then
$v(t,x;\theta_i)$ is decreasing in $\theta_i$.
\end{prop}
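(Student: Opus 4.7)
The plan is to establish the monotonicity first for the value function $\tilde v(l, t, x; \theta_i)$ of the unconstrained Problem~\eqref{EqValueFctl} at each fixed $l > l_0$, and then to pass to the limit $l \to \infty$. By Theorem~\ref{TheoremOptStrl} combined with Theorem~\ref{TheoremLimitValueFunc}~(i) and Theorem~\ref{TheoremOptLiq}, $\tilde v(l,t,x;\theta_i) = x^\top C(l,t;\theta_i) x$ converges to $x^\top C(t;\theta_i) x = v(t,x;\theta_i)$ as $l \to \infty$, so monotonicity transfers to the limit. The core step is a verification-type argument in which one applies the Markovian optimal strategy associated to the smaller intensity inside the system with the larger intensity, exploiting the pointwise nature of the HJB equation.

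Fix $\theta_i^{(1)} < \theta_i^{(2)}$ with the remaining components of $\theta$ held equal, and denote the corresponding optimal strategies and value functions by $u^{*,k} = (\xi^{*,k}, \eta^{*,k})$ and $\tilde v^{(k)}$ for $k \in \{1,2\}$. By Proposition~\ref{PropSolHJB}, $\eta^{*,1}_i(s,y)$ is the maximizer of $\eta_i \mapsto \tilde v^{(1)}(l,s,y) - \tilde v^{(1)}(l, s, y - \eta_i e_i)$ in the $i$-th summand of the HJB bracket, and since this expression vanishes at $\eta_i = 0$, we obtain
\begin{equation*}
\tilde v^{(1)}(l, s, y - \eta^{*,1}_i(s,y) e_i) \le \tilde v^{(1)}(l, s, y) \qquad \text{for all } (s,y) \in [0,T] \times \mathds{R}^n.
\end{equation*}
Now apply the Markovian strategy $u^{*,1}$ in the system with intensity $\theta^{(2)}$, obtaining a state process $X^2$ driven by the Poisson process with intensities $\theta^{(2)}$. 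Writing $\mathcal{L}^{u^{*,1},k}$ for the spatial generator associated with $u^{*,1}$ in the $\theta^{(k)}$ system and using the HJB equation for $w := \tilde v^{(1)}$ pointwise at $(s, X^2(s))$, the difference $\mathcal{L}^{u^{*,1},2} w - \mathcal{L}^{u^{*,1},1} w$ equals $(\theta_i^{(2)} - \theta_i^{(1)}) [w(s, X^2 - \eta^{*,1}_i e_i) - w(s, X^2)]$, which is non-positive by the display above, so
\begin{equation*}
\partial_s w + \mathcal{L}^{u^{*,1},2} w \le -f(\xi^{*,1}, X^2).
\end{equation*}
An application of It\^o's formula to $w$ along $X^2$, followed by taking expectations and using the terminal condition $w(T,y) = l\, y^\top y$, yields
\begin{equation*}
\tilde v^{(1)}(l, t, x) \ge \mathbb{E}\Bigl[\int_t^T f(\xi^{*,1}(s,X^2(s)), X^2(s))\, ds + l\, X^2(T)^\top X^2(T)\Bigr] = \tilde J(l, t, x, u^{*,1}) \ge \tilde v^{(2)}(l, t, x),
\end{equation*}
where the cost functional on the right and the infimum defining $\tilde v^{(2)}$ are taken with respect to the $\theta^{(2)}$ system.

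Passing to the limit $l \to \infty$ preserves the inequality and gives $v(t,x;\theta_i^{(1)}) \ge v(t,x;\theta_i^{(2)})$, which is the statement of the proposition. The main obstacle is the rigorous justification of It\^o's formula and the true-martingale property of the resulting stochastic-integral terms when $u^{*,1}$ is applied in the $\theta^{(2)}$ system rather than in the $\theta^{(1)}$ system; in particular, one needs moment estimates for $X^2$ of the flavour of Lemma~\ref{LemXuintegrierbar}. Because $u^{*,1}$ is linear in $x$ with coefficients $C^{(1)}(l,\cdot)$ bounded on $[0,T]$ by Theorem~\ref{TheoremDiffBounds} and the jump intensities $\theta^{(2)}_i$ are finite, these estimates transfer to $X^2$, and the verification closes in full analogy with the proofs of Proposition~\ref{PropAdml} and Theorem~\ref{TheoremOptStrl}.
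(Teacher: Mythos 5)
Your argument is correct, but it takes a genuinely different route from the paper. The paper's proof stays entirely on the level of the matrix ODE: it observes that the nonlinear term $C^\top\tilde C C$ in~\eqref{EqODE1} is monotone in $\theta_i$ and invokes the comparison machinery of Theorem~\ref{TheRiccIneq} ``as in the proof of Theorem~\ref{TheoremDiffBounds}'' to get $C(l,t;\tilde\theta_i)\leq C(l,t;\theta_i)$, then lets $l\to\infty$ — essentially a one-line corollary of the analytic apparatus already built, though it silently reuses the trick of absorbing the non-Riccati term via $0\leq C\tilde C C\leq\theta C$ (Corollary~\ref{CorMatIneq}) since neither equation is actually of Riccati type. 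You instead run a probabilistic sub-optimality argument: plug the Markovian optimizer for the smaller intensity into the system with the larger intensity, and use that the HJB supremum together with the sign of the jump term (the value function never increases at an optimal dark-pool execution, which is exactly Inequality~\eqref{IneqStattBellman} / Proposition~\ref{PropOptDPOrder}) forces the generator inequality and hence $\tilde v^{(1)}\geq\tilde J^{(2)}(\cdot,u^{*,1})\geq\tilde v^{(2)}$. What your route buys is transparency of the economic content — more dark-pool liquidity can only help because executions only ever decrease the running value — and independence from the matrix comparison theorem; what it costs is having to re-verify admissibility and the true-martingale property of the compensated jump integrals for $u^{*,1}$ driven by the $\theta^{(2)}$-Poisson process, which you correctly flag and which does go through: the pathwise Gronwall bound of Lemma~\ref{LemXl*} only uses $C^{(1)}\tilde C^{(1)}C^{(1)}\leq\theta^{(1)}C^{(1)}$ between jumps and the intensity-independent quadratic-form decrease at jumps, so $\|X^2\|_2$ is a.s.\ bounded and the estimates of the proofs of Proposition~\ref{PropAdml} and Theorem~\ref{TheoremOptStrl} carry over verbatim. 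Both proofs conclude identically by letting $l\to\infty$ and using Theorems~\ref{TheoremLimitValueFunc} and~\ref{TheoremOptLiq}. One small caveat: the admissibility threshold $l_0$ in~\eqref{lNull} depends on $\theta=\sum_i\theta_i$, so you should take $l$ larger than both $l_0(\theta^{(1)})$ and $l_0(\theta^{(2)})$ before comparing; this is harmless for the limit.
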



The same trade-offs that can cause a non-monotone dependence of the optimal strategy on the price impact parameters can also give rise to a non-monotone dependence on the dark pool liquidity parameters $\theta_i$. We illustrate the dependence of the optimal trading intensity on $\theta_1$ for a poorly (dashed line) and a well diversified portfolio (solid line) in the right picture of Figure~\ref{FigCross}. 

\section{Proofs of the main results} \label{SecProofs}

\subsection{Proof of Theorem~\ref{TheoremDiffBounds}}

For $n\geq 2$, the second summand in the matrix differential equation $C^\top \tilde{C} C$
is in general not linear (or quadratic), and~\eqref{EqODE1} is not a Riccati matrix differential equation. Furthermore,
a closed form solution for the corresponding initial value problem is not known, and the existing theory about Riccati
matrix differential equations is not applicable directly.

It turns out that appropriate upper and lower bounds for the non-linear term $C^\top \tilde{C} C$
transform to lower and upper bounds ($P$ respectively $Q$) for the solution of the Matrix Initial Value Problem~\eqref{EqODE1} and yield existence and positive definiteness of the solution on the whole interval $(-\infty,T]$
(Theorem~\ref{TheoremDiffBounds}). To this end, we require a version of a well-known comparison result for matrix Riccati differential equations, which we state in Appendix~\ref{SecAppRiccati}.
The main step is thus to obtain adequate matrix inequalities which enable us to transfer these results to the
Initial Value Problem~\eqref{EqODE1}. 

For $C>0$, we have
$
0 \leq C \tilde{C} C.
$
The desired upper bound of $C \tilde{C} C$ is a direct consequence of the matrix inequality stated in the following result. 

\begin{prop} \label{PropMatrixIneq}
Let $C=(c_{i,j})_{i,j=1\dots,n} \in \mathds{R}^{n\times n}$ be a positive definite matrix and 
$\theta_i>0$, $i=1,\dots,n$. Then
\begin{equation} \label{Ineq}
C \leq  \theta \diag\Big(\frac{c_{i,i}}{\theta_i}\Big)= \theta \tilde{C}^{-1} ,
\end{equation}
where 
$
\theta:=\theta(n):=\sum_{i=1}^n \theta_i.
$
\end{prop}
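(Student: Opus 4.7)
The plan is to rephrase the claim as $M := \theta\tilde{C}^{-1}-C \geq 0$ and then write $M$ as an explicit sum of positive semi-definite matrices, one for each unordered pair $\{i,j\}$ of coordinate axes. Since $\tilde{C}=\diag(\theta_i/c_{i,i})$, we have $\theta\tilde{C}^{-1}=\diag(\theta c_{i,i}/\theta_i)$, so the entries of $M$ are
$M_{kk}=(\theta-\theta_k)c_{kk}/\theta_k=\sum_{j\neq k}\theta_j c_{kk}/\theta_k$ on the diagonal and $M_{kl}=-c_{kl}$ off the diagonal. The structure of the diagonal, as a sum indexed by the other coordinates, is what suggests decomposing $M$ pairwise.

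For each pair $i<j$, I define $M^{(ij)}$ to be the symmetric matrix whose only nonzero entries lie in the $2\times 2$ principal block indexed by $\{i,j\}$ and equal
\[
\begin{pmatrix}\theta_j c_{ii}/\theta_i & -c_{ij}\\ -c_{ij} & \theta_i c_{jj}/\theta_j\end{pmatrix}.
\]
Summing the diagonal contributions at position $k$ over all pairs containing $k$ gives $\sum_{j\neq k}\theta_j c_{kk}/\theta_k$, which matches $M_{kk}$; summing the off-diagonal contributions at $(k,l)$ with $k\neq l$ gives $-c_{kl}$, which matches $M_{kl}$. Hence $M=\sum_{i<j}M^{(ij)}$.

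It then remains to check that each $M^{(ij)}$ is positive semi-definite. This reduces to checking the $2\times 2$ block, which has strictly positive trace and determinant
\[
(\theta_j c_{ii}/\theta_i)(\theta_i c_{jj}/\theta_j)-c_{ij}^2 \;=\; c_{ii}c_{jj}-c_{ij}^2 \;\geq\; 0,
\]
the inequality being the non-negativity of the $\{i,j\}$-principal minor of the positive definite matrix $C$. Adding up yields $M\geq 0$, i.e., the desired inequality $C\leq \theta\tilde{C}^{-1}$.

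The only real obstacle is identifying the correct weights in the pairwise decomposition: the pairing $(\theta_j/\theta_i,\theta_i/\theta_j)$ on the diagonal is forced by the twin requirements that (i) the product of the two diagonal weights on the $\{i,j\}$-block be exactly $c_{ii}c_{jj}$, so that positive semi-definiteness of the block collapses to the already-available minor inequality $c_{ii}c_{jj}-c_{ij}^2\geq 0$, and (ii) the diagonal contributions sum across pairs to reproduce $(\theta-\theta_k)c_{kk}/\theta_k$. Once this bookkeeping is in place the argument is elementary, and no cancellation or sharper matrix inequality is needed.
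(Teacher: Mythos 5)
Your proof is correct. The identity $M:=\theta\tilde{C}^{-1}-C=\sum_{i<j}M^{(ij)}$ checks out: for fixed $k$, each pair $\{k,m\}$ with $m\neq k$ contributes $\theta_m c_{kk}/\theta_k$ to the $(k,k)$ entry (regardless of whether $k<m$ or $k>m$), summing to $(\theta-\theta_k)c_{kk}/\theta_k=M_{kk}$, while each off-diagonal position $(k,l)$ receives $-c_{kl}$ from exactly one pair. Each $2\times 2$ block has positive diagonal entries (since $c_{ii}>0$) and determinant $c_{ii}c_{jj}-c_{ij}^2\geq 0$ by positive definiteness of the $\{i,j\}$ principal submatrix of $C$, hence is positive semi-definite, and the sum of positive semi-definite matrices is positive semi-definite. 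The case $n=1$ is the empty sum, consistent with equality there.

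This is a genuinely different route from the paper's argument, which proceeds by induction on $n$: the paper writes $C$ in block form with an $n\times n$ leading block $C_n$, completes a square in the quadratic form $z^\top(\theta(n+1)\diag(c_{i,i}/\theta_i)-C)z$, invokes the induction hypothesis together with the monotonicity of matrix inversion ($0<A\leq B$ implies $B^{-1}\leq A^{-1}$) to get $\tfrac{1}{\theta(n)}\diag(\theta_i/c_{i,i})\leq C_n^{-1}$, and finally uses the Schur-complement identity $c_{n+1,n+1}-c^\top C_n^{-1}c=\det C/\det C_n>0$. Your pairwise decomposition avoids induction, matrix inversion, and Schur complements entirely, reducing everything to the non-negativity of $2\times 2$ principal minors; it also makes transparent exactly where the weight $\theta$ comes from (one unit of $\theta_j/\theta_i$ per pair) and why equality holds for $n=1$. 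The paper's inductive argument is heavier but structurally parallels other block-matrix manipulations used elsewhere in the text. Either proof is acceptable; yours is the more elementary and, arguably, the more illuminating.
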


\begin{proof}
We prove the inequality by induction on $n$. It is clear for $n=1$ with equality in~\eqref{Ineq}.
Let now $n\geq 1$ and $C=(c_{i,j})_{i,j=1,\dots,n+1} \in \mathds{R}^{(n+1)\times(n+1)}$
be positive definite. Define 
$C_n\in \mathds{R}^{n \times n}$ and $c \in \mathds{R}^n$ such that
\begin{equation*}
C=\begin{pmatrix}
C_n & c \\
c^\top & c_{n+1,n+1}
\end{pmatrix}.
\end{equation*} 
For $z=(x,y)^\top \in \mathds{R}^n\times\mathds{R}$, $z \not= 0$, we have
\[
z^\top C z = x^\top C_n x + 2x^\top c y+ c_{n+1,n+1} y^2, \quad
z^\top \diag\Big(\frac{c_{i,i}}{\theta_i}\Big) z = 
x^\top \diag\Big(\frac{c_{i,i}}{\theta_i} \Big) x + \frac{c_{n+1,n+1}}{\theta_{n+1}}y^2. 
\]
By abuse of notation, $\diag\Big(\frac{c_{i,i}}{\theta_i} \Big)$ is used both for the diagonal $n\times n$ - matrix with $\frac{c_{1,1}}{\theta_1},\dots ,\frac{c_{n,n}}{\theta_n}$ in the diagonal and for the respective diagonal $(n+1)\times (n+1)$ - matrix. Which one we refer to is always clear from the context. 
\begin{align}
& z^\top \Big(\theta(n+1) \diag\Big(\frac{c_{i,i}}{\theta_i}\Big) - C\Big) z \notag \\*
&\qquad = x^\top \Big( \theta(n) 
	\diag\Big(\frac{c_{i,i}}{\theta_i}\Big)   
	- C_n \Big) x^\top + \theta_{n+1} x^\top \diag\Big(\frac{c_{i,i}}{\theta_i}\Big) x -2x^\top c y 
	+ \frac{\theta(n)}{\theta_{n+1}} c_{n+1,n+1} y^2 \label{EqMatrixIneq1}. 
\end{align}
The first summand in Equation~\eqref{EqMatrixIneq1} is non-negative by the induction hypothesis. The remainder equals
\begin{align}
& \Big(x-\frac{1}{\theta_{n+1}} \diag\Big(\frac{\theta_i}{c_{i,i}}\Big)c y \Big)^\top
	\theta_{n+1}\diag\Big(\frac{c_{i,i}}{\theta_i} \Big) \Big(x-\frac{1}{\theta_{n+1}} 								\diag\Big(\frac{\theta_i}{c_{i,i}} \Big)c y \Big) \notag \\*
& \quad  + \Big( c_{n+1,n+1} \frac{\theta(n)}{\theta_{n+1}} 
	-c^\top \frac{\diag\big(\frac{\theta_i}{c_{i,i}} \big)}{\theta_{n+1}} c \Big) y^2, \notag 
\end{align}
where the first summand is nonnegative as $C_n$ (and therefore 
$\diag\big(\frac{c_{i,i}}{\theta_i} \big)$) is positive definite and $\theta_{n+1}>0$. 
We have (see, e.g., the book by~\cite{HornJohnson1985}, Corollary 7.7.4\footnote{For matrices $A,B$ with $0<A<B$, we have $0<B^{-1}<A^{-1}$. \label{FootnoteHornJohnson}})
\begin{equation}
0< \frac{1}{\theta(n)} \diag\Big(\frac{\theta_i}{c_{i,i}} \Big) \leq C_n^{-1} \label{Ineq3}
\end{equation}
by the induction hypothesis. Moreover
\begin{equation}
C\begin{pmatrix}
I_{n\times n} & -C_n^{-1} c \\
0 & 1
\end{pmatrix}
= \begin{pmatrix}
C_n & 0 \\
c^\top & c_{n+1,n+1}-c^\top C_n^{-1}c
\end{pmatrix} \notag
\end{equation}
and hence 
\begin{equation}
c_{n+1,n+1}-c^\top C_n^{-1} c = \frac{\det C}{\det{C_n}} >0. \label{4}
\end{equation}
Finally,
\begin{equation*}
c_{n+1,n+1} -\frac{1}{\theta(n)} 
	c^\top \diag\Big(\frac{\theta_i}{c_{i,i}} \Big) c 
	\overset{\eqref{Ineq3}}{\geq} c_{n+1,n+1} -c^\top C_n^{-1} c \overset{\eqref{4}}{>} 0,
\end{equation*}
finishing the proof.
\end{proof}

Applications of~\cite{HornJohnson1985}, Corollary 7.7.4 (cf.~Footnote~\ref{FootnoteHornJohnson} again), imply the desired bound for $C \tilde{C}C$.  Additionally, we obtain two elementary matrix inequalities.

\begin{corollary} \label{CorMatIneq}
Let $C>0$ and $\theta_i \geq 0$, $i=1,\dots ,n$. Then
\begin{equation} \label{IneqMatrixIneq2}
C \tilde{C} C \leq \theta C,
\end{equation}
\[ 
C \leq n \diag \big( c_{i,i} \big) \quad \text{and} \quad
C \leq \trace(C) I.
\]
\end{corollary}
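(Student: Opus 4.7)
The plan is to derive all three inequalities directly from Proposition~\ref{PropMatrixIneq}, together with one standard fact from spectral theory. For the first inequality $C\tilde C C \leq \theta C$, I would first invoke Proposition~\ref{PropMatrixIneq} to obtain $C \leq \theta \tilde C^{-1}$ (noting that this result is stated for strictly positive $\theta_i$; when some $\theta_i$ vanish I would approximate by $\theta_i^{\varepsilon}:=\theta_i+\varepsilon$, apply the proposition to the perturbed data, and pass to the limit $\varepsilon\downarrow 0$ using continuity of both sides in the $\theta_i$). Since both $C$ and $\theta \tilde C^{-1}$ are positive definite, the comparison principle cited in Footnote~\ref{FootnoteHornJohnson} (\cite{HornJohnson1985}, Corollary~7.7.4) yields $\tfrac1\theta \tilde C \leq C^{-1}$, i.e., $\tilde C \leq \theta C^{-1}$. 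Conjugating this symmetric matrix inequality by $C$ (which preserves the ordering) produces
\[
C\tilde C C \;\leq\; C\bigl(\theta C^{-1}\bigr)C \;=\; \theta C,
\]
establishing the first claim.

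For the second inequality $C \leq n\,\diag(c_{i,i})$, I would apply Proposition~\ref{PropMatrixIneq} again, this time to the \emph{auxiliary} choice $\theta_i=1$ for every $i=1,\dots,n$. Then $\theta=n$ and $\tilde C^{-1}=\diag(c_{i,i})$, so the proposition gives the claim immediately. Note that this choice is independent of the $\theta_i$'s appearing in the statement of the corollary, so no positivity assumption on them is needed here.

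The third inequality $C \leq \trace(C)\,I$ is just elementary spectral theory: since $C$ is real-symmetric and positive definite, $C\leq c_{\max}\,I$, and $c_{\max}\leq \sum_i \lambda_i(C)=\trace(C)$ because every eigenvalue of $C$ is strictly positive. Combining these two bounds gives the claim.

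The only genuinely delicate point is the treatment of vanishing $\theta_i$ in the first inequality, since Proposition~\ref{PropMatrixIneq} as proved requires $\theta_i>0$ (otherwise $\tilde C$ fails to be invertible). The limiting argument circumvents this cleanly: for $\theta_i^\varepsilon=\theta_i+\varepsilon$ the inequality $C\tilde C^\varepsilon C\leq \theta^\varepsilon C$ holds by what was shown above, and both $\tilde C^\varepsilon$ and $\theta^\varepsilon$ depend continuously on $\varepsilon$, so sending $\varepsilon\downarrow 0$ yields the stated result. All other steps reduce to invoking Proposition~\ref{PropMatrixIneq}, Corollary~7.7.4 of \cite{HornJohnson1985}, and standard properties of symmetric matrices.
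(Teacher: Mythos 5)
Your proof is correct and follows essentially the same route the paper intends: invert the inequality of Proposition~\ref{PropMatrixIneq} via \cite{HornJohnson1985}, Corollary~7.7.4, conjugate by $C$ to get $C\tilde C C\leq\theta C$, obtain $C\leq n\diag(c_{i,i})$ by the auxiliary choice $\theta_i=1$, and finish the trace bound by elementary spectral theory. Your $\varepsilon$-perturbation to handle vanishing $\theta_i$ is a welcome extra precision that the paper leaves implicit.
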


The Matrix Inequality~\eqref{IneqMatrixIneq2} enables us to apply Theorem~\ref{TheRiccIneq} to the Matrix Initial Value Problem~\eqref{EqODE1} such that we can prove existence of a solution $C$ of~\eqref{EqODE1} on the whole interval $(-\infty,T]$ and
at the same time construct upper and lower bounds for $C$ via the solutions of the initial value problems in~\eqref{IVPPQ}; these are given explicitly by
\begin{equation}
P(l,t) = p(l,t) I, \quad
Q(l,t) = q(l,t) I, \label{PQlt}
\end{equation}
where
\begin{align}
p(l,t) &:= \sqrt{\tfrac{\theta^2}{4}+ \alpha d_{\min}}
	\coth \Big( \sqrt{\tfrac{\theta^2}{4}+ \alpha d_{\min}}(T-t)+ \kappa_1(l)\Big)
	-\frac{\theta}{2}, \label{plt}\\
q(l,t) &:= \sqrt{\alpha d_{\max}}
	\coth\Big( \sqrt{ \alpha d_{\max}}(T-t)+ \kappa_2(l)\Big) \label{qlt}
\end{align}
for $\theta+ \alpha d_{\min} >0$ respectively $\alpha d_{\max}>0$ with
\[
\kappa_1(l):=\arcoth\Big( \frac{ \frac{l}{\lambda_{\max}}+ \frac{\theta}{2}}
	{\sqrt{\frac{\theta^2}{4}+ \alpha d_{\min}}} \Big)>0, \quad
\kappa_2(l):=\arcoth\Big( \frac{ \frac{l}{\lambda_{\min}}}
	{ \sqrt{\alpha d_{\max}}} \Big)>0
\]
and
\begin{equation}
p(l,t) := \frac{1}{T-t+\frac{\lambda_{\max}}{l}}, \quad
q(l,t) := \frac{1}{T-t+\frac{\lambda_{\min}}{l}} \label{pqlttheta0}
\end{equation}
for $\theta= \alpha d_{\min} =0$ respectively $\alpha d_{\max}=0$. 
Note also that
$0<p(l,t),q(l,t)<\infty$ for all $t\in(-\infty,T]$.

\begin{proof}[ Proof of Theorem~\ref{TheoremDiffBounds}]
Let $C(l,t)$ be a solution of~\eqref{EqODE1} on some interval
$(t_1,T]$; note that there exists a local solution by the Picard-Lindel\"{o}f theorem. The symmetry of 
$\Lambda$, $\Sigma$ and the initial value $C(l,T)=lI$ imply 
that $C(l,t)$ is symmetric on $(t_1,T]$.

Let now $\hat{P}:=\Lambda P$ for $P$ as in~\eqref{PQlt}. Then
$\hat{P}(l,t)$ solves
$
\hat{P}' = \hat{P} \Lambda^{-1} \hat{P}+ \theta \hat{P}-\alpha d_{\min} \Lambda$, 
$\hat{P}(T) = \frac{l}{\lambda_{\max}}\Lambda
$
on $(-\infty,T]$. 
As $P>0$ and $P$ commute with $\Lambda$, we have
$
\hat{P}(l,t) > 0.
$
Assume that
\begin{equation} \label{Asstau}
\{t \in (t_1,T] \,|\,C(l,t) \text{ is not positive definite}\} \not= \emptyset
\end{equation}
and define
$
\tau:=\sup \{t \in (t_1,T] \,|\, C(l,t) \text{ is not positive definite}\}.
$
As $C(l,T)=lI>0$ and $C(l,\cdot)$ is continuous, there exists an $\epsilon>0$ such that $C(l,t)>0$
for $t \in (T-\epsilon,T]$ and thus $\tau <T$. We apply Theorem~\ref{TheRiccIneq} to $\bar{P}:=-\hat{P}$
and $\bar{C}:=-C$ on $[\tau,T]$. We have
\begin{equation*}
\bar{P}(l,T)=-\frac{l}{\lambda_{\max}} \Lambda \geq -lI = \bar{C}(l,T)
\end{equation*}
and
\begin{align}
\bar{P}'\!=\!-\bar{P} \Lambda^{-1} \bar{P} +\theta \bar{P} +\alpha d_{\min} \Lambda, \quad
\bar{C}'\!=\!-\bar{C} \Lambda^{-1} \bar{C} +\bar{C}\tilde{\bar{C}}\bar{C}+\alpha \Sigma =-\bar{C} \Lambda^{-1} \bar{C} +\theta \bar{C} 
	+\big(\alpha \sqrt{\Lambda} D \sqrt{\Lambda} + \bar{C}\tilde{\bar{C}}\bar{C}-\theta \bar{C}\big). \notag
\end{align}
Let now $x\in \mathds{R}^n$. Applying Corollary~\ref{CorMatIneq} to
$-\bar{C}$, we obtain
\[
x^\top \Big( \alpha \sqrt{\Lambda} D \sqrt{\Lambda} + \bar{C}\tilde{\bar{C}}\bar{C}- \theta \bar{C} 
-\alpha d_{\min} \Lambda \Big) x 
 =\alpha x^\top\big( \sqrt{\Lambda}(D-d_{\min}I)\sqrt{\Lambda} \big) x +
x^\top\big(\bar{C}\tilde{\bar{C}}\bar{C}-\theta \bar{C}\big)(t) x\geq 0.
\]
As $\Lambda >0$, 
Theorem~\ref{TheRiccIneq} implies
$
\bar{C}(l,t) \leq \bar{P}(l,t)
$
and therefore
$
0 < \hat{P}(l,t) \leq C(l,t)
$
on $(\tau,T]$. By continuity of $C(l,\cdot)$, we have
$
0 < \hat{P}(l,\tau) \leq C(l,\tau)
$
and thus $C(l,t)>0$ in some neighborhood of
$\tau$, a contradiction to Assumption~\eqref{Asstau}. Hence, $C(l,t)$ is positive definite on the whole interval
$(t_1,T]$. Applying Theorem~\ref{TheRiccIneq}
in the same way as above again, yields that we may choose $t_1=-\infty$
and that
$
0 < \hat{P}(l,t) \leq C(l,t)
$
on $(-\infty,T]$. A similar argument establishes $Q$ as an upper bound by using $0 \leq C\tilde{C}C$ instead of Inequality~\eqref{IneqMatrixIneq2}. 
\end{proof}

\subsection{Proof of Proposition~\ref{PropAdml}} \label{ProofAdml}

Before we begin, we introduce the following notation. 

\begin{nota} \label{NotationTau}
\begin{enumerate}
\item[(i)]
We denote the jump times of $\pi$ by 
$(\tau_j)_{j \in \mathds{N}}$, where $\tau_j < \tau _{j+1}$ ($j \in \mathds{N}$) almost surely (with the convention $\tau_0=t$).
\item[(ii)]
Given the Markovian control $u^*(l)$, the Stochastic Differential Equation~\eqref{EqCSDE} possesses a
unique solution.
We denote the process controlled by $u^*(l)$ by 
\begin{equation} \notag
X^*(l,s):=X^{u^*(l)}(s).
\end{equation}
\end{enumerate}
\end{nota}

In order to prove admissibility of $u^*(l)$, we show that
$\|X^*(l,\cdot )\|_2$ is bounded by using Gronwall's inequality pathwise inductively on 
the time-intervals 
$[\tau_i \wedge T,\tau_{i+1}\wedge T)$
and interlacing the
jumps (cf.~Remark~\ref{RemarkInterJump}). This can be achieved by applying the upper and lower bounds of $C(l,s)$ from Theorem~\ref{TheoremDiffBounds}. 

\begin{lem} \label{LemXl*}
Let $l>l_0$ for $l_0$ as in Equation~\eqref{lNull}, $t \in [0,T)$, $x \in \mathds{R}^n$ be the portfolio position at time $t$ and $\theta=\sum_i\theta_i$ as before. Then the following hold.
\begin{enumerate}
\item[(i)]
For $s \in [t,T)$,
\[
X^*(l,s)^\top C(l,s) X^*(l,s) \leq \exp(\theta (s-t)) x^\top C(l,t) x
\quad \text{a.s.}
\]
\item[(ii)]
There exists a constant $K$ independent of $l$ such that for all $s \in [t,T)$,
$\| X^*(l,s) \|_2 \leq K$ a.s.

\end{enumerate}
\end{lem}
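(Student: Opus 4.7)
The plan is to track the quadratic form $V(s) := X^*(l,s)^\top C(l,s) X^*(l,s)$ pathwise by interlacing the inter-jump ODE dynamics with the jumps of $\pi$, as prescribed in Remark~\ref{RemarkInterJump}. Controlling $V$ yields (i) directly and, combined with the matrix bounds from Theorem~\ref{TheoremDiffBounds}, it also yields (ii).

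Between consecutive jump times $\tau_i$ and $\tau_{i+1}\wedge T$, the Markovian control reduces the dynamics to the deterministic linear ODE $\dot X^* = -\Lambda^{-1} C(l,s)\, X^*$. Using symmetry of $C(l,s)$ and the matrix differential equation~\eqref{EqODE1}, the chain rule gives
\[
\dot V(s) = -2 X^{*\top} C\Lambda^{-1}C\, X^* + X^{*\top} C'(l,s) X^*
= X^{*\top}\bigl(-C\Lambda^{-1}C + C\tilde C C - \alpha \Sigma\bigr) X^*.
\]
Since $C\Lambda^{-1}C \ge 0$ and $\alpha \Sigma \ge 0$, these nonpositive contributions may be dropped, and the key inequality $C \tilde C C \le \theta C$ from Corollary~\ref{CorMatIneq} yields $\dot V \le \theta V$; hence Gr\"onwall's lemma produces $V(s) \le e^{\theta(s - \tau_i)} V(\tau_i)$ on each inter-jump interval.

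At a jump of $\pi_i$ at a time $\tau$ (so necessarily $\theta_i > 0$), the definition of $\eta^*(l)$ gives the post-jump state $X^*(\tau) = X^*(\tau-) - c_{i,i}(l,\tau)^{-1}\bigl(C(l,\tau) X^*(\tau-)\bigr)_i e_i$, and a direct expansion yields
\[
V(\tau) = V(\tau-) - \frac{\bigl(C(l,\tau) X^*(\tau-)\bigr)_i^{\,2}}{c_{i,i}(l,\tau)} \le V(\tau-),
\]
since $c_{i,i}(l,\tau)>0$ by positive definiteness of $C(l,\tau)$. Chaining the inter-jump Gr\"onwall estimate with this monotonicity across every jump, starting from $V(t) = x^\top C(l,t) x$, produces $V(s) \le e^{\theta(s-t)} x^\top C(l,t) x$ almost surely, which is exactly (i).

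For (ii), Theorem~\ref{TheoremDiffBounds} together with the closed forms~\eqref{PQlt}--\eqref{pqlttheta0} for $P$ and $Q$ gives $C(l,s) \ge p(l,s)\,\lambda_{\min}\, I$ and $C(l,t) \le q(l,t)\,\lambda_{\max}\, I$. Combining with (i),
\[
\|X^*(l,s)\|_2^2 \le \frac{V(s)}{p(l,s)\,\lambda_{\min}}
\le \frac{e^{\theta(T-t)}\,\lambda_{\max}}{\lambda_{\min}}\cdot \frac{q(l,t)}{p(l,s)}\,\|x\|_2^2.
\]
Inspection of~\eqref{plt}--\eqref{pqlttheta0} shows that $p(l,s)$ is nondecreasing in $l$ and in $s$, while $q(l,t)$ is nondecreasing in $l$ with a finite limit as $l\to\infty$ for each fixed $t<T$. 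Hence $\sup_{l>l_0,\,s\in[t,T)} q(l,t)/p(l,s)$ is finite, which delivers a constant $K$ that depends on $t$, $x$, $T$ and the model parameters but not on $l$. The main obstacle will be the inductive bookkeeping rather than any individual computation: one must verify that the inter-jump Gr\"onwall estimate and the at-jump monotonicity chain cleanly across the countable family $(\tau_j)_j$ via the interlacing construction of Remark~\ref{RemarkInterJump}. The at-jump identity $V(\tau) - V(\tau-) = -(CX^*(\tau-))_i^{\,2}/c_{i,i}$ is no coincidence: $\eta^*_i$ is precisely the minimizer of $\eta \mapsto (y - \eta e_i)^\top C(y - \eta e_i)$, a property that will later resurface as Proposition~\ref{PropOptDPOrder}.
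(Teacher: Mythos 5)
Your proposal is correct and follows essentially the same route as the paper: the inter-jump Gr\"onwall estimate via the matrix inequality $C\tilde C C\le \theta C$ from Corollary~\ref{CorMatIneq}, the decrease of the quadratic form at each jump (which the paper likewise derives from the fact that $\eta_i^*$ minimizes $\eta\mapsto (y-\eta e_i)^\top C(y-\eta e_i)$), inductive chaining across the jump times, and the $P$, $Q$ bounds of Theorem~\ref{TheoremDiffBounds} for part (ii). The only cosmetic difference is that you write the at-jump drop explicitly as $-(CX^*(\tau-))_i^2/c_{i,i}$ and use monotonicity of $p(l,\cdot)$ in $s$ where the paper takes the minimum of $p(l_0,\cdot)$ over $[t,T]$; both are equivalent.
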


\begin{proof}
\begin{enumerate}
\item[(i)]
Let $i \in \mathds{N}$. On $\{\tau_i <T\}$, 
$X^*(l,\cdot)$ solves the initial value problem
\begin{equation*}
X'=-\Lambda^{-1} C(l) X, \quad
X(\tau_i)=X^*(l,\tau_i)
\end{equation*}
for $s \in [\tau_i,\tau_{i+1} \wedge T)$. Hence, as $C(l)$ solves the Initial Value Problem~\eqref{EqODE1},
\begin{align}
& \frac{\partial}{\partial s}\big( X^*(l,s)^\top C(l,s) X^*(l,s) \big) \notag \\*
& \qquad = \frac{\partial}{\partial s} X^*(l,s)^\top C(l,s) X^*(l,s) +
 X^*(l,s)^\top  \frac{\partial}{\partial s} C(l,s) X^*(l,s)+
  X^*(l,s)^\top C(l,s)  \frac{\partial}{\partial s} X^*(l,s) \notag \\
& \qquad= -X^*(l,s)^\top C(l,s)\Lambda^{-1} C(l,s) X^*(l,s) + X^*(l,s)^\top C(l,s)\Lambda^{-1} C(l,s) X^*(l,s)  \notag \\*
& \qquad \qquad +
 X^*(l,s)^\top C(l,s) \tilde{C}(l,s) C(l,s) X^*(l,s) - \alpha  X^*(l,s)^\top \Sigma X^*(l,s)  \notag \\*
& \qquad \qquad 
 -X^*(l,s)^\top C(l,s)\Lambda^{-1} C(l,s) X^*(l,s)  \notag \\
 & \qquad \leq \theta X^*(l,s)^\top C(l,s) X^*(l,s) \notag
\end{align}
by Corollary~\ref{CorMatIneq} and the fact that $C,\Lambda^{-1},\Sigma \geq 0$. By Gronwall's inequality, this implies
\begin{equation} \label{Gronwall1}
X^*(l,s)^\top C(l,s) X^*(l,s) \leq \exp(\theta (s-\tau_i)) x^\top C(l,t) x.
\end{equation}
Now, on $\{\tau_{i+1}<T\}$, there exits an (almost surely) unique $j=1,\dots,n$ (cf.~Assumption~\ref{AssDP}~(ii)) such that
\[
X^*(l,\tau_{i+1})=X^*(l,\tau_{i+1}-)-\eta_j^*(l,\tau_{i+1},X^*(l,\tau_{i+1}-)) e_j.
\]
Let $\eta \in \mathds{R}$. Then 
\begin{align*}
& (X^*(l,\tau_{i+1}-)-\eta e_j)^\top C(l,\tau_{i+1}) (X^*(l,\tau_{i+1}-)-\eta e_j) \notag \\*
& \qquad =X^*(l,\tau_{i+1}-)^\top C(l,\tau_{i+1}) X^*(l,\tau_{i+1}-) +\eta^2 e_j^\top C(l,\tau_{i+1}) e_j -2 \eta e_j^\top C(l,\tau_{i+1}) X^*(l,\tau_{i+1}-) \\�
	& \qquad = X^*(l,\tau_{i+1}-)^\top C(l,\tau_{i+1}) X^*(l,\tau_{i+1}-) + \Big( \eta \sqrt{c_{j,j}(l,\tau_{i+1})} - \frac{1}{\sqrt{c_{j,j}(l,\tau_{i+1})}} e_j^\top C(l,\tau_{i+1})X^*(l,\tau_{i+1}-)\Big)^2 \notag \\*
&\qquad \qquad 	-\frac{1}{c_{j,j}(l,\tau_{i+1})} (e_j^\top C(l,\tau_{i+1}) X^*(l,\tau_{i+1}-) )^2, 
\end{align*}
which attains its (unique) minimum in $\eta$ for $\eta=\eta_j^*(l,\tau_{i+1},X^*(l,\tau_{i+1}-))$. Hence,
\begin{align}
& \big(X^*(l,\tau_{i+1}-)-\eta_j^*(l,\tau_{i+1},X^*(l,\tau_{i+1}-)) e_j\big)^\top
C(l,\tau_{i+1}) \big( X^*(l,\tau_{i+1}-)-\eta_j^*(l,\tau_{i+1},X^*(l,\tau_{i+1}-)) e_j\big) \notag \\*
&\qquad = \min\limits_{\eta \in \mathds{R}} 
\big( X^*(l,\tau_{i+1}-)-\eta e_j\big)^\top
C(l,\tau_{i+1}) 
\big( X^*(l,\tau_{i+1}-)-\eta e_j\big) \notag
\end{align}
and therefore
\begin{equation}
X^*(l,\tau_{i+1})^\top C(l,\tau_{i+1}) X^*(l,\tau_{i+1}) \leq X^*(l,\tau_{i+1}-)^\top C(l,\tau_{i+1})X^*(l,\tau_{i+1}-).
	\label{IneqStattBellman} 
\end{equation}
Using Inequalities~\eqref{Gronwall1} and~\eqref{IneqStattBellman} inductively, we obtain the assertion.
\item[(ii)]
By Theorem~\ref{TheoremDiffBounds}, we have
$\Lambda \leq 1/p(l,s) C(l,s)$
and hence,
\begin{align}
X^*(l,s)^\top \Lambda X^*(l,s) &\leq  \frac{1}{p(l,s)} X^*(l,s)^\top C(l,s) X^*(l,s) \label{IneqAbsch} \\
& \leq  \frac{1}{p(l_0,s)} X^*(l,s)^\top C(l,s) X^*(l,s) \notag \\ 
&\leq \tilde{K} x^\top C(l,t) x \notag
\end{align}
for a constant $\tilde{K}$ independent of $s$
by~(i) and the fact that $p(l_0)$ attains its minimum in $[t,T)$. The assertion follows as $q(t):= \lim_{l \rightarrow \infty} q(l,t) <\infty$ exists (cf.~Equations~\eqref{qlt} and~\eqref{pqlttheta0}, cf.~also Lemma~\ref{LemBounds} below) and
$
C(l,t) \leq \lambda_{\max} q(t)
$
(cf.~Theorem~\ref{TheoremDiffBounds}).
\end{enumerate}
\end{proof}

The bound obtained in Lemma~\ref{LemXl*} enables us to prove that $u^*(l)$ fulfills the moment conditions 
of Definition~\ref{DefAdmStr}~(ii) and hence Proposition~\ref{PropAdml}.

\begin{proof}[Proof of Proposition~\ref{PropAdml}]
Definition~\ref{DefAdmStr}~(i) and~(iii) are clearly satisfied. 

Let $\|\cdot \|_{2,2}$ denote the matrix norm induced by the space 
$(\mathds{R}^n, \|\cdot \|_2)$. Note that $\|\cdot\|_{2,2}$ is the spectral norm on $\mathds{R}^{n \times n}$
and therefore (see, e.g., \cite{Bernstein}, Theorem 8.4.9)
$
\|A\|_{2,2} \leq \|B\|_{2,2}$ for $0\leq A\leq B$.  
Using Theorem~\ref{TheoremDiffBounds} and~\ref{LemXl*}~(ii), we deduce
\[
\mathbb{E} \Big[\int_t^T\ \|\xi^*(l,s,X^*(s))\|_2^4 ds \Big]
 \leq \mathbb{E} \Big[\int_t^T\ \|\Lambda^{-1}\|_{2,2}^4 \|C(l,s) \|_{2,2}^4 
	\| X^*(l,s)\|_2^4 ds \Big] \notag \\
\leq \mathbb{E}\Big[\int_t^T\ q(l,s)^4 K^4 ds \Big] < \infty, \notag 
\]
for $K$ as in Lemma~\ref{LemXl*}~(ii).
Similarly,
\begin{align*}
\mathbb{E} \Big[\int_t^T\ \|\eta^*(l,s,X^*(s-))\|_2^8 ds \Big]
& \leq \mathbb{E} \Big[\int_t^T\ 
	\|\!\!\!\!\!\!\!\!\!\!\!\!\!\!\!\!\!\!\!\!\!\!\!\underbrace{\bar{C}(l,s)}
	_{\quad \quad \quad\leq n C(l,s)^{-1} \text{ by Corollary~\ref{CorMatIneq}}\!\!\!\!\!\!\!\!\!\!\!\!\!\!\!\!\!\!\!\!\!\!\!}
	\!\!\!\!\!\!\!\!\!\!\!\!\!\!\!\!\!\!\!\!\!\!\|_{2,2}^8 \|C(l,s) \|_{2,2}^8
	\| X^*(l,s-)\|_2^8 ds \Big] \notag \\
&\leq \mathbb{E}\Big[\int_t^T\ \frac{n^8 \lambda_{max}^8 q(l,s)^8}{\lambda_{\min}^8 p(l,s)^8} K^8 ds \Big] < \infty.
\end{align*}
\end{proof}

\subsection{Proof of Theorem~\ref{TheoremOptStrl}}

We first require the following moment estimate for the controlled process.

\begin{lem} \label{LemXuintegrierbar}
Let $t\in [0,T)$, $x \in \mathds{R}^n$ and $u \in \tilde{\mathbb{A}}(t)$. Then
\begin{equation} \notag
\mathbb{E}\Big[ \sup\limits_{t \leq s \leq T} \|X^u(s)\|_2^4 \Big] < \infty, \quad \text{in particular} \quad
\mathbb{E}\Big[ \int_{t}^T \|X^u(s)\|_2^4 ds \Big] < \infty.
\end{equation}
\end{lem}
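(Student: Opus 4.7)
The plan is to bound $\|X^u(s)\|_2^4$ pathwise by a constant times the sum of (a) $\|x\|_2^4$, (b) a fourth power of the absolutely continuous part $\int_t^s \xi(r)\,dr$, and (c) fourth powers of the pure jump integrals $\int_t^s \eta_i(r)\,d\pi_i(r)$ for $i=1,\dots,n$. Since $\|X^u(s)\|_2^4\leq n\sum_i |X^u_i(s)|^4$, it suffices to prove the bound componentwise, using
\[
X^u_i(s) = x_i - \int_t^s \xi_i(r)\,dr - \int_t^s \eta_i(r)\,d\pi_i(r),
\]
and then take the supremum before integrating in $\omega$.

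For the drift part I would invoke Jensen's inequality (or equivalently H\"older) to estimate
\[
\sup_{t\leq s\leq T}\Big|\int_t^s \xi_i(r)\,dr\Big|^4 \leq (T-t)^3 \int_t^T |\xi_i(r)|^4\,dr \leq (T-t)^3 \int_t^T \|\xi(r)\|_2^4\,dr,
\]
whose expectation is finite by Definition~\ref{DefAdmStr}~(ii). The term $\|x\|_2^4$ is trivially bounded since $x$ is deterministic, so only the Poisson stochastic integrals require real work.

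For the jump part I would decompose $\pi_i$ into its compensator and a compensated martingale, writing
\[
\int_t^s \eta_i(r)\,d\pi_i(r) = \theta_i\int_t^s \eta_i(r)\,dr + N_i(s),\qquad N_i(s):=\int_t^s \eta_i(r)\,d\tilde\pi_i(r),
\]
where $\tilde\pi_i(r)=\pi_i(r)-\theta_i r$. The drift term is handled exactly as in the previous paragraph; its fourth moment is controlled by $\mathbb{E}\int_t^T \|\eta\|_2^4 dr$, which is finite because the H\"older inequality bounds it by $(T-t)^{1/2}(\mathbb{E}\int_t^T \|\eta\|_2^8 dr)^{1/2}<\infty$ (Definition~\ref{DefAdmStr}~(ii)). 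For $N_i$, the strategy is to combine Doob's $L^4$-maximal inequality with a Kunita-type inequality for compensated Poisson integrals, yielding
\[
\mathbb{E}\Big[\sup_{t\leq s\leq T}|N_i(s)|^4\Big] \leq C_1\,\mathbb{E}\Big[\Big(\int_t^T \eta_i(r)^2\theta_i\,dr\Big)^{\!2}\Big] + C_2\,\mathbb{E}\Big[\int_t^T \eta_i(r)^4\theta_i\,dr\Big].
\]
Both right-hand side terms are finite: the second is the fourth moment of $\eta_i$ integrated in time, bounded by $\mathbb{E}\int_t^T \|\eta\|_2^4 dr$ as above, and for the first I would apply Cauchy--Schwarz in time to get $(\int_t^T \eta_i^2 dr)^2\leq (T-t)\int_t^T \eta_i^4 dr$, reducing it to the same quantity.

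Putting these estimates together and taking $\mathbb{E}[\sup_{t\leq s\leq T}\cdot]$ yields $\mathbb{E}[\sup_s \|X^u(s)\|_2^4]<\infty$, from which the integrated bound follows immediately by Fubini. The only nontrivial step is the fourth-moment control of the compensated Poisson integral, where I would either cite a BDG/Kunita inequality for $L^p$-integrable jump martingales or, alternatively, compute $\mathbb{E}[|N_i(T)|^4]$ directly via It\^o's formula for semimartingales with jumps and then promote it to a supremum bound through Doob.
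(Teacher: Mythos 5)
Your proposal is correct, and its overall skeleton (triangle inequality, then Jensen/H\"older for the drift part, then a separate moment bound for the Poisson integrals) matches the paper's. The one step where you genuinely diverge is the treatment of $\int_t^s \eta_i(r)\,d\pi_i(r)$. The paper does \emph{not} compensate first: it applies Jensen's inequality with respect to the random counting measure $d\pi_i$ itself, obtaining the pathwise bound $\big|\int_t^s \eta_i\,d\pi_i\big|^4 \leq (\pi_i(s)-\pi_i(t))^3\int_t^s|\eta_i|^4\,d\pi_i$ (which is monotone in $s$, so the supremum is handled automatically), then splits $d\pi_i=dM_i+\theta_i\,dr$ \emph{inside} the integral of $|\eta_i|^4$ and finishes with Cauchy--Schwarz against the finite moments of $\pi_i(s)-\pi_i(t)$ plus It\^o's isometry for $\int|\eta_i|^4\,dM_i$. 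That isometry is precisely why the paper needs $\mathbb{E}\int\|\eta\|_2^8\,ds<\infty$ from Definition~\ref{DefAdmStr}~(ii) --- it produces $\mathbb{E}\int|\eta_i|^8\,dr$. Your route (compensate first, then Doob plus a BDG/Kunita inequality for the compensated jump martingale) is also valid and in fact only consumes the fourth moment of $\eta$, which you correctly recover from the eighth via H\"older; the trade-off is that you lean on a heavier off-the-shelf inequality for jump martingales, whereas the paper's argument is self-contained modulo It\^o's isometry and elementary estimates. Either way the sup is legitimately brought inside the expectation (by Doob in your case, by pathwise monotonicity in the paper's), and the integrated bound follows by Fubini as you say.
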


\begin{proof}
Let $s \in [t,T]$. Then by H\"older's inequality and a multi-dimensional 
version of Jensen's inequality 
(see, e.g., \cite{Kallenberg}, Lemma 3.5) we obtain
\begin{align}
\| X^u(s) \|_2^4 
& \leq \bigg( \|x\|_2 + \Big\| \int_t^s \xi(r)dr \Big\|_2
	+\Big\| \int_t^s \eta(r) d \pi(r) \Big\|_2 \bigg)^4\notag \\
& \leq 27 \bigg( \|x\|_2^4 + (s-t)^{3} \int_t^s \|\xi(r) \|_2^4 dr
	+\Big\| \int_t^s \eta(r) d \pi(r) \Big\|_2^4 \bigg). \label{IneqJensen1} 
\end{align}
By Definition~\ref{DefAdmStr}~(ii), it is hence sufficient to consider the last summand of Equation~\eqref{IneqJensen1}.
To this end we define the compensated Poisson processes 
$M_i(s):=\pi_i(s)-\theta_i s, \quad i=1,\dots,n$.
We note that Poisson distributed random variables have finite moments and apply It\^{o}'s isometry (note that $\langle M_i\rangle(s)=\theta_i s$) and H\"older's inequality to obtain
\begin{align} 
\mathbb{E}\bigg[ \Big| \int_t^s \eta_i(r) d \pi_i(r) \Big|^4 \bigg] 
&\leq \mathbb{E}\bigg[ \big(\pi_i(s)-\pi_i(t)\big)^3
	\Big( \int_t^s |\eta_i(r)|^4 d M_i(r) 
	+ \theta_i \int_t^s |\eta_i(r)|^4 d r \Big) \bigg] \notag \\
&\leq \mathbb{E} \bigg[ \big(\pi_i(s)-\pi_i(t)\big)^6\bigg]^{\frac{1}{2}} \bigg(
	\mathbb{E}\bigg[\Big(\int_t^s |\eta_i(r)|^4 d M_i(r)\Big)^2\bigg]^{\frac{1}{2}}
 	+\mathbb{E} \bigg[ \Big(\theta_i \int_t^s |\eta_i(r)|^4 d r\Big)^2 \bigg]^{\frac{1}{2}} \bigg)
 	\notag \\
& < K_i \notag
\end{align}
for a constant $K_i$ independent of $s$ by Definition~\ref{DefAdmStr}~(ii), which finishes the proof.
\end{proof}

\begin{proof}[Proof of Theorem~\ref{TheoremOptStrl}]
Let $l>l_0$, $(t,x) \in [0,T)\times \mathds{R}^n$ and $u=(\xi,\eta) \in \tilde{\mathbb{A}}(t)$. 
We apply It\^{o}'s formula (see, e.g., the book by~\cite{OksendalSulem}) to the function 
$w(l,t,X^u(t))=X^u(t)^\top C(l,t) X^u(t)$.
\begin{align}
w(l,t,x) 
 &= w(l,T,X^u(T)) + \int_t^T \nabla_x w(l,s,X^u(s)) \xi(s) 
		-\frac{\partial{w}}{\partial s} (l,s,X^u(s))ds \notag \\*
	& \qquad +\int_t^T \Big(\sum\limits_{i=1}^n  w(l,s,X^u(s-))
		-w\big(l,s, X^u(s-)- \eta_i(s) e_i\big) \Big) \pi_i(ds) \notag \\
& \leq w(l,T,X^u(T)) +\int_t^T f(X^u(s),\xi(s))ds \notag \\*
	& \qquad +\int_t^T \Big(\sum\limits_{i=1}^n w(l,s,X^u(s-))
		-w\big(l,s, X^u(s-)- \eta_i(s) e_i\big) \Big)\pi_i(ds)  \notag \\*
	&  \qquad -\int_t^T \Big( \sum\limits_{i=1}^n \theta_i \big( w(l,s,X^u(s-))
		-w\big(l,s, X^u(s-)- \eta_i(s) e_i \big) \Big) ds \label{InEqHJB} 	\\
& =  w(l,T,X^u(T)) + \!\int_t^T \!f(X^u(s),\xi(s))ds + \!\sum\limits_{i=1}^n \int_t^T 
		\!\!\Big(w(l,s,X^u(s-))-w\big(l,s, X^u(s-)\!-\!\eta_i(s) e_i\big) \Big) M_i(ds),  \notag
\end{align}
where $M_i$ is the compensated Poisson process $M_i(s):=\pi_i(s)-\theta_i s$ and Inequality~\eqref{InEqHJB} follows from Proposition~\ref{PropSolHJB}. Furthermore, we have (pathwise) equality in \eqref{InEqHJB} if and only if $u=u^*$ $\lambda$ - a.s. 

Taking expectations on both sides, we obtain
\begin{equation}
w(l,t,x) \leq \tilde{J}(l,t,x,u) + \!\sum\limits_{i=1}^n\mathbb{E}\Big[
	\int_t^T \Big(\! w(l,s,X^u(s-))-w\big(l,s, X^u(s-)-\eta_i(s) e_i\big)\!\Big) M_i(ds) \Big], 
	\label{IneqItoIntMart}
\end{equation}
with equality if and only if $u=u^*$ $\mathbb{P} \otimes \lambda$ - a.s. 

It remains to show that the stochastic integrals in Inequality~\eqref{IneqItoIntMart}
are martingales. To this end, we compute  
\begin{align}
& \mathbb{E}\Big[\int_t^T |w(l,s,X^u(s-))
	-w\big(l,s, X^u(s-)-\eta_i(s) e_i\big)|^2 ds \Big] \notag \\*
&\qquad =\mathbb{E}\Big[\int_t^T  |2X^u(s-)^\top C(l,s) e_i\eta_i(s) 
	-\eta_i(s)^2 c_{i,i}(s,l)|^2 ds\Big] \notag \\
&\qquad \leq \mathbb{E}\Big[\int_t^T  2|2X^u(s-)^\top C(l,s) e_i\eta_i(s)|^2 ds\Big] 
	+\mathbb{E}\Big[\int_t^T  2|\eta_i(s)^2 c_{i,i}(l,s)|^2ds\Big]  
	\notag \\
&\qquad \leq 8 \mathbb{E} \Big[ \int_t^T  \| X^u(s-)^\top \|_2^2 \| C(l,s) \|_{2,2}^2 
	|\eta_i(s) |^2 ds \Big] 
	+ 2 \mathbb{E}\Big[\int_t^T  | \eta_i(s)|^4 |c_{i,i}(l,s)|^2 ds\Big]  
	\notag \\
&\qquad \leq 8 \lambda_{\max}^2 \big(\max\limits_{s \in [t,T]} q(l,s)^2\big)
	\mathbb{E}\Big[\int_t^T  \| X^u(s-) \|_2^4  ds\Big]^{\frac{1}{2}}
	\mathbb{E}\Big[\int_t^T  \| \eta(s)\|_2^4 ds\Big]^{\frac{1}{2}} 
	\notag \\*
& \qquad \qquad  + 2 \lambda_{\max}^2 \big(\max\limits_{s \in [t,T]} q(l,s)^2\big)
	\mathbb{E}\Big[\int_t^T  \|\eta(s)\|_2^4 ds\Big] 
	\label{IneqItoMartHolder} \\*
&\qquad <\infty \notag
\end{align}
by Definition~\ref{DefAdmStr}~(ii) and Lemma~\ref{LemXuintegrierbar}, where Inequality~\eqref{IneqItoMartHolder} follows from H\"{o}lder's inequality. As $\langle M_i\rangle (s) = \theta_i s$, this finishes the proof.
\end{proof}

\subsection{Proof of Theorem~\ref{TheoremLimitValueFunc}}

We start by computing the limits of the functions $p(l)$ and $q(l)$ given
by Equations~\eqref{plt} and~\eqref{qlt} respectively~\eqref{pqlttheta0}.

\begin{lem} \label{LemBounds}
Let $t \in [0,T)$ and $p(l)$ and $q(l)$ as
in Equations~\eqref{plt} and~\eqref{qlt} respectively~\eqref{pqlttheta0}. Then
\begin{equation*}
\lim\limits_{l\rightarrow \infty} p(l,\cdot) =p(\cdot), \quad
\lim\limits_{l\rightarrow \infty} q(l,\cdot) =q(\cdot),
\end{equation*}
where $p$ and $q$ are given by
\begin{align}
p(t) &:= \sqrt{\tfrac{\theta^2}{4}+ \alpha d_{\min}}
	\coth \Big( \sqrt{\tfrac{\theta^2}{4}+ \alpha d_{\min}}(T-t)\Big)
	-\frac{\theta}{2} \label{pt},\\
q(t) &:= \sqrt{\alpha d_{\max}}
	\coth\Big( \sqrt{ \alpha d_{\max}}(T-t)\Big) \label{qt}
\end{align}
for $\theta+ \alpha d_{\min} >0$ respectively $\alpha d_{\max}>0$ and
\begin{equation}
p(t) := q(t):=\frac{1}{T-t}, 
\label{ptqtalpha0}
\end{equation}
for $\theta= \alpha d_{\min} =0$ respectively $\alpha d_{\max}=0$. The convergence is compact and strictly increasing on $[0,T)$.
Furthermore,
$p(l,T),q(l,T) \nearrow \infty$ as $l \rightarrow \infty$.
\end{lem}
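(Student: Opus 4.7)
My plan is to prove the lemma by direct computation from the closed-form expressions~\eqref{plt}--\eqref{pqlttheta0}, together with elementary monotonicity and an application of Dini's theorem to upgrade pointwise to compact convergence.

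First, in the non-degenerate case $\theta + \alpha d_{\min} > 0$ (respectively $\alpha d_{\max} > 0$), observe from the definitions of $\kappa_1(l)$ and $\kappa_2(l)$ that their arguments tend to $+\infty$ as $l \to \infty$. Since $\arcoth(y) = \tfrac{1}{2}\log\tfrac{y+1}{y-1} \searrow 0$ as $y \to \infty$, we have $\kappa_1(l), \kappa_2(l) \searrow 0$. Plugging this into~\eqref{plt} and~\eqref{qlt} and using continuity of $\coth$ on $(0,\infty)$ (note that for $t \in [0,T)$ the argument stays bounded away from $0$) yields the pointwise limits $p(t)$ and $q(t)$ stated in~\eqref{pt}--\eqref{qt}. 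In the degenerate case~\eqref{pqlttheta0} the limit is immediate from $\lambda_{\max}/l, \lambda_{\min}/l \to 0$, giving $p(t)=q(t)=1/(T-t)$ as in~\eqref{ptqtalpha0}.

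Next, monotonicity in $l$ follows because $\arcoth$ is strictly decreasing on $(1,\infty)$, so $\kappa_1(l)$ and $\kappa_2(l)$ are strictly decreasing in $l$; since $\coth$ is itself strictly decreasing on $(0,\infty)$, decreasing the $\kappa_i(l)$ term inside its argument strictly increases $\coth(\cdots)$. Hence $p(l,t)$ and $q(l,t)$ are strictly increasing in $l$, and in the degenerate case the same follows trivially from inspection of~\eqref{pqlttheta0}. Because each $p(l,\cdot)$ and the limit $p(\cdot)$ are continuous on $[0,T)$ and the convergence is monotone, Dini's theorem promotes the pointwise convergence to uniform convergence on every compact subinterval $[0,T-\epsilon] \subset [0,T)$; the same argument applies to $q(l,\cdot)$. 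This is the compact convergence asserted in the statement.

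Finally, the blow-up at $t=T$ is read off directly: in the non-degenerate case $p(l,T) = \sqrt{\tfrac{\theta^2}{4}+ \alpha d_{\min}}\,\coth(\kappa_1(l)) - \theta/2 \to \infty$ because $\coth(\kappa_1(l)) \to \infty$ as $\kappa_1(l) \searrow 0$, and similarly for $q(l,T)$; in the degenerate case $p(l,T)=l/\lambda_{\max} \nearrow \infty$ and $q(l,T)=l/\lambda_{\min} \nearrow \infty$. All steps are essentially bookkeeping on the explicit formulas; the only mild subtlety is invoking Dini to get compact (rather than merely pointwise) convergence, which is needed later to transfer bounds on $p(l,\cdot)$ and $q(l,\cdot)$ to bounds on the candidate value function via Theorem~\ref{TheoremDiffBounds}.
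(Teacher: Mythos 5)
Your proof is correct and follows essentially the same route as the paper: pointwise limits read off from the closed-form expressions (via $\kappa_1(l),\kappa_2(l)\searrow 0$), strict monotonicity in $l$, and Dini's theorem to upgrade to compact convergence on $[0,T)$, with the blow-up at $t=T$ obtained directly from the formulas. The only cosmetic difference is that the paper derives monotonicity from the ordering of the initial values of the Riccati problems, whereas you read it off the explicit formulas; both are valid one-line arguments.
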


\begin{proof}
Point-wise convergence and the formulae for the limits are straightforward
by Equations~\eqref{plt} and \eqref{qlt} respectively by Equation~\eqref{pqlttheta0}. Strict monotonicity follows from the fact that the initial values are strictly 
increasing in $l$. Finally, compact convergence follows from these observations
by Dini's theorem.
\end{proof}

\begin{proof}[Proof of Theorem~\ref{TheoremLimitValueFunc}]
\begin{enumerate}
\item[(i)]
Note first that Theorem~\ref{TheoremOptStrl} implies that for fixed $t\in (-\infty,T]$, 
$C(l,t)$ is increasing in $l$ on $(l_0,\infty)$ for $l_0$ as in Equation~\eqref{lNull} in the sense of Notation~\ref{NotaMatrix}~(i).
The existence of the element-wise limit of $(C(l,t))_{l>0}$ follows directly from this monotonicity
and the boundedness by $\Lambda q(t) I$ for $q(t)$ as in Equation~\eqref{qt} 
respectively~\eqref{ptqtalpha0}.
Compact convergence follows by Dini's theorem due to the monotonicity. Finally, $p(l,T) \nearrow \infty$ implies 
$\lim_{l \rightarrow \infty} c_{\min}(l,T)=\infty$.
\item[(ii)]
The inequalities in~\eqref{EqBoundsLimitC} follow directly from Lemma~\ref{LemBounds} and~(i). Furthermore,
the compact convergence of $C(l,t)$ (and $\tilde{C}(l,t)$) on $[0,T)$ and the fact that $C(l,\cdot)$ solves the Matrix Differential Equation~\eqref{EqDiffEqLimit} implies 
\begin{align*}
C(t)=\lim\limits_{l \rightarrow \infty} C(l,t) &= \lim\limits_{l \rightarrow \infty} \int_0^t 
	\big(C\Lambda^{-1}C+C\tilde{C}C-\alpha \Sigma\big)(l,s) ds + \lim\limits_{l \rightarrow \infty} C(l,0) \\
	&=  \int\limits_0^t \big(C\Lambda^{-1}C+C\tilde{C}C-\alpha \Sigma\big)(s) ds +C(0)
\end{align*}
and hence that $C$ solves the Differential Equation~\eqref{EqDiffEqLimit}
with boundary condition $\lim_{s \rightarrow T-} c_{\min}(s) =\infty$ (as $\lim_{s \rightarrow T-} p(s) =\infty$).
\end{enumerate}
\end{proof}

\subsection{Proof of Theorem~\ref{TheoremAdm}}

We start by proving the following bounds for $X^*(l)$ and $X^*$.

\begin{lem} \label{LemX*neu}
Let $t \in [0,T)$, $x \in \mathds{R}^n$ be the portfolio position at time $t$. 
\begin{enumerate}
\item[(i)]
Let $l>l_0$ for $l_0$ as in Equation~\eqref{lNull} and $s\in [t,T)$. Then  (cf.~Notation~\ref{NotationTau}~(i)),
\begin{align}
X^*(l,s)^\top\Lambda X^*(l,s) 
& \leq 
x^\top \Lambda x \exp\Big(-2 \int_{t}^{s} p(l,u) du \Big) 
	\prod\limits_{t \leq \tau_i\leq s} \frac{q(l,\tau_i)}{p(l,\tau_i)}  \label{IneqXl} \\
&  \leq x^\top \Lambda x \exp\big(\theta (s-t)\big) 
	\frac{\big(T-s+\frac{2 \lambda_{\max}}{2 l+\theta \lambda_{\max}}\big)^2}
	{\big(T-t+\frac{2 \lambda_{\max}}{2 l+\theta \lambda_{\max}}\big)^2}	
	\prod\limits_{t \leq \tau_i\leq s} \frac{q(l,\tau_i)}{p(l,\tau_i)} \quad \text{a.s.,} \label{IneqXl2}
\end{align}
where $p$ and $q$ are as in Equations~\eqref{plt} and~\eqref{qlt}
respectively as in~\eqref{pqlttheta0}.
\item[(ii)]
Let $s\in [t,T)$. Then 
\begin{equation}
X^*(s)^\top\Lambda X^*(s) 
\leq x^\top \Lambda x \exp\big(\theta (s-t)\big) 
	\frac{\big(T-s\big)^2}
	{\big(T-t\big)^2}	
	\prod\limits_{t \leq \tau_i\leq s} \frac{q(\tau_i)}{p(\tau_i)} \quad \text{a.s.,} \notag
\end{equation}
where $p$ and $q$ are as in Equations~\eqref{pt} and~\eqref{qt}
respectively as in~\eqref{ptqtalpha0}.
\end{enumerate}
\end{lem}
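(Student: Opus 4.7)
The plan is to derive both bounds pathwise, interlacing the jumps of $\pi$, using the two-sided bound $p(l,\cdot)\Lambda \leq C(l,\cdot) \leq q(l,\cdot)\Lambda$ from Theorem~\ref{TheoremDiffBounds} (and its limiting analog $p\Lambda \leq C \leq q\Lambda$ from~\eqref{EqBoundsLimitC}) together with a scalar Riccati comparison to rewrite the exponential factor in closed form. The structure parallels the proof of Lemma~\ref{LemXl*}, but tracks $X^{\top}\Lambda X$ via the lower bound $C\geq p\Lambda$ rather than $X^{\top}CX$.

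For part~(i), on each inter-jump interval $[\tau_{i},\tau_{i+1}\wedge T)$, $X^{*}(l,\cdot)$ solves the linear ODE $X'=-\Lambda^{-1}C(l,\cdot)X$. Differentiating and using symmetry of $C(l,\cdot)$ yields $(X^{\top}\Lambda X)'=-2X^{\top}C(l,\cdot)X\leq -2p(l,\cdot)\,X^{\top}\Lambda X$, and Gronwall provides the factor $\exp\bigl(-2\int_{\tau_{i}}^{\cdot}p(l,u)\,du\bigr)$ on this interval. At a jump time $\tau_{i+1}<T$, the (a.s.\ unique) component $j$ jumps and Inequality~\eqref{IneqStattBellman} from the proof of Lemma~\ref{LemXl*} (which holds because $\eta^{*}_{j}(l,\tau_{i+1},\cdot)$ is the unique minimizer of $\eta\mapsto(X^{*}(l,\tau_{i+1}-)-\eta e_{j})^{\top}C(l,\tau_{i+1})(X^{*}(l,\tau_{i+1}-)-\eta e_{j})$) gives $X^{*}(l,\tau_{i+1})^{\top} C(l,\tau_{i+1}) X^{*}(l,\tau_{i+1})\leq X^{*}(l,\tau_{i+1}-)^{\top} C(l,\tau_{i+1}) X^{*}(l,\tau_{i+1}-)$. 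Applying $\Lambda\leq \tfrac{1}{p(l,\tau_{i+1})}C(l,\tau_{i+1})$ post-jump and $C(l,\tau_{i+1})\leq q(l,\tau_{i+1})\Lambda$ pre-jump produces the factor $q(l,\tau_{i+1})/p(l,\tau_{i+1})$. Iterating over the almost surely finitely many jumps in $[t,s]$ yields~\eqref{IneqXl}.

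To upgrade~\eqref{IneqXl} to~\eqref{IneqXl2}, I would compare $p(l,\cdot)$ with the explicit function $\tilde{p}(l,t):=1/(T-t+\delta)-\theta/2$, where $\delta:=2\lambda_{\max}/(2l+\theta\lambda_{\max})$. A direct check shows $\tilde{p}(l,T)=l/\lambda_{\max}=p(l,T)$ and $\tilde{p}'=\tilde{p}^{2}+\theta\tilde{p}+\theta^{2}/4$, while $p'=p^{2}+\theta p-\alpha d_{\min}$. Since the forcing gap $\alpha d_{\min}+\theta^{2}/4\geq 0$, a scalar version of the Riccati comparison (Theorem~\ref{TheRiccIneq}) gives $\tilde{p}(l,\cdot)\leq p(l,\cdot)$ on $(-\infty,T]$, so $\exp(-2\int_{t}^{s} p(l,u)\,du)\leq \exp(-2\int_{t}^{s}\tilde{p}(l,u)\,du)$; the right-hand integral evaluates in closed form to $\exp(\theta(s-t))(T-s+\delta)^{2}/(T-t+\delta)^{2}$, which is exactly the quadratic factor in~\eqref{IneqXl2}. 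Part~(ii) is obtained by rerunning the identical pathwise scheme with $C$, $p$, $q$ replacing $C(l)$, $p(l)$, $q(l)$: the bounds in~\eqref{EqBoundsLimitC} supply $p\Lambda\leq C\leq q\Lambda$, $C$ satisfies~\eqref{EqDiffEqLimit}, and the analog of~\eqref{IneqStattBellman} survives because $\eta^{*}(t,x)=\tilde{I}\bar{C}(t)C(t)x$ is still the componentwise minimizer of $\eta\mapsto(x-\eta e_{j})^{\top}C(t)(x-\eta e_{j})$. The corresponding exponential bound follows from the same Riccati comparison specialized to $\delta=0$ (with comparison function $\tilde{p}(t)=1/(T-t)-\theta/2$), producing the factor $(T-s)^{2}/(T-t)^{2}$.

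The main obstacle is the choice of the comparison function $\tilde{p}$ in the Riccati step: it must satisfy a Riccati-type ODE whose forcing dominates that of $p$ in the correct direction, match $p$ at $t=T$, and be simple enough to integrate in closed form into the quadratic factor displayed in~\eqref{IneqXl2}. Everything else reduces to pathwise bookkeeping over jumps that parallels the proof of Lemma~\ref{LemXl*}.
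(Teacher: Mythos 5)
Your proposal is correct and follows essentially the same route as the paper: Gronwall on inter-jump intervals via $C(l,\cdot)\geq p(l,\cdot)\Lambda$, the jump factor $q(l,\tau_i)/p(l,\tau_i)$ obtained by combining Inequality~\eqref{IneqStattBellman} with the two-sided bounds, and the closed-form lower bound for $p(l,\cdot)$ (your comparison function $\tilde p$ is exactly what the paper extracts from Corollary~\ref{CorScalarRiccati} with $a=\theta$, $b=\alpha d_{\min}$, $c=l/\lambda_{\max}$). Part~(ii) is likewise handled by the same limiting argument as in the paper.
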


\begin{proof}
We prove (i); (ii) follows by exactly the same line of reasoning with the respective bounds.
Let $i \in \mathds{N}$. Observe that on $\{\tau_i <T\}$
\[
\frac{\partial}{\partial s}\big( X^*(l,s)^\top \Lambda X^*(l,s) \big)
 = -2  X^*(l,s)^\top C(l,s) X^*(l,s) \notag \\
 \leq -2 p(l,s) X^*(l,s)^\top \Lambda X^*(l,s) \notag
\]
for $s \in [\tau_i,\tau_{i+1} \wedge T)$ by~Theorem~\ref{TheoremDiffBounds}.
Gronwall's inequality implies
\begin{equation*}
X^*(l,s)^\top \Lambda X^*(l,s) \leq X^*(l,\tau_i)^\top \Lambda X^*(l,\tau_i) \exp\Big(-2 \int_{\tau_i}^{s} p(l,r) dr \Big), 
\end{equation*}
in particular
\begin{equation}
X^*(l,(\tau_{i+1}\wedge T)-)^\top \Lambda X^*(l,(\tau_{i+1}\wedge T)-) 
\leq X^*(l,\tau_{i})^\top \Lambda X^*(l,\tau_{i}) 
 \exp\Big(-2 \int_{\tau_{i}}^{\tau_{i+1}\wedge T} p(l,r) dr \Big) . \label{Ineqt_i}
\end{equation}
This implies (cf.~Inequalities~\eqref{IneqStattBellman}  and~\eqref{IneqAbsch}  in the proof of Lemma~\ref{LemXl*})
\begin{align}
X^*(l,\tau_{i+1})^\top \Lambda X^*(l,\tau_{i+1})
&\leq
	 \frac{1}{p(l,\tau_{i+1})} X^*(l,\tau_{i+1}-)^\top 
	 C(l,\tau_{i+1}) X^*(l,\tau_{i+1}-)  \notag \\
&\leq \frac{q(l,\tau_{i+1})}{p(l,\tau_{i+1})}
	X^*(l,\tau_{i+1}-)^\top  \Lambda X^*(l,\tau_{i+1}-)\quad \text{a.s.}
 	\label{Ineqt-}
\end{align}
Using Inequalities \eqref{Ineqt_i} and \eqref{Ineqt-}, we obtain Inequality~\eqref{IneqXl}
inductively as before. Inequality~\eqref{IneqXl2}
follows from
\begin{equation*}
\frac{1}{T-r+\frac{2 \lambda_{\max}}{2 l+\theta \lambda_{\max}}} -\frac{\theta}{2} \leq p(l,r)
\end{equation*}
which is a direct consequence of Corollary~\ref{CorScalarRiccati}.
\end{proof}

The main step in the proof of the admissibility of $u^*$ is to show that the liquidation constraint holds (cf.~Definition~\ref{DefAdmStr}~(iv)). This is (in particular) accomplished in the following proposition.

\begin{prop} \label{PropLiquidation}
Let $t \in[0,T)$ and $x \in \mathds{R}^n$ be the portfolio position at time $t$. 
\begin{enumerate}
\item[(i)]
$X^*(l,\cdot) \overset{l\rightarrow \infty}{\longrightarrow} X^*(\cdot)$  a.s.~compactly on $[t,T)$.
\item[(ii)]
$l \cdot \|X^*(l,T)\|_2^2 \overset{l\rightarrow \infty}{\longrightarrow}  0$ a.s.~and in $L^1$
and
$X^*(l,T) \overset{l\rightarrow \infty}{\longrightarrow} X^*(T)=\lim\limits_{s\rightarrow T-} X^*(s)= 0$ a.s.
\end{enumerate}
\end{prop}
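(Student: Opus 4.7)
The statement has two qualitatively different parts: an a.s.\ compact convergence of processes on $[t,T)$ (part (i)), and then three sharper limits at/through the terminal time $T$ (part (ii)). My plan is to handle them in the order (i), then the a.s.\ statement of (ii), then the $L^1$ upgrade, and finally $\lim_{s\to T-}X^*(s)=0$.

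For part (i), I would fix a sample path of $\pi$ outside the null set where the jump times $(\tau_i)$ are not strictly ordered or accumulate in $[t,T)$; for such $\omega$ the interval $[t,T')$, $T'<T$, contains only finitely many jumps. On each inter-jump interval $[\tau_i\wedge T',\,\tau_{i+1}\wedge T')$, $X^*(l,\cdot)$ is the unique solution of the linear ODE $X'=-\Lambda^{-1}C(l,\cdot)X$. By Theorem~\ref{TheoremLimitValueFunc}(i), $C(l,\cdot)\to C(\cdot)$ compactly on $[t,T)$, hence $\Lambda^{-1}C(l,\cdot)\to \Lambda^{-1}C(\cdot)$ uniformly on $[\tau_i,\tau_{i+1}\wedge T']$. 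A standard Gronwall argument for continuous dependence of linear ODEs then gives uniform convergence of $X^*(l,\cdot)$ to the corresponding limit ODE solution on that sub-interval. At each jump time $\tau_{i+1}<T'$, the jump size is $\bar C(l,\tau_{i+1})C(l,\tau_{i+1})X^*(l,\tau_{i+1}-)e_k$, and since $c_{i,i}(l,\cdot)\geq \lambda_{\min}\,p(l,\cdot)\geq \lambda_{\min}\,p(l_0,\cdot)>0$ on $[t,T']$, the matrix $\bar C(l,\cdot)C(l,\cdot)$ converges compactly to $\bar C(\cdot)C(\cdot)$, so the jump maps converge. Inducting over the (finitely many) jumps in $[t,T']$ yields uniform convergence of $X^*(l,\cdot)$ to $X^*(\cdot)$ on $[t,T']$, and since $T'<T$ was arbitrary, we obtain compact convergence on $[t,T)$.

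For part (ii), the a.s.\ convergence $l\|X^*(l,T)\|_2^2\to 0$ is the heart of the argument; I would read it off directly from Inequality~\eqref{IneqXl2} evaluated at $s=T$. The prefactor $(2\lambda_{\max}/(2l+\theta\lambda_{\max}))^2/(T-t+\tfrac{2\lambda_{\max}}{2l+\theta\lambda_{\max}})^2$ is $O(1/l^2)$; the random product $\prod_{t\leq\tau_i\leq T}q(l,\tau_i)/p(l,\tau_i)$ has a.s.\ finitely many factors (no jump occurs at $T$ a.s., so all $\tau_i<T$), and for each fixed $\tau_i<T$ both $p(l,\tau_i)\nearrow p(\tau_i)\in(0,\infty)$ and $q(l,\tau_i)\nearrow q(\tau_i)\in(0,\infty)$ by Lemma~\ref{LemBounds}, so the product stays bounded a.s.\ in $l$. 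Dividing by $\lambda_{\min}$ and multiplying by $l$ leaves $l\|X^*(l,T)\|_2^2=O(1/l)$ a.s., which vanishes. The $L^1$ upgrade is then cheap: by Lemma~\ref{LemXl*}(i) applied at $s=T$ and using $C(l,T)=lI$,
\[
l\,\|X^*(l,T)\|_2^2 \leq e^{\theta(T-t)}\,x^\top C(l,t)\,x,
\]
and the right-hand side is deterministic and bounded in $l$ since $C(l,t)\to C(t)$ by Theorem~\ref{TheoremLimitValueFunc}. Dominated convergence (with a deterministic dominator) promotes the a.s.\ convergence to $L^1$.

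Finally, to show that $X^*(s)$ has a limit as $s\to T-$ and that this limit is $0$ a.s., I would appeal to Lemma~\ref{LemX*neu}(ii): the bound
\[
X^*(s)^\top\Lambda X^*(s)\;\leq\; x^\top\Lambda x\,\exp(\theta(s-t))\,\frac{(T-s)^2}{(T-t)^2}\!\!\prod_{t\leq\tau_i\leq s}\!\!\frac{q(\tau_i)}{p(\tau_i)}
\]
forces $X^*(s)\to 0$ a.s.\ since the factor $(T-s)^2\to 0$ and the random product is a.s.\ finite (a.s.\ finitely many jumps, each $\tau_i<T$, and $p(\tau_i)>0$). Combined with part (i) this lets me \emph{define} $X^*(T):=0$ and conclude $X^*(l,T)\to X^*(T)$ a.s. The main technical hurdle is the bookkeeping for the random product $\prod q(l,\tau_i)/p(l,\tau_i)$: I need a clean a.s.\ statement that $\{\tau_i\leq T\}$ is finite and that no $\tau_i$ equals $T$, so that $p(l,\tau_i)$ stays bounded away from zero uniformly in $l$ along almost every path; everything else is Gronwall plus a deterministic dominated convergence step.
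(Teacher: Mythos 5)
Your proof is correct and, for part (i), the a.s.\ statement in part (ii), and the terminal limit $\lim_{s\to T-}X^*(s)=0$, it follows essentially the same route as the paper: pathwise induction over the a.s.\ finitely many jump times, Gronwall/continuous-dependence on each inter-jump interval using the compact convergence of $C(l,\cdot)$ from Theorem~\ref{TheoremLimitValueFunc}, convergence of the jump maps $\bar C(l,\cdot)C(l,\cdot)$ via the lower bound $c_{i,i}(l,\cdot)\geq\lambda_{\min}p(l_0,\cdot)>0$, and then Inequality~\eqref{IneqXl2} at $s=T$ together with Lemma~\ref{LemX*neu}(ii). The one place you genuinely diverge is the $L^1$ upgrade: the paper controls $\mathbb{E}\big[\prod_{t\leq\tau_i\leq T}q(l,\tau_i)/p(l,\tau_i)\big]$ by a constant $K_2$ independent of $l$ and integrates the bound~\eqref{IneqXlTLimit0} directly, whereas you use Lemma~\ref{LemXl*}(i) (extended to $s=T$ by left-continuity, since a.s.\ no jump occurs at $T$) to get the deterministic, $l$-uniform dominator $e^{\theta(T-t)}x^\top C(t)x$ and then apply dominated convergence to the already-established a.s.\ limit. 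Your variant is arguably cleaner, since it sidesteps the need to justify the uniform-in-$l$ integrability of the random product, which the paper asserts rather tersely; the paper's variant has the minor advantage of giving the a.s.\ and $L^1$ statements from a single inequality. Both are valid.
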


\begin{proof}
\begin{enumerate}
\item[(i)]
The spectral norm $\|\cdot\|_{2,2}$ is equivalent to the matrix maximum norm, and therefore
the element-wise convergence results from Theorem~\ref{TheoremLimitValueFunc}~(i) transfer
to the corresponding results for the spectral norm.

Let $t\leq T' <T$. On $\{ \tau_i <T'\}$, $X^*$ and $X^*(l)$
solve the respective ordinary differential equations
\[
X' = - \xi^*( \cdot,X) = - \Lambda^{-1} C X, \quad
X' = - \xi^*(l,\cdot,X) = - \Lambda^{-1} C(l) X
\] 
on the interval $[\tau_i,\tau_{i+1}\wedge T')$. We prove that
the assertion follows from the continuous dependence of solutions of ordinary differential equations on the right hand side and initial values. To this end, we first require some preliminary observations.

For $s \in [t,T']$ and $x,y \in \mathds{R}^n$, we have
\begin{equation*}
\|\Lambda^{-1} C(s)x-\Lambda^{-1} C(s)y\|_2  \leq \max\limits_{s\in [t,T']} \|Q(s)\|_{2,2} \|x-y\|_2 
=:L \|x-y\|_2
\end{equation*}
(cf.~Theorem~\ref{TheoremLimitValueFunc}~(ii)), i.e., for all $s \in [t,T']$, $\xi^*(s,\cdot)$ is Lipschitz continuous on $\mathds{R}^n$ with Lipschitz
constant $L=L(T')$ independent of $s$.
Furthermore, there exits a constant $K_1\geq 1$ such that for $s \in [t,T']$,
$
\|\bar{C}(s)C(s)\|_{2,2} \leq K_1.
$
We now show by induction on $i\in \mathds{N}$ that for all $\epsilon >0$, there exits an $l_i > l_0$ such that
$l_{i} \geq l_{i-1}$ and for all $l \geq l_i$, $s \in [t,\tau_{i} \wedge T']$, 
$
\| X^*(l,s) - X^*(s) \|_2 < \epsilon.
$

The assertion is clear for $i=0$. Let $i>0$ and $\epsilon >0$. By the induction hypothesis, there exists
$l_{i-1}>l_0$ such that for $l>l_{i-1}$,
\begin{equation} \label{AbschContDep1}
\| X^*(l,\tau_{i-1}) - X^*(\tau_{i-1})\|_2 <\epsilon\frac{ e^{-L(T'-t)}}{6 K_1 }.
\end{equation}
Note that on $\{\tau_{i-1}\geq T'\}$ the induction step is trivial. We therefore fix some 
$
\omega \in \{\tau_{i-1}<T'\}.
$
Let now $l_{i} \geq l_{i-1}$ such that for $l>l_{i}$, $s \leq \tau_i\wedge T'$ 
(recall the uniform convergence of $(C(l,s))_l$ on $[t,T']$, Theorem~\ref{TheoremLimitValueFunc}~(i))
\begin{equation} \label{AbschContDep2}
\|\Lambda^{-1} C(l,s) -\Lambda^{-1} C(s) \|_{2,2}
\leq \|\Lambda^{-1} \|_{2,2} \| C(l,s) - C(s) \|_{2,2} < \epsilon \frac{ e^{-L(T'-t)}}{6 (T'-t)K_1 }
\end{equation}
and
\begin{equation} \label{IneqCl}
\| \bar{C}(l,s)C(l,s)-\bar{C}(s)C(s)\|_{2,2} <  \frac{\epsilon}{3K^2}
\end{equation}
for $K$ as in Lemma~\ref{LemXl*}~(ii).
By the continuous dependence of solutions of systems of ordinary differential equations 
on the right hand side and initial values,
we have for $s \in [\tau_{i-1},\tau_{i}\wedge T')$ (by Inequalities~\eqref{AbschContDep1},
and~\eqref{AbschContDep2}),
\[
\| X^*(l,s,\omega) - X^*(s,\omega)\|_2 
\leq \bigg( \epsilon \frac{ e^{-L(T'-t)}}{6 K_1}
+(T'-t) \epsilon \frac{ e^{-L(T'-t)}}{6 (T'-t)K_1}\bigg) e^{L(T'-t)} \\
 =\frac{\epsilon}{3 K_1},
\]
in particular 
\begin{equation} \label{IneqX*K1}
\| X^*(l,(\tau_{i}\wedge T')-) - X^*((\tau_{i}\wedge T')-)\|_2 
\leq \frac{\epsilon}{3 K_1}.
\end{equation}
We can conclude by using the Inequalities~\eqref{IneqCl}
and~\eqref{IneqX*K1}:
\begin{align}
& \|X^*(l,\tau_{i}(\omega) \wedge T',\omega) - X^*(\tau_{i}\wedge T',\omega)\|_2 \notag \\*
&\qquad = \big\|X^*(l,(\tau_{i}(\omega) \wedge T')-,\omega) 
	- \bar{C}(l,\tau_{i}(\omega) \wedge T') C(l,\tau_{i}(\omega) \wedge T') 
	X^*(l,(\tau_{i}(\omega) \wedge T')-,\omega) \notag \\*	
&\qquad \qquad  - X^*((\tau_{i}(\omega) \wedge T')-,\omega) 
  + \bar{C}(\tau_{i}(\omega) \wedge T')C(\tau_{i}(\omega) \wedge T') 
	X^*((\tau_{i}(\omega) \wedge T')-,\omega) \|_2 \notag  \\
&\qquad \leq  \underbrace{\| X^*(l,(\tau_{i}(\omega) \wedge T')-,\omega) 
	- X^*((\tau_{i}(\omega) \wedge T')-,\omega) \big\|_2}_
	{\leq \epsilon/(3 K_1)\leq \epsilon /3 \text{ by Inequality~\eqref{IneqX*K1}}} \notag \\* 
&\qquad \qquad + \underbrace{ \|X^*(l,(\tau_{i}(\omega) \wedge T')-,\omega)\|_2 
	\| \bar{C}(l,\tau_{i}(\omega) \wedge T')C(l,\tau_{i}(\omega) \wedge T')
	-\bar{C}(\tau_{i}(\omega) \wedge T')C(\tau_{i}(\omega) \wedge T')\|_{2,2}}_
	{\leq \epsilon/3 \text{ by Inequality~\eqref{IneqCl}}}
	\notag \\
&\qquad  \qquad  + \underbrace{\|\bar{C}(\tau_{i}(\omega) \wedge T')C(\tau_{i}(\omega) \wedge T')\|_2  
	\| X^*(l,(\tau_{i}(\omega) \wedge T')-,\omega) 
	- X^*((\tau_{i}(\omega) \wedge T')-,\omega)\|_2}_{<\epsilon/3 
	\text{ by Inequality~\eqref{IneqX*K1}}} \notag \\*
& \qquad < \epsilon \notag
\end{align}
as required. 
\item[(ii)]
For fixed $\omega \in \Omega$, we have by Lemma~\ref{LemX*neu}~(i) that
\begin{equation} 
l \cdot \|X^*(l,T,\omega)\|_2^2 \leq \frac{1}{\lambda_{\min}} 
	x^\top \Lambda x \exp\big(\theta (T-t)\big) 
	\frac{l \big(\frac{2 \lambda_{\max}}{2 l+\theta \lambda_{\max}}\big)^2}
	{\big(T-t+\frac{2 \lambda_{\max}}{2 l+\theta \lambda_{\max}}\big)^2}	
	\prod\limits_{t \leq \tau_i\leq s} \frac{q(l,\tau_i)}{p(l,\tau_i)}.\label{IneqXlTLimit0}
\end{equation}
Furthermore,
\begin{equation*}
\mathbb{E}\Big[\prod\limits_{t\leq \tau_i\leq T} \frac{q(l,\tau_{i})}{p(l,\tau_{i})}\Big] <K_2 <\infty
\end{equation*}
for some constant $K_2$ independent of $l$; thus for almost all $\omega \in \Omega$, 
there exists a constant $K_3(\omega)$ such that
\begin{equation*}
\prod\limits_{t\leq \tau_i(\omega)\leq T} \frac{q(l,\tau_{i}(\omega))}{p(l,\tau_{i}(\omega))} <K_3(\omega).
\end{equation*}
Therefore, Inequality~\eqref{IneqXlTLimit0} implies
\begin{equation*}
\mathbb{E}\big[l \cdot \|X^*(l,T,\omega)\|_2^2\big] \overset{l\rightarrow \infty}{\longrightarrow} 0
\quad
\text{and}
\quad
l \cdot \|X^*(l,T,\omega)\|_2^2 \overset{l\rightarrow \infty}{\longrightarrow} 0 \quad \text{a.s.} 
\end{equation*}

Finally, Lemma~\ref{LemX*neu}~(ii) implies that
$\lim\limits_{s \rightarrow T-} \| X^*(s)\|_2 =0$ a.s.,
finishing the proof.
\end{enumerate}
\end{proof}

We are now able to prove that $u^*$ is indeed an admissible liquidation strategy. The 
main step towards this goal is accomplished by Proposition~\ref{PropLiquidation}~(ii).
It remains thus to show that $u^*$ fulfills the moment conditions in Definition~\ref{DefAdmStr}~(ii).

\begin{proof}[Proof of Theorem~\ref{TheoremAdm}]
Definition~\ref{DefAdmStr}~(i) and~(iii) are clear and (iv)~follows from 
Proposition~\ref{PropLiquidation}~(ii).

Furthermore, we have
\begin{equation*}
\mathbb{E} \Big[\int_t^T\ \|\xi(s,X^*(s))\|_2^4 ds \Big] 
	\leq \mathbb{E} \Big[\int_t^T\ q(s)^4 \|X^*(s)\|_2^4 ds \Big] <\infty
\end{equation*}
by Lemma~\ref{LemXl*}~(ii) and Proposition~\ref{PropLiquidation}~(i).
Finally, $\bar{C}(s) \leq n C(s)^{-1}$ by Corollary~\ref{CorMatIneq}
and thus, as $\frac{q(s)}{p(s)}$ admits a continuous extension on $[t,T]$, there exists a constant $\bar{K}$ independent of $s$ such that
\begin{equation} \label{IneqbarCC}
\| \bar{C}(s) C(s) \|_{2,2}^8 \leq n^8 \|C^{-1}\|_{2,2}^8 \|C\|_{2,2}^8 \leq 
\frac{n^8 \lambda_{\max}^8 q(s)^8}{\lambda_{\min}^8 p(s)^8}
\leq \bar{K}. 
\end{equation}
Using Lemma~\ref{LemXl*}~(ii) and Proposition~\ref{PropLiquidation}~(i) again, we can deduce from Inequality~\eqref{IneqbarCC} that
\begin{equation*}
\mathbb{E} \Big[\int_t^T\ \|\eta(s,X^*(s-)\|_2^8 ds \Big] 
	\leq \mathbb{E} \Big[\int_t^T\ \| \bar{C}(s) C(s) \|_2^8 \|X^*(s-)\|_2^8 ds \Big] < \infty.
\end{equation*}
\end{proof}

\subsection{Proof of Theorem~\ref{TheoremOptLiq}} \label{SecProofMain}

We can directly deduce compact convergence of the optimal trading intensity in the primary venue from Theorem~\ref{TheoremLimitValueFunc}~(i) and Proposition~\ref{PropLiquidation}~(i).

\begin{cor} \label{Corxil*limit}
Let $t \in [0,T)$ and $x \in \mathds{R}^n$ be the portfolio position at time $t$. Then
\begin{equation*}
\xi^*(l,\cdot,X^*(l,\cdot)) \longrightarrow \xi^*(\cdot,X^*(\cdot)) \quad \text{a.s. compactly on } [t,T)
\end{equation*}
as $l \rightarrow \infty$.
\end{cor}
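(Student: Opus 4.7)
The approach is to reduce the claim to the two compact convergences already established: $C(l,\cdot)\to C(\cdot)$ on $[0,T)$ from Theorem~\ref{TheoremLimitValueFunc}(i) and $X^*(l,\cdot)\to X^*(\cdot)$ a.s.\ on $[t,T)$ from Proposition~\ref{PropLiquidation}(i). Since $\xi^*(l,s,X^*(l,s))=\Lambda^{-1}C(l,s)X^*(l,s)$ and $\xi^*(s,X^*(s))=\Lambda^{-1}C(s)X^*(s)$, the natural plan is to use the identity
\[
\xi^*(l,s,X^*(l,s))-\xi^*(s,X^*(s))=\Lambda^{-1}\bigl[C(l,s)-C(s)\bigr]X^*(l,s)+\Lambda^{-1}C(s)\bigl[X^*(l,s)-X^*(s)\bigr]
\]
and bound each summand uniformly on an arbitrary compact subinterval $[t,T']\subset[t,T)$.

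Fix $\omega$ in the full-measure set on which the assertions of Proposition~\ref{PropLiquidation}(i) hold, and fix $T'<T$. First I would note that $X^*(l,\cdot,\omega)$ is uniformly bounded on $[t,T']$ in both $s$ and $l$: the family $\{X^*(l,\cdot,\omega)\}_l$ converges uniformly on $[t,T']$ to the continuous function $X^*(\cdot,\omega)$, so it is uniformly bounded there by some constant $K(\omega,T')$. Similarly, $\|\Lambda^{-1}C(s)\|_{2,2}$ is bounded on $[t,T']$ by the upper bound $Q$ from Theorem~\ref{TheoremLimitValueFunc}(ii), which is continuous on $[0,T)$.

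The first summand is then bounded by $\|\Lambda^{-1}\|_{2,2}\sup_{s\in[t,T']}\|C(l,s)-C(s)\|_{2,2}\cdot K(\omega,T')$, which tends to $0$ by Theorem~\ref{TheoremLimitValueFunc}(i), where compact convergence of the matrix entries transfers to compact convergence in the spectral norm by equivalence of matrix norms. The second summand is bounded by $\|\Lambda^{-1}\|_{2,2}\sup_{s\in[t,T']}\|C(s)\|_{2,2}\cdot\sup_{s\in[t,T']}\|X^*(l,s,\omega)-X^*(s,\omega)\|_2$, which tends to $0$ by Proposition~\ref{PropLiquidation}(i). Adding the two bounds and taking the supremum over $s\in[t,T']$ yields
\[
\sup_{s\in[t,T']}\bigl\|\xi^*(l,s,X^*(l,s,\omega))-\xi^*(s,X^*(s,\omega))\bigr\|_2\;\xrightarrow{l\to\infty}\;0,
\]
which is exactly compact convergence on $[t,T)$ for this $\omega$.

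There is no real obstacle: the result is a direct consequence of the already-proven compact convergences of $C(l,\cdot)$ and $X^*(l,\cdot)$ together with linearity of $\xi^*$ in $x$, the boundedness of $X^*(l,\cdot,\omega)$ on compact subintervals (which follows from the uniform convergence itself, or alternatively from Lemma~\ref{LemXl*}(ii) combined with Lemma~\ref{LemBounds}), and continuity of $C$ on $[0,T)$.
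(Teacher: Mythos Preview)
Your proof is correct and is exactly the argument the paper has in mind: the paper merely cites Theorem~\ref{TheoremLimitValueFunc}(i) and Proposition~\ref{PropLiquidation}(i), and your decomposition $\Lambda^{-1}[C(l,s)-C(s)]X^*(l,s)+\Lambda^{-1}C(s)[X^*(l,s)-X^*(s)]$ is the natural way to unpack that citation. One small inaccuracy: $X^*(\cdot,\omega)$ is not continuous but c\`adl\`ag (it jumps at the jump times of $\pi$); this does not affect your boundedness argument, and in any case Lemma~\ref{LemXl*}(ii) already supplies the uniform bound you need.
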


This enables us to finally prove the main result of the article.

\begin{proof}[Proof of Theorem~\ref{TheoremOptLiq}]
We fix $t \in [0,T)$ and $x\in \mathds{R}^n$. 
Note first that we have
\begin{equation} \label{IneqLeq}
v(t,x) \geq \lim\limits_{l\rightarrow \infty} \tilde{v}(l,t,x).
\end{equation}
For the converse inequality, let 
$A:=\{ \pi(T)=\pi(t) \}$
be the set of scenarios without any dark pool execution in $[0,T]$ and 
$K:(l_0,\infty)\times \Omega \longrightarrow [0,\infty]$ be the following cost function:
\[ 
K(l,\omega) := \int_t^T \left(\xi^*(l,s,X^*(l,s,\omega))^\top \Lambda \xi^*(l,s,X^*(l,s,\omega)) 
	+\alpha X^*(l,s,\omega)^\top \Sigma X^*(l,s,\omega)\right)ds 
	 +l  \|X^*(l,T,\omega)\|_2^2. \notag
\]
Then $\mathbb{P}[A]>0$ and for $\omega \in A$, $K(l,A):=K(l,\omega)$ is independent of the specific scenario $\omega$
almost surely. By optimality of $u^*(l)$ (Theorem~\ref{TheoremOptStrl}), 
$K(l,A)$ is an upper bound for $K(l,\cdot)$ almost surely.
As $\lim_{l\rightarrow \infty} \tilde{v}(l,t,x)$ is bounded and 
$\mathbb{P}[A]>0$, there exists a constant $K$ such that for all $l>l_0$,
$K(l,A) \leq K$.
By the dominated convergence theorem this implies
\[
\lim\limits_{l\rightarrow \infty} \tilde{v}(l,t,x)
= \lim\limits_{l\rightarrow \infty}\mathbb{E}[  K(l)] \notag\\ 
=\mathbb{E}[ \lim\limits_{l\rightarrow \infty} K(l)]. 
\]
By Proposition~\ref{PropLiquidation} and Corollary~\ref{Corxil*limit},
the limit in the last expression exists, so Fatou's lemma yields
\begin{align}
& \lim\limits_{l\rightarrow \infty} \tilde{v}(l,t,x) \notag \\*
& \qquad \geq \mathbb{E} \bigg[ \int_t^T  \lim\limits_{l\rightarrow \infty}  \Big( \xi^*(l,s,X^*(l,s))^\top 
	\Lambda \xi^*(l,s,X^*(l,s))  + \alpha X^*(l,s)^\top \Sigma X^*(l,s) \Big) ds
	+ \lim\limits_{l\rightarrow \infty}l \cdot \|X^*(l,T)\|_2^2\bigg] \notag \\
&\qquad = \mathbb{E}\bigg [ \int_t^T \left( \xi^*(s,X^*(s))^\top 
	\Lambda \xi^*(s,X^*(s)) 
	+\alpha X^*(s)^\top \Sigma X^*(s)\right)ds \bigg] \notag \\
&\qquad \geq v(t,x). \label{IneqAndereRichtung}
\end{align}
The Inequalities~\eqref{IneqLeq} and~\eqref{IneqAndereRichtung} establish that $u^*$ solves
the Optimization Problem~\eqref{EqValueFct} and that the value function is given by $v$.
For uniqueness, let $u=(\xi,\eta),\tilde{u}=(\tilde{\xi},\tilde{\eta}) \in \mathbb{A}(t,x)$
and $\mu\in (0,1)$. We define the convex combination $\bar{u}=(\bar{\xi},\bar{\eta})$:
\[
\bar{\xi}(s) =\mu \xi(s) + (1-\mu) \tilde{\xi}(s), \quad
\bar{\eta}(s) =\mu \eta(s) + (1-\mu) \tilde{\eta}(s)
\]
for $s \in [t,T)$. Thus,
$
X^{\bar{u}}(s) = \mu X^u(s) + (1-\mu) X^{\tilde{u}}(s)
$
and $\bar{u} \in \mathbb{A}(t,x)$.
Notice that 
\begin{equation} \label{ImplicationEqual}
\mathbb{P}\otimes\mathbb{\lambda}\big[u \not= \tilde{u}\big]>0
\quad \text{implies} \quad \mathbb{P}\otimes\mathbb{\lambda}\big[\xi \not= \tilde{\xi}\big] >0
\end{equation}
as else
$
\mathbb{P}[ \lim_{s\rightarrow T-} X^u(s)
	\not=  \lim_{s\rightarrow T-} X^{\tilde{u}}(s)]>0,
$
a contradiction to Definition~\ref{DefAdmStr}~(iv).
Hence, 
\begin{align}
J(t,x,\bar{u}) &= \mathbb{E} \Big[ \int_t^T 
		f\big(\bar{\xi}(r),  X^{\bar{u}}(r) \big) dr \Big] \notag\\
&\leq \mathbb{E} \Big[ \int_t^T 
		\mu f\big( \xi(r) , X^u(r)\big) + (1-\mu) f \big(\tilde{\xi}(r), X^{\tilde{u}}(r) \big) dr \Big] 
		\label{ConvexityForProof}\\
& = \mu J(t,x,u) + (1-\mu) J(t,x, \tilde{u}), \notag
\end{align}
where Inequality~\eqref{ConvexityForProof} follows from the convexity of $f$. We have equality 
in Inequality~\eqref{ConvexityForProof}
if and only if $u = \tilde{u}$ $\mathbb{P}\otimes\mathbb{\lambda}$ - a.s. by strict convexity of
$f$ in the first argument and~\eqref{ImplicationEqual}.
\end{proof}

\section{Proofs of the results of Section~\ref{SecPortfolioProp}} \label{SecAppendixProp}

\subsection{Proofs of the results of Section~\ref{SecContPropOne}}

We first require the following elementary result.

\begin{lem} \label{LemDiscrPropeindim}
Let $0 < a < b$, $x>0$. Then
\begin{equation*} 
0>\frac{d}{d x} \frac{\sinh(a x)}{\sinh(b x)} 
>(a - b)  \frac{\sinh(a x)}{\sinh(b x)} .
\end{equation*}
\end{lem}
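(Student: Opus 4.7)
My plan is to take the logarithmic derivative, so that the two inequalities
\[
0 > \frac{d}{dx}\frac{\sinh(ax)}{\sinh(bx)} > (a-b)\frac{\sinh(ax)}{\sinh(bx)}
\]
become, after dividing through by the positive quantity $\sinh(ax)/\sinh(bx)$, the single double-sided estimate
\[
0 \;>\; a\coth(ax)-b\coth(bx) \;>\; a-b.
\]
I therefore only need to prove two one-variable monotonicity facts in disguise.

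For the upper bound I would rewrite $a\coth(ax)<b\coth(bx)$, after multiplying by $x>0$, as $f(ax)<f(bx)$ where $f(t):=t\coth(t)$. So the task reduces to showing that $f$ is strictly increasing on $(0,\infty)$. A direct computation gives
\[
f'(t)=\coth(t)-\frac{t}{\sinh^2(t)}=\frac{\tfrac12\sinh(2t)-t}{\sinh^2(t)},
\]
and this is positive because $\sinh(2t)>2t$ for $t>0$ (the power series of $\sinh$ has only positive coefficients beyond the linear one). Since $ax<bx$, this yields $f(ax)<f(bx)$ and hence the upper bound.

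For the lower bound, rewrite $a\coth(ax)-b\coth(bx)>a-b$ as $a(\coth(ax)-1)>b(\coth(bx)-1)$. Using $\coth(t)-1=\tfrac{2}{e^{2t}-1}$, this is exactly
\[
\frac{a}{e^{2ax}-1} \;>\; \frac{b}{e^{2bx}-1},
\]
so it suffices to show that for fixed $x>0$ the map $g\colon u\mapsto u/(e^{2ux}-1)$ is strictly decreasing on $(0,\infty)$. Differentiating,
\[
g'(u)=\frac{e^{2ux}(1-2ux)-1}{(e^{2ux}-1)^2},
\]
so negativity of $g'$ is equivalent to $e^{s}(1-s)<1$ for $s:=2ux>0$. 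Setting $\phi(s):=1-e^s(1-s)=1-e^s+se^s$, one has $\phi(0)=0$ and $\phi'(s)=se^s>0$ for $s>0$, so $\phi(s)>0$, giving the desired inequality.

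The two reductions are the heart of the argument; once they are in place the lemma follows immediately by recombining them into the chain of inequalities above and multiplying back by the positive factor $\sinh(ax)/\sinh(bx)$. The main obstacle, such as it is, is spotting the right substitution $\coth(t)-1=2/(e^{2t}-1)$ that turns the lower bound into a clean monotonicity statement about an elementary function; everything else is bookkeeping plus two standard single-variable convexity arguments.
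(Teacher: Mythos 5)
Your proof is correct. After the logarithmic-derivative reduction, the two inequalities become $0>a\coth(ax)-b\coth(bx)>a-b$, and both of your monotonicity lemmas check out: $f(t)=t\coth(t)$ has $f'(t)=\bigl(\tfrac12\sinh(2t)-t\bigr)/\sinh^2(t)>0$, and the identity $\coth(t)-1=2/(e^{2t}-1)$ together with $e^{s}(1-s)<1$ for $s>0$ gives the lower bound. Recombining and multiplying back by the positive ratio $\sinh(ax)/\sinh(bx)$ yields the claim.

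The paper itself only records the quotient-rule expression
\[
\frac{d}{dx}\frac{\sinh(ax)}{\sinh(bx)}=\frac{a\cosh(ax)\sinh(bx)-b\cosh(bx)\sinh(ax)}{\sinh^2(bx)}
\]
and then defers the remaining ``elementary calculus'' to an external reference, so your argument supplies details the paper omits. Your route via the logarithmic derivative is arguably cleaner than working with the numerator directly (which would typically proceed through product-to-sum identities and the monotonicity of $\sinh(t)/t$): it splits the double inequality into two independent one-variable monotonicity statements, each with a one-line derivative computation, and the substitution $\coth(t)-1=2/(e^{2t}-1)$ makes the lower bound transparent. The argument is self-contained and complete.
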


\begin{proof}
Note first that
\begin{equation}
\frac{d}{dx} \frac{\sinh(a x)}{\sinh(b x)} =
	\frac{a \cosh(a x) \sinh(b x) - b \cosh(b x) \sinh(a x)}
	{\sinh^2(b x)}. \label{IneqDiscrLemmaProp1*}
\end{equation}
The result follows from elementary calculus (see~\cite{Kratz2011} for details).

\end{proof}

\begin{proof}[Proof of Proposition~\ref{PropPropertiesn=1Cont}]
We let $t\in [0,T)$ and compute
\begin{align*}
\frac{\partial}{\partial \theta} C(t;\theta) &= \frac{\Lambda \theta 
	\coth \big( \frac{\tilde{\theta}}{2}(T-t) \big)}{2 \tilde{\theta}} - \frac{\Lambda \theta (T-t)}
	{4 \sinh^2 \big(\frac{\tilde{\theta}}{2}(T-t)\big)} - \frac{\Lambda}{2} \\
& \leq \frac{\Lambda \theta \big( \cosh \big(\frac{\tilde{\theta}}{2}(T-t)\big)
	\sinh \big(\frac{\tilde{\theta}}{2}(T-t)\big) - \frac{\tilde{\theta}}{2}(T-t) 
 	- \sinh^2 \big(\frac{\tilde{\theta}}{2}(T-t)\big) \big)}
 	{2 \tilde{\theta} \sinh^2\big( \frac{\tilde{\theta}}{2}(T-t) \big)} <0 
\end{align*}
for $\theta>0$ since
\begin{align*}
&\cosh \big(\frac{\tilde{\theta}}{2}(T-t)\big)
	\sinh \big(\frac{\tilde{\theta}}{2}(T-t)\big) - \frac{\tilde{\theta}}{2}(T-t) 
 	- \sinh^2 \big(\frac{\tilde{\theta}}{2}(T-t)\big) \\*
&\qquad = \sinh \big(\frac{\tilde{\theta}}{2}(T-t)\big) \Big(
	\cosh \big(\frac{\tilde{\theta}}{2}(T-t)\big)  
 	- \sinh \big(\frac{\tilde{\theta}}{2}(T-t)\big) \Big) - \frac{\tilde{\theta}}{2}(T-t) \\
&\qquad = \Big(\frac{1-\exp(-\tilde{\theta}(T-t) )}{2} \Big)- \frac{\tilde{\theta}}{2}(T-t)  < 0
\end{align*}
(note that $\frac{1}{2}\big( 1-\exp(-2 x) \big)-x < 0$ for $x>0$). This
establishes the first and the second 
assertion directly;
the third assertion follows from the first equality in Equation~\eqref{EqSingleAssetTrajectory}.
For the proof of~(iv), we note first that
\begin{equation} \label{EqNoExContn=1}
\mathbb{E} [X^*(t;\theta)] = \mathbb{P}[\pi(t)=0] \cdot \tilde{X}(t;\theta)
	= \frac{ \sinh \big(\frac{\tilde{\theta}}{2}(T-t)\big) 
	\exp \big(-\frac{\theta}{2} t\big) }{ \sinh \big(\frac{\tilde{\theta}}{2}T \big) } x.
\end{equation}
We compute for $\theta>0$,
\begin{align*}
\frac{\partial}{\partial \theta} \mathbb{E} [X^*(t;\theta)] &= 
	\frac{x}{\sinh^2 \big(\frac{\tilde{\theta}}{2}T \big)} 
		\Big( \frac{\theta (T-t)}{2 \tilde{\theta}} 
	\sinh \big(\frac{\tilde{\theta}}{2}T\big) 
	\cosh \big(\frac{\tilde{\theta}}{2}(T-t)\big)
	\exp\big(-\frac{\theta}{2} t \big) \\*
&\qquad	- \frac{t}{2}
	\sinh \big(\frac{\tilde{\theta}}{2}T\big)
	\sinh \big(\frac{\tilde{\theta}}{2}(T-t)\big)
	\exp\big(-\frac{\theta}{2} t \big) 
	 -\frac{\theta T}{2 \tilde{\theta}} 
	\cosh \big(\frac{\tilde{\theta}}{2}T\big)
	\sinh \big(\frac{\tilde{\theta}}{2}(T-t)\big)
	\exp\big(-\frac{\theta}{2} t \big) \Big) \\
&< \frac{\theta \exp\big(-\frac{\theta}{2} t \big) x}
	{2 \tilde{\theta}\sinh^2 \big(\frac{\tilde{\theta}}{2}T \big)} 
		\Big(  (T-t) 
	\sinh \big(\frac{\tilde{\theta}}{2}T\big) 
	\cosh \big(\frac{\tilde{\theta}}{2}(T-t)\big) - T
	\cosh \big(\frac{\tilde{\theta}}{2}T\big)
	\sinh \big(\frac{\tilde{\theta}}{2}(T-t)\big) \Big) <0
\end{align*}
by Lemma~\ref{LemDiscrPropeindim} (cf.~also Equation~\eqref{IneqDiscrLemmaProp1*}), finishing the proof of~(iv).

We have
\[
\mathbb{E} [X^*(t;\theta)^2] = \mathbb{P}[\pi(t)=0] \cdot \tilde{X}(t;\theta)^2
	= \frac{ \sinh \big(\frac{\tilde{\theta}}{2}(T-t)\big)}
	{ \sinh \big(\frac{\tilde{\theta}}{2}T \big) } x^2.
\]
This term is differentiable and strictly decreasing in $\theta$ by Lemma~\ref{LemDiscrPropeindim} 
(note that $\tilde{\theta}$ is strictly increasing in $\theta$). Thus, by Fubini's theorem,
\[
\frac{\partial}{\partial \theta}
	\mathbb{E} \Big[\int\limits_0^T X^*(t;\theta)^2 dt \Big] 
	= \frac{\partial}{\partial \theta} \int\limits_0^T \mathbb{E} [X^*(t;\theta )^2] dt 
	= \int\limits_0^T \frac{\partial}{\partial \theta} \mathbb{E} [X^*(t;\theta)^2] dt <0,
\]
establishing~(v).

Finally, we note that
\[
\mathbb{E} [\xi^*(t,X^*(t;\theta);\theta)^2] 
	= \mathbb{P}[\pi(t)=0] \cdot \frac{C(t;\theta)^2}{\Lambda^2}\tilde{X}(t;\theta)^2
	= \frac{C(t;\theta)^2}{\Lambda^2} \mathbb{E} [X^*(t;\theta)^2]
\]
by~Equation~\eqref{EqNoExContn=1}.
This term is differentiable and strictly decreasing in $\theta$ as both terms are positive and 
strictly increasing in $\theta$. Similarly as before, we deduce~(vi).
\end{proof}

\begin{proof}[Proof of Proposition~\ref{PropPropertiesn=1ContImpact}]
(i) follows directly from the cost functional $J$. For (ii), we compute
\begin{equation*}
\frac{\partial}{\partial \Lambda} \xi^*(t;\Lambda) =  \frac{1}{2} \frac{\partial \tilde{\theta}}{\partial \Lambda} \Big(
	\frac{\cosh\big(\frac{\tilde{\theta}}{2}(T-t)\big)
	\sinh \big(\frac{\tilde{\theta}}{2}(T-t)\big) -\tfrac{\tilde{\theta}}{2}(T-t)}
	{\sinh^2 \big(\frac{\tilde{\theta}}{2}(T-t)\big)} \Big) <0
\end{equation*}
as $\frac{\partial \tilde{\theta}}{\partial \Lambda} <0$. Monotonicity of $\tilde{X}$ follows as in the proof of Proposition~\ref{PropPropertiesn=1Cont}~(iii). A similar calculation yields 
\[
\frac{\partial}{\partial(\alpha \Sigma)} \xi^*(t;\alpha\Sigma) >0
\] 
as $\frac{\partial \tilde{\theta}}{\partial(\alpha \Sigma)} >0$, finishing the proof.
\end{proof}

\subsection{Proofs of the results of Section~\ref{SubSecContMuliAsset}}

\begin{proof}[Proof of Proposition~\ref{PropOptDPOrder}]
We have $v(t,x-\eta e_i)= (x-\eta e_i)^\top C(t)(x-\eta e_i)$ which can easily be seen to be minimized by $\eta_1^*(t,x)$.
\end{proof}

\begin{proof}[Proof of Proposition~\ref{PropPropertiesCharactWell}]
We prove the case $x_1,x_2>0$, $\rho<0$, i.e., $x$ is well diversified. Let $u^*$ be the optimal strategy for the initial portfolio position $(x_1,-x_2)^\top$. For $(x_1,x_2)^\top$, we define the strategy $u \in \mathbb{A}(t,(x_1,x_2)^\top)$ in such a way that for $i=1,2$, $s\geq t$, $X^{u}_i(s) \geq 0$ and $|X^{u}_i(s)| = |X^{u^*}_i(s)|$;
this is achieved by changing the signs of the trading intensities and by adjusting the dark pool orders appropriately if necessary. In particular, we have $|\xi_i(s, X^{u}(s))| = |\xi^*_i(s, X^{u^*}(s))|$ and both strategies yield the same impact costs. On the other hand, the risk costs of $u$ are strictly smaller as $\rho<0$. Hence,
\[
v(t,(x_1,-x_2)^\top) =J(t,(x_1,-x_2)^\top,u^*) >J(t,(x_1,x_2)^\top,u) \geq v(t,(x_1,x_2)^\top)
\]
as desired. The remaining cases follow accordingly.
\end{proof}

\begin{proof}[Proof of Proposition~\ref{PropPropertiesValueWellMonotone}]
We prove the case $x_1,x_2>0$, $\rho < \tilde{\rho} < 0$ and proceed similarly as in the proof of Proposition~\ref{PropPropertiesCharactWell}. Let $\tilde{u}$ be the optimal strategy for $\tilde{\rho}$. For $\rho$, we define the strategy $u \in \mathbb{A}(t,x)$ in such a way that for $i=1,2$, $s\geq t$,
$X^{u}_i(s) \geq 0$ and $|X^{u}_i(s)| = |X^{\tilde{u}}_i(s)|.$
As $\rho < \tilde{\rho} < 0$, this yields
\[
v(t,x;\rho) \leq J(t,x,u;\rho) < J(t,x,\tilde{u};\tilde{\rho})= v(t,x;\tilde{\rho}).
\]
The remaining cases follow in the same way.
\end{proof}

For the proof of Proposition~\ref{PropMatrixEntries}, we first require the following symmetry results of the value function. 

\begin{lem} \label{LemPropSymm}
Let $t \in [0,T)$, $x\in \mathds{R}^2$ and $\rho \in [-1,1]$. Then,
$v(t,x;\rho) =v(t,-x,;\rho)$ and $v(t,(x_1,x_2)^\top;\rho) =v(t,(x_1,-x_2)^\top;-\rho)$.
\end{lem}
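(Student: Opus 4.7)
The plan is to reduce both equalities to simple symmetries of the linear-quadratic data of the problem. For the first equality $v(t,x;\rho)=v(t,-x;\rho)$, I would appeal to Theorem~\ref{TheoremOptLiq}, which gives $v(t,x;\rho)=x^\top C(t;\rho) x$; since this is a quadratic form, it is invariant under $x\mapsto -x$. (Alternatively, one checks that $u^{*}=(\xi^{*},\eta^{*})$ is linear in $x$, so negating $x$ simply negates the strategy and the controlled process, leaving both the impact cost $\xi^\top\Lambda\xi$ and the risk cost $X^\top\Sigma X$ unchanged.)

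For the second equality, I would set up an explicit sign-flip bijection on admissible strategies. Let $S=\operatorname{diag}(1,-1)$ and, given $u=(\xi,\eta)\in\mathbb{A}(t,(x_1,-x_2)^\top)$ for the problem with correlation $-\rho$, define
\begin{equation*}
\tilde u=(\tilde\xi,\tilde\eta):=(S\xi,S\eta).
\end{equation*}
The controlled SDE \eqref{EqCSDE} is linear, so $X^{\tilde u}(s)=SX^{u}(s)$ with initial position $S(x_1,-x_2)^\top=(x_1,x_2)^\top$, and the liquidation constraint as well as the moment bounds in Definition~\ref{DefAdmStr} are preserved by the isometry $S$. Hence $u\mapsto\tilde u$ is a bijection $\mathbb{A}(t,(x_1,-x_2)^\top)\to\mathbb{A}(t,(x_1,x_2)^\top)$.

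Next I would check that the running cost is invariant under this bijection, with $\rho$ replaced by $-\rho$. Because $\Lambda$ is diagonal (see~\eqref{EqLambdaDiagonal}), $S\Lambda S=\Lambda$, so
\begin{equation*}
\tilde\xi^\top\Lambda\tilde\xi=\xi^\top S\Lambda S\xi=\xi^\top\Lambda\xi.
\end{equation*}
Writing $\Sigma(\rho)$ for the covariance matrix in~\eqref{EqLambdaDiagonal} with correlation $\rho$, a direct computation gives $S\Sigma(\rho)S=\Sigma(-\rho)$, hence
\begin{equation*}
(X^{\tilde u})^\top\Sigma(\rho)X^{\tilde u}=(X^{u})^\top S\Sigma(\rho)S X^{u}=(X^{u})^\top\Sigma(-\rho)X^{u}.
\end{equation*}
Adding these two identities and integrating yields $J(t,(x_1,x_2)^\top,\tilde u;\rho)=J(t,(x_1,-x_2)^\top,u;-\rho)$. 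Taking infima over $\mathbb{A}$ on both sides (using that the map is a bijection) produces the claimed equality $v(t,(x_1,x_2)^\top;\rho)=v(t,(x_1,-x_2)^\top;-\rho)$.

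There is no real obstacle; the only things to verify carefully are that $S$ preserves the admissibility conditions (trivial, since $S$ is an isometry and the Poisson intensities $\theta_1,\theta_2$ are unchanged under relabelling components of $\eta$ by $\pm$ signs) and that the cross-impact assumption $\lambda_{1,2}=0$ is what makes $S\Lambda S=\Lambda$ hold; this is precisely the reason the hypothesis~\eqref{EqLambdaDiagonal} is in force throughout Section~\ref{SubSecDpendenceRhon=2}.
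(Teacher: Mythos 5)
Your proposal is correct and is essentially the paper's own argument: the paper proves both identities via the symmetries $J(t,x,u)=J(t,-x,-u)$ and $J(t,(x_1,x_2)^\top,(u_1,u_2)^\top;\rho)=J(t,(x_1,-x_2)^\top,(u_1,-u_2)^\top;-\rho)$, which is exactly your sign-flip bijection $S=\diag(1,-1)$ spelled out in detail (including the observations $S\Lambda S=\Lambda$ and $S\Sigma(\rho)S=\Sigma(-\rho)$). Your alternative derivation of the first identity from $v(t,x)=x^\top C(t)x$ is an equally valid shortcut, but the substance matches the paper.
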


\begin{proof}
We have $J(t,x,u)=J(t,-x,-u)$ and hence the first assertion follows. The second assertion follows from $J(t,(x_1,x_2)^\top,(u_1,u_2)^\top;\rho) =J(t,(x_1,-x_2)^\top,(u_1,-u_2)^\top;-\rho)$.
\end{proof}

\begin{proof}[Proof of Proposition~\ref{PropMatrixEntries}]
\begin{enumerate}
\item[(i)]
The first assertion follows directly from $C(t)>0$. The second assertion follows from~Lemma~\ref{LemPropSymm} as for $i=1,2$,
$
c_{i,i}(t,\rho)=v(t,e_i;\rho) =v(t,e_i;-\rho)=c_{i,i}(t,-\rho).
$
We directly deduce the third assertion as
\[
c_{1,1}(t;\rho)+c_{2,2}(t;\rho)+ 2 c_{1,2}(t,\rho)=v(t,(1,1)^\top;\rho) =v(t,(1,-1)^\top;-\rho)=c_{1,1}(t;\rho)+c_{2,2}(t;\rho) - 2 c_{1,2}(t;-\rho).
\]
Finally, it follows for $\rho<0$ ($\rho>0$) by Proposition~\ref{PropPropertiesCharactWell} that
\[
c_{1,1}(t;\rho)+c_{2,2}(t;\rho)+2 c_{1,2}(t;\rho)= v(t,(1,1)^\top;\rho)
<(>)v(t,(1,1)^\top;-\rho)=c_{1,1}(t;\rho)+c_{2,2}(t;\rho) - 2 c_{1,2}(t;\rho)
\]
and therefore $c_{1,2}(t;\rho)<0$ ($c_{1,2}(t;\rho)>0$).
\item[(ii)]
For the monotonicity of $c_{1,1}(t;\cdot)$, we let $\rho < \tilde{\rho} < 0$ and proceed similarly as in the proofs of Propositions~\ref{PropPropertiesCharactWell} and~\ref{PropPropertiesValueWellMonotone}. Let $\tilde{u}$ be the optimal strategy for $\tilde{\rho}$. For $\rho$, we define the strategy $u \in \mathbb{A}(t,e_1)$ in such a way that for $i=1,2$, $s\geq t$, $X^{u}_i(s) \geq 0$ and $|X^{u}_i(s)| = |X^{\tilde{u}}_i(s)|$.
As $\rho < \tilde{\rho} < 0$, this yields $J(t,e_1,u;\rho) \leq J(t,e_1,\tilde{u};\tilde{\rho})$ with equality if and only if $X^{\tilde{u}}_2(s)=0$ a.s. However, we have $\xi_2(t,e_1)=1/\lambda_2 c_{1,2}(t;\tilde{\rho})<0$ and hence $X^{\tilde{u}}_2>0$ in some neighborhood of $t$ with positive probability. Thus,
\[
c_{1,1}(t;\rho)=v(t,e_1;\rho) \leq J(t,e_1,u;\rho) <J(t,e_1,\tilde{u};\tilde{\rho})=v (t,e_1,\tilde{u};\tilde{\rho})
=c_{1,1}(t;\tilde{\rho}).
\] 
The monotonicity of $c_{2,2}$ and the case $\rho>0$ follow by the same line of reasoning.

Before we proceed, we remark that all symmetry properties and the monotonicity of $c_{1,1}$ and $c_{2,2}$ in $\rho$ also hold for $C(l)$ ($l\geq l_0$) with exactly the same proofs. We now prove monotonicity of $c_{1,2}(l,t;\rho)$ ($l \geq l_0$) first; monotonicity of $c_{1,2}(t;\rho)$ then follows directly from the fact that $\lim_{l \rightarrow \infty} c_{1,2}(l,t;\rho)=c_{1,2}(t;\rho)$. A straightforward computation confirms that $c_{1,2}(l)$ fulfills the following scalar initial value problem
\begin{equation} \label{EqInitialValuec12}
\frac{\partial}{\partial t} c_{1,2}(l,t;\rho) = c_{1,2}(l,t;\rho) \Big( \frac{c_{1,1}(l,t;\rho)}{\lambda_1} +
\frac{c_{2,2}(l,t;\rho)}{\lambda_2} + \theta_1 + \theta_2 \Big) - \alpha \sigma_1 \sigma_2 \rho, 
\quad c_{1,2}(l,T;\rho)=0.
\end{equation}
By the continuous differentiable dependence of $c_{1,2}$ on the parameter $\rho$, we can exchange differentiation with respect to $t$ and $\rho$ and obtain the following initial value problem for $\tfrac{\partial c_{1,2}}{\partial \rho}$:
\[
\frac{\partial}{\partial t} \frac{\partial c_{1,2}}{\partial \rho}(l,t;\rho) = \frac{\partial c_{1,2}}{\partial \rho}(l,t;\rho) f(l,t;\rho) - g(l,t;\rho), 
\quad \frac{\partial c_{1,2}}{\partial \rho}(l,T;\rho)=0
\]
for
\begin{align*}
f(l,t;\rho)&:= \frac{c_{1,1}(l,t;\rho)}{\lambda_1} +
\frac{c_{2,2}(l,t;\rho)}{\lambda_2} + \theta_1 + \theta_2, \\
g(l,t;\rho)&:= \alpha \sigma_1 \sigma_2 - c_{1,2}(l,t;\rho)
\Big( \frac{1}{\lambda_1} \frac{\partial c_{1,1}(l,t;\rho)}{\partial \rho}+
\frac{1}{\lambda_2} \frac{\partial c_{2,2}(l,t;\rho)}{\partial \rho}  \Big);
\end{align*}
this implies
\[
\frac{\partial c_{1,2}}{\partial \rho}(l,t;\rho) = \exp\Big(\int_T^t f(l,s;\rho) ds\Big)
\int_t^T g(l,r;\rho) \exp\Big(\int_r^T f(l,s;\rho) ds\Big) dr > 0
\]
as $g(l,t;\rho) \geq 0$ by~(i) and the monotonicity of $c_{1,1}(l,t;\cdot)$ and $c_{2,2}(l,t;\cdot)$.
\end{enumerate}
\end{proof}

\begin{proof}[Proof of Proposition~\ref{PropPropertiesNeverShort}]
\begin{enumerate}
\item[(i)]
We prove the assertion for $x_1,x_2>0$ and $\rho<0$. 
Let $j \in \mathds{N}$. We assume that $X^*_i(\tau_j)>0$ ($i=1,2$) on $\{\tau_j <T\}$ (cf.~Notation~\ref{NotationTau}). We compare the initial value problem for the controlled process on $[\tau_j,\tau_{j+1}\wedge T)$ (cf.~Equations~\eqref{EqOptXin=2}) with the case $\rho=0$; as $c_{1,2}(s;\rho)<c_{1,2}(s;0)=0$ and $0<c_{i,i}(s;\rho) < c_{i,i}(s;0)$ by Proposition~\ref{PropMatrixEntries}, $X^*_i(s;\rho) \geq X^*_i(s;0)$ for $s \in [\tau_j,\tau_{j+1}\wedge T)$. For $\rho=0$ the two components of $X^*$ evolve independently according to the results of Section~\ref{SecContPropOne}. In particular, the optimal asset position remains positive in $[\tau_j,\tau_{j+1}\wedge T)$. It follows that $X^*_i(s;\rho)>0$ for all $s < \tau_{j+1}\wedge T$ and that $X^*_i(\tau_{j+1}-)>0$ on $\{\tau_{j+1} <T\}$.
Applying Equations~\eqref{EqOptEtan=2} and Proposition~\ref{PropMatrixEntries}, we obtain on $\{ \tau_{j+1}<T\}$,
\[
X_1^*(\tau_{j+1}-)-\eta_1(\tau_{j+1},X^*(\tau_{j+1}-))
=-\frac{c_{1,2}(\tau_{j+1};\rho)}{c_{1,1}(\tau_{j+1};\rho)} X^*_2(\tau_{j+1}-) >0
\]
and the respective result for $X_2^*$. The assertion now follows by induction on $j$. The remaining cases follow accordingly.
\item[(ii)]
We assume $x_1,x_2,\rho>0$. It is clear that $\tau>t$ a.s. The result follows from Equations~\eqref{EqOptEtan=2} and Proposition~\ref{PropMatrixEntries} as for $t\leq s < \tau$,
\[
\eta_1(s,X^*(s))=X_1^*(s) + \frac{c_{1,2}(s;\rho)}{c_{1,1}(s;\rho)} X_2^*(s)>X_1^*(s).
\]
\end{enumerate}
\end{proof}

\begin{proof}[Proof of Proposition~\ref{PropStratMon}]
The assertions follow directly from Equations~\eqref{EqOptXin=2} and~\eqref{EqOptEtan=2} by applying Proposition~\ref{PropMatrixEntries}.
\end{proof}

For the proof of Proposition~\ref{PropWrongDirection}, we require the following results about the monotonicity of $c_{i,i}$ and $c_{1,2}$ in $t$.

\begin{lem} \label{LemMontonCt}
For $i=1,2$ and $t \in [0,T)$, $c_{i,i}(t)$ is increasing in $t$. $c_{1,2}(t)$ is decreasing in $t$ if $\rho >0$ and increasing in $t$ if $\rho<0$.
\end{lem}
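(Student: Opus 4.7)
The plan is to handle the diagonal and off-diagonal entries separately, using a monotonicity property of the value function for the diagonal and the scalar ODE satisfied by $c_{1,2}$ for the off-diagonal.

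For $c_{1,1}$ and $c_{2,2}$, I would first prove the stronger statement that $v(t,x)$ is (weakly) increasing in $t$ for every fixed $x \in \mathds{R}^2$, so that $c_{i,i}(t) = v(t, e_i)$ inherits the monotonicity. To show $v(t,x) \leq v(s,x)$ for $t < s$, take the optimal strategy $u^* \in \mathbb{A}(s, x)$ from Theorem~\ref{TheoremOptLiq} and time-shift it: set $\bar u(r) := u^*(r + s - t)$ for $r \in [t, T-(s-t))$ and $\bar u \equiv 0$ on $[T-(s-t), T)$. By stationarity of $\pi$ and of the increments of $\tilde P$, the law of $X^{\bar u}$ on $[t, T-(s-t)]$ coincides with that of $X^{u^*}$ on $[s, T]$; admissibility of $\bar u$, in particular the liquidation constraint, is inherited from $u^*$ because $X^{\bar u}\equiv 0$ on $[T-(s-t), T)$, and the cost contribution of that idle segment vanishes. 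Hence $J(t, x, \bar u) = J(s, x, u^*) = v(s, x)$, giving $v(t, x) \leq v(s, x)$.

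For the off-diagonal entry, I would exploit the scalar ODE obtained in the proof of Proposition~\ref{PropMatrixEntries}(ii); by Theorem~\ref{TheoremLimitValueFunc}(ii) the principal solution $C(\cdot)$ satisfies the same equation, namely
\begin{equation*}
c_{1,2}'(t;\rho) = c_{1,2}(t;\rho)\, f(t;\rho) - \alpha \sigma_1 \sigma_2 \rho, \qquad f(t;\rho) := \tfrac{c_{1,1}(t;\rho)}{\lambda_1} + \tfrac{c_{2,2}(t;\rho)}{\lambda_2} + \theta_1 + \theta_2 > 0.
\end{equation*}
Applied to the finite-end-cost version, where $c_{1,2}(l,T;\rho) = 0$, the integrating factor $\exp(-\int_0^t f(l,s;\rho)\,ds)$ yields
\begin{equation*}
c_{1,2}(l, t; \rho) = \alpha \sigma_1 \sigma_2 \rho \int_t^T \exp\Bigl(-\int_t^s f(l, r; \rho)\, dr\Bigr) ds,
\end{equation*}
and dominated convergence (the integrand is bounded by $1$ and converges pointwise on $[0,T)$ by Theorem~\ref{TheoremLimitValueFunc}(i)) gives the analogous formula for the principal solution as $l \to \infty$.

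With the substitution $u = s - t$ one obtains $c_{1,2}(t;\rho) = \alpha \sigma_1 \sigma_2 \rho \int_0^{T-t} \exp(-\int_t^{t+u} f(r;\rho)\, dr)\, du$. Since $c_{1,1}$ and $c_{2,2}$ are increasing in $t$ by the first part of the proof, so is $f$, which forces $\int_t^{t+u} f(r;\rho)\, dr$ to be increasing in $t$ for every fixed $u \geq 0$; together with the decrease of the upper limit $T-t$ this makes the whole integral strictly decreasing in $t$. Multiplying by $\alpha \sigma_1 \sigma_2 \rho$ delivers the claimed sign-dependent monotonicity. The main obstacle I anticipate is the rigorous passage to the limit $l \to \infty$ in the integral representation, since $f(l,r;\rho) \to \infty$ as $r \to T$; the uniform bound $1$ on the integrand handles this cleanly. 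A minor subtlety is the admissibility of the idle-extended strategy $\bar u$ in the time-shift argument, but this is routine once one verifies that $X^{\bar u} \equiv 0$ on the padding interval $[T-(s-t), T)$.
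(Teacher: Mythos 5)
Your proof is correct, but it takes a genuinely different route from the paper's in both halves. For the diagonal entries the paper stays entirely at the ODE level: it bounds $C(l)'$ from below by the derivative of an explicitly solvable Riccati comparison problem ($Q'=\lambda_{\min}^{-1}Q^2-\alpha\sigma_{\min}I$, $Q(T)=lI$), whose derivative is positive for $l$ large, and then lets $l\to\infty$; your time-shift argument is probabilistic and in fact proves the stronger statement that $C(t)$ is increasing in $t$ in the positive-semidefinite order (i.e., $x^\top C(t)x$ increasing for every $x$), at the price of invoking the verification theorem (Theorem~\ref{TheoremOptLiq}) rather than only the differential equation. For the off-diagonal entry both arguments rest on the scalar equation~\eqref{EqInitialValuec12} and both need the diagonal monotonicity as input, but the paper propagates $c_{1,2}'(l,\cdot)<0$ backward from $c_{1,2}'(l,T)=-\alpha\rho\sigma_1\sigma_2<0$ via the sign of $c_{1,2}''(l,\cdot)$, whereas you solve the linear ODE with an integrating factor, pass to the limit by dominated convergence (the uniform bound $1$ does handle the blow-up of $f(l,\cdot)$ near $T$ cleanly), and read off the monotonicity from the representation $c_{1,2}(t)=\alpha\sigma_1\sigma_2\rho\int_0^{T-t}\exp\bigl(-\int_t^{t+u}f(r;\rho)\,dr\bigr)\,du$. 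Your route yields a closed-form integral representation of $c_{1,2}$ as a by-product and gives strict monotonicity; the paper's is shorter and self-contained within the finite-$l$ ODE analysis. The only points worth making explicit in your time-shift step are that the cost functional depends on the strategy only through $(\xi,X^u)$, that the optimal control is Markovian so the shifted control is equidistributed by the stationary independent increments of $\pi$, and that the moment conditions of Definition~\ref{DefAdmStr}~(ii) are trivially preserved; all of this is routine.
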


\begin{proof}
For fixed 
$l \geq l_0$,
we consider the Initial Value Problem~\eqref{EqODE1}; 
its solution $C(l)$ satisfies
$C(l)' \geq Q(l)'$, where $Q(l)$ solves the initial value problem 
$Q'=1/ \lambda_{\min} Q^2 - \alpha \sigma_{\min} I$, 
$Q(T)=l I$ (cf.~Notation~\ref{NotaMatrix}). As in the proof of Proposition~\ref{PropAdml}, $Q(l)$ can be computed explicitly with 
$Q(l)'(t)>0$ for $l$ large enough, say $l\geq l_1$; in particular $c_{i,i}'(l,t)>0$ and therefore $c_{i,i}'(t)\geq 0$.

For the monotonicity of $c_{1,2}$ we assume $\rho>0$ and consider the Initial Value Problem~\eqref{EqInitialValuec12} for $c_{1,2}(l,t)$. Note first that for all $l$, 
$c_{1,2}'(l,T)= - \alpha \rho \sigma_1 \sigma_2 <0$. Furthermore,
\[
c_{1,2}''(l,t) = c_{1,2}'(l,t) \Big( \frac{c_{1,1}(l,t)}{\lambda_1} +
\frac{c_{2,2}(l,t)}{\lambda_2} + \theta_1 + \theta_2 \Big) +\underbrace{ c_{1,2}(l,t) \Big(\frac{c_{1,1}'(l,t)}{\lambda_1} +
\frac{c_{2,2}'(l,t)}{\lambda_2} \Big)}_{\geq 0 \text{ for } l \geq l_1}.
\]
This implies $c_{1,2}'(l,t)<0$ for  $l \geq l_1$ and hence $c_{1,2}'(t)\leq 0$. The case $\rho<0$ follows accordingly.
\end{proof}

\begin{proof}[Proof of Proposition~\ref{PropWrongDirection}]
We assume $\rho>0$ and that the dark pool order for the first asset is executed at time $\tau_1$, i.e., 
$X^*_1(\tau_1) = - \frac{c_{1,2}(\tau_1)}{c_{1,1}(\tau_1)} X_2^*(\tau_1)$; we further assume $X_2^*(\tau_1)>0$. We have  $\xi_1^*(\tau_1)=0$ and $\xi_2^*(\tau_1) >0$. As $c_{1,2}(s)/c_{i,i}(s)$ is decreasing in $s$ by Lemma~\ref{LemMontonCt}, we have $\xi_2^*(s) >0$ and $\xi_1^*(s)<0$ (in particular $\eta_2^*(s) >0$ and $\eta_1^*(\tau_1)<0$) until the next jump time of $\pi$. The result follows inductively as the number of jumps is almost surely finite. The proof for the remaining cases is analog.
\end{proof}

\begin{proof}[Proof of Proposition~\ref{PropMontonValueLam}]
The first assertion follows directly from the definition of the cost functional $J$. For the second assertion, let $\lambda_i < \tilde{\lambda}_i$.  Then,
\begin{align*}
\frac{v(t,x;\lambda_i)}{\lambda_i} 
	&= \mathbb{E} \Big[ \int_t^T  \Big(\xi^*_i(s)^2 + \sum_{j\not=i} \frac{\lambda_j}{\lambda_i} \xi_j^*(s)^2 
+ \frac{\alpha}{\lambda_i}X^*(s)^\top \Sigma X^*(s) \Big)ds \Big] \\
	&\geq \mathbb{E} \Big[ \int_t^T  \Big(\xi^*_i(s)^2 + \sum_{j\not=i} \frac{\lambda_j}{\tilde{\lambda}_i} \xi_j^*(s)^2 
+ \frac{\alpha}{\tilde{\lambda}_i}X^*(s)^\top \Sigma X^*(s) \Big)ds \Big] \geq \frac{v(t,x;\tilde{\lambda}_i)}{\tilde{\lambda}_i}.
\end{align*}
\end{proof}

\begin{proof}[Proof of Proposition~\ref{PropPropertiesCross}]
\begin{enumerate}
\item[(i)]
We prove the case $x_1,x_2>0$, $\rho<0$, i.e., $x$ is well diversified, and modify the proof of Proposition~\ref{PropPropertiesCharactWell}. Let $u^*$ be the optimal strategy for the initial portfolio position $(x_1,-x_2)^\top$. For $(x_1,x_2)^\top$, we can define a strategy $u \in \mathbb{A}(t,(x_1,x_2)^\top)$ with the following properties: for $s \leq \tau_1$ (where $\tau_1$ is the first jump time of $\pi$)
\begin{align}
|\xi_i(s)|=|\xi_i^{u^*}|,\quad &|X_i^u(s)|\leq|X_i^{u^*}(s)|, \label{IneqX*Cross}\\
\sgn(\xi_1^{u^*}(s))=\sgn(\xi_2^{u^*}(s)) &\Rightarrow \sgn(\xi_1(s))=\sgn(\xi_2(s)),\notag \\
\sgn(X_1^{u^*}(s))=\sgn(X_2^{u^*}(s)) &\Rightarrow \sgn(X^u_1(s))=\sgn(X^u_2(s)).\notag
\end{align}
$u$ needs to be defined carefully by considering all possible combinations of the signs of the trading intensities; note that it can be necessary to change the signs of the positions.
We adjust the dark pool orders in such a way that 
$X^u_i(\tau_1)\in\{ \gamma X_i^{u^*}(\tau_1), - \gamma X_i^{u^*}(\tau_1)\}$ for some $\gamma \leq 1$ (cf.~the inequality in~\eqref{IneqX*Cross}). For $s\geq \tau_1$, we proceed similarly as before by defining $\xi_i(s)=\gamma \xi_i^{u*}(s)$ (or $\xi_i(s)=-\gamma \xi_i^{u*}(s)$) and the dark pool orders as before.

This ensures that the impact costs of $u$ are less or equal than the impact costs of $u^*$ while the risk costs are strictly smaller; hence the assertion follows inductively. The remaining cases follow by the same line of reasoning.
\item[(ii),] (iii)
As in the proof of~Proposition~\ref{PropMatrixEntries}~(i), we obtain that $\sgn(c_{1,2}(t;\rho,\lambda_{1,2}))=\sgn(\lambda_{1,2})=\sgn(\rho)$. The assertions follow from the fact that $\eta^*$ is as in~Equations~\eqref{EqOptEtan=2} (also in the presence of cross price impact).
\end{enumerate}
\end{proof}

\begin{proof}[Proof of Proposition~\ref{PropMontonIntensity}]
Let $\theta_i \leq \tilde{\theta}_i $. For any matrix $C>0$, we have $\tilde{C}(\theta_i) \geq \tilde{C}(\tilde{\theta}_i)$ (cf.~Equation~\eqref{EqTildeC}). Similarly as in the proof of Theorem~\ref{TheoremDiffBounds}, we obtain that the respective solutions of the Initial Value Problem~\eqref{EqODE1} fulfill $C(l,t;\tilde{\theta}_i)\leq C(l,t;\theta_i)$ for $l$ large enough. The assertion follows by taking the limit $l \rightarrow \infty$.
\end{proof}
%

\appendix

\section*{Appendix}

\section{Riccati matrix differential equations} \label{SecAppRiccati}

In this section, we state a well-known comparison result about matrix Riccati Equations in the form in which we apply it in the proof of Theorem~\ref{TheoremDiffBounds}. A standard textbook is the one by~\cite{Reid1972}. A proof for the specific form of the theorem can, e.g., be found in~\cite{Kratz2011}.

\begin{theorem} \label{TheRiccIneq}
Let $A(t),B_P(t),C_P(t),B_Q(t),C_Q(t) \in \mathds{R}^{n\times n}$ be piecewise continuous on $\mathds{R}$. Furthermore, let $B_P(t),C_P(t),B_Q(t),C_Q(t)$ ($t \in \mathds{R}$) and 
$S_P,S_Q \in \mathds{R}^{n\times n}$
be symmetric. 
Let $t_0 > t_1 \geq -\infty$ and 
\begin{equation} \notag
S_Q \leq S_P, \quad 0 \leq B_Q(\cdot) \leq B_P(\cdot), \quad C_P(\cdot) \leq C_Q(\cdot)
\end{equation}
on $(t_2,t_0]$. Assume that the initial value problem
\[
P' = -A^\top P - P A - P B_P P + C_P, \quad
P(t_0) = S_P
\]
possesses a solution $P$ on $(t_1,t_0]$. Then the initial value problem
\[
Q' = -A^\top Q - Q A - Q B_Q Q + C_Q, \quad
Q(t_0) = S_Q
\]
possesses a solution $Q$ on $(t_1,t_0]$ and
$
P(t) \geq Q(t) \quad \text{on } (t_1,t_0].
$
\end{theorem}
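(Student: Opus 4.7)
The proof proceeds in four stages: (i) local existence and symmetry of $Q$; (ii) a linearisation of $R := P - Q$; (iii) a transition-matrix representation proving $R \geq 0$; and (iv) a lower bound on $Q$ that rules out blow-up and thus yields existence on the whole interval $(t_1, t_0]$. Symmetry of $Q$ on its maximal backward interval of existence $(\tau, t_0] \subseteq (t_1, t_0]$ (obtained from the Picard--Lindel\"of theorem applied piecewise across jumps of the data) is by uniqueness: the defining IVP is invariant under transposition since $A^\top Q + QA$, $Q B_Q Q$ and $C_Q$ all transpose correctly when $B_Q, C_Q, S_Q$ are symmetric.

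For the comparison, the key identity is the symmetric decomposition
\begin{equation*}
P B_P P - Q B_Q Q \;=\; \tfrac{1}{2}\bigl((P+Q) B_P R + R B_P (P+Q)\bigr) + Q(B_P - B_Q) Q,
\end{equation*}
verified by expanding the middle term. Setting $K := A + \tfrac{1}{2} B_P (P+Q)$ (so that $K^\top = A^\top + \tfrac{1}{2}(P+Q) B_P$), the difference $R$ satisfies the linear symmetric matrix ODE $R' = -K^\top R - R K + F$ with $F := (C_P - C_Q) - Q(B_P - B_Q) Q$. The hypotheses yield $R(t_0) = S_P - S_Q \geq 0$, and $F(s) \leq 0$ since $C_P \leq C_Q$ and since $Q(B_P - B_Q)Q \geq 0$ (using $B_P - B_Q \geq 0$ together with $Q = Q^\top$).

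Let $\Phi$ be the invertible fundamental solution of $\Phi' = -K\Phi$ with $\Phi(t_0) = I$ on $(\tau, t_0]$. Differentiating $U := \Phi^{-\top} R \Phi^{-1}$ gives $U' = \Phi^{-\top} F \Phi^{-1}$, so integration from $t_0$ down to $t$ produces
\begin{equation*}
R(t) \;=\; \Phi(t)^\top R(t_0) \Phi(t) \;-\; \Phi(t)^\top \!\!\int_t^{t_0}\!\! \Phi(s)^{-\top} F(s)\, \Phi(s)^{-1}\, ds\; \Phi(t).
\end{equation*}
Both terms on the right are nonnegative definite: the first because $R(t_0) \geq 0$, the second because $F(s) \leq 0$ (conjugation preserves the Loewner order, and integrating a nonpositive definite matrix-valued function over $[t, t_0]$ produces a nonpositive definite matrix, whose negation is nonnegative definite). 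Hence $P(t) \geq Q(t)$ on $(\tau, t_0]$.

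It remains to show $\tau = t_1$. For a lower bound, let $\bar Q$ solve the linear Lyapunov equation $\bar Q' = -A^\top \bar Q - \bar Q A + C_Q$ with $\bar Q(t_0) = S_Q$, which exists on all of $\mathbb{R}$. Then $(Q - \bar Q)' = -A^\top(Q - \bar Q) - (Q - \bar Q) A - Q B_Q Q$ with vanishing terminal datum, and the same transition-matrix formula (now driven by $A$ alone) applied to the nonnegative source $Q B_Q Q$ gives $Q(t) \geq \bar Q(t)$. Combining, $\bar Q(t) \leq Q(t) \leq P(t)$ in Loewner order on $(\tau, t_0]$, so by Rayleigh quotients $\lambda_{\min}(\bar Q) \leq \lambda_{\min}(Q) \leq \lambda_{\max}(Q) \leq \lambda_{\max}(P)$, yielding a uniform norm bound for $Q$ on any compact subinterval of $(t_1, t_0]$. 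This rules out finite-time blow-up and, by maximality of $(\tau, t_0]$, forces $\tau = t_1$.

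\textbf{Main obstacle.} The delicate step is the reformulation of $R$'s equation into the linear form $R' = -K^\top R - R K + F$: the naive splitting $P B_P P - Q B_Q Q = P B_P R + R B_Q Q + \dots$ produces two \emph{different} linear coefficients multiplying $R$ on the left and right, which both breaks symmetry and obstructs the transition-matrix representation. The symmetric decomposition above is what both preserves symmetry of $R$ and makes the positivity of each term in the representation of $R(t)$ transparent.
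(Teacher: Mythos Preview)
The paper does not supply its own proof of this theorem; it merely states the result and refers to \cite{Reid1972} and \cite{Kratz2011}. Your argument is therefore not in competition with anything in the paper, and on its own merits it is the standard Lyapunov/transition-matrix proof and is essentially correct.

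One small slip: with $\Phi' = -K\Phi$ the derivative of $U = \Phi^{-\top} R \Phi^{-1}$ does \emph{not} collapse to $\Phi^{-\top} F \Phi^{-1}$, because $(\Phi^{-\top})' = K^\top \Phi^{-\top}$ places $K^\top$ on the wrong side. The computation works if you instead take $\Phi' = -\Phi K$ (so that $(\Phi^{-1})' = K\Phi^{-1}$ and $(\Phi^{-\top})' = \Phi^{-\top} K^\top$), or equivalently keep your formula but let $\Phi$ solve $\Phi' = -K^\top \Phi$. With that correction, your representation
\[
R(t) \;=\; \Phi(t)^\top R(t_0)\,\Phi(t) \;-\; \Phi(t)^\top\!\!\int_t^{t_0}\!\!\Phi(s)^{-\top} F(s)\,\Phi(s)^{-1}\,ds\;\Phi(t)
\]
is valid and the positivity argument goes through. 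The lower-bound step via the linear Lyapunov solution $\bar Q$ is a clean way to close the existence argument and is correct as written.
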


We apply the theorem to scalar Riccati equations with constant coefficients and obtain a useful lower bound for their solution.

\begin{cor} \label{CorScalarRiccati}
Let $y$ be the solution of the scalar initial value problem
$y'= y^2 + a y - b$, $y(T) = c$, where $a,b \geq 0$, $c >0$, $b< c^2+ac$ and $d:=a^2/4 + b>0$.
Then for $t \in (-\infty, T]$, 
\begin{equation} \notag
y(t) \geq \frac{1}{T-t+\frac{1}{c+a/2}}-\frac{a}{2}.
\end{equation}
\end{cor}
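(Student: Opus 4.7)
The plan is to recognize the proposed lower bound as the exact solution of a simpler scalar Riccati initial value problem with the same terminal value, and then to deduce the inequality by applying the comparison result Theorem~\ref{TheRiccIneq} after a sign reversal that puts both ODEs into the form the theorem requires (namely, with a non-positive coefficient in front of the quadratic term).

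First I would verify that $m(t) := \frac{1}{T - t + 1/(c + a/2)} - a/2$ satisfies $m(T) = c$ and, by a direct differentiation, $m' = (m + a/2)^2 = m^2 + a m + a^2/4$; moreover $m$ is well-defined on all of $(-\infty, T]$ since the denominator is bounded below by $1/(c+a/2) > 0$. Thus $m$ globally solves the scalar Riccati IVP $m' = m^2 + a m + a^2/4$, $m(T) = c$, while $y$ solves $y' = y^2 + a y - b$, $y(T) = c$; the two equations differ only in the constant term, $a^2/4$ versus $-b$, and $-b \leq a^2/4$ since $b \geq 0$.

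Because Theorem~\ref{TheRiccIneq} requires $B \geq 0$ in $P' = -A^{\top} P - P A - P B P + C$, whereas the coefficient of the quadratic term in our ODE is $+1$, I would apply it to the reflected functions $\tilde y := -y$ and $\tilde m := -m$. These satisfy $\tilde y' = -\tilde y^2 + a \tilde y + b$ and $\tilde m' = -\tilde m^2 + a \tilde m - a^2/4$, both with terminal value $-c$, fitting the scalar form with common $A = -a/2$, $B = 1$, and $C_{\tilde m} = -a^2/4 \leq b = C_{\tilde y}$. Taking $P := \tilde m$ (whose existence on $(-\infty, T]$ is given by the explicit formula) and $Q := \tilde y$, all hypotheses of Theorem~\ref{TheRiccIneq} are satisfied, so its conclusion yields existence of $\tilde y$ on $(-\infty, T]$ together with the comparison $\tilde m \geq \tilde y$, that is, $y \geq m$, as desired. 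There is no real obstacle in this plan; the only point worth flagging is the sign reversal needed to respect the theorem's hypothesis $B \geq 0$.
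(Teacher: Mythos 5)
Your proof is correct and follows essentially the same route as the paper: both identify the claimed bound as the explicit solution of a comparison Riccati equation with the same terminal value and invoke Theorem~\ref{TheRiccIneq} (after the sign reversal needed for $B\geq 0$), using only $-b\leq a^2/4$. The paper merely performs the shift $z=y+a/2$ first so as to compare $z'=z^2-d$ with $f'=f^2$, which is a cosmetic difference from your direct comparison of $y$ with $m$.
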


\begin{proof}
As $d>0$, we have that the solution $z$ of the initial value problem
$
z'= z^2 - d,$ $z(T) = c+ a/2
$
fulfills
\begin{equation} \notag
z(t) \geq \frac{1}{T-t+\frac{1}{c+a/2}} \quad \text{on }(-\infty,T]
\end{equation}
(cf.~Theorem~\ref{TheRiccIneq}; compare $z$ with the solution of $f'=f^2$, $f(T) = c+ \frac{a}{2}
$). The assertion follows directly from the fact that $y(t)=z(t)-a/2$.
\end{proof}

\bibliographystyle{abbrvnat}
\bibliography{bibliography}

\begin{thebibliography}{49}
\providecommand{\natexlab}[1]{#1}
\providecommand{\url}[1]{\texttt{#1}}
\expandafter\ifx\csname urlstyle\endcsname\relax
  \providecommand{\doi}[1]{doi: #1}\else
  \providecommand{\doi}{doi: \begingroup \urlstyle{rm}\Url}\fi

\bibitem[Abramowitz(2006)]{Abramowitz2006}
P.~Abramowitz.
\newblock Tool of the trade.
\newblock \emph{Institutional Investor's Alpha Magazine}, 6:\penalty0 41--44,
  2006.

\bibitem[Alfonsi et~al.(2010)Alfonsi, Fruth, and Schied]{Alfonsi2007}
A.~Alfonsi, A.~Fruth, and A.~Schied.
\newblock Optimal execution strategies in limit order books with general shape
  functions.
\newblock \emph{Quantitative Finance}, 10\penalty0 (2):\penalty0 143--157,
  2010.

\bibitem[Almgren(2003)]{Almgren2003}
R.~Almgren.
\newblock Optimal execution with nonlinear impact functions and
  trading-enhanced risk.
\newblock \emph{Applied Mathematical Finance}, 10\penalty0 (1):\penalty0 1--18,
  2003.

\bibitem[Almgren and Chriss(2001)]{Almgren2001}
R.~Almgren and N.~Chriss.
\newblock Optimal execution of portfolio transactions.
\newblock \emph{Journal of Risk}, 3\penalty0 (2):\penalty0 5--39, 2001.

\bibitem[Almgren and Lorenz(2007)]{Almgren2007}
R.~Almgren and J.~Lorenz.
\newblock Adaptive arrival price.
\newblock \emph{Algorithmic Trading III: Precision, Control, Execution}, pages
  59--66, 2007.

\bibitem[Applebaum(2004)]{Applebaum2004}
D.~Applebaum.
\newblock \emph{L\'evy processes and stochastic calculus}.
\newblock Cambridge University Press, Cambridge, 2004.

\bibitem[Bayraktar and Ludkovski(2012)]{Bayraktar2012}
E.~Bayraktar and M.~Ludkovski.
\newblock Liquidation in limit order books with controlled intensity.
\newblock \emph{Preprint}, 2012.

\bibitem[Bernstein(2005)]{Bernstein}
D.~S. Bernstein.
\newblock \emph{Matrix mathematics: Theory, facts, and formulas with
  application to linear systems theory}.
\newblock Princeton University Press, Princeton, 2005.

\bibitem[Bertsimas and Lo(1998)]{Bertsimas1998}
D.~Bertsimas and A.~Lo.
\newblock Optimal control of execution costs.
\newblock \emph{Journal of Financial Markets}, 1\penalty0 (1):\penalty0 1--50,
  1998.

\bibitem[Bouchard and Dang(2012)]{Bouchard2012}
B.~Bouchard and N.-M. Dang.
\newblock Generalized stochastic target problems for pricing and partial
  hedging under loss constraints - application in optimal book liquidation.
\newblock \emph{to appear in Finance \& Stochastics}, 2012.

\bibitem[Bouchard et~al.(2009)Bouchard, Elie, and Touzi]{Bouchard2009}
B.~Bouchard, R.~Elie, and N.~Touzi.
\newblock Stochastic target problems with controlled loss.
\newblock \emph{SIAM Journal on Control and Optimization}, 48\penalty0
  (5):\penalty0 3123--3150, 2009.

\bibitem[Carlin et~al.(2007)Carlin, Sousa~Lobo, and Viswanathan]{Carlin2007}
B.~I. Carlin, M.~Sousa~Lobo, and S.~Viswanathan.
\newblock Episodic liquidity crises: Cooperative and predatory trading.
\newblock \emph{Journal of Finance}, 65\penalty0 (5):\penalty0 2235--2274,
  2007.

\bibitem[Carrie(2008)]{Carrie2008}
C.~Carrie.
\newblock Illuminating the new dark influence on trading and {U}.{S}. market
  structure.
\newblock \emph{J. Trading}, 3\penalty0 (2):\penalty0 40--55, 2008.

\bibitem[Conrad et~al.(2003)Conrad, Johnson, and Wahal]{Conrad2003}
J.~Conrad, K.~M. Johnson, and S.~Wahal.
\newblock Institutional trading and alternative trading systems.
\newblock \emph{Journal of Financial Economics}, 70\penalty0 (1):\penalty0
  99--134, 2003.

\bibitem[Coppel(1971)]{Coppel1971}
W.~A. Coppel.
\newblock \emph{Disconjugacy}.
\newblock Lecture Notes in Mathematics, Vol. 220. Springer, Berlin, 1971.

\bibitem[Degryse et~al.(2009{\natexlab{a}})Degryse, Van~Achter, and
  Wuyts]{Degryse2009}
H.~Degryse, M.~Van~Achter, and G.~Wuyts.
\newblock Dynamic order submission strategies with competition between a dealer
  market and a crossing network.
\newblock \emph{Journal of Financial Economics}, 91\penalty0 (3):\penalty0
  319--338, 2009{\natexlab{a}}.

\bibitem[Degryse et~al.(2009{\natexlab{b}})Degryse, Van~Achter, and
  Wuyts]{Degryse2009a}
H.~Degryse, M.~Van~Achter, and G.~Wuyts.
\newblock Shedding light on dark liquidity pools.
\newblock \emph{The Institutional Investor}, 2009\penalty0 (1):\penalty0
  147--155, 2009{\natexlab{b}}.

\bibitem[D{\"o}nges and Heinemann(2006)]{Donges2006}
J.~D{\"o}nges and F.~Heinemann.
\newblock Crossing network versus dealer market: Unique equilibria in the
  allocation of order flow.
\newblock \emph{Preprint}, 2006.

\bibitem[Fong et~al.(2004)Fong, Madhavan, and Swan]{Fong2004}
K.~Fong, A.~Madhavan, and P.~L. Swan.
\newblock Upstairs, downstairs: Does the upstairs market hurt the downstairs?
\newblock \emph{Preprint}, 2004.

\bibitem[Ganchev et~al.(2010)Ganchev, Kearns, Nevmyvaka, and
  Wortman~Vaughan]{Ganchev2009}
K.~Ganchev, M.~Kearns, Y.~Nevmyvaka, and J.~Wortman~Vaughan.
\newblock Censored exploration and the dark pool problem.
\newblock \emph{Communications of the ACM}, 53\penalty0 (5):\penalty0 99--107,
  2010.

\bibitem[Glosten and Milgrom(1985)]{Glosten1985}
L.~R. Glosten and P.~R. Milgrom.
\newblock Bid, ask and transaction prices in a specialist market with
  heterogeneously informed traders.
\newblock \emph{Journal of Financial Economics}, 14\penalty0 (1):\penalty0
  71--100, March 1985.

\bibitem[Grossman and Miller(1988)]{Grossman1988}
S.~J. Grossman and M.~H. Miller.
\newblock Liquidity and market structure.
\newblock \emph{Journal of Finance}, 43\penalty0 (3):\penalty0 617--37, July
  1988.

\bibitem[Gu\'eant et~al.(2012)Gu\'eant, Lehalle, and
  Fernandez~Tapia]{Gueant2012}
O.~Gu\'eant, C.-A. Lehalle, and J.~Fernandez~Tapia.
\newblock Optimal portfolio liquidation with limit orders.
\newblock \emph{Preprint}, 2012.

\bibitem[Hanson(2007)]{Hanson2007}
F.~B. Hanson.
\newblock \emph{Applied stochastic processes and control for jump-diffusions:
  modeling, analysis, and computation}.
\newblock SIAM, Philadelphia, 2007.

\bibitem[Hendershott and Mendelson(2000)]{Hendershott2000}
T.~Hendershott and H.~Mendelson.
\newblock Crossing networks and dealer markets: Competition and performance.
\newblock \emph{Journal of Finance}, 55\penalty0 (5):\penalty0 2071--2115,
  2000.

\bibitem[Horn and Johnson(1985)]{HornJohnson1985}
R.~A. Horn and C.~R. Johnson.
\newblock \emph{Matrix analysis}.
\newblock Cambridge University Press, Cambridge, 1985.

\bibitem[H{\"o}schler(2011)]{hoschler2010}
M.~H{\"o}schler.
\newblock \emph{Limit order book models and optimal trading strategies}.
\newblock PhD thesis, Technical University Berlin, 2011.

\bibitem[Kallenberg(2002)]{Kallenberg}
O.~Kallenberg.
\newblock \emph{Foundations of modern probability}.
\newblock Springer, New York, 2002.

\bibitem[Kissell and Glantz(2003)]{Kissell2003}
R.~Kissell and M.~Glantz.
\newblock \emph{Optimal Trading Strategies: Quantitative Approaches for
  Managing Market Impact and Trading Risk}.
\newblock AMACOM Inc., New York, 2003.

\bibitem[Kl\"ock et~al.(2012)Kl\"ock, Schied, and Sun]{Kloeck2011}
F.~Kl\"ock, A.~Schied, and Y.~Sun.
\newblock Existence and absence of price manipulation in a market impact model
  with dark pool.
\newblock \emph{Preprint}, 2012.

\bibitem[Kratz(2011)]{Kratz2011}
P.~Kratz.
\newblock \emph{Optimal liquidation in dark pools in discrete and continuous
  time}.
\newblock PhD thesis, Humboldt University Berlin, 2011.

\bibitem[Kratz(2012)]{Kratz2012}
P.~Kratz.
\newblock A closed form solution for a stochastic control problem with
  quasi-polynomial value function.
\newblock \emph{Preprint}, 2012.

\bibitem[Kratz and Sch\"oneborn(2012)]{Kratz2010}
P.~Kratz and T.~Sch\"oneborn.
\newblock Optimal liquidation in dark pools.
\newblock \emph{Preprint}, 2012.

\bibitem[Kyle(1985)]{Kyle1985}
A.~S. Kyle.
\newblock Continuous auctions and insider trading.
\newblock \emph{Econometrica}, 53\penalty0 (6):\penalty0 1315--1336, 1985.

\bibitem[Laruelle et~al.(2011)Laruelle, Lehalle, and Pag\`{e}s]{Laruelle2009}
S.~Laruelle, C.-A. Lehalle, and G.~Pag\`{e}s.
\newblock Optimal split of orders across liquidity pools: a stochastic
  algorithm approach.
\newblock \emph{SIAM Journal of Financial Mathematics}, 2\penalty0
  (1):\penalty0 1042--1076, 2011.

\bibitem[Leinweber(2007)]{Leinweber2007}
D.~Leinweber.
\newblock Algo vs. algo.
\newblock \emph{Institutional Investor's Alpha Magazine}, 2:\penalty0 44--51,
  2007.

\bibitem[Mittal(2008)]{Mittal2008}
H.~Mittal.
\newblock Are you playing in a toxic dark pool? {A} guide to preventing
  information leakage.
\newblock \emph{Journal of Trading}, pages 20--33, 2008.

\bibitem[Naujokat and Westray(2011)]{Naujokat2010}
F.~Naujokat and N.~Westray.
\newblock Curve following in illiquid markets.
\newblock \emph{Mathematics and Financial Economics}, 4\penalty0 (4):\penalty0
  299--335, 2011.

\bibitem[Obizhaeva and Wang(2006)]{Obizhaeva2006}
A.~Obizhaeva and J.~Wang.
\newblock Optimal trading strategy and supply/demand dynamics.
\newblock \emph{Preprint}, 2006.

\bibitem[{\O}ksendal(2007)]{Oksendal2007}
B.~{\O}ksendal.
\newblock \emph{Stochastic differential equations: An introduction with
  applications}.
\newblock Springer, Berlin, 2007.

\bibitem[{\O}ksendal and Hu(2008)]{Oksendal2008}
B.~{\O}ksendal and Y.~Hu.
\newblock Partial information linear quadratic control for jump diffusions.
\newblock \emph{SIAM J. Control Optim.}, 47\penalty0 (4):\penalty0 1744--1761,
  2008.

\bibitem[{\O}ksendal and Sulem(2007)]{OksendalSulem}
B.~{\O}ksendal and A.~Sulem.
\newblock \emph{Applied stochastic control of jump diffusions}.
\newblock Springer, Berlin, 2007.

\bibitem[Reid(1972)]{Reid1972}
W.~T. Reid.
\newblock \emph{Riccati differential equations}.
\newblock Academic Press, New York, 1972.

\bibitem[Rogers and Singh(2010)]{Rogers2010}
L.~C.~G. Rogers and S.~Singh.
\newblock The cost of illiquidity and its effects on hedging.
\newblock \emph{Mathematical Finance}, 20\penalty0 (4):\penalty0 597 -- 615,
  2010.

\bibitem[Schack(2004)]{Schack2004}
J.~Schack.
\newblock The orders of battle.
\newblock \emph{Institutional Investor}, 11:\penalty0 77--84, 2004.

\bibitem[Schied and Sch{\"o}neborn(2009)]{Schied2008}
A.~Schied and T.~Sch{\"o}neborn.
\newblock Risk aversion and the dynamics of optimal liquidation strategies in
  illiquid markets.
\newblock \emph{Finance and Stochastics}, 13\penalty0 (2), 2009.

\bibitem[Schied et~al.(2010)Schied, Sch{\"o}neborn, and Tehranchi]{Schied2010}
A.~Schied, T.~Sch{\"o}neborn, and M.~Tehranchi.
\newblock Optimal basket liquidation for {CARA} investors is deterministic.
\newblock \emph{Applied Mathematical Finance}, 17\penalty0 (6):\penalty0
  471--489, 2010.

\bibitem[Sch{\"o}neborn and Schied(2009)]{Schoneborn2007}
T.~Sch{\"o}neborn and A.~Schied.
\newblock Liquidation in the face of adversity: Stealth vs. sunshine trading.
\newblock \emph{Preprint}, 2009.

\bibitem[Ye(2011)]{Ye2011}
M.~Ye.
\newblock A glimpse into the dark: Price formation, transaction costs, and
  market share in the crossing network.
\newblock \emph{Preprint}, 2011.

\end{thebibliography}

\end{document}